\keywords{$\mu$-calculus, energy games, multi-valued logics, symbolic algorithms}
\newcommand*\circlediamond{\vcenter{\hbox{\includegraphics[height=10pt]{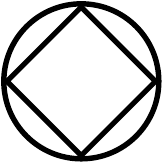}}}}
\newcommand*\circlebox{\vcenter{\hbox{\includegraphics[height=10pt]{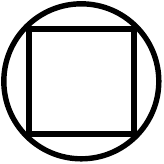}}}}
\newcommand{\op}[1]{\textbf{\texttt{#1}}}
\newcommand{\true}{{\small{\op{true}}}}
\newcommand{\sysFormulas}{\ensuremath{{\mathcal{L}_{\mu}^{\sys}}}}
\newcommand{\sysEFormulas}{\ensuremath{{\mathcal{L}_{e\mu}^{\sys}}}}
\newcommand{\envFormulas}{\ensuremath{{\mathcal{L}_{\mu}^{\env}}}}
\newcommand{\envEFormulas}{\ensuremath{{\mathcal{L}_{e\mu}^{\env}}}}
\newcommand{\EFormulas}{\ensuremath{\mathcal{L}_{e\mu}}}
\newcommand{\ECpre}{\ensuremath{\mathit{ECpre_{\sys}}}}
\newcommand{\ECpreEnv}{\ensuremath{\mathit{ECpre_{\env}}}}
\newcommand{\gfeStrategy}{\ensuremath{g_{\mathit{gfe}}}}
\newcommand{\GS}{\ensuremath{G = \langle \mathcal{V},
\mathcal{X}, \mathcal{Y}, \rho^e,\rho^s, \varphi
\rangle}} 
\newcommand{\WGS}{\ensuremath{G^{w} = \langle \mathcal{V},\mathcal{X}, \mathcal{Y}, \rho^e,\rho^s, \varphi, w^s\rangle}}
\newcommand{\GEc}{\ensuremath{{G^{w}(c)}}}
\newcommand{\twolinescurly}[2]{{\scriptscriptstyle\left\{\begin{array}{l}
#1\\
#2
\end{array}\right\}}}
\newcommand{\sys}{{\scriptstyle\mathit{sys}}}
\newcommand{\env}{{\scriptstyle\mathit{env}}}
\newcommand{\sysbig}{$\mathit{sys}$}
\newcommand{\envbig}{$\mathit{env}$}
\newcommand{\sysmu}{\ensuremath{\mathit{sys}\text{-}\mu}}
\newcommand{\envmu}{\ensuremath{\mathit{env}\text{-}\mu}}
\newcommand{\sysEmu}{\ensuremath{\mathit{sys}\text{-energy-}\mu}}
\newcommand{\envEmu}{\ensuremath{\mathit{env}\text{-energy-}\mu}}
\newcommand{\prio}{\ensuremath{\mathit{prio}}}
\newcommand{\prioA}{\ensuremath{\mathit{prio_{\mathcal{A}}}}}
\newcommand{\into}{\rightarrow}
\newcommand{\bigdiamond}{\mathlarger{\mathlarger{\mathlarger{\diamond}}}}
\newcommand{\sysEnergy}{\ensuremath{\mathit{sys}\text{-energy}}}
\newcommand{\envEnergy}{\ensuremath{\mathit{env}\text{-energy}}}
\newcommand{\eValuation}{\ensuremath{\mathcal{D}}}
\newcommand{\playerset}{\ensuremath{\{\mathit{env},\mathit{sys}\}}}
\newcommand{\sysCapEnergy}{\ensuremath{\mathit{sys}\text{-Energy}}}
\newcommand{\envCapEnergy}{\ensuremath{\mathit{env}\text{-Energy}}}
\theoremstyle{plain}\newtheorem{strategy}[thm]{Strategy}
\begin{document}

\title[Energy $\mu$-Calculus]{\texorpdfstring{Energy $\mu$-Calculus: Symbolic Fixed-Point Algorithms for $\omega$-Regular Energy Games}{Energy mu-Calculus: Symbolic Fixed-Point Algorithms for omega-Regular Energy Games}}


\author[G.~Amram]{Gal Amram}  
\address{Tel Aviv University, Tel Aviv, Israel} 
\email{galam1483@gmail.com}  

\author[S.~Maoz]{Shahar Maoz}  
\address{Tel Aviv University, Tel Aviv, Israel} 
\email{maoz@cs.tau.ac.il}  

\author[O.~Pistiner]{Or Pistiner}  
\address{Tel Aviv University, Tel Aviv, Israel} 
\email{orpistiner@gmail.com}

\author[J.O.~Ringert]{Jan Oliver Ringert}  
\address{University of Leicester, Leicester, United Kingdom}
\email{jor4@leicester.ac.uk}




\begin{abstract}
$\omega$-regular energy games,
which are weighted two-player turn-based games with the quantitative objective
to keep the energy levels non-negative, 
have been used in the context of verification and synthesis.
The logic of modal $\mu$-calculus, when applied over game graphs with
$\omega$-regular winning conditions, allows
defining symbolic algorithms in the form of fixed-point formulas for
computing the sets of winning states.

In this paper, we introduce energy $\mu$-calculus, a multi-valued extension of the
$\mu$-calculus that serves as a symbolic framework for solving $\omega$-regular energy games. 
Energy $\mu$-calculus enables the seamless reuse of existing, 
well-known symbolic $\mu$-calculus algorithms for $\omega$-regular games, 
to solve their corresponding energy augmented variants.
We define the syntax and semantics of energy $\mu$-calculus over symbolic
representations of the game graphs, and show how to use it to solve
the decision and the minimum credit problems for $\omega$-regular energy games,
for both bounded and unbounded energy level accumulations.
\end{abstract}

\maketitle

\section{Introduction}

\emph{Energy games} have been introduced by Chakrabarti et al.~\cite{ChakrabartiAHS03} to model components' energy interfaces, specifically
the requirement to avoid the exhaustion of an initially available resource, e.g., disk space or battery capacity.
Since their inception, they have been studied extensively in the context of
verification and synthesis, e.g.,~\cite{BohyBFR13,BouyerFLMS08,BrimC12,BrimCDGR11,ChakrabartiAHS03,ChatterjeeD12,ChatterjeeRR14,VelnerC0HRR15}.

Energy games are weighted two-player turn-based games with the
quantitative objective to keep the \emph{energy level}, the accumulated
sum of an initial credit and weights of transitions traversed thus far, non-negative in each prefix of a play.
Energy games induce a \emph{decision problem} that checks for
the existence of a finite initial credit sufficient for winning, and an
optimization problem for the \emph{minimum initial credit}.

The work~\cite{BouyerFLMS08} has introduced an \emph{upper bound} $c$ that specifies the maximal energy level
allowed to be accumulated throughout a play. In our work, we consider both the \emph{unbounded} energy objective of~\cite{BouyerFLMS08,ChakrabartiAHS03} where $c=+\infty$, 
and the \emph{bounded} energy objective of~\cite{BouyerFLMS08} where $c \in \mathbb{N}$ is finite and whenever the energy level exceeds $c$, it is truncated to $c$.

Energy games may be viewed as safety games with an additional quantitative
objective. Nevertheless, they have also been generalized to \emph{$\omega$-regular games with
energy objectives}~\cite{ChatterjeeD12,ChatterjeeRR14}, which are the focus of
our work.

We consider \emph{symbolic} algorithms for solving games, as opposed to \emph{explicit} ones. 
Symbolic algorithms operate on an implicit representation of the underlying game graph
and manipulate \emph{sets} of game states, whereas explicit algorithms operate on the explicit game graph representation
and manipulate individual states. Symbolic algorithms have been
shown to be scalable and practical for solving $\omega$-regular games, e.g.,~\cite{AlurMN05,ChatterjeeDHL17,ChatterjeeDHS18,JacobsBBEHKPRRS17,SanchezWW18,StasioMV18}.

\emph{Modal $\mu$-calculus}~\cite{Kozen} is an extension of propositional logic with
modal operators and least and greatest fixed-point operators.
Rather than the classical version of~\cite{Kozen}, we consider the \emph{game $\mu$-calculus}~\cite{EmersonJ91} and its application
over finite symbolic game structures~\cite{BJP+12} to solve games with $\omega$-regular winning conditions (see, e.g.,~\cite{AlfaroHM01,BJP+12,BruseFL14,KonighoferHB13,RaskinCDH07,Walukiewicz1996}).
For every $\omega$-regular condition, $\varphi$, there is a (game) $\mu$-calculus formula
that defines a symbolic fixed-point algorithm for computing the set of states that win $\varphi$~\cite{AlfaroHM01}.

Modal $\mu$-calculus has been extended to a multi-valued or quantitative
semantics where the value of a formula in a state is from some lattice, e.g.,~\cite{deAlfaro,AlfaroHM01,AlfaroM04,BrunsG04,Fischer2010,GrumbergLLS05,RaskinCDH07}.

\emph{We summarize the contributions of our work as follows.}

\begin{enumerate}
\item \textbf{Energy $\mu$-calculus as a symbolic framework for solving $\omega$-regular energy games.} We introduce \emph{energy $\mu$-calculus}, a multi-valued extension of 
the game $\mu$-calculus~\cite{EmersonJ91} over symbolic game structures~\cite{BJP+12}. Energy $\mu$-calculus serves as a framework for solving 
both the decision and the minimum credit problems with a bounded energy level accumulation.
While a game $\mu$-calculus formula characterizes a set of states, an energy $\mu$-calculus formula is
interpreted w.r.t. an upper bound $c \in \mathbb{N}$ and returns an \emph{energy function} that
assigns a value in $\{0,\dots,c\} \cup\{+\infty \}$ to each state of the underlying game.
Every $\omega$-regular condition is solved by evaluating a game $\mu$-calculus formula~\cite{AlfaroHM01},
and we show that this formula can be seamlessly reused as an energy $\mu$-calculus formula to solve the corresponding energy augmented game.

\item \textbf{Computation of a sufficient upper bound.} We bridge the gap between bounded and unbounded energy level accumulations
by showing that every $\omega$-regular winning condition admits a \emph{sufficiently large upper bound} on the energy level accumulation.
That is, we show that if the system player wins with an unbounded energy level accumulation,
then it also wins w.r.t. a finite upper bound with no need to increase the initial credit. Specifically, 
if the $\mu$-calculus formula $\psi$ solves the $\omega$-regular game, then the system wins w.r.t. the bound $(d+1)((N^2+N)m-1)K$, where $N$ is the size of the state space, $K$ is the maximal absolute weight, and $m$ and $d$
are the length and alternation depth of $\psi$, respectively. Through this sufficient bound, energy $\mu$-calculus 
also solves the decision and the minimum credit problems with an unbounded energy level accumulation.
\end{enumerate}

\noindent On the way to achieving the above results, we have obtained two additional contributions for \emph{energy parity games}~\cite{ChatterjeeD12}:

\begin{enumerate}\setcounter{enumi}{2}
  \item \textbf{Sufficient upper bound.} We show that if $\text{player}_0$ wins from a state $s$ in an energy parity game with an unbounded energy level accumulation, then she can also win from $s$ w.r.t. the energy upper bound $d(n-1)K$ without increasing her initial credit, where $d$ is the number of different priorities, $n$ is the number of states, and $K$ is the maximal absolute weight. 
  \item\textbf{Strategy complexity.} We show that if $\text{player}_0$ wins from a state $s$ in an energy parity game, then she has a strategy that wins from $s$ with a memory of size $d(n-1)K+1$ and without requiring to increase the initial credit.
  This slightly improves the best known memory upper bound of $dnK$~\cite{ChatterjeeD12}.
\end{enumerate}

\noindent To solve energy games with $\omega$-regular winning conditions, researchers have suggested to apply
a reduction to energy games with parity winning conditions; see, e.g.,~\cite{ChatterjeeD12,ChatterjeeRR14}.
In contrast, our approach uses a game $\mu$-calculus formula to describe the set of states that win the $\omega$-regular condition~\cite{AlfaroHM01}.
Then, it evaluates this formula w.r.t. the semantics of energy $\mu$-calculus and obtains
the energy function that maps each state to its minimal winning initial credit, and 
to $+\infty$ if there is no such an initial credit.
We identify two appealing key attributes of our approach.

First, our approach enables the use of existing results from the literature. Specifically,
Thm.~\ref{thm:sysEngMuCalcCorrectness} enables to seamlessly
transform well-known $\mu$-calculus formulas that solve games with $\omega$-regular
conditions $\varphi$, e.g., safety, reachability, B\"{u}chi, co-B\"{u}chi,
GR(1)~\cite{BJP+12}, counter-GR(1)~\cite{KonighoferHB13},
parity~\cite{BruseFL14,EmersonJ91}, etc. into solvers of corresponding $\varphi$-energy
games.

Second, the aforementioned transformation additionally results in algorithms that are symbolic, in the sense that they manipulate energy functions over symbolic weighted game structures.
Such symbolic algorithms can be implemented using,
e.g., Algebraic Decision Diagrams~\cite{BaharFGHMPS97,FujitaMY97}, as was done in~\cite{MaozPR16}.

To illustrate these key attributes, we consider the following well-known $\mu$-calculus formula
that solves B\"{u}chi games with target states $J$~\cite{2001automata,Thomas95}:
\begin{align}\label{eq:buchiExample:muCalcFormula}
\psi_{B_{J}} &= \nu Z (\mu Y (J \wedge \circlediamond Z) \vee \circlediamond Y)
\end{align}

In such a game, the system wins from a state if it can enforce infinitely many visits to $J$. Relying on Thm.~\ref{thm:sysEngMuCalcCorrectness},
we replace each occurrence of the modal operator $\circlediamond$ in Eq.~\ref{eq:buchiExample:muCalcFormula} with $\circlediamond_\op{E}$,
and obtain the following energy $\mu$-calculus formula that solves B\"{u}chi-energy $J$-states games:
\begin{align}\label{eq:buchiExample:energyMuCalcFormula}
\psi_{B_{J}}^\op{E} &= \nu Z (\mu Y (J \wedge \circlediamond_{\op{E}} Z) \vee \circlediamond_{\op{E}} Y)
\end{align}
 
That is, Eq.~\ref{eq:buchiExample:energyMuCalcFormula} defines the energy function that maps each state to the minimal initial credit
sufficient for the system to win the B\"{u}chi $J$-states condition while keeping the energy levels of all plays' prefixes
non-negative.

\begin{figure*}
\begin{tabular}{p{0.45\textwidth}|p{0.5\textwidth}}
\begin{minipage}{.45\textwidth}
\begin{algo}
\small Symbolic fixed-point algorithm to compute the set of states $Z$ that win in a B\"{u}chi $J$-states game.
An implementation of Eq.~\ref{eq:buchiExample:muCalcFormula} according to Def.~\ref{def:prop_mu_calculus_semantics}\\
\label{alg:buchi}
\begin{algorithmic}[1]
  \STATE $Z \gets \text{the state space}$\label{alg:buchi:initZ}
  \WHILE {not reached fixed-point of $Z$} \label{alg:buchi:fixZ}
  \STATE {$\mathit{recurrJ} \gets {J} \cap {\mathit{Cpre}_{\sys}(Z)}$} \label{alg:buchi:recurrJ}
    \STATE $Y \gets \emptyset$
    \WHILE {not reached fixed-point of $Y$}
    \label{alg:buchi:fixY}
      \STATE {$Y \gets \mathit{recurrJ} \cup {\mathit{Cpre}_{\sys}(Y)}$} \label{alg:buchi:leastY}
    \ENDWHILE \label{alg:buchi:fixYEnd}
    \STATE $Z\gets Y$ \label{alg:buchi:ZAssigned}
  \ENDWHILE \label{alg:buchi:fixZEnd}
  \RETURN $Z$ 
\end{algorithmic}
\end{algo}
\end{minipage}&
\begin{minipage}{.5\textwidth}
\begin{algo}
\small Symbolic fixed-point algorithm to compute
the mapping $Z$ between states and minimum initial credits
sufficient for winning in a B\"{u}chi-energy game. An implementation of Eq.~\ref{eq:buchiExample:energyMuCalcFormula} according to Def.~\ref{def:sysEngMuCalcSemantics}
\label{alg:buchiEnergy}
\begin{algorithmic}[1]
  \STATE $Z \gets \text{mapping of all states to $0$}$  
  \WHILE {not reached fixed-point of $Z$}
  \label{alg:buchiEnergy:fixZ}
    \STATE $\mathit{recurrJ} \gets {\max(f_{J}, \ECpre(Z))}$\label{alg:buchiEnergy:recurrJ}
    \STATE $Y \gets \text{mapping of all states to $+\infty$}$
    \WHILE {not reached fixed-point of $Y$}
      \STATE {${Y} \gets {\min(\mathit{recurrJ}, \ECpre(Y))}$}\label{alg:buchiEnergy:leastY}
    \ENDWHILE
    \STATE $Z \gets Y$
  \ENDWHILE
  \RETURN $Z$ 
\end{algorithmic}
\end{algo}
\end{minipage}
\end{tabular}
\end{figure*}

Alg.~\ref{alg:buchi} is a symbolic fixed-point algorithm that implements Eq.~\ref{eq:buchiExample:muCalcFormula} according Def.~\ref{def:prop_mu_calculus_semantics},
which defines the game $\mu$-calculus' semantics following~\cite{BJP+12}.
Likewise, Alg.~\ref{alg:buchiEnergy} is a symbolic fixed-point algorithm
that implements Eq.~\ref{eq:buchiExample:energyMuCalcFormula} according Def.~\ref{def:sysEngMuCalcSemantics}, which defines the new energy $\mu$-calculus' semantics.
Alg.~\ref{alg:buchi} uses the controllable predecessor operator $\mathit{Cpre}_{\sys}$ that implements $\circlediamond$, whereas
Alg.~\ref{alg:buchiEnergy} uses the energy controllable predecessor operator $\ECpre$ that implements $\circlediamond_\op{E}$.
$\mathit{Cpre}_{\sys}$ is defined in Sect.~\ref{sec:propMuCalculus}; $\mathit{ECpre}_{\sys}$ is defined in Def.~\ref{def:ECpre}.

We prove in Sect.~\ref{sec:EngMuCalc} and Sect.~\ref{sec:solvingBEGames} that our approach solves both the decision and the minimum credit problems with a bounded energy level accumulation.
Moreover, we augment energy $\mu$-calculus with \emph{negation} to enable $\omega$-regular energy games to be solved via their \emph{dual} games.
That is, we show that if a game $\mu$-calculus formula $\psi$ solves the $\omega$-regular game,
the energy $\mu$-calculus formula $\neg \psi^\op{E}$ dually assigns each state the \emph{maximal} initial credit for which the adversary, namely the environment, wins.

We prove the results of Sect.~\ref{sec:EngMuCalc} by using a reduction to $\omega$-regular games, which
encodes the bounded energy objective as safety constraints, following~\cite{BrimCDGR11}.

Importantly, however, our approach also solves the decision and the minimum credit problems w.r.t. the unbounded energy objective from~\cite{ChakrabartiAHS03}, namely
when the upper bound on the energy levels is set to $+\infty$.
We obtain this key result in Sect.~\ref{sec:sufficientbound} by providing answers to the following three questions for
all $\omega$-regular winning conditions:

\begin{enumerate}
  \item \label{boundQuestion:1}Is there a state that does not win w.r.t. all finite upper bounds but wins w.r.t. the bound $+\infty$? (No)
  \item \label{boundQuestion:2}Is there a sufficient finite upper bound whose increase would not introduce additional winning states? (Yes)
  \item \label{boundQuestion:3}Is there such a sufficient bound that also does not require an increase in the initial credit to win? (Yes)
\end{enumerate}

\noindent We answer the above questions by showing how to compute a sufficiently large upper bound for any $\omega$-regular winning condition.
Most importantly, this complete bound enables the use of the results obtained in Sect.~\ref{sec:EngMuCalc} and Sect.~\ref{sec:solvingBEGames},
also in case of an unbounded energy level accumulation.

\subsection{Related Work}\label{sec:related}

\subsubsection{Energy Games}\label{sec:related:energyGames}

Energy games were introduced in~\cite{ChakrabartiAHS03}. Bouyer et al.~\cite{BouyerFLMS08} further studied these games, presented fixed-point solutions,
and showed that these games are log-space equivalent to mean-payoff games~\cite{EM79}. 
Brim et al.~\cite{BrimC12,BrimCDGR11} presented strategy improvement and improved fixed-point algorithms, both of which are explicit, for energy and mean-payoff games.

The application of energy $\mu$-calculus to the $\mu$-calculus formula ${\psi}={\nu X(\circlediamond X)}$,
which solves safety games with the winning condition $\op{G}(\text{\true})$, results in
the symbolic fixed-point algorithm ${\psi^\op{E}} = {\nu X(\circlediamond_{\op{E}} X)}$ for energy games. 
Interestingly, essentially, $\psi^\op{E}$ prescribes the algorithm that was described in~\cite{BouyerFLMS08,BrimCDGR11,ChakrabartiAHS03}.
Thus, the algorithms of~\cite{BouyerFLMS08,BrimCDGR11,ChakrabartiAHS03} can be seen as a special case of our results.

Chatterjee et al.~\cite{ChatterjeeD12} have studied $\omega$-regular energy games through energy parity games.
They have shown that the decision problem is in $\text{NP} \cap \text{coNP}$ and presented a recursive algorithm in exponential time.
The work~\cite{ChatterjeeD12} has also shown that winning strategies with a finite memory of an exponential size are sufficient.
We slightly improve the memory upper bound obtained in~\cite{ChatterjeeD12} (see Sect.~\ref{sec:sufficientbound}, Cor.~\ref{cor:parityEnergyMemory}). 

Moreover, it was shown in~\cite{ChatterjeeD12} that the decision problem of mean-payoff parity games~\cite{ChatterjeeHJ05} can be reduced to that of energy parity games.
Consequently, energy $\mu$-calculus can also solve the decision problem of $\omega$-regular mean-payoff games by applying the reduction of~\cite{ChatterjeeD12} and using our results. 

The work~\cite{BouyerFLMS08} has introduced bounded variants of energy games. Among these variants is the lower-weak-upper-bound problem,
which we refer to as the bounded energy objective. The work~\cite{BouyerFLMS08} has also established a sufficiently large upper bound that
enables the solution of (unbounded) energy games. This bound has been used in~\cite{BrimC12} to solve energy games.

Moreover, since energy games may be seen as energy parity games~\cite{ChatterjeeD12} with a single priority,
in fact, we obtain the sufficient bound of~\cite{BouyerFLMS08} by invoking Lem.~\ref{lem:lemma-6-revised} for the special case where $d=1$.
To the best of our knowledge, our work is the first to generalize~\cite{BouyerFLMS08} 
by introducing sufficient bounds that enable the solution of energy games with any $\omega$-regular winning condition.


Velner et al.~\cite{VelnerC0HRR15} have studied the complexity of multi-dimensional energy and mean-payoff games where the weights are integer vectors.
They have shown that the decision problem of multi-dimensional energy games 
is $\text{coNP}$-complete and finite memory strategies are sufficient for winning. Fahrenberg et al.~\cite{FahrenbergJLS11}
have studied variants of multi-dimensional energy games with both lower and upper bounds.

Finally, Chatterjee et al.~\cite{ChatterjeeRR14} have established that strategies with an exponential memory are necessary and sufficient
for multi-dimensional energy parity games. Furthermore, they have presented
an exponential fixed-point algorithm to compute such strategies.

\subsubsection{The $\mu$-Calculus and Symbolic Algorithms}\label{sec:related:muCalculus}
Besides model checking (see, e.g.,~\cite{bradfieldmu}), the modal $\mu$-calculus has been used to solve $\omega$-regular
games (e.g.,~\cite{BJP+12,BruseFL14,EmersonJ91,KonighoferHB13,Walukiewicz1996}), as well as to synthesize winning strategies (e.g., in GR(1)~\cite{BJP+12}
and parity~\cite{BruseFL14} games).

Multi-valued or quantitative extensions of the
$\mu$-calculus have been suggested for verification of multi-valued or quantitative transition systems (e.g.,~\cite{deAlfaro,BrunsG04,Fischer2010,GrumbergLLS05,GrumbergLLS07}).
Nevertheless, such extensions have also been introduced to solve, e.g.,
probabilistic and concurrent games~\cite{AlfaroM04}, and games with imperfect information~\cite{RaskinCDH07}.
The translation of $\omega$-regular conditions to the $\mu$-calculus for the purpose
of solving the corresponding games has been studied in~\cite{AlfaroHM01} w.r.t. both Boolean and quantitative semantics. 
We apply this approach to energy games.

The semantics of energy $\mu$-calculus exploits the monotonicity of the energy objective as it maps states to the minimal winning initial credits.
It is inspired by the antichain representation used by the algorithm of~\cite{ChatterjeeRR14}, which solves multi-dimensional energy games.
Essentially, symbolic antichain representations exploit monotonicity properties
to succinctly represent the sets that the symbolic algorithm manipulates. 
The use of antichains to obtain performance improvements has been implemented for various applications, such as model checking (e.g.,~\cite{DoyenR10,WulfDHR06,WulfDMR08}),
games with imperfect information~\cite{BerwangerCWDH10,RaskinCDH07}, and LTL synthesis (e.g.,~\cite{BohyBFR13,FiliotJR11}).

The semantics of energy $\mu$-calculus prescribes symbolic algorithms that manipulate energy functions.
Therefore, implementations of the energy $\mu$-calculus should be based on symbolic data structures, and in particular,
on those that encode multi-valued functions. Such a notable data structure are Algebraic Decision Diagrams~\cite{BaharFGHMPS97,FujitaMY97} (ADDs), which generalize Binary Decision Diagrams (BDDs)~\cite{Bryant86}.

The use of ADDs to encode real-valued matrices for the analysis of probabilistic models, such as Markov chains, has been studied extensively, e.g.,~\cite{AlfaroKNPS00,BaierCHKR97,HermannsKNPS03,KwiatkowskaNP04,KwiatkowskaNP11}.
However, ADDs have only recently been studied in the context of game solving.
The work~\cite{MaozPR16} has presented an ADD-based, symbolic fixed-point algorithm for energy games and evaluated its performance.
In fact, this algorithm implements the energy $\mu$-calculus formula ${\nu X(\circlediamond_{\op{E}} X)}$ that we have considered in Sect.~\ref{sec:related:energyGames}.
The evaluation in~\cite{MaozPR16} showed that the ADD-based algorithm outperformed an alternative, BDD-based algorithm in terms of scalability.
Moreover, the work~\cite{BustanKV04} presented a symbolic ADD-based version of the well-known small progress measures (explicit) algorithm~\cite{Jurdzinski00} for parity games.
The algorithm of~\cite{BustanKV04} has recently been implemented and evaluated in~\cite{StasioMV18}.

\section{Preliminaries}
\label{sec:preliminaries}

Throughout this paper, for $a, b \in \mathbb{Z}\cup\{+\infty\}$, $[a,b]$ denotes the
set $\{z \in \mathbb{Z}\cup\{+\infty\}\mid a \leq z \leq b\}$. 
For a set of Boolean variables $\mathcal V$, a \emph{state}, $s\in 2^\mathcal V$, is a truth assignment to $\mathcal V$, an \emph{assertion} $\phi$ is a propositional formula over $\mathcal V$, $s\models \phi$ denotes that $s$ satisfies $\phi$, and $\mathcal V'$ denotes the set $\{v'\mid v\in\mathcal V\}$ of \emph{primed} variables.
We denote by $p(s)\in 2^\mathcal{V'}$ the \emph{primed version} of the state $s\in{2^\mathcal{V}}$, obtained by replacing each
$v\in{s}$ with $v'\in \mathcal{V}'$.
For $\mathcal V=\bigcup_{i=1}^k\mathcal V_i$ and truth assignments $s_i\in 2^{\mathcal V_i}$, we use $(s_1,\ldots,s_k)$ as an abbreviation for $s_1\cup{\ldots}\cup{s_k}$. Thus, we may replace expressions, e.g., $s\in 2^\mathcal V$, $s\models \varphi$, $p(s)$, and $f(s)$ with $(s_1,\dots s_k)\in 2^\mathcal V$, $(s_1,\dots,s_k)\models \varphi$, $p(s_1,\dots,s_k)$, and $f(s_1,\dots,s_k)$, respectively.
We denote by $s|_{\mathcal{Z}}$ the \emph{projection} of
$s\in{2^{\mathcal{V}}}$ to $\mathcal{Z}\subseteq{\mathcal{V}}$, i.e.,
$s|_{\mathcal{Z}}:= s \cap \mathcal{Z}$.

\subsection{Games, Game Structures, and Strategies}\label{sec:games}

We consider an infinite game played between an
environment player (\envbig) and
a system player (\sysbig) on a finite directed graph
as they move along its transitions. In each round of the game, the environment plays
first by choosing a valid input, and the system plays
second by choosing a valid output.
The goal of the system is to satisfy
the winning condition, regardless of the actions of the environment.

Formally, a game is symbolically represented by a \emph{game
structure} (GS)
$G := \langle \mathcal{V},\mathcal{X}, \mathcal{Y}, \\\rho^e,\rho^s, \varphi \rangle$~\cite{BJP+12,PitermanPS06}
that consists of the following components:

\begin{itemize}
  \item $\mathcal{V} = \{v_1, \ldots, v_n\}$: A finite set of Boolean variables. 
  \item $\mathcal{X}\subseteq{\mathcal{V}}$: A set of \emph{input variables}
  controlled by the \emph{environment} player (\envbig).
  \item $\mathcal{Y}={\mathcal{V}\setminus{\mathcal{X}}}$: A set of \emph{output
  variables} controlled by the \emph{system} player (\sysbig).
  \item $\rho^e$: An assertion over $\mathcal{V}\cup \mathcal{X'}$ that defines the environment's transitions.
   The environment uses $\rho^e$ to relate a state over $\mathcal{V}$ to \emph{possible next inputs} over $\mathcal X'$.
  \item $\rho^s$: An assertion over $\mathcal{V}\cup \mathcal V' = \mathcal{V}\cup \mathcal{X'}\cup \mathcal{Y'}$ that defines the system's transitions.
  The system uses $\rho^s$ to relate a state over $\mathcal{V}$ and an input over $\mathcal{X}'$ to \emph{possible next outputs} over $\mathcal Y'$.
  \item $\varphi$: The winning condition of the system.
\end{itemize}

\noindent We consider \emph{$\omega$-regular GSs}, i.e., GSs with 
$\omega$-regular winning conditions $\varphi$.
A state $t$ is a \emph{successor} of $s$ if $(s,
p(t)) \models \rho^e \wedge \rho^s$. The rounds of a game on $G$ form a sequence
of states $\sigma=s_0s_1\ldots$ called a \emph{play}, which satisfies 
the following conditions: (1) \emph{Consecution}: for each $i\geq0$, $s_{i+1}$
is a successor of $s_i$. (2) \emph{Maximality}: if $\sigma$ is finite, then either it
  ends with a \emph{deadlock for the environment}: $\sigma=s_0\ldots s_k$, and there is no
  input value $s_{\mathcal{X}}\in{2^\mathcal{X}}$ such that $(s_k,
  p(s_{\mathcal{X}}))\models {\rho^e}$, or it ends with a \emph{deadlock for the system}: $\sigma=s_0\ldots
  s_k s_{\mathcal{X}}$ where $s_{\mathcal{X}}\in{2^\mathcal{X}}$,
  $(s_k, p(s_{\mathcal{X}}))\models \rho^e$, and there is no output
  $s_{\mathcal{Y}}\in{2^\mathcal{Y}}$ such that
  $(s_k, p(s_{\mathcal{X}}), p(s_{\mathcal{Y}}))\models {\rho^s}$.

We denote by $\textsf{Plays}(G)$ the set of all $G$ plays. A play
$\sigma = s_{0}\ldots \in \textsf{Plays}(G)$ is \emph{from} $S \subseteq
2^{\mathcal{V}}$ if $s_{0}\in S$. A play $\sigma\in\textsf{Plays}(G)$ \emph{wins for the system} if either
$\sigma$ is finite and ends with a deadlock for the environment,
or $\sigma$ is infinite and satisfies the winning condition $\varphi$.
We denote by $\textsf{Plays}(G,\varphi)$ the set of all plays that win for
the system. If $\sigma\not\in \textsf{Plays}(G, \varphi)$, we say that $\sigma$ wins for the environment.

A \emph{strategy} for the system player is a partial function $g_{\sys}:
(2^\mathcal{V})^{+}2^{\mathcal{X}}\rightarrow 2^{\mathcal{Y}}$. It satisfies that
for every prefix $\sigma=s_0\ldots s_k \in (2^\mathcal{V})^{+}$ and
$s_{\mathcal{X}}\in{2^\mathcal{X}}$ such that $(s_k, p(s_{\mathcal{X}}))\models
\rho^e$, if $g_{\sys}$ is defined for $\sigma s_{\mathcal{X}}$, then $(s_k,
p(s_{\mathcal{X}}), p(g_{\sys}(\sigma s_{\mathcal{X}})))\models
{\rho^s}$.

Let $g_{\sys}$ be a strategy for the system, and $\sigma = s_{0}s_{1}\ldots \in
\textsf{Plays}(G)$. The prefix $s_{0}\ldots s_{k}$ of
$\sigma$ is \emph{consistent} with $g_{\sys}$ if for each $0 \leq i < k$, $g_{\sys}$ is
defined at $s_0\ldots s_i s_{i+1}|_{\mathcal{X}}$, and $g_{\sys}(s_0\ldots s_i s_{i+1}|_{\mathcal{X}}) = s_{i+1}|_{\mathcal{Y}}$.
We say that $\sigma$ is consistent with $g_{\sys}$ if all of its prefixes are
consistent with $g_{\sys}$.

The strategy $g_{\sys}$ is \emph{from $S \subseteq 2^{\mathcal{V}}$} if it is
defined (1) for every prefix $s_{0}\ldots s_{j} \in
(2^\mathcal{V})^{+}$ of a play from $S$, consistent with
$g_{\sys}$, and (2) for every input $s_{\mathcal{X}} \in 2^{\mathcal{X}}$ such that
$(s_j,p(s_{\mathcal{X}}))\models \rho^e$, and $(s_j, p(s_{\mathcal{X}}))$ is not a deadlock for the system. In case
$S= \{s\}$ for $s \in 2^{\mathcal{V}}$, we will simply write $s$. We dually define strategies and consistent plays
for the environment player.

A strategy $g_{\alpha}$ \emph{wins} for player $\alpha \in \playerset$
from $s\in 2^{\mathcal{V}}$, if it is a strategy for $\alpha$ from $s$,
and all plays from $s$ that are consistent with $g_{\alpha}$
win for $\alpha$. The assertion $W_{\alpha}$ describes the set of \emph{winning states}, i.e., from which there exists a
winning strategy for player $\alpha$. We may use the assertion
$W_{\alpha}$ interchangeably with the set $\{s\in 2^{\mathcal{V}}\mid s\models W_{\alpha}\}$.

\subsection{Weighted Game Structures and Energy Objectives}\label{sec:combinedEngObj}

We now define the energy objective. Our definition is based on both the \emph{lower-weak-upper-bound} and the \emph{lower-bound} problems introduced by Bouyer et al.~\cite{BouyerFLMS08}, while it uses
a slightly different formulation adapted for GSs.

A finite \emph{weighted game structure} (WGS)
${G^{w}} := \langle \mathcal{V}, \mathcal{X}, \mathcal{Y}, \rho^e,\rho^s, \varphi, w^s \rangle$
is a GS extended with a partial \emph{weight function} $w^s:{2^{\mathcal{V}\cup \mathcal{V}'}} \rightarrow {\mathbb{Z}}$,
defined for $\rho^s$ transitions.
Intuitively, $w^s$ describes the amount by which system's actions 
reclaim or consume a constrained resource, which we refer to as \emph{energy}.

Let $G^{w}$ be a WGS, $\sigma=s_{0}s_{1}\ldots\in\textsf{Plays}(G^{w})$ be a
$G^{w}$ play, and $\sigma[0\ldots
k]:=s_0\ldots s_k$ be a prefix of $\sigma$ for $k \in \mathbb{N}$. Given a
(finite) \emph{upper bound} $c\in{\mathbb{N}}$, and an \emph{initial credit} $c_{0} \in
[0,c]$,
the \emph{energy level under} $c$ of $\sigma[0\ldots
k]$, denoted by $\textsf{EL}_{c}(G^{w},c_0, \sigma[0\ldots k])$, is
the sum of $c_{0}$ and the weights that $w^{s}$ assigns to the transitions of
$\sigma[0\ldots k]$, such that whenever it exceeds the upper bound $c$, it is
truncated to $c$. Formally, $\textsf{EL}_{c}(G^{w}, c_0, \sigma[0\ldots k]) :=
r_{k}$, where $r_0 := c_0$ and for each $i\in [1,k]$, $r_{i} := \min\lbrack c,
r_{i-1}+w^s(s_{i-1},p(s_{i}))\rbrack$.
In Sect.~\ref{sec:sufficientbound}, we also consider the (unbounded) energy
level of $\sigma[0\ldots k]$, $\textsf{EL}_{+\infty}(G^{w},c_0, \sigma[0\ldots
k])$ where $c = +\infty$ and $c_{0} \in \mathbb{N}$.
Note that in this special case, it is simply the sum of $c_{0}$ and the weights along $\sigma[0\ldots
k]$, i.e., $\textsf{EL}_{+\infty}(G^{w},c_0, \sigma[0\ldots k]) = c_{0}
+ \sum_{i=1}^{k} w^s(s_{i-1},p(s_{i}))$.

A WGS $G^{w}$ represents a game with both \emph{qualitative}
and \emph{quantitative} winning conditions. The former is specified by
the $\omega$-regular condition $\varphi$, and the latter is the energy objective that requires to keep
the energy levels of all plays' prefixes, non-negative. Formally, given
an upper bound $c\in{\mathbb{N}}\cup \{+\infty\}$ and an initial credit
$c_{0} \not = +\infty$, $c_{0} \leq c$, the \emph{energy objective w.r.t. $c$
for $c_{0}$} is
$\textsf{E}_{c}(G^{w}, c_{0}):=\{\sigma \in \textsf{Plays}(G^{w})\mid\forall j \geq 0:
\textsf{EL}_{c}(G^{w}, c_{0}, \sigma[0\ldots j])\geq{0}\}$, and
we say that a play $\sigma \in
\textsf{Plays}(G^{w})$ \emph{wins the energy objective
w.r.t. $c$ for $c_{0}$} if $\sigma \in
\textsf{E}_{c}(G^{w}, c_{0})$. Thus, $\sigma$
\emph{wins the $\varphi$-energy objective w.r.t. $c$ for $c_{0}$}
iff $\sigma \in \textsf{Plays}(G^{w}, \varphi)\cap \textsf{E}_{c}(G^{w},
c_{0})$.

Let $c\in{\mathbb{N}}\cup \{+\infty\}$ be an upper bound.
A strategy $g$ for the system (resp.
environment) \emph{wins} \emph{from} $s\in 2^{\mathcal{V}}$ \emph{w.r.t.
$c$} \emph{for an initial credit} $c_{0} \not = +\infty$, $c_{0} \leq c$, if it
is a strategy for the system (resp. environment) from $s$, and all plays
$\sigma$ that are from $s$ and consistent with $g$, win (resp. do not win)
the $\varphi$-energy objective w.r.t. $c$ for $c_{0}$.
A state $s\in 2^{\mathcal{V}}$
\emph{wins} for the system (resp. environment) \emph{w.r.t. $c$ for an initial
credit $c_{0}$}, if there exists a strategy that wins for the system (resp.
environment) from $s$ w.r.t. $c$ for $c_{0}$.
We say that $s\in 2^{\mathcal{V}}$ \emph{wins for the system
w.r.t. $c$}, if it wins for the system w.r.t. $c$ for \emph{some} initial credit
$c_{0}$. Otherwise, if $s$ wins for the environment w.r.t. $c$ for
\emph{all} initial credits $c_{0} \not = +\infty$, $c_{0} \leq c$, we say that
it wins for the environment. Accordingly, we denote by $W_{\alpha}(c)$ the set
of states that win for player $\alpha \in \playerset$ w.r.t. $c$.

Further, note that the energy objective is \emph{monotone}
w.r.t. both the initial credit and the
bound. That is, for all upper bounds $c, c^{h} \in\mathbb{N}$ and initial
credits $c_{0} \in [0,c]$, $c^{h}_{0} \in [0,c^{h}]$ such that
$c \leq c^{h}$ and $c_{0} \leq c^{h}_{0}$: $\textsf{E}_{c}(G^{w},
c_{0}) \subseteq \textsf{E}_{c^{h}}(G^{w}, c^{h}_{0})$, $\textsf{E}_{+\infty}(G^{w},
c_{0}) \subseteq \textsf{E}_{+\infty}(G^{w}, c^{h}_{0})$ and $\textsf{E}_{c}(G^{w},
c_{0}) \subseteq \textsf{E}_{+\infty}(G^{w}, c_{0})$. This gives rise to
consider in Sect.~\ref{sec:solvingEnergyGames} the
\emph{optimal} (i.e., minimal) initial credit and a sufficiently large upper
bound for which the system wins.

\subsection{\texorpdfstring{$\mu$-Calculus Over Game Structures}{mu-Calculus Over Game Structures}}\label{sec:propMuCalculus}

We consider the logic of the modal $\mu$-calculus \cite{Kozen} over GSs, 
and repeat its definition from~\cite{BJP+12} below.
It will be useful
in Sect.~\ref{sec:EngMuCalc} where we introduce a multi-valued extension thereof.

\begin{defi}[$\mu$-calculus: syntax]\label{def:prop_mu_calculus_grammar}
Let $\mathcal{V}$ be a set of Boolean variables,
and let $\mathit{Var} = \{X, Y, \ldots\}$ be a set of relational variables.
The formulas of $\mu$-calculus (in positive form) are built as
follows:
\[\psi::=~v~|~\neg{v}~|~X~|~
\psi\vee\psi~|~\psi\wedge\psi~|~\circlediamond\psi|~\circlebox\psi~|~\mu{X}\psi|~\nu{X}\psi\]
where $v\in\mathcal{V}$, $X\in\mathit{Var}$, and $\mu$ and $\nu$ denote the least and the greatest fixed-point operators, respectively.
\end{defi}

We denote by $\mathcal{L}_{\mu}$ the set of all formulas generated
by the grammar of Def.~\ref{def:prop_mu_calculus_grammar}.
We further denote by $\sysFormulas$ (resp. $\envFormulas$) the subset of
$\mathcal{L}_{\mu}$ that consists of all formulas in which the modal operator $\circlebox$ (resp. $\circlediamond$)
does \emph{not} occur. We will refer to $\sysFormulas$
(resp. $\envFormulas$) formulas as $\sysmu$ (resp. $\envmu$) formulas.

In this paper, we may refer to the \emph{alternation depth}~\cite{EmersonL86, Niwinski86} of a formula $\psi \in \mathcal{L}_{\mu}$, i.e.,
the number of alternations between interdependent, nested least and greatest
fixed-point operators in $\psi$. For the formal definition, see, e.g.,~\cite[Chapter~10]{2001automata}.

\begin{defi}[$\mu$-calculus: semantics]\label{def:prop_mu_calculus_semantics}
We inductively define the set $\llbracket \psi\rrbracket^{G}_{\mathcal{E}}$ of
states in which $\psi\in \mathcal{L}_{\mu}$ is true w.r.t. a finite GS, $\GS$, and a valuation
$\mathcal{E}: \mathit{Var} \rightarrow (2^{\mathcal{V}} \rightarrow \{0,1\})$, as follows:\footnote{If all of the relational variables in $\psi$ are bound by fixed-point
operators, i.e., $\psi$ is a closed formula, we may omit $\mathcal{E}$ from the
semantic brackets.}

\begin{itemize}
  \setlength\itemsep{0.5em}
  \item For $v\in\mathcal{V}$,
  $\llbracket{v}\rrbracket^{G}_{\mathcal{E}} = \{s\in 2^{\mathcal{V}}~|~ s\models v\}$;
  $\llbracket{\neg{v}}\rrbracket^{G}_{\mathcal{E}} = \{s\in 2^{\mathcal{V}}~|~
  s\not\models v\}$.
  
  \item For $X\in{\mathit{Var}}$, $\llbracket{X}\rrbracket^{G}_{\mathcal{E}} =
  \mathcal{E}(X)$.
  
  \item $\llbracket{\phi_1\vee{\phi_2}}\rrbracket^{G}_{\mathcal{E}} =
  \llbracket{\phi_1}\rrbracket^{G}_{\mathcal{E}}\cup
  \llbracket{\phi_2}\rrbracket^{G}_{\mathcal{E}}$; 
  $\llbracket{\phi_1\wedge{\phi_2}}\rrbracket^{G}_{\mathcal{E}} =
  \llbracket{\phi_1}\rrbracket^{G}_{\mathcal{E}}\cap
  \llbracket{\phi_2}\rrbracket^{G}_{\mathcal{E}}$.
  
  \item
  $\llbracket{\circlediamond\phi}\rrbracket^{G}_{\mathcal{E}} =
  \left\{
  s\in 2^{\mathcal{V}} \Bigl\vert
  \begin{aligned}
&\forall s_{\mathcal{X}}\in 2^{\mathcal{X}},
(s,p(s_{\mathcal{X}}))\models \rho^e \Rightarrow \exists
s_{\mathcal{Y}}\in{2^{\mathcal{Y}}} \text{ such that }\\
&(s, p(s_{\mathcal{X}}), p(s_{\mathcal{Y}}))\models \rho^s
\text{ and } (s_{\mathcal{X}}, s_{\mathcal{Y}})\in
\llbracket{\phi}\rrbracket^{G}_{\mathcal{E}}
\end{aligned}
\right\}$.
\item $\llbracket{\circlebox\phi}\rrbracket^{G}_{\mathcal{E}} =
\left\{
s\in 2^{\mathcal{V}} \Bigl\vert
\begin{aligned}
&\exists s_{\mathcal{X}}\in 2^{\mathcal{X}} \text{ such that }
(s,p(s_{\mathcal{X}}))\models \rho^e \text{ and } \forall
s_{\mathcal{Y}}\in{2^{\mathcal{Y}}},\\
&(s, p(s_{\mathcal{X}}), p(s_{\mathcal{Y}}))\models \rho^s
\Rightarrow (s_{\mathcal{X}}, s_{\mathcal{Y}})\in
\llbracket{\phi}\rrbracket^{G}_{\mathcal{E}}
\end{aligned}
\right\}$.

\item
$\llbracket\twolinescurly{\mu}{\nu} X\phi\rrbracket^{G}_{\mathcal{E}} =
\twolinescurly{\bigcup_{i}}{\bigcap_{i}}{S_i}$, where
$\twolinescurly{S_0=\emptyset}{S_0=2^{\mathcal{V}}}$,
$S_{i+1}= 
\llbracket{\phi}\rrbracket^{G}_{\mathcal{E}[X\mapsto{S_i}]}$, and
$\mathcal{E}[X\mapsto{S}]$ denotes\\[0.5em]
the valuation which is like $\mathcal{E}$ except that it maps $X$ to $S$.
\end{itemize}
\end{defi}

\noindent Note that Def.~\ref{def:prop_mu_calculus_semantics} relates to
game solving rather than to model checking
(cf.~\cite{EmersonJS01,EmersonL86,Schneider2004}).
That is,
the classical predecessor operators from~\cite{Kozen} are replaced with the
\emph{controllable predecessor} operators:
$\mathit{Cpre}_{\sys},~\mathit{Cpre}_{\env}:2^{2^{\mathcal{V}}}\rightarrow{2^{2^{\mathcal{V}}}}$.
The set $\llbracket{\circlediamond\phi}\rrbracket^{G}_{\mathcal{E}} =
\mathit{Cpre}_{\sys}(\llbracket\phi\rrbracket^{G}_{\mathcal{E}})$ consists of all states
from which the system can force the environment in a single step to reach a state in the set
$\llbracket{\phi}\rrbracket^{G}_{\mathcal{E}}$, and dually,
$\llbracket{\circlebox\phi}\rrbracket^{G}_{\mathcal{E}}=
\mathit{Cpre}_{\env}(\llbracket\phi\rrbracket^{G}_{\mathcal{E}})$ consists of all states
from which the environment can force the system in a single step to reach a state in
$\llbracket{\phi}\rrbracket^{G}_{\mathcal{E}}$.

De Alfaro et al.~\cite{AlfaroHM01} have shown that
$\omega$-regular GSs can be solved by evaluating closed $\mathcal{L}_{\mu}$ formulas.
That is, for every $\omega$-regular winning condition $\varphi$, there
is a closed $\sysmu$ (resp. $\envmu$) formula $\psi_{\varphi} \in \sysFormulas$
($\psi_{\neg\varphi} \in \envFormulas$) that for all GSs, $G$, computes the set of states
that win for the system (environment) player, i.e., $W_{\sys} = \llbracket{\psi_{\varphi}}\rrbracket^{G}$
($W_{\env} = \llbracket{\psi_{\neg\varphi}}\rrbracket^{G}$).
We say that $\psi_{\varphi} \in \sysFormulas$ \emph{matches} $\varphi$
if for all GSs $G$, $W_{\sys} = \llbracket{\psi_{\varphi}}\rrbracket^{G}$, and
dually for $\psi_{\neg\varphi} \in \envFormulas$.
%

\section{\texorpdfstring{Energy $\mu$-Calculus Over Weighted Game Structures}{Energy
mu-Calculus Over Weighted Game Structures}}\label{sec:EngMuCalc}

This section introduces \emph{energy $\mu$-calculus}, 
a multi-valued extension of the $\mu$-calculus~\cite{Kozen} over GSs~\cite{BJP+12,EmersonJ91}.
First, Sect.~\ref{sec:EngMuCalcSyntaxSemantics} presents the syntax and semantics thereof.
It identifies two dual syntactic fragments analogous to $\sysmu$ and $\envmu$ formulas from Sect.~\ref{sec:propMuCalculus},
and presents their semantics separately in Sect.~\ref{sec:EngMuCalcSysSemantics} and Sect.~\ref{sec:EngMuCalcEnvSemantics}.
Second, Sect.~\ref{sec:EngMuCalcCorrect} shows that the semantics of each fragment encodes the energy objective w.r.t. finite upper bounds.
The proofs for the theorems, propositions, and lemmas of this section appear in Appx.~\ref{app:energyMuCalc:proofs}.
\subsection{\texorpdfstring{Energy $\mu$-Calculus: Syntax and Semantics}{Energy mu-Calculus: Syntax and Semantics}}\label{sec:EngMuCalcSyntaxSemantics}

Let $\EFormulas$ denote the set of formulas generated by the following grammar:

\begin{defi}[Energy $\mu$-calculus: syntax]\label{def:EngMuCalcGrammar}
Let $\mathcal{V}$ be a set of Boolean variables,
and let $\mathit{Var} = \{X, Y, \ldots\}$ be a set of relational variables.
The syntax of energy $\mu$-calculus (in positive form) is as follows:
\[\psi::=~v~|~\neg{v}~|~X~|~\psi\vee\psi~|~\psi\wedge\psi~|~
\circlediamond_{\op{E}}\psi|~\circlebox_{\op{E}}\psi~|~\mu{X}\psi|~\nu{X}\psi\]
where $v \in \mathcal V$ and $X \in \mathit{Var}$.
\end{defi}

We denote by $\sysEFormulas$ (resp. $\envEFormulas$) the subset of
$\EFormulas$ that consists of all formulas in which $\circlebox_{\op{E}}$ (resp. $\circlediamond_{\op{E}}$)
does \emph{not} occur. We refer to $\sysEFormulas$
(resp. $\envEFormulas$) formulas as \emph{$\sysEmu$} (resp. \emph{$\envEmu$}) formulas.
Further, let $\psi^\op{E}\in \mathcal{L}_{e\mu}$ denote
the energy $\mu$-calculus formula obtained from $\psi\in \mathcal{L}_{\mu}$
by replacing all occurrences of $\circlediamond$ and $\circlebox$ with $\circlediamond_{\op{E}}$ and $\circlebox_{\op{E}}$, respectively.

\subsubsection{\texorpdfstring{$\sysCapEnergy$ $\mu$-Calculus}{sys-Energy mu-Calculus}}\label{sec:EngMuCalcSysSemantics}

The value of a \emph{$\sysEmu$} formula $\psi^\op{E}\in \sysEFormulas$, which
we formally define below, is a function that maps each state of the underlying WGS to the
\emph{minimum} initial credit for which that state wins for the system w.r.t.
a finite upper bound, provided that
$\psi\in \sysFormulas$ matches the underlying winning condition (see Thm.~\ref{thm:sysEngMuCalcCorrectness}).
Accordingly, we define the semantics of $\sysEmu$ formulas w.r.t. a finite upper bound $c\in \mathbb{N}$
and a WGS $G^{w} = \langle G, w^s\rangle$, and use $\GEc$ as a shorthand for the tuple $\langle G^{w}, c \rangle$. 

For $c\in\mathbb{N}$, we respectively define the finite sets
$\mathit{E(c)} := [0,c] \cup \{+\infty\}$ and $\mathit{EF(c)} := \mathit{E(c)}^{2^{\mathcal{V}}}$
of initial credits up to $c$ and \emph{energy functions} from states to $\mathit{E(c)}$.
The semantics' definition of $\sysEmu$ formulas makes use of the \emph{energy controllable predecessor}
operator $\ECpre:\mathit{EF(c)}\rightarrow{\mathit{EF(c)}}$, which we define below in Def.~\ref{def:ECpre},
and corresponds to the classical $\mathit{Cpre_{\sys}}$ operator of Def.~\ref{def:prop_mu_calculus_semantics}.
Informally, for all $f\in \mathit{EF(c)}$, $\ECpre(f)$ denotes the
energy function that maps each state $s \in 2^{\mathcal{V}}$ to the minimum initial credit
sufficient for the system to force the environment to move in a single step from $s$ to some successor $t$ with an energy level at least $f(t)$.


\begin{defi}[Energy controllable predecessor operator]\label{def:ECpre}
For all WGSs $\langle G, w^s\rangle$, upper bounds $c\in \mathbb{N}$, energy functions $f\in \mathit{EF(c)}$, and states $s \in 2^{\mathcal{V}}$,
\begin{align*}
&{\ECpre(f)(s)} := {\max\limits_{s_{\mathcal{X}}\in{2^\mathcal{X}}}
\lbrack\min\limits_{s_{\mathcal{Y}}\in{2^\mathcal{Y}}}\textsf{EC}_c((s,p(s_{\mathcal{X}},s_{\mathcal{Y}})),
f(s_{\mathcal{X}},s_{\mathcal{Y}}))\rbrack}\\&
\text{where $\textsf{EC}_c:2^{\mathcal{V}\cup\mathcal{V'}}\times{\mathit{E(c)}}\rightarrow{\mathit{E(c)}}$ and for all
$s \in 2^{\mathcal{V}}$, $s' \in 2^{\mathcal{V'}}$, and $e \in \mathit{E(c)}$,}\\&
{\textsf{EC}_c((s, s'),e)} =
\begin{cases}
0,\ &\mbox{if $(s,s')\not\models{\rho^e}$}\\
+\infty,\ &\mbox{if $e=+\infty$ or $(s,s') \models {\rho^e \wedge {\neg\rho^s}}$}\\
+\infty,\ &\mbox{if $e - w^{s}(s,s') > c$}\\
\max{\lbrack0, e - w^{s}(s,s')\rbrack},\
& \mbox{otherwise}
\end{cases}&
\end{align*}
\end{defi}

\noindent In Def.~\ref{def:ECpre}, $\ECpre$ uses the function $\textsf{EC}_c:2^{\mathcal{V}\cup\mathcal{V'}}\times{\mathit{E(c)}}\rightarrow{\mathit{E(c)}}$.
Intuitively, $\textsf{EC}_c((s,s'),e)$ is the minimum initial credit sufficient for the system to traverse
the transition $(s,s')$, provided that $e$ is the minimum initial credit required to proceed from $s'$.
Specifically, if $(s,s')$ is invalid for the environment (i.e., $(s,s')\not\models{\rho^e}$), then the initial credit $0$ is sufficient,
and if $\textsf{EC}_c((s,s'),e) = +\infty$, there is no initial credit $c_{0} \leq c$ sufficient for
traversing $(s,s')$. The latter
holds when either $(s,s')$ is only valid for the environment (i.e., $(s,s')\models{\rho^e\wedge{\neg\rho^s}}$),
there is no initial credit sufficient to proceed from $s'$ (i.e., $e=+\infty$), or
the minimum initial credit required to traverse $(s,s')$ exceeds the upper bound (i.e., $e - w^{s}(s,s') > c$).

Def.~\ref{def:sysEngMuCalcSemantics} formally defines the value of $\psi \in \sysEFormulas$
w.r.t. WGSs, finite upper bounds, 
and \emph{valuations over $\mathit{EF(c)}$} that map each relational variable in $\mathit{Var}$
to an energy function in $\mathit{EF(c)}$.

\begin{defi}[$\sysEnergy$ $\mu$-calculus: semantics]\label{def:sysEngMuCalcSemantics}
The semantics $\llbracket \psi\rrbracket^{\GEc}_{\eValuation}$ of
$\psi\in{\sysEFormulas}$ w.r.t. a WGS 
$G^{w} = \langle \mathcal{V},\mathcal{X}, \mathcal{Y}, \rho^e,\rho^s, \varphi, w^s\rangle$, a finite upper bound $c \in \mathbb{N}$, and a valuation $\eValuation: \mathit{Var} \rightarrow \mathit{EF(c)}$ over $\mathit{EF(c)}$, is inductively defined for all states
$s\in{2^{\mathcal{V}}}$, as follows:\footnote{We may drop the valuation $\eValuation$ 
from the semantic brackets for closed formulas.}

\begin{itemize}
\setlength\itemsep{0.5em}
\item For $~v\in\mathcal{V}$,\begin{minipage}{0.5\textwidth}
\begin{align*}&\llbracket{v}\rrbracket^{\GEc}_{\eValuation}(s) =
\begin{cases}
0, & \text{ if } s\vDash{v} \\
+\infty, & \text{ if } s\nvDash{v}
\end{cases}.\\&
\llbracket{\neg{v}}\rrbracket^{\GEc}_{\eValuation}(s) =
\begin{cases}
+\infty, & \text{ if } s\vDash{v} \\
0, & \text{ if } s\nvDash{v}
\end{cases}.&
\end{align*}\end{minipage}
\item For $~X\in{\mathit{Var}}$, $\llbracket{X}\rrbracket^{\GEc}_{\eValuation}(s) = \eValuation(X)(s)$.
\item $\llbracket{\phi_1\vee{\phi_2}}\rrbracket^{\GEc}_{\eValuation}(s) =
\min(\llbracket{\phi_1}\rrbracket^{\GEc}_{\eValuation},
\llbracket{\phi_2}\rrbracket^{\GEc}_{\eValuation})(s)$.
\item $\llbracket{\phi_1\wedge{\phi_2}}\rrbracket^{\GEc}_{\eValuation}(s) =
\max(\llbracket{\phi_1}\rrbracket^{\GEc}_{\eValuation},
\llbracket{\phi_2}\rrbracket^{\GEc}_{\eValuation})(s)$.
\item $\llbracket{\circlediamond_{\op{E}}\phi}\rrbracket^{\GEc}_{\eValuation}(s) = \ECpre(\llbracket\phi\rrbracket^{\GEc}_{\eValuation})(s)$.
\item \parbox[t]{\linewidth}{$\llbracket\twolinescurly{\mu}{\nu} X\phi\rrbracket^{\GEc}_{\eValuation}(s) =
\twolinescurly{\mathit{lfp}}{\mathit{gfp}} (\lambda{f}.\llbracket\phi\rrbracket^{\GEc}_{\eValuation[X\mapsto{f}]})(s) =
\twolinescurly{\min\limits_{i}}{\max\limits_{i}}\lbrack{h_i}\rbrack(s)$,\\
where $\twolinescurly{h_0=f_{+\infty}}{h_0=f_{0}}$ and ${h_{i+1}} = {\llbracket{\phi}\rrbracket^{\GEc}_{\eValuation[X\mapsto{h_i}]}}$.}
\end{itemize}
\end{defi}

\noindent In Def.~\ref{def:sysEngMuCalcSemantics}, $\mathit{lfp}(g)$ and $\mathit{gfp}(g)$ respectively denote the least
and greatest fixed points of $g:\mathit{EF(c)}\rightarrow{\mathit{EF(c)}}$, whose existence will be proved later in this subsection.
For $x \in \{+\infty, 0\}$, $f_{x}$ denotes the constant energy function that maps all states to 
$x$,
and $\eValuation[X\mapsto{f}]$ denotes the valuation which is like $\eValuation$ except that it maps $X$ to $f \in \mathit{EF(c)}$.
Intuitively, in all states $s$ that satisfy an assertion $\psi\in \sysEFormulas$,
the value of $\psi$ is $0$, which is the minimum initial credit sufficient
for the system to enforce $\psi$ from $s$, and enforcing $\phi_{1} \wedge \phi_{2} \in \sysEFormulas$
from a state requires the maximum of the values of $\phi_{1}$ and $\phi_{2}$ (dually for $\vee$ and minimum).
This intuition is translated in Def.~\ref{def:sysEngMuCalcSemantics} to the use of pointwise $\min$ and $\max$ operations that
are respectively the join and meet operations of the \emph{energy function lattice},
$\mathit{EFL(c)}:=\langle \mathit{EF(c)}, \min, \max, f_{+\infty}, f_{0} \rangle$, which replaces the powerset lattice of Def.~\ref{def:prop_mu_calculus_semantics}, and
$f_{+\infty}$ and $f_{0}$ are its bottom and top elements, respectively.
%
%

We also characterize $\mathit{EFL(c)}$ as a partially ordered set
by augmenting
$\mathit{E(c)}$
with the linear order $\preceq$ such that for all $x,y\in{\mathit{E(c)}}
: x\preceq{y}$ iff $x \geq y$, and defining the pointwise partial order
$\preceq$ on $\mathit{EF(c)}$, such that for all $f, g \in \mathit{EF(c)}$:
\[f\preceq{g} \text{ iff } {f} = {\max(f,g)} \text{ iff for all } s \in {2^{\mathcal{V}}}:
f(s)\preceq{g(s)}.\]

Def.~\ref{def:sysEngMuCalcSemantics} uses the \emph{dual} $\min$ and $\max$ operations and thus
the \emph{inverse} order of $\leq$, which reflects the notion that the less required initial credit,
the better for the system player. This design choice maintains correspondence between the values of $\psi\in \sysFormulas$ and $\psi^\op{E}\in \sysEFormulas$
(see Lem.~\ref{lem:sysEngMuCalc}). Importantly, it keeps the classification of $\mu$ and $\nu$ formulas as liveness and safety properties~\cite{Bradfield}.
As an example, for $p \in \mathcal{V}$, consider the $\mu$-formula $\psi_{\diamond{p}} := \mu X (p \vee \circlediamond X)$ that
solves the $p$-states \emph{reachability} game~\cite{2001automata,Thomas95}.
If we used the order $\leq$, we would need to take the $\nu$-formula, $\nu X (p \wedge \circlediamond_{\op{E}} X)$,
instead of $\psi_{\diamond{p}}^\op{E}$, to solve the corresponding reachability energy game, while $\nu X (p \wedge \circlediamond X)$, in fact,
solves the dual $p$-states \emph{safety} game. Therefore, instead of $x \geq y$, we write $x \preceq y$ to reflect that, although a smaller integer, $y$ is an element greater than $x$.

Since $\mathit{EFL(c)}$ is a finite, complete lattice, it follows from the Knaster-Tarski fixed-point theorem~\cite{Tarski1955} that
monotonicity of $\lambda{f}.\llbracket\psi\rrbracket^{\GEc}_{\eValuation[X\mapsto{f}]}$ w.r.t. $\preceq$ for all $\psi \in \sysEFormulas$
guarantees the existence of its extremal fixed points,
each of which can be computed using fixed-point iteration (as in Def.~\ref{def:sysEngMuCalcSemantics}) that stabilizes at
most after $2^{|\mathcal{V}|}\cdot (c+1)$ iterations.
We claim that Def.~\ref{def:sysEngMuCalcSemantics} is $\preceq$-monotone due to the following:
(1) Def.~\ref{def:EngMuCalcGrammar} is in positive form where 
negation only applies to the Boolean variables
$\mathcal{V}$;
(2) as shown by Prop.~\ref{prop:ECpreMonotone}, the $\mathit{ECpre_{\sys}}$ operator
is $\preceq$-monotone; (3) monotonicity is closed under function composition,
meet and join operations, and the fixed-point operators are monotone (for detailed proofs, see, e.g.,
Lem.~3.16,
and Lem.~3.17 in~\cite{Schneider2004}).


\begin{prop}\label{prop:ECpreMonotone}
$\ECpre:\mathit{EF(c)}\rightarrow{\mathit{EF(c)}}$
from Def.~\ref{def:ECpre} is $\preceq$-monotone. That is,
for all $f,g \in \mathit{EF(c)}$: if $f\preceq{g}$
then $\ECpre(f) \preceq \ECpre(g)$.
\end{prop}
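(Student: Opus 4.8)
The plan is to prove $\preceq$-monotonicity of $\ECpre$ by unwinding Def.~\ref{def:ECpre} and propagating the hypothesis $f \preceq g$ through the nested $\min$ and $\max$ quantifiers, using that $\min$ and $\max$ over a linearly ordered set are monotone in each argument. Recall that $f \preceq g$ means $f(t) \preceq g(t)$ for all states $t$, i.e., $f(t) \geq g(t)$ as integers-with-$+\infty$ (the order $\preceq$ is the reverse of $\leq$ on $\mathit{E(c)}$). So concretely I must show $\ECpre(f)(s) \geq \ECpre(g)(s)$ for every $s$.

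First I would establish the key building block: the auxiliary function $\textsf{EC}_c$ is monotone in its second argument with respect to $\preceq$, i.e., for a fixed transition $(s,s')$, if $e \preceq e'$ (that is, $e \geq e'$) then $\textsf{EC}_c((s,s'),e) \preceq \textsf{EC}_c((s,s'),e')$ (that is, $\textsf{EC}_c((s,s'),e) \geq \textsf{EC}_c((s,s'),e')$). This is a short case analysis on the definition of $\textsf{EC}_c$: the cases $(s,s')\not\models\rho^e$ and $(s,s')\models\rho^e\wedge\neg\rho^s$ give a constant output; the case $e = +\infty$ forces $e' = +\infty$ only when $e \leq e'$, but here $e \geq e'$, so actually I should note that if the output is $+\infty$ because $e = +\infty$ then... wait, $e \geq e'$ with $e = +\infty$ does not force $e' = +\infty$, so I need to check the output decreases: indeed when $e$ is larger the ``otherwise'' branch $\max[0, e - w^s(s,s')]$ is larger, and the threshold $e - w^s(s,s') > c$ for returning $+\infty$ is more easily met by larger $e$, so in all cases a larger $e$ yields a $\geq$ output, which is exactly $\preceq$-monotonicity. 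I would organize this as: fix $(s,s')$ and $w := w^s(s,s')$, assume $e \geq e'$, and verify $\textsf{EC}_c((s,s'),e) \geq \textsf{EC}_c((s,s'),e')$ by checking the branch taken for $e'$ and showing the value for $e$ is at least as large.

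Next, with this lemma in hand, I would chain the quantifiers. For each fixed $s_{\mathcal{X}}, s_{\mathcal{Y}}$ we have $f(s_{\mathcal{X}}, s_{\mathcal{Y}}) \preceq g(s_{\mathcal{X}}, s_{\mathcal{Y}})$ by hypothesis, hence $\textsf{EC}_c((s, p(s_{\mathcal{X}}, s_{\mathcal{Y}})), f(s_{\mathcal{X}}, s_{\mathcal{Y}})) \preceq \textsf{EC}_c((s, p(s_{\mathcal{X}}, s_{\mathcal{Y}})), g(s_{\mathcal{X}}, s_{\mathcal{Y}}))$ by the lemma. Since $\min$ over $s_{\mathcal{Y}}$ (with respect to $\leq$, i.e., $\max$ with respect to $\preceq$ — I must be careful here: Def.~\ref{def:ECpre} literally writes $\min_{s_{\mathcal{Y}}}$ and $\max_{s_{\mathcal{X}}}$ with respect to the integer order $\leq$) is monotone in each argument, taking $\min_{s_{\mathcal{Y}}}$ on both sides preserves $\leq$, and then taking $\max_{s_{\mathcal{X}}}$ preserves $\leq$, giving $\ECpre(f)(s) \geq \ECpre(g)(s)$, i.e., $\ECpre(f)(s) \preceq \ECpre(g)(s)$. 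Since $s$ was arbitrary, $\ECpre(f) \preceq \ECpre(g)$.

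The main obstacle is bookkeeping of the two opposite orders: the integer order $\leq$ in which the $\min$/$\max$ of Def.~\ref{def:ECpre} are literally taken, versus the lattice order $\preceq = \geq$ in which monotonicity is asserted. I would handle this by stating once, up front, the equivalence ``$a \preceq b \iff a \geq b$'' on $\mathit{E(c)}$, and then doing the whole argument in the integer order $\geq$, only translating back to $\preceq$ at the very end; this keeps the case analysis for $\textsf{EC}_c$ clean. No deep idea is required — the content is entirely the elementary observation that composing monotone maps (here $\textsf{EC}_c$ in its second slot, then $\min$, then $\max$) yields a monotone map.
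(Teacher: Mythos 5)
Your proposal is correct and follows essentially the same route as the paper's proof: first a case analysis on the branches of $\textsf{EC}_c$ showing it is $\preceq$-monotone in its second argument, then propagating this through the $\min_{s_{\mathcal{Y}}}$ and $\max_{s_{\mathcal{X}}}$ quantifiers using their monotonicity. The mid-argument hesitation about the $e=+\infty$ branch resolves correctly (a larger $e$ always yields a $\geq$ output, and $+\infty$ is $\geq$ everything), matching the paper's treatment.
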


\subsubsection{\texorpdfstring{$\envCapEnergy$ $\mu$-Calculus}{env-Energy mu-Calculus}}\label{sec:EngMuCalcEnvSemantics}
So far, we have considered $\omega$-regular energy games from
the system player's perspective who aims 
to minimize the required initial credit.
We now consider the dual perspective of the environment player,
and encode it in the semantics of \emph{$\envEmu$} formulas.
Informally, given $\psi\in \envFormulas$ that matches the winning condition,
the value of $\psi^\op{E}\in \envEFormulas$
in a state of the underlying WGS
corresponds to the \emph{maximum} initial credit for which
that state wins for the environment w.r.t.
a finite upper bound (see Thm.~\ref{thm:envEngMuCalcCorrectness}).

The formal semantics of all $\psi\in \envEFormulas$ in which
$\circlebox_{\op{E}}$ does \emph{not} occur is the same
as that defined previously in Def.~\ref{def:sysEngMuCalcSemantics}
for $\sysEmu$ formulas. We now treat the remaining case
of all $\psi \in \envEFormulas$ such that
$\psi = \circlebox_{\op{E}}\phi$ and $\phi \in \envEFormulas$.

In order to obtain
a \emph{duality} between the controllable predecessor operators of the two players, as exists
in the Boolean powerset semantics of Def.~\ref{def:prop_mu_calculus_semantics},
we first
augment
the
energy function lattice
$\mathit{EFL(c)}$ with a pointwise unary \emph{negation} operation ${\sim} : {{\mathit{E(c)}} \rightarrow {\mathit{E(c)}}}$,
such that for every $x\in{\mathit{E(c)}}$,
\begin{align}\label{eq:negation}
&{\sim x} =
\begin{cases}
+\infty,& \text{if  $x = 0$}\\
0,& \text{if $x = +\infty$}\\
c+1-x,& \text{otherwise}
\end{cases}&
\end{align}

\noindent Lem.~\ref{lem:deMorganAlgebra} shows that $\sim$ is an involution that satisfies De Morgan laws.

\begin{lem}\label{lem:deMorganAlgebra}
$\langle \mathit{EFL(c)},\sim \rangle$ is a De Morgan algebra.
\end{lem}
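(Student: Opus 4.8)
The plan is to verify the three ingredients of a De Morgan algebra in turn: that the lattice reduct of $\mathit{EFL(c)}$ is a bounded distributive lattice, that $\sim$ is an involution, and that $\sim$ is order-reversing; the De Morgan laws themselves will then come for free from the last two facts. Since both the lattice operations $\min,\max$ and the operation $\sim$ on $\mathit{EF(c)}$ are defined pointwise from their counterparts on $\mathit{E(c)}$, every identity to be checked reduces to the corresponding statement in $\mathit{E(c)}$, and I would carry out the argument there.

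For the lattice reduct, I would first note that $\langle \mathit{E(c)},\preceq\rangle$ is a finite chain: by definition $x\preceq y$ iff $x\geq y$, so $+\infty\preceq c\preceq c-1\preceq\cdots\preceq 1\preceq 0$, with least element $+\infty$, greatest element $0$, join $\min$, and meet $\max$. Every chain is a distributive lattice, and $\mathit{EF(c)}=\mathit{E(c)}^{2^{\mathcal{V}}}$ inherits bounded distributivity under the pointwise operations, with bottom $f_{+\infty}$ and top $f_{0}$; this is exactly the lattice named in the definition of $\mathit{EFL(c)}$. Next I would check that $\sim$ is an involution on $\mathit{E(c)}$ by the evident case split: $\sim\sim 0=\sim(+\infty)=0$ and $\sim\sim(+\infty)=\sim 0=+\infty$, while for $x\in\{1,\dots,c\}$ one has $c+1-x\in\{1,\dots,c\}$ as well, so $\sim\sim x=c+1-(c+1-x)=x$. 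Then I would check that $\sim$ is $\preceq$-antitone, i.e. $x\preceq y$ implies $\sim y\preceq\sim x$: unfolding $\preceq$ this asks that $x\geq y$ implies $\sim y\geq\sim x$, which follows because $\sim$ sends the $\leq$-chain $0<1<\cdots<c<+\infty$ to $+\infty>c>\cdots>1>0$, hence is strictly order-reversing. Both the involution property and antitonicity lift pointwise to $\mathit{EF(c)}$, and in particular $\sim f_{0}=f_{+\infty}$, $\sim f_{+\infty}=f_{0}$.

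Finally, I would invoke the standard fact that on any lattice an order-reversing involution automatically satisfies the De Morgan laws: from $f,g\preceq f\vee g$ and antitonicity, $\sim(f\vee g)$ is a lower bound of $\{\sim f,\sim g\}$, so $\sim(f\vee g)\preceq\sim f\wedge\sim g$; conversely, applying $\sim$ to $\sim f\wedge\sim g\preceq\sim f$ and $\preceq\sim g$ and using involution gives $f,g\preceq\sim(\sim f\wedge\sim g)$, hence $f\vee g\preceq\sim(\sim f\wedge\sim g)$, and applying $\sim$ once more yields $\sim f\wedge\sim g\preceq\sim(f\vee g)$; the two inequalities give $\sim(f\vee g)=\sim f\wedge\sim g$, and the dual law $\sim(f\wedge g)=\sim f\vee\sim g$ is symmetric. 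Combining the three paragraphs establishes that $\langle\mathit{EFL(c)},\sim\rangle$ is a De Morgan algebra. I do not expect a real obstacle here; the only points needing care are the boundary values $0$ and $+\infty$, which $\sim$ swaps, and the observation that $\sim$ maps $\{1,\dots,c\}$ bijectively onto itself — both discharged by the case analyses above.
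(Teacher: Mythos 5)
Your proof is correct, and it establishes the same three ingredients the paper does (bounded distributive lattice reduct, involution, De Morgan laws), but it reaches the De Morgan laws by a genuinely different route. The paper verifies $\sim(\max(x,y))=\min(\sim x,\sim y)$ directly on $\mathit{E(c)}$ by a three-way case split on whether $x$ or $y$ is $+\infty$, is $0$, or lies in $[1,c]$, and then notes the dual law follows from involution; you instead prove that $\sim$ is $\preceq$-antitone (which is itself a small case check on the chain $0<1<\cdots<c<+\infty$) and then invoke the general order-theoretic fact that an antitone involution on any lattice satisfies both De Morgan laws, which you prove abstractly. Your version buys a more conceptual argument that would survive unchanged if the definition of $\sim$ on the interior values $[1,c]$ were altered to any other order-reversing bijection, and it makes explicit the observation that $\langle\mathit{E(c)},\preceq\rangle$ is a finite chain (hence distributive), which the paper asserts without comment; the paper's direct computation buys a shorter, entirely self-contained verification that does not appeal to an external lemma. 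Both proofs handle the boundary values $0$ and $+\infty$ correctly, and your pointwise-lifting remarks (including $\sim f_0=f_{+\infty}$ and $\sim f_{+\infty}=f_0$) are exactly what is needed to pass from $\mathit{E(c)}$ to $\mathit{EF(c)}$.
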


We denote by $\ECpreEnv$ the dual operator of ${\ECpre}:{{\mathit{EF(c)}}\rightarrow{{\mathit{EF(c)}}}}$
from Def.~\ref{def:ECpre}, i.e., such that for all $f \in \mathit{EF(c)}$, ${\ECpreEnv(f)} = {{\sim \ECpre({\sim f})}}$, 
and complete the definition of $\llbracket\psi\rrbracket^{\GEc}_{\eValuation}$ for all $\psi\in \envEFormulas$
by defining ${\llbracket{\circlebox_{\op{E}}\phi}\rrbracket^{\GEc}_{\eValuation}} :=
{\ECpreEnv(\llbracket\phi\rrbracket^{\GEc}_{\eValuation})}$
for all $\phi\in \envEFormulas$.
We provide the explicit definition of $\ECpreEnv$ in Appx.~\ref{app:energyMuCalc:extDefinitions}.

It follows from Prop.~\ref{prop:ECpreMonotone}
that $\ECpreEnv$ is also $\preceq$-monotone.
Hence, we conclude that the semantics of both
$\sysEmu$ and $\envEmu$ formulas is $\preceq$-monotone, and
consequently well-defined. Further, for all $\psi \in \EFormulas$,
computing $\llbracket\psi\rrbracket^{\GEc}_{\eValuation}$ using optimizations from~\cite{BrowneCJLM97} requires
$O(n^{\lfloor d/2 \rfloor +1 })$ iterations, where $n=2^{|\mathcal{V}|}\cdot (c+1)$
is the height of $\mathit{EFL(c)}$ and $d$ is the alternation depth of $\psi$.

As a final remark, we observe that Lem.~\ref{lem:deMorganAlgebra} allows us to
add \emph{negation} to the logic of energy $\mu$-calculus, namely to define:
${\llbracket \neg\psi \rrbracket^{G^w(c)}_\eValuation} = {{\sim}{\llbracket \psi \rrbracket^{G^w(c)}_\eValuation}}$.
That is permitted since Lem.~\ref{lem:deMorganAlgebra} implies the correctness of the following well-known equations (see, e.g., Lem.~2.48 and Lem.~3.13 in~\cite{Schneider2004}):
\begin{align}
\label{eq:negation1}&{\llbracket \neg\neg\psi \rrbracket^{G^w(c)}_\eValuation} = {\llbracket \psi \rrbracket^{G^w(c)}_\eValuation}.\\
\label{eq:negation2}&{\llbracket \neg(\psi \wedge (\text{resp. } \vee)~\xi) \rrbracket^{G^w(c)}_\eValuation} = \llbracket \neg\psi \vee (\text{resp. } \wedge)~\neg\xi \rrbracket^{G^w(c)}_\eValuation.\\
\label{eq:negation3}&{\llbracket \neg(\circlediamond_{\op{E}} \text{(resp. $\circlebox_{\op{E}}$)}~\psi) \rrbracket^{G^w(c)}_\mathcal D} = 
{\llbracket \circlebox_{\op{E}} \text{(resp. $\circlediamond_{\op{E}}$)}~\neg\psi \rrbracket^{G^w(c)}_\mathcal D}.\\
\label{eq:negation4}&{\llbracket \neg(\mu\text{(resp. $\nu$)}X~\psi(X)) \rrbracket^{G^w(c)}_\eValuation} = {\llbracket \nu\text{($\text{resp. }\mu$)}X~(\neg\psi(\neg X))\rrbracket^{G^w(c)}_\eValuation}.&
\end{align}


\noindent However, in order to keep the semantics monotone, all sub-formulas of the form $\mu X \phi$ or $\nu X \phi$ must satisfy that
all free occurrences of $X$ in $\phi$ fall under an even number of negations.

\subsection{\texorpdfstring{Energy $\mu$-Calculus: Correctness}{Energy mu-Calculus: Correctness}}\label{sec:EngMuCalcCorrect}

Let $\varphi$ be an $\omega$-regular condition,
and let $\psi_{\varphi}\in \sysFormulas$ and $\psi_{\neg\varphi}\in \envFormulas$
be closed formulas that match $\varphi$.
Let $G^{w} = (G, w^s)$ be a WGS with $\varphi$ as its winning condition, and let
$c \in \mathbb{N}$ be a finite upper bound.

In this section we prove the following theorems. 

\begin{thm}[$\sysEnergy$ $\mu$-calculus: correctness]\label{thm:sysEngMuCalcCorrectness}
For all states $s\in{2^{\mathcal{V}}}$,
if $\llbracket{\psi^{\op{E}}_{\varphi}\rrbracket^{\GEc}}(s) \not = +\infty$ then $\llbracket{\psi^{\op{E}}_{\varphi}\rrbracket^{\GEc}}(s)$ is the
minimum initial credit for which the system wins from $s$ w.r.t. $c$ in
$G^{w}$. Otherwise, $s$ does not win for the system w.r.t. $c$.
\end{thm}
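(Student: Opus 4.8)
The plan is to prove Theorem~\ref{thm:sysEngMuCalcCorrectness} by a reduction to the Boolean $\mu$-calculus over a modified game structure in which the bounded energy level is encoded explicitly as part of the state. Concretely, given the WGS $G^w$ and the finite bound $c$, I would construct an (ordinary, $\omega$-regular) GS $\widehat{G}$ whose variables are those of $G$ together with $O(\log c)$ fresh Boolean variables encoding a value in $E(c) = [0,c]\cup\{+\infty\}$, and whose transitions update this value along each $\rho^s$-transition by $\min(c,\,\cdot\, + w^s(\cdot))$, moving to a designated sink ``dead'' region whenever the value would drop below $0$; the winning condition of $\widehat{G}$ is $\widehat\varphi := \varphi \wedge \mathsf{G}(\neg\mathit{dead})$, following the safety-encoding of~\cite{BrimCDGR11}. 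The central combinatorial fact to establish is a correspondence lemma: for every state $s$ of $G$ and every $e\in[0,c]$, the system wins from the state $(s, e)$ in $\widehat{G}$ (for $\widehat\varphi$) if and only if $e$ is a sufficient initial credit for the system to win the $\varphi$-energy objective w.r.t.\ $c$ from $s$ in $G^w$. This is essentially immediate from the definitions of $\textsf{EL}_c$ and $\textsf{E}_c(G^w,c_0)$ once the truncation and the deadlock conventions are matched carefully.

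Next I would relate the two fixed-point computations. The key observation is that an energy function $f \in \mathit{EF(c)}$ can be identified with a \emph{downward-closed} subset of states of $\widehat{G}$, namely $\widehat{f} := \{(s,e) \mid e \geq f(s)\}$ (with the convention that $f(s)=+\infty$ contributes no states); downward closure here is with respect to the energy coordinate, and reflects exactly the monotonicity of the energy objective noted in Sect.~\ref{sec:combinedEngObj}. Under this identification, I claim the pointwise $\min$/$\max$ of Def.~\ref{def:sysEngMuCalcSemantics} correspond to $\cup$/$\cap$ of the associated state sets, the constants $f_{+\infty}$, $f_0$ correspond to $\emptyset$ and the full (non-dead) state space, and $\ECpre$ on $f$ corresponds to $\mathit{Cpre}_{\sys}$ applied to $\widehat{f}$ in $\widehat{G}$ (restricted back to the downward-closed sets). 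Verifying the $\ECpre$/$\mathit{Cpre}_{\sys}$ match requires unwinding Def.~\ref{def:ECpre} case by case: the ``$(s,s')\not\models\rho^e$'' case corresponds to environment deadlock (system wins trivially, credit $0$), the ``$(s,s')\models\rho^e\wedge\neg\rho^s$'' case to system deadlock, the $e-w^s>c$ case to the transition to $\mathit{dead}$ when the updated level is negative after the implicit inverse-truncation, and the generic case to the actual energy update. Granting this, an induction on the structure of $\psi\in\sysFormulas$ shows $\llbracket\psi^\op{E}\rrbracket^{\GEc}$ and $\llbracket\psi\rrbracket^{\widehat{G}}$ denote the same downward-closed set, using that both semantics iterate the respective (monotone) operators from the respective bottom/top elements, and that the lattice isomorphism between $\mathit{EFL(c)}$ (ordered by $\preceq$, i.e.\ reverse numerical order on credits) and the lattice of downward-closed subsets of $\widehat{G}$'s state space (ordered by $\subseteq$) transports least/greatest fixed points to least/greatest fixed points.

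Finally, combining the three ingredients: since $\psi_\varphi$ matches $\varphi$, we have $\llbracket\psi_\varphi\rrbracket^{\widehat{G}} = W_{\sys}(\widehat{G},\widehat\varphi)$; by the fixed-point correspondence this set equals $\widehat{g}$ where $g := \llbracket\psi_\varphi^\op{E}\rrbracket^{\GEc}$; by the correspondence lemma $(s,e)\in W_{\sys}(\widehat{G},\widehat\varphi)$ iff $e$ is a sufficient winning credit w.r.t.\ $c$ from $s$. Hence $(s,e)\in\widehat{g}$, i.e.\ $e \geq g(s)$, iff $e$ is a sufficient credit; this says precisely that if $g(s)\neq+\infty$ then $g(s)$ is the \emph{minimum} sufficient initial credit, and if $g(s)=+\infty$ then no credit $c_0\le c$ works, i.e.\ $s$ does not win for the system w.r.t.\ $c$ --- which is the statement of the theorem.

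I expect the main obstacle to be the bookkeeping around deadlocks and the asymmetry between the players built into the GS formalism (environment moves first, then system), together with the truncation behavior: getting the $\widehat{G}$ transition relation and its winning condition right so that the correspondence lemma holds \emph{on the nose} for all four cases of $\textsf{EC}_c$ --- in particular ensuring that a system move that would overshoot $c$ is handled consistently in $\widehat{G}$ and that ``$+\infty$'' (no finite credit suffices) is faithfully represented by unreachability of any legal $\widehat{G}$-state --- is the delicate part. The fixed-point transfer and the structural induction on $\psi$ are then routine given the $\ECpre \leftrightarrow \mathit{Cpre}_{\sys}$ match and Prop.~\ref{prop:ECpreMonotone}.
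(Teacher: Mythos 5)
Your proposal follows essentially the same route as the paper: the paper's Def.~\ref{def:naiveReduction} is exactly your credit-annotated game $\widehat{G}$ (with the safety violation realized as a system deadlock rather than a $\mathit{dead}$ sink, and the credit variable updated nondeterministically to any lower bound of the truncated level rather than deterministically), Thm.~\ref{thm:reductionCorrectness} is your correspondence lemma, and Lem.~\ref{lem:sysEngMuCalc} is precisely your structural induction identifying energy functions with credit-upward-closed state sets and matching $\ECpre$ with $\mathit{Cpre}_{\sys}$ case by case. These minor formulation differences do not change the argument, so the proposal is correct and matches the paper's proof.
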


\begin{thm}[$\envEnergy$ $\mu$-calculus: correctness]\label{thm:envEngMuCalcCorrectness}
For all states $s\in{2^{\mathcal{V}}}$,
if $\llbracket{\psi^{\op{E}}_{\neg\varphi}\rrbracket^{\GEc}}(s) \not
= +\infty$ then $c - \llbracket{\psi^{\op{E}}_{\neg\varphi}\rrbracket^{\GEc}}(s)$ is the
maximum initial credit for which the environment wins from $s$ w.r.t. $c$ in $G^{w}$.
Otherwise, $s$ does not win for the environment w.r.t. $c$ for all initial credits $c_{0}\in [0,c]$.
\end{thm}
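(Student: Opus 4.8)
The plan is to prove Theorems~\ref{thm:sysEngMuCalcCorrectness} and~\ref{thm:envEngMuCalcCorrectness} via the reduction to $\omega$-regular games announced in the introduction, encoding the bounded energy objective as additional safety constraints following~\cite{BrimCDGR11}. Concretely, I would fix a WGS $G^{w}$ and a finite bound $c\in\mathbb N$, and build an ordinary GS $\widehat G$ whose state space is $2^{\mathcal V}\times E(c)$: alongside the original variables, $\widehat G$ carries an auxiliary (binary-encoded) ``energy register'' ranging over $\{0,\dots,c\}\cup\{+\infty\}$, whose value after a system transition $(s,s')$ with weight $w^s(s,s')$ is updated to $\min(c, e + w^s(s,s'))$, and where a move that would send the register below $0$ is simply disallowed for the system (equivalently, leads to a sink losing for the system). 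The winning condition of $\widehat G$ is $\varphi$ conjoined with the safety requirement ``never get stuck at the energy sink''. One checks that for any initial credit $c_0\in[0,c]$, the system wins from $(s,c_0)$ in $\widehat G$ iff the system wins from $s$ in $G^{w}$ w.r.t.\ $c$ for initial credit $c_0$; this is essentially the definitional unrolling of $\textsf{EL}_c$ and $\textsf{E}_c(G^w,c_0)$, plus the observation that optimal play need not remember more than the current register value.

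The second ingredient is to relate the energy $\mu$-calculus evaluation over $G^w$ with the Boolean $\mu$-calculus evaluation of the \emph{same} formula $\psi_\varphi$ over $\widehat G$. I would prove, by induction on the structure of $\psi\in\sysFormulas$, a correspondence lemma of the form: for every energy valuation $\mathcal D$ and the ``matching'' Boolean valuation $\mathcal E$ defined by $(s,e)\in\mathcal E(X)\iff e\succeq \mathcal D(X)(s)$ (i.e.\ $e\ge$ the required credit, using the inverted order $\preceq$), one has $(s,e)\in\llbracket\psi\rrbracket^{\widehat G}_{\mathcal E}$ iff $e\succeq\llbracket\psi^{\op E}\rrbracket^{\GEc}_{\mathcal D}(s)$. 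The base cases ($v$, $\neg v$, $X$) are immediate from the definitions; $\vee/\wedge$ follow because pointwise $\min/\max$ on $E(c)$ are exactly union/intersection under this correspondence; the modal case is precisely the content of Def.~\ref{def:ECpre} — $\ECpre$ was designed so that $\ECpre(f)(s)$ is the least $e$ from which a single $\widehat G$-step reaches a pair $(t,e')$ with $e'\succeq f(t)$, and here the $\max_{s_{\mathcal X}}\min_{s_{\mathcal Y}}$ and the three-case split of $\textsf{EC}_c$ (invalid for env $\mapsto 0$; dead for sys or $e=+\infty$ or overflow $\mapsto +\infty$; else $\max(0,e-w^s)$) match the register update and the sink exactly; and the fixed-point cases follow from $\preceq$-monotonicity (Prop.~\ref{prop:ECpreMonotone}, closure under composition/meet/join) together with the fact that the iteration stages of Def.~\ref{def:sysEngMuCalcSemantics} and Def.~\ref{def:prop_mu_calculus_semantics} are in lockstep under the correspondence, since both lattices are finite and the correspondence is a lattice isomorphism onto the upward-closed subsets of $2^{\mathcal V}\times E(c)$ in each fibre. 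Since $\psi_\varphi$ matches $\varphi$, $\llbracket\psi_\varphi\rrbracket^{\widehat G}=W_{\sys}(\widehat G)$, so $(s,e)\in W_{\sys}(\widehat G)\iff e\succeq\llbracket\psi_\varphi^{\op E}\rrbracket^{\GEc}(s)$; combined with the reduction this says exactly that $\llbracket\psi_\varphi^{\op E}\rrbracket^{\GEc}(s)$ is the minimum initial credit (or $+\infty$ if none $\le c$ works), which is Thm.~\ref{thm:sysEngMuCalcCorrectness}.

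For Thm.~\ref{thm:envEngMuCalcCorrectness} I would exploit the De Morgan structure from Lem.~\ref{lem:deMorganAlgebra} and determinacy of $\omega$-regular games. Since $\psi_{\neg\varphi}\in\envFormulas$ matches $\neg\varphi$, $\llbracket\psi_{\neg\varphi}\rrbracket^{G}=W_{\env}$, and one has the Boolean duality $W_{\env}(c)=2^{\mathcal V}\setminus W_{\sys}(c)$ from determinacy. Applying the negation equations~\eqref{eq:negation1}--\eqref{eq:negation4} (valid by Lem.~\ref{lem:deMorganAlgebra}), $\psi_{\neg\varphi}^{\op E}$ computes $\sim$ applied to an energy function that, by the argument above applied to the dual game / dual formula, maps $s$ to the least credit from which the \emph{environment fails}; the involution $\sim x = c+1-x$ on $[1,c]$ (and swapping $0\leftrightarrow+\infty$) then turns ``least credit at which the environment stops winning'' into ``$c$ minus the largest credit at which the environment still wins'', which unwinds to the stated formula $c-\llbracket\psi_{\neg\varphi}^{\op E}\rrbracket^{\GEc}(s)$, with the $+\infty$ case corresponding to the environment losing for every $c_0\in[0,c]$.

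The main obstacle, I expect, is the modal/fixed-point step of the correspondence lemma — getting the bookkeeping exactly right at the boundary values $0$ and $+\infty$ and at the overflow threshold (the clause $e-w^s>c$), and verifying that the monotone iteration over the energy-function lattice really does stay synchronized, stage by stage, with the iteration over the powerset lattice of $\widehat G$ once we pass to the fibrewise upward-closed sets. A secondary subtlety is confirming that strategies in $G^w$ need no memory beyond the energy register, so that optimality in $\widehat G$ transfers back verbatim; this is standard for energy objectives but should be stated carefully. Routine monotonicity and lattice-isomorphism facts I would cite from~\cite{Schneider2004,Tarski1955} rather than reprove.
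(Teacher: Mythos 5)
Your proposal follows essentially the same route as the paper: a reduction to an ordinary GS with a bounded energy register and safety constraints (the paper's Def.~\ref{def:naiveReduction} and Thm.~\ref{thm:reductionCorrectness}, which uses an ``at most the accumulated level'' constraint rather than your deterministic update, but to the same effect), a structural-induction correspondence lemma identifying $\llbracket\psi^{\op E}\rrbracket^{\GEc}$ with the fibrewise-upward-closed set $\llbracket\psi\rrbracket^{G^*}$ (Lem.~\ref{lem:sysEngMuCalc}), and a dualization of the environment case via the De~Morgan negation laws and determinacy (exactly how the paper derives Lem.~\ref{lem:envEngMuCalc} from Lem.~\ref{lem:sysEngMuCalc}). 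The argument is correct and complete in outline, matching the paper's proof.
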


The proofs of the above theorems rely on
an alternative solution to $\omega$-regular energy games via a reduction
to classical $\omega$-regular games.
Below, Def.~\ref{def:naiveReduction} defines this reduction, which is inspired by the reduction of energy to safety games from~\cite{BrimCDGR11}
and encodes the energy objective by adding additional safety constraints, all of which are defined over new
system controlled variables, to the system's transitions.

Notice that the reduction in Def.~\ref{def:naiveReduction} is presented here only as part of the correctness proof.
It is \emph{not} part of our energy $\mu$-calculus based algorithms that solve $\omega$-regular energy games.
We describe these algorithms later in Sect.~\ref{sec:solvingEnergyGames}.


\begin{defi}[Reduction: WGS to GS]
\begin{itemize}
  \item[]
  \item \emph{Input}: $\WGS$ and $c\in{\mathbb{N}}$.
  \item \emph{Output}: The GS $G^{*} = \langle \mathcal{V}^{*}, \mathcal{X},
  \mathcal{Y}^{*}, \rho^{e},\rho^{s*}, \varphi \rangle$ where
  \begin{enumerate}
    \item $\mathcal{V}^{*} := \mathcal{X}\cup{\mathcal{Y}^{*}}$.
    \item \label{def:naiveReduction:yDomSet} $\mathcal{Y}^{*} := \mathcal{Y}\cup
    yDom$ where $yDom:=\{y_{0}, \ldots, y_{\lfloor\log(c)\rfloor}\}$ encodes the domain $[0,c]$
    of a new system variable $y$.
    \item \label{def:naiveReduction:newRhoS} 
    For all $s_1, s_2 \in 2^{\mathcal{V}}$ and $c_{1}, c_{2} \in [0,c]$:
    $((s_1, c_1), p(s_2, c_2))\models \rho^{s*}$
    iff $(s_1, p(s_2))\models \rho^{s}$ and
    $c_1 + w^{s}(s_1, p(s_2)) \geq c_2$, where $c_{1}, c_{2}$ are
    used interchangeably with their binary encodings over the variables $yDom$.
  \end{enumerate}
\end{itemize}\label{def:naiveReduction}
\end{defi}

\noindent Intuitively, the reduction of Def.~\ref{def:naiveReduction} constructs
the GS $G^{*}$ that differs from $G^{w}$ in the following attributes:
(1) A \emph{blown-up state space} due to
an additional system controlled variable $y$ that keeps track of the
initial credit or energy level under $c$ in every state;
(2) \emph{additional constraints} to $\rho^{s}$, which ensure
that the non-negative value of the new variable $y$ at each state $s^{*}\in
2^{\mathcal{V}^*}$, during a winning play $\sigma^{*} =
s^{*}_{0}s^{*}_{1}\ldots s^{*} \ldots\in\textsf{Plays}(G^{*})$ in
$G^{*}$, is a lower bound of the energy level
of the prefix
$s^{*}_{0}|_{\mathcal{V}}\ldots s^{*}|_{\mathcal{V}}$ in $G^{w}$.

\begin{thm}[Correctness of Def.~\ref{def:naiveReduction}]\label{thm:reductionCorrectness}
For all upper bounds $c\in{\mathbb{N}}$, WGSs $G^{w}$, initial
credits $c_{0}\in [0,c]$, and $G^{w}$ states $s\in{2^{\mathcal{V}}}$:
the system (resp. environment) wins from $s$ w.r.t. $c$ for $c_{0}$ if and only if the system (resp. environment) wins from
$(s, c_{0})\in 2^{\mathcal{V}^{*}}$ in the GS $G^{*}$
constructed by Def.~\ref{def:naiveReduction} from $c$ and
$G^{w}$.
\end{thm}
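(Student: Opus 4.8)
The plan is to prove the biconditional in both directions by exhibiting explicit, ``energy-aware'' translations between strategies in $G^w$ and strategies in $G^*$, preserving both the qualitative objective $\varphi$ and the energy bookkeeping. The key invariant, as hinted in the informal description after Def.~\ref{def:naiveReduction}, is that along any play, the value of the auxiliary variable $y$ tracks a lower bound for $\textsf{EL}_c(G^w, c_0, \cdot)$; in fact, in the strategy translations I would make it track exactly the truncated energy level. I would first set up notation: given a play $\sigma^* = s^*_0 s^*_1 \ldots$ in $G^*$, write $\pi(\sigma^*) = s^*_0|_{\mathcal V} s^*_1|_{\mathcal V} \ldots$ for its projection to $G^w$, and note $\pi$ is a bijection between plays of $G^*$ (together with the initial $y$-value $c_0$) and pairs of $G^w$-plays with initial credit $c_0$, provided the $y$-components stay consistent with $\rho^{s*}$; this bijection clearly preserves the $\omega$-regular condition $\varphi$ (it is over $\mathcal V$ only), and preserves finiteness and the deadlock type (for the environment, deadlocks depend only on $\rho^e$ and $\mathcal X$, which are unchanged; for the system, $\rho^{s*}$ always has a successor whenever $\rho^s$ does and the truncated energy level is non-negative).

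\textbf{System direction.} Suppose $g$ wins for the system from $s$ in $G^w$ w.r.t.\ $c$ for $c_0$. Define a strategy $g^*$ in $G^*$ from $(s,c_0)$ that mimics $g$ on the $\mathcal V$-components and, after reading the history, sets the next $y$-value to the truncated energy level $\min[c, (\text{current } y) + w^s(\text{last }\rho^s\text{-transition})]$; this is a legal $\rho^{s*}$-choice precisely because $g$ wins, so the true energy level never goes negative, hence the truncated value is in $[0,c]$ and satisfies the defining inequality of $\rho^{s*}$. Every play consistent with $g^*$ projects to a play consistent with $g$, which wins $\varphi$ and keeps energy non-negative; since the $y$-values equal the truncated energy levels $\geq 0$, the play in $G^*$ never deadlocks for the system and satisfies $\varphi$, so it wins for the system. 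Conversely, if $g^*$ wins for the system from $(s,c_0)$ in $G^*$, define $g$ in $G^w$ by projecting $g^*$'s outputs to $\mathcal Y$; consistency of $\rho^{s*}$ with $\rho^s$ makes $g$ a legal strategy, and the constraint $c_1 + w^s(\cdot) \geq c_2 \geq 0$ along any $g^*$-play forces, by a straightforward induction, that $y_i$ is a lower bound for $\textsf{EL}_c(G^w, c_0, \cdot[0\ldots i])$ — hence the latter is $\geq 0$ — and $\varphi$ is inherited by projection, so $g$ wins in $G^w$.

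\textbf{Environment direction.} This is dual but requires a little more care, which I expect to be the main obstacle. An environment winning strategy in $G^w$ w.r.t.\ $c$ for $c_0$ ensures that \emph{every} consistent play either violates $\varphi$ or drops below $0$ at some finite prefix (or ends with a system deadlock). The subtlety is that the environment does not control $y$, so one cannot simply ``copy'' an environment strategy; instead I would argue that, because in $G^*$ the system is \emph{forced} by $\rho^{s*}$ to pick $y$-values bounded above by the truncated energy level, any environment strategy $f$ in $G^w$ lifts to $f^*$ in $G^*$ (acting identically on $\mathcal X$, which is legal since $\rho^e$ is unchanged), and every $f^*$-consistent play in $G^*$ projects to an $f$-consistent play in $G^w$ that is losing for the system there. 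If that projected play loses $\varphi$ or ends in a system deadlock, the lifted play loses for the system in $G^*$ as well. If instead it stays within $\varphi$ but its energy level goes negative at some prefix, then since $y_i \leq \textsf{EL}_c(\cdots)$ along the lift, one has $y_i < 0$ at that prefix — but $\rho^{s*}$ only permits $c_2 \in [0,c]$, so the system has no legal move, i.e., the play deadlocks for the system in $G^*$ and loses there. Hence $f^*$ wins for the environment from $(s,c_0)$. For the converse, an environment winning strategy $f^*$ in $G^*$ restricts to $f$ on $\mathcal X$; every $f$-consistent play in $G^w$ is the projection of the corresponding $f^*$-consistent play with $y$-values set to truncated energy levels (a legal system behaviour in $G^*$), which loses for the system in $G^*$, i.e., violates $\varphi$ or deadlocks for the system, and in the latter case the system-deadlock in $G^*$ with maximal $y$ corresponds exactly to the energy level being forced negative in $G^w$; either way the $G^w$-play loses for the system, so $f$ wins for the environment. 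I would package the two directions and the two players into one lemma on the strategy translations, proving the energy-tracking invariant once by induction on prefix length, and then read off Thm.~\ref{thm:reductionCorrectness} as a corollary.
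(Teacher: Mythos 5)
Your proposal is correct in substance and, for three of the four implications, follows essentially the same route as the paper: explicit strategy translations in which the $y$-component is set to the exact truncated energy level in the $G^w\!\to G^*$ direction, and the invariant $y_i \preceq$-dominated by $\textsf{EL}_c(G^w,c_0,\sigma[0\ldots i])$ (i.e.\ $y_i$ is a lower bound on the energy level) is established by induction on prefix length in the $G^*\!\to G^w$ direction. The one place where you genuinely diverge is the environment's forward direction ($G^w\to G^*$): the paper dispatches this in two lines by observing that, since the system cannot win from $s$ in $G^w$ (and hence, by the already-proved system case, cannot win from $(s,c_0)$ in $G^*$), determinacy of $\omega$-regular games hands the environment a winning strategy in $G^*$. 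You instead lift the environment strategy directly (legal because $\rho^e$ ignores $yDom$) and argue that any consistent $G^*$-play whose projection violates the energy objective must hit a system deadlock, since no $c_2\in[0,c]$ can satisfy $c_1+w^s\geq c_2$ once $c_1+w^s<0$. Your argument is more self-contained and constructive (it does not import determinacy), at the cost of the extra case analysis; the paper's is shorter but leans on the system case and on Borel/parity determinacy. Two small imprecisions to fix if you write this up: (i) the phrase ``$y_i<0$'' cannot literally occur since $y$ ranges over $[0,c]$ — what you mean is that no legal next $y$-value exists, so the system deadlocks; (ii) in defining $g$ from $g^*$ you must say how a $\mathcal V$-history is lifted to a $G^*$-history before $g^*$ can be applied (the paper chooses a $g^*$-consistent sequence of credits $c_1,\dots,c_k$, which is what makes $g$ well defined).
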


The following lemmas show that the semantics of energy $\mu$-calculus
w.r.t. $\GEc$ succinctly represents that of the $\mu$-calculus w.r.t. the GS $G^{*}$,
constructed by Def.~\ref{def:naiveReduction}.
Specifically, Lem.~\ref{lem:sysEngMuCalc} (resp. Lem.~\ref{lem:envEngMuCalc}) relates to all $\sysmu$ ($\envmu$) formulas $\psi$,
and relies on the property of the set $\llbracket{\psi}\rrbracket^{G^{*}}_{\mathcal{E}}$ being
$\leq$-upward ($\leq$-downward) closed w.r.t. the initial credits' values, which are encoded
over the variable $y$.

\begin{lem}\label{lem:sysEngMuCalc}
Let $\psi\in\sysFormulas$ where all Boolean variables are in $\mathcal{V}$.
Let $\eValuation:\mathit{Var}\rightarrow \mathit{EF(c)}$, $\mathcal{E}:
\mathit{Var}\rightarrow (2^{\mathcal{V^{*}}} \rightarrow \{0,1\})$ be valuations such
that for all $X \in \mathit{Var}$, $s \in 2^{\mathcal{V}}$, and $\mathit{val} \in[0,c]$:
$\mathit{val} \preceq \eValuation(X)(s)$ if and only if $(s, \mathit{val}) \in
\mathcal{E}(X)$.
Then, for all $s \in 2^{\mathcal{V}}$ and $\mathit{val} \in[0,c]$:\\
$\mathit{val} \preceq\llbracket{\psi^{\op{E}}}\rrbracket^{\GEc}_{\eValuation}(s)$ if and only if $(s, \mathit{val})\in\llbracket{\psi}\rrbracket^{G^{*}}_{\mathcal{E}}$.
\end{lem}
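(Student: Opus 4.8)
The plan is to prove the biconditional by structural induction on $\psi \in \sysFormulas$, following the standard proof scheme for $\mu$-calculus over game structures but carrying along the ``encoding invariant'' that relates an energy function $f \in \mathit{EF(c)}$ to the $\leq$-upward-closed set $\{(s,\mathit{val}) \mid \mathit{val} \preceq f(s)\} \subseteq 2^{\mathcal{V}^*}$. The key structural fact to establish first (and reuse throughout) is that this correspondence $f \mapsto \{(s,\mathit{val}) : \mathit{val}\preceq f(s)\}$ is a lattice isomorphism between $\mathit{EFL(c)}$ and the sublattice of $\leq$-upward-closed subsets of $2^{\mathcal{V}^*}$ (ordered by inclusion), sending $\min$ to $\cup$, $\max$ to $\cap$, $f_{+\infty}$ to $\emptyset$, and $f_0$ to the whole set. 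This immediately handles the base cases and the Boolean connectives: for $v \in \mathcal{V}$ the value $0$ corresponds to ``all $\mathit{val}$'' and $+\infty$ to ``no $\mathit{val}$'', matching $\llbracket v \rrbracket^{G^*}$ exactly (here we use that $v$ does not depend on $y$), and $\vee/\wedge$ go through because $\min/\max$ match $\cup/\cap$ under the isomorphism; the relational-variable case $X$ is exactly the hypothesis linking $\eValuation$ and $\mathcal{E}$.

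The heart of the argument is the modal case $\psi = \circlediamond_{\op{E}}\phi$, where we must show $\mathit{val} \preceq \ECpre(\llbracket\phi^{\op{E}}\rrbracket^{\GEc}_{\eValuation})(s)$ iff $(s,\mathit{val}) \in \mathit{Cpre}_{\sys}(\llbracket\phi\rrbracket^{G^*}_{\mathcal{E}})$, given the inductive hypothesis for $\phi$. I would unfold both sides: the left side says $\mathit{val} \preceq \max_{s_{\mathcal X}}[\min_{s_{\mathcal Y}} \textsf{EC}_c((s,p(s_{\mathcal X},s_{\mathcal Y})), f(s_{\mathcal X},s_{\mathcal Y}))]$ where $f = \llbracket\phi^{\op{E}}\rrbracket^{\GEc}_{\eValuation}$; since $\preceq$ reverses $\leq$, the outer $\max$ over $s_{\mathcal X}$ becomes a universal quantifier and the inner $\min$ over $s_{\mathcal Y}$ an existential, matching the $\forall s_{\mathcal X}\,\exists s_{\mathcal Y}$ shape of $\mathit{Cpre}_{\sys}$ in Def.~\ref{def:prop_mu_calculus_semantics} as applied to $G^*$. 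The crux is a one-transition lemma: for a fixed $(s_{\mathcal X},s_{\mathcal Y})$, $\mathit{val} \preceq \textsf{EC}_c((s,p(s_{\mathcal X},s_{\mathcal Y})), f(t))$ (where $t = (s_{\mathcal X},s_{\mathcal Y})$) iff either $(s,p(t))\not\models\rho^e$, or $(s,p(t))\models\rho^s$ and there exists $\mathit{val}' \in [0,c]$ with $\mathit{val} + w^s(s,p(t)) \geq \mathit{val}'$ and $\mathit{val}' \preceq f(t)$ — which by the reduction's definition of $\rho^{s*}$ (Def.~\ref{def:naiveReduction}, clause~\ref{def:naiveReduction:newRhoS}) and the IH is precisely $(s,\mathit{val}) \to (t,\mathit{val}')$ in $G^*$ for some $(t,\mathit{val}') \in \llbracket\phi\rrbracket^{G^*}_{\mathcal{E}}$. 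Here one must check each of the four cases in the definition of $\textsf{EC}_c$ against the corresponding behavior in $G^*$: the ``invalid for environment'' case ($0$ suffices, and indeed $(s,p(t))\not\models\rho^{e}$ means $s$ has no obligation), the ``$+\infty$'' cases ($e=+\infty$ corresponds to no valid $\mathit{val}'$; $\rho^e\wedge\neg\rho^s$ corresponds to a deadlock for the system in $G^*$; $e - w^s > c$ corresponds to the $\rho^{s*}$ constraint being unsatisfiable since $\mathit{val}'\le c$), and the ``otherwise'' case $\max[0, e - w^s]$, which gives exactly the least $\mathit{val}$ admitting a valid $\mathit{val}' \preceq e$.

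For the fixed-point cases $\mu X\phi$ and $\nu X\phi$ I would argue that the two fixed-point iterations proceed in lockstep under the isomorphism: the approximants $h_i \in \mathit{EF(c)}$ on the energy side and $S_i \subseteq 2^{\mathcal{V}^*}$ on the Boolean side satisfy the invariant ``$\mathit{val}\preceq h_i(s)$ iff $(s,\mathit{val})\in S_i$'' by induction on $i$, using the IH for $\phi$ at each step together with the fact (monotonicity, Prop.~\ref{prop:ECpreMonotone} and closure of monotonicity under composition) that both operators are monotone so the iterations are well-defined and the invariant at level $i$ feeds the IH's valuation hypothesis at level $i+1$; the base cases $h_0 = f_{+\infty}$ vs.\ $S_0 = \emptyset$ (for $\mu$) and $h_0 = f_0$ vs.\ $S_0 = 2^{\mathcal{V}^*}$ (for $\nu$) match by the isomorphism, and since both lattices are finite the iterations stabilize and the limits correspond, giving $\mathit{lfp}$/$\mathit{gfp}$ on one side matching $\bigcup_i$/$\bigcap_i$ on the other. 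The main obstacle I anticipate is the bookkeeping in the modal case — specifically, being careful that $\textsf{EC}_c$'s clause ``$e - w^s(s,s') > c$'' correctly mirrors the unsatisfiability of the $G^*$-constraint $\mathit{val}_1 + w^s \geq \mathit{val}_2$ when $\mathit{val}_2$ is forced into $[0,c]$, and that truncation at $c$ (which does \emph{not} appear in $\rho^{s*}$ but is implicit in $y$ ranging over $[0,c]$) is handled consistently; everything else is routine once the isomorphism and the one-transition lemma are in place.
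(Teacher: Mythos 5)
Your proposal is correct and follows essentially the same route as the paper's proof: a structural induction whose propositional and fixed-point cases reduce to the order correspondence between $\mathit{EF(c)}$ and $\leq$-upward-closed subsets of $2^{\mathcal{V}^*}$ (the paper does this case-by-case rather than via an explicit lattice isomorphism, but the content is the same), and whose modal case hinges on exactly the one-transition analysis you describe, matching the four clauses of $\textsf{EC}_c$ against $\rho^{e}$, $\rho^{s*}$, and the range restriction $\mathit{val}'\in[0,c]$. The subtleties you flag (the clause $e-w^{s}>c$ versus unsatisfiability of $c_1+w^{s}\geq c_2$ over $[0,c]$, and the implicit truncation at $c$) are precisely the points handled in the paper's $(\bigstar)$ step, so no gap remains.
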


\begin{lem}\label{lem:envEngMuCalc}
Let $\psi\in\envFormulas$ where all Boolean variables are in $\mathcal{V}$. Let
$\eValuation:\mathit{Var}\rightarrow \mathit{EF(c)}$, $\mathcal{E}:
\mathit{Var}\rightarrow (2^{\mathcal{V^{*}}} \rightarrow \{0,1\})$ be
valuations such that for all $X \in \mathit{Var}$, $s \in 2^{\mathcal{V}}$, and
$\mathit{val} \in[0,c]$: $\mathit{val} \preceq \eValuation(X)(s)$ if and only if $(s, c - \mathit{val}) \in
\mathcal{E}(X)$.
Then, for all $s \in 2^{\mathcal{V}}$ and $\mathit{val} \in[0,c]$:\\
$\mathit{val} \preceq\llbracket{\psi^{\op{E}}}\rrbracket^{\GEc}_{\eValuation}(s)$ if and only if
$(s, c - \mathit{val})\in\llbracket{\psi}\rrbracket^{G^{*}}_{\mathcal{E}}$.
\end{lem}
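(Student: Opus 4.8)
\textbf{Proof proposal for Lemma~\ref{lem:envEngMuCalc}.}
The plan is to mimic the structure of the (already granted) proof of Lem.~\ref{lem:sysEngMuCalc} and reduce the $\envmu$ case to it via negation. First I would observe that the only place where $\envFormulas$ differs from $\sysFormulas$ is the modal operator: an $\envmu$ formula uses $\circlebox$ (hence, in its energy version, $\ECpreEnv$) rather than $\circlediamond$ (i.e. $\ECpre$). For every other syntactic case --- Boolean literals $v$ and $\neg v$, the relational variables, $\vee$, $\wedge$, and the fixed-point operators $\mu X$, $\nu X$ --- the semantic clauses in Def.~\ref{def:prop_mu_calculus_semantics} and Def.~\ref{def:sysEngMuCalcSemantics} are literally identical to the ones handled in Lem.~\ref{lem:sysEngMuCalc}, so the induction steps there go through verbatim once one replaces ``$\leq$-upward closed'' by ``$\leq$-downward closed'' in the $y$-coordinate and adjusts the correspondence from $(s,\mathit{val})$ to $(s, c-\mathit{val})$. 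Thus the real content is a single new base-of-the-modal-case argument for $\psi = \circlebox_{\op{E}}\phi$.

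For that case I would argue as follows. Suppose the claim holds for $\phi\in\envFormulas$ with the stated valuations, i.e. $\mathit{val}\preceq\llbracket\phi^{\op{E}}\rrbracket^{\GEc}_{\eValuation}(s)$ iff $(s,c-\mathit{val})\in\llbracket\phi\rrbracket^{G^*}_{\mathcal{E}}$, and moreover that $\llbracket\phi\rrbracket^{G^*}_{\mathcal{E}}$ is $\leq$-downward closed in $y$. I want to transfer this through $\ECpreEnv(f) = {\sim}\ECpre({\sim}f)$. The trick is that $\sim$ on $\mathit{E(c)}$ sends the ``max credit $v$ for which the environment wins'' picture into the ``min credit $c+1-v$ for which it fails'', i.e. into exactly the $\sysmu$-style bookkeeping that Lem.~\ref{lem:sysEngMuCalc} already controls. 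Concretely: the set $\llbracket\phi\rrbracket^{G^*}_{\mathcal{E}}$ being $\leq$-downward closed in $y$ is the same as its \emph{complement} being $\leq$-upward closed; by De Morgan for the game operators (Def.~\ref{def:prop_mu_calculus_semantics}, the clause for $\circlebox$) one has $2^{\mathcal{V}^*}\setminus\mathit{Cpre}_{\env}(S) = \mathit{Cpre}_{\sys}(2^{\mathcal{V}^*}\setminus S)$, so $\llbracket\circlebox\phi\rrbracket^{G^*}_{\mathcal{E}}$ is again $\leq$-downward closed in $y$, which re-establishes the closure hypothesis and lets the induction continue. For the quantitative half, I unfold $\ECpreEnv(\llbracket\phi^{\op{E}}\rrbracket)(s) = {\sim}\ECpre({\sim}\llbracket\phi^{\op{E}}\rrbracket)(s)$, apply the Def.~\ref{def:ECpre} definition of $\ECpre$ (the $\max_{s_{\mathcal X}}\min_{s_{\mathcal Y}}$ over $\textsf{EC}_c$), push $\sim$ inside using Eq.~\ref{eq:negation} --- note $\sim$ is an anti-automorphism of the linear order $\geq$, so it swaps the outer $\max$ and inner $\min$ --- and match the resulting expression, value by value on the transition $(s,p(s_{\mathcal X},s_{\mathcal Y}))$, against the reduction-GS condition ``$c_1 + w^s(s_1,p(s_2)) \geq c_2$'' from clause~\ref{def:naiveReduction:newRhoS} of Def.~\ref{def:naiveReduction}. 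The bookkeeping with the three $+\infty$ branches of $\textsf{EC}_c$ (transition invalid for env; $e=+\infty$ or valid only for env; overflow $e-w>c$) has to be checked against the reduction's handling of the boundary values $0$ and $c$; here the $c-\mathit{val}$ offset in the statement is exactly what makes the two sides line up.

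Alternatively --- and this is probably the cleaner write-up --- I would not redo the modal case by hand at all, but instead invoke the negation machinery of Eqs.~\eqref{eq:negation1}--\eqref{eq:negation4} together with Lem.~\ref{lem:deMorganAlgebra}: given $\psi\in\envFormulas$, form its De~Morgan dual $\psi^\partial$ (push negations to the leaves, swapping $\vee/\wedge$, $\circlebox/\circlediamond$, $\mu/\nu$), which is a $\sysmu$ formula with $\llbracket(\psi^\partial)^{\op{E}}\rrbracket^{\GEc}_{\eValuation} = {\sim}\llbracket\psi^{\op{E}}\rrbracket^{\GEc}_{{\sim}\eValuation}$ and, on the Boolean side, $\llbracket\psi^\partial\rrbracket^{G^*}_{\mathcal{E}} = 2^{\mathcal{V}^*}\setminus\llbracket\psi\rrbracket^{G^*}_{2^{\mathcal{V}^*}\setminus\mathcal{E}}$; then Lem.~\ref{lem:sysEngMuCalc} applied to $\psi^\partial$ gives the $\leq$-upward-closed statement about $\psi^\partial$, and un-negating via $\sim x = c+1-x$ and set complementation turns it into precisely the $\leq$-downward-closed, $(s,c-\mathit{val})$ statement claimed for $\psi$. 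I expect the \textbf{main obstacle} to be purely bookkeeping rather than conceptual: tracking the off-by-one between $\sim x = c+1-x$ and the offset $c-\mathit{val}$ in the lemma statement, and making sure the boundary credit values $0$ and $c$ (where $\sim$ behaves specially, flipping to $+\infty$ and $0$ and vice versa) are matched correctly against the ``$\geq c_2$'' comparison and the truncation-at-$c$ in the definition of $\rho^{s*}$; a careful case split on whether $\mathit{val}\in\{0,c\}$ or $\mathit{val}\in(0,c)$, mirrored by the $\textsf{EC}_c$ case split, resolves it.
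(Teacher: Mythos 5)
Your second, preferred route is essentially the paper's own proof: the paper likewise derives Lem.~\ref{lem:envEngMuCalc} from Lem.~\ref{lem:sysEngMuCalc} by passing to the negation-normal form of $\neg\psi$ (a $\sysmu$ formula whose negated free variables are absorbed into complemented, respectively $\sim$-negated, valuations), verifying the premise of Lem.~\ref{lem:sysEngMuCalc} for those valuations, and then un-negating via ${\sim x} = {c+1-x}$ with exactly the case split on the boundary values $\{0,+\infty\}$ versus $[1,c]$ that you identify as the main bookkeeping obstacle. The paper explicitly notes that your first route (direct structural induction with a fresh $\circlebox_{\op{E}}$ case) is also viable but longer, which is why it opts for the negation argument you favour.
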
 


That concludes this subsection as it holds that Thm.~\ref{thm:sysEngMuCalcCorrectness} (resp. Thm.~\ref{thm:envEngMuCalcCorrectness})
is a corollary of both Thm.~\ref{thm:reductionCorrectness} and Lem.~\ref{lem:sysEngMuCalc} (resp. Lem.~\ref{lem:envEngMuCalc}).



\section{{\texorpdfstring{Solving Energy Games via Energy $\mu$-Calculus}{Solving Energy Games via Energy mu-Calculus}}}
\label{sec:solvingEnergyGames}

In this section, we show how to use energy $\mu$-calculus to solve $\omega$-regular energy games.
Formally, given a WGS $\WGS$, an upper bound $c\in{\mathbb{N}}\cup \{+\infty\}$,
and a state $s\in 2^{\mathcal{V}}$ in $G^{w}$, we aim to use the results of Sect.~\ref{sec:EngMuCalc} for solving the following problems:

\begin{enumerate}[label=\textbf{P\arabic*},ref=P\arabic*]
  \item
  \label{problem:decision} \emph{The decision problem}:
  Checks whether $s$ wins for the system w.r.t. $c$.
  \item
  \label{problem:optimal} \emph{The minimum credit problem}: Asks
  what is the \emph{minimum} initial credit, $c^{\min}_{0} \not = +\infty$, $c^{\min}_{0}\leq c$,
  for which $s$ wins for the system w.r.t. $c$.
\end{enumerate}

\noindent Sect.~\ref{sec:solvingBEGames} considers these problems when there is a finite upper bound $c\in{\mathbb{N}}$ on the energy levels, while
Sect.~\ref{sec:sufficientbound} treats the unbounded case, namely when $c=+\infty$.
We present an extended version of Sect.~\ref{sec:sufficientbound:WGStoParityEnergyGames} in Appx.~\ref{app:sufficient-bound-proof}.


\subsection{Solving Energy Games with Finite Upper Bounds}\label{sec:solvingBEGames}



%
Let $\varphi$ be an $\omega$-regular condition,
and let $\psi_{\varphi}\in \sysFormulas$ and $\psi_{\neg\varphi}\in \envFormulas$
be closed formulas that match $\varphi$.
Let $G^{w}=(G, w^s)$ be a WGS whose winning condition is $\varphi$, let $c \in \mathbb{N}$
be a finite upper bound, and let $G^{*}$ be the GS constructed by Def.~\ref{def:naiveReduction}.

Eq.~\ref{eq:sysWinningStates} (resp. Eq.~\ref{eq:envWinningStates}) below
follows from Thm.~\ref{thm:reductionCorrectness} and Thm.~\ref{thm:sysEngMuCalcCorrectness}
(resp. Thm.~\ref{thm:envEngMuCalcCorrectness}),
and describes how to compute the set of states that win for the system (resp. environment) player in $G^{w}$ w.r.t. $c$.
\begin{equation}\label{eq:sysWinningStates}
\begin{split}
W_{\sys}(c) &= \{s\in2^{\mathcal{V}} \mid \exists c_{0} \in [0,c] :
(s, c_{0})\in \llbracket{\psi_{\varphi}}\rrbracket^{G^{*}} \}\\  &= \{s\in2^{\mathcal{V}} \mid 
\llbracket{\psi^{\op{E}}_{\varphi}}\rrbracket^{\GEc} (s) \not =  +\infty\}.
\end{split}
\end{equation}

\begin{equation}
\label{eq:envWinningStates}
\begin{split}
W_{\env}(c)  &= \{s\in2^{\mathcal{V}} \mid \forall c_{0} \in [0,c] :
(s, c_{0})\in \llbracket{\psi_{\neg\varphi}}\rrbracket^{G^{*}} \} \\ &= \{s\in2^{\mathcal{V}} \mid 
\llbracket{\psi^{\op{E}}_{\neg\varphi}}\rrbracket^{\GEc} (s)  = 0\}.
\end{split}
\end{equation}

Therefore, given a state $s\in 2^{\mathcal{V}}$, solving the decision problem (\ref{problem:decision}) for $c$ amounts to checking
whether $\llbracket{\psi^{\op{E}}_{\varphi}}\rrbracket^{\GEc}(s)\not = +\infty$. As
we have that $\llbracket{\psi_{\varphi}}\rrbracket^{G^{*}}\cup\llbracket{\psi_{\neg \varphi}}\rrbracket^{G^{*}}
= 2^{\mathcal{V}^{*}}$ due to determinacy of $\omega$-regular games, it follows from
Thm.~\ref{thm:reductionCorrectness} that $\omega$-regular energy games are determined w.r.t. finite upper bounds, i.e.,
$W_{\sys}(c) \cup W_{\env}(c) = 2^{\mathcal{V}}$. Thus, alternatively, we
can solve \ref{problem:decision} by checking whether $\llbracket{\psi^{\op{E}}_{\neg\varphi}}\rrbracket^{\GEc}(s) \not = 0$.

As a side note, following a reasoning similar to the above, determinacy also holds w.r.t. the bound of $+\infty$. In sketch, this claim is argued as follows.
We define a reduction which is like Def.~\ref{def:naiveReduction} but does not use a finite upper bound.
That is, the modified reduction adds a new system controlled variable $y$ whose domain is $\mathbb{N}$ (rather than $[0,c]$);
thus, it constructs a GS $G^{*}$ whose state space is infinite.
The variable $y$ now keeps track of the unbounded energy level in every state.
Then, the determinacy of $G^{*}$ implies the determinacy of the WGS $G^{w}$ w.r.t. $+\infty$.
Consequently, we conclude that \emph{$\omega$-regular energy games are determined}.

A corollary of Thm.~\ref{thm:sysEngMuCalcCorrectness} is that
we can solve the minimum credit problem (\ref{problem:optimal}) by simply returning $\llbracket{\psi^{\op{E}}_{\varphi}}\rrbracket^{\GEc}(s)$.
However, the determinacy of $\omega$-regular energy games together with Thm.~\ref{thm:sysEngMuCalcCorrectness} and Thm.~\ref{thm:envEngMuCalcCorrectness},
imply that we can also solve \ref{problem:optimal} by computing $\llbracket{\psi^{\op{E}}_{\neg\varphi}}\rrbracket^{\GEc}$ and return ${\sim}{\llbracket{\psi^{\op{E}}_{\neg\varphi}}\rrbracket^{\GEc}}$, as follows:

\begin{enumerate}
\item If $\llbracket{\psi^{\op{E}}_{\neg\varphi}}\rrbracket^{\GEc}(s) = 0$, return ``$s$ does not win
for the system w.r.t. $c$'' (i.e., return $+\infty$).
\item If $\llbracket{\psi^{\op{E}}_{\neg\varphi}}\rrbracket^{\GEc}(s) = +\infty$, return 0.
\item Otherwise, return $c + 1 - \llbracket{\psi^{\op{E}}_{\neg\varphi}}\rrbracket^{\GEc}(s)$.
\end{enumerate}

\noindent Finally, we stress that the semantics of an energy $\mu$-calculus formula, which we have inductively defined in Sect.~\ref{sec:EngMuCalcSyntaxSemantics},
immediately prescribes an algorithm to compute the energy function characterized by this formula (cf. Eq.~\ref{eq:buchiExample:energyMuCalcFormula} and Alg.~\ref{alg:buchiEnergy}).
Therefore, seeing $\llbracket{\psi^{\op{E}}_{\varphi}}\rrbracket^{\GEc}$ and $\llbracket{\psi^{\op{E}}_{\neg\varphi}}\rrbracket^{\GEc}$
each as a symbolic algorithm, the above, in fact, describes \emph{algorithms} to solve problems \ref{problem:decision} and \ref{problem:optimal}.
A straightforward implementation of $\llbracket{\psi^{\op{E}}_{\varphi}}\rrbracket^{\GEc}$ or $\llbracket{\psi^{\op{E}}_{\neg\varphi}}\rrbracket^{\GEc}$
gives an algorithm that computes the desired energy function in $O((|2^\mathcal V|(c+1))^q)$ symbolic steps, where $q$ is the largest number of nested fixed-point operators in the energy $\mu$-calculus formula.
Nevertheless, using the well-known techniques proposed in~\cite{BrowneCJLM97} and~\cite{EmersonL86}, we can reduce this time complexity to $O((|2^\mathcal V|(c+1))^{\lfloor d/2 \rfloor+1})$
symbolic steps, where $d$ is the alternation depth of the formula (hence $d\leq q$).



\subsection{A Sufficient Upper Bound}\label{sec:sufficientbound}

We have shown in Sect.~\ref{sec:solvingBEGames} how energy $\mu$-calculus can be
used to solve problems \ref{problem:decision} and \ref{problem:optimal} when
there is a finite upper bound $c \in \mathbb{N}$ on the energy levels.
However, in some cases, such a finite bound is unknown a priori, so one may
wish to find a complete upper bound, i.e., a sufficiently large bound whose increase
would not introduce additional winning states for the system player.

In this section, we show how to compute such a bound.
Moreover, based on the main result of this section (see
Thm.~\ref{Thm:a-sufficient-bound}), \emph{we now solve problems \ref{problem:decision}
and \ref{problem:optimal} also for the case where $c=+\infty$}, which we have left
unresolved in the previous section.
The complete bound we present depends on the size of the game's state space
($N$), the maximal absolute weight of a transition in the game ($K$), and the
length ($m$) and the alternation depth ($d$)~\cite{EmersonL86,Niwinski86} of the $\mu$-calculus formula that matches the winning condition. 

\begin{thm}\label{Thm:a-sufficient-bound}
Let $\WGS$ be a WGS, $N=|2^{\mathcal{V}}|$, and let $K$ 
  be the maximal transition weight in $G^w$, in absolute value. Take $\psi\in \sysFormulas$,
  a closed $\sysmu$ formula that matches $\varphi$, and let $m$ be its length
  and $d$ its alternation depth. Then, if the system wins from a state $s$ w.r.t. 
  $+\infty$ for an initial credit $c_0$, then it also wins from $s$  
 w.r.t. $(d+1)((N^2+N)m-1)K$ 
for an initial credit 
$\min\{c_0,((N^2+N)m-1)K\}$.
\end{thm}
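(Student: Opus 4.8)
The plan is to obtain the sufficient bound for an arbitrary $\omega$-regular winning condition $\varphi$ by first reducing to the special case of energy parity games, where a good bound is already available (contributions~3 and~4 in the introduction, realized via Lem.~\ref{lem:lemma-6-revised}), and then transferring that bound through the $\mu$-calculus formula $\psi$ that matches $\varphi$. Concretely, I would proceed as follows. First, given the matching $\sysmu$ formula $\psi$ of length $m$ and alternation depth $d$, use the standard correspondence between $\mu$-calculus formulas and parity games: the evaluation of $\psi$ over a game structure $G$ unfolds into a parity game $\mathcal{A}$ whose state space has size $O(Nm)$ (one copy of each state per subformula occurrence, so the bound $(N^2+N)m$ will come from pairing states or from edges of this product) and whose number of priorities is $d+1$ (the alternation depth controls the index). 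The weights $w^s$ of $G^w$ lift naturally to this parity game, since each move in $\mathcal{A}$ corresponds to a concrete game transition (and the ``bookkeeping'' moves through subformulas carry weight $0$), so the maximal absolute weight stays $K$. Since $\psi$ matches $\varphi$, the system wins the $\varphi$-energy game from $s$ with initial credit $c_0$ (w.r.t. $+\infty$) if and only if $\text{player}_0$ wins the corresponding energy parity game on $\mathcal{A}$ from the initial vertex with the same credit; this equivalence should either be cited from the literature relating $\mu$-calculus to parity games, or established by a direct strategy-translation argument, and I expect it to be the main obstacle — making the reduction quantitative (preserving energy levels, not just winning regions) and pinning down the exact state-count $(N^2+N)m$ requires care.

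Second, once we are in an energy parity game with at most $n' = (N^2+N)m$ states, at most $d' = d+1$ priorities, and maximal absolute weight $K$, apply the energy-parity sufficient-bound result (Lem.~\ref{lem:lemma-6-revised}, the $d$-parameterized generalization of the Bouyer et al.\ bound): if $\text{player}_0$ wins w.r.t.\ $+\infty$ from a vertex with credit $c_0$, then she also wins w.r.t.\ the finite bound $d'(n'-1)K = (d+1)((N^2+N)m-1)K$, and moreover with credit $\min\{c_0, (n'-1)K\} = \min\{c_0, ((N^2+N)m-1)K\}$ — the reduced credit being exactly the bound claimed in the theorem. This gives a winning strategy for $\text{player}_0$ in the energy parity game on $\mathcal{A}$ under the bounded objective.

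Third, translate this bounded-objective winning strategy back from $\mathcal{A}$ to $G^w$. Because the reduction in step one is ``energy-faithful'' (energy levels along corresponding plays coincide, up to the weight-$0$ bookkeeping steps), a strategy that keeps the energy level in $[0, (d+1)((N^2+N)m-1)K]$ on $\mathcal{A}$ induces a strategy that does the same on $G^w$ and still enforces $\varphi$; this uses the same strategy-translation direction as the equivalence in step one, now carrying the truncation bound through. The remaining work is essentially bookkeeping: checking that truncation at the finite bound in $\mathcal{A}$ corresponds to truncation at the same bound in $G^w$, and that the $\omega$-regular condition $\varphi$ is preserved (which follows from $\psi$ matching $\varphi$ together with the parity-game semantics of $\mu$-calculus). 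I would present steps one and three as two directions of a single ``faithful reduction'' lemma, prove Lem.~\ref{lem:lemma-6-revised} separately (it is the energy-parity core, likely by a reduction to finite reachability/safety on the energy-augmented parity game and a pumping argument on simple cycles bounding credit growth by $(n'-1)K$ per priority level), and then chain the three together. The crux, again, is the faithful-reduction lemma: getting the polynomial blow-up to be exactly $(N^2+N)m$ and ensuring the energy accounting survives the detour through subformula vertices.
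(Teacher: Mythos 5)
Your overall architecture matches the paper's: build an energy parity game from the model-checking game of $\psi$ over the game structure, count $(N^2+N)m$ states and $d+1$ priorities, invoke Lem.~\ref{lem:lemma-6-revised}, and transfer back. But the step you yourself flag as the main obstacle --- the ``faithful reduction'' lemma asserting that the system wins the $\varphi$-energy game on $G^w$ with credit $c_0$ iff $\text{player}_0$ wins the weighted product game with credit $c_0$ --- is left unproved, and the route you sketch for it (a direct strategy translation on the product of $G^w$ with $\psi$, carried out for the \emph{unbounded} objective) is not how this can be made to work. The difficulty is that a single play of the model-checking game does not correspond to a play of $G^w$ satisfying $\varphi$; the correspondence between $\llbracket\psi\rrbracket^{G}$ and winning $\varphi$ holds at the level of winning regions, so a system strategy that wins the $\varphi$-energy game gives you no direct handle on a product-game strategy that simultaneously meets the parity objective of the model-checking game and the lifted energy objective.

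The paper resolves this with two moves you are missing. First, it proves separately (Lem.~\ref{lem:from-infinite-to-finite}, via a reduction through a deterministic parity automaton for $\varphi$, not through $\psi$) that winning w.r.t.\ $+\infty$ already implies winning w.r.t.\ \emph{some} finite bound $c$; the product-game equivalence is only ever established for finite $c\geq K$, so without this lemma your chain cannot even start. Second, for finite $c$ the product is \emph{not} taken with $G^w$ but with the energy-augmented game $G_c$ of Def.~\ref{def:naiveReduction}, in which the energy level is part of the state. There $\psi$ genuinely characterizes the winning states of the energy game (Lem.~\ref{lem:sysEngMuCalc}), the product $G_c\times\psi$ is a plain parity game, and the weights added afterwards are \emph{vacuously} satisfied along every play because the construction guarantees $c_1+w(T)\geq c_2$ on every edge (Lem.~\ref{lem:(Gc-times-psi)-equiv-(Gc-times-psi,w)}). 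The energy objective only becomes a real constraint when the energy component is quotiented out of the states (Lem.~\ref{lem:(Gc-times-psi,w)-equiv-(G_c-times-psi)/c}), and that quotient step --- showing the bounded energy levels are reconstructed exactly by the maximal-credit successor choice --- is the technically hard lemma; the independence of the quotient game from $c$ (for $c,c'\geq K$) is then what lets Lem.~\ref{lem:lemma-6-revised} deliver a bound not depending on the initial finite $c$. Your proposal collapses all of this into one asserted equivalence, so as written it has a genuine gap precisely where the work is.
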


We devote the remainder of this section to proving
Thm.~\ref{Thm:a-sufficient-bound}. The crux of the proof is to reduce the game
into an \emph{energy parity game}~\cite{ChatterjeeD12,EmersonJ91,mostowski1985}, thus
we turn to discuss these games extensively. Note that the reduction to energy parity games is presented here only as part of the proof.
It is \emph{not} part of our algorithm for solving $\omega$-regular energy games. We show connections between solving
the bounded and unbounded energy parity objective that allow us to prove
Thm.~\ref{Thm:a-sufficient-bound}.

\begin{defi}[Energy parity game]\label{def:energyParityGame} An \emph{energy parity game} is a tuple ${G} = \langle({V}={V_0\cup V_1},E),\prio,w\rangle$ that consists
of the following components:

\begin{enumerate}
  \item
  A directed graph $(V=V_0\cup V_1,E)$ where $V_0,V_1$ partition
  $V$ into $\text{player}_0$ states and $\text{player}_1$ states, respectively.
  \item
  A priority function $\prio:V\into \mathbb{N}$.
  \item
  A weight function $w:E\into \mathbb{Z}$.
\end{enumerate}

\end{defi}

Let $G$ be an energy parity game as in Def.~\ref{def:energyParityGame}.
If the weight function is omitted, $G$ is simply said to be a \emph{parity game}.
Plays and strategies in $G$ are defined in a way similar to those in a WGS (see Sect.~\ref{sec:preliminaries}). The only difference is that in $G$, the players do not necessarily take steps in an
alternating manner; $\text{player}_i$ chooses the next successor whenever the
play reaches a $V_i$-state. Thus, a play is a path that is either infinite, or
ends in a deadlock, i.e., a state with no outgoing edges.

Given a play, $\sigma$, $\mathit{inf}(\sigma)\subseteq V$ is the set of all
states that appear infinitely often in $\sigma$. For $c\in
\mathbb{N}\cup\{+\infty\}$, $c_0\in[0,c]\cap \mathbb N$, and a path $\sigma$
in $G$ of length at least $k+1$, we use the notion of $\textsf{EL}_{c}(G,c_0,
\sigma[0\ldots k])$, as was defined in Sect.~\ref{sec:combinedEngObj}.
 In words, $\textsf{EL}_{c}(G,c_0, \sigma[0\ldots k])$ is the energy level accumulated
 so far according to $w$, when the play starts with the initial credit $c_0$, and $c$
 is the energy upper bound.
 
 A play $\sigma$ wins for $\text{player}_0$ w.r.t. $c$ for an initial credit
 $c_0$ if the following holds: (1) for every finite prefix of $\sigma$, $\sigma'$, 
$\textsf{EL}_{c}(G,c_0, \sigma')\geq 0$; (2) if $\sigma$ is infinite,
then $\min\{\prio(v):v\in \mathit{inf}(\sigma)\}$ is even, and
(3) if $\sigma$ is finite, it ends in a deadlock for $\text{player}_1$.
Otherwise, $\sigma$ wins for $\text{player}_1$. We refer to (1) 
as the \emph{energy objective}, while requirements (2) and (3) form the 
\emph{parity objective}. Hence, in the case of a parity game, a play wins 
for $\text{player}_0$ if the parity objective is achieved. 

As for WGSs, we adapt the notion of $W_\alpha(c)$ to denote the winning
region of $\text{player}_\alpha$ w.r.t. the upper bound $c\in
\mathbb{N}\cup\{+\infty\}$ in a given energy parity game. When necessary, we may also write $W^G_\alpha(c)$ to
clarify that we relate to the winning region of $\text{player}_\alpha$ in the
energy parity game, $G$. If $G$ is a parity game, the energy upper bound is
neglected, and thus we just write $W_\alpha$ or $W^G_\alpha$.


The following lemma, which establishes the first step towards proving Thm.~\ref{Thm:a-sufficient-bound}, is an adaptation of Lem.~6 in~\cite{ChatterjeeD12}:

\begin{lem}\label{lem:lemma-6-revised}
Let $G$ be an energy parity game (as defined in Def.~\ref{def:energyParityGame}) with $n$ states, $d$ different priorities 
and maximal absolute value of the weights, $K$. 
If $\text{player}_0$ has a winning strategy from a state $s$ w.r.t. $+\infty$ 
for an initial credit $c_0$, then she has a winning strategy w.r.t. $d(n-1)K$ 
for an initial credit $\min\{c_0,(n-1)K\}$.
\end{lem}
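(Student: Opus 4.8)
The plan is to prove Lemma~\ref{lem:lemma-6-revised} by induction on the number of different priorities $d$, following the structure of the proof of Lem.~6 in~\cite{ChatterjeeD12}, but being more careful about the energy upper bound in order to get a clean $d(n-1)K$ (rather than something like $dnK$). First, observe that if $\text{player}_0$ wins w.r.t.\ $+\infty$ for initial credit $c_0$, then she wins w.r.t.\ $+\infty$ for any larger credit; and crucially, a standard exchange argument shows that whenever she wins w.r.t.\ $+\infty$, she has a strategy in which the energy level never needs to exceed $(n-1)K$ more than was needed at the ``best'' recurrent point — intuitively, along a winning play one can always assume the energy level is reset (by following a path that revisits a state already seen on a cycle) so that the level stays within a band of width at most $(n-1)K$ above its minimum. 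This is what lets us replace $c_0$ by $\min\{c_0,(n-1)K\}$ and, simultaneously, bound the accumulated surplus so that truncation at $d(n-1)K$ does no harm.

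For the induction, I would argue as in~\cite{ChatterjeeD12}. The base case $d=1$: with a single priority, either it is even (the parity condition is trivially satisfied and the game is a pure energy game, where the classical $(n-1)K$ bound of~\cite{BouyerFLMS08} applies — exactly the $d=1$ instance the paper alludes to), or it is odd ($\text{player}_0$ can never win an infinite play, so she must reach a $\text{player}_1$-deadlock; this is a finite reachability-with-energy problem, again handled within a band of width $(n-1)K$). For the inductive step, let $p$ be the least priority appearing. If $p$ is odd, remove the set $A$ of states that $\text{player}_1$ can force into visiting priority-$p$ states infinitely often while keeping $\text{player}_0$'s energy bounded — that set is losing for $\text{player}_0$ — and on the complement the least priority is $\ge p+1$, so we can apply the induction hypothesis to a subgame with at most $d-1$ priorities. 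If $p$ is even, decompose $\text{player}_0$'s strategy into ``phases'': either the play eventually stays in the subgame where priority $p$ is not seen (handled by the $\le d-1$-priority induction hypothesis, costing $(d-1)(n-1)K$), or it returns to a priority-$p$ state infinitely often, in which case we glue together finite segments each of which costs at most $(n-1)K$ in energy and can be assumed to start and end with the same ``reset'' energy level. The $d(n-1)K$ bound then comes from summing one $(n-1)K$ band for the current priority level on top of the $(d-1)(n-1)K$ inherited from the subgame.

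The key steps, in order, are: (i) establish the ``bounded-surplus'' normal form — if $\text{player}_0$ wins w.r.t.\ $+\infty$ for $c_0$, she has a strategy whose plays keep the energy level within $[0,(n-1)K]$ above the running minimum and which can be taken to start from credit $\min\{c_0,(n-1)K\}$; (ii) set up the induction on $d$ with the two subgames (priority-$p$-avoiding subgame, and the attractor/forced-visit region) and invoke determinacy of parity games to get the winning-region decomposition; (iii) in the even-$p$ case, perform the strategy-stitching argument across priority-$p$ visits, checking that at each stitch point the energy budget $d(n-1)K$ suffices and that truncation at this bound is harmless because of (i); (iv) in the odd-$p$ case, show the forced-visit region is losing and apply the hypothesis on the complement. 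I expect step (iii) — the stitching of infinitely many finite winning segments into a single strategy that simultaneously respects the parity condition and never drops below zero under the finite bound $d(n-1)K$ — to be the main obstacle, since it requires carefully tracking how the energy level at priority-$p$ revisits can be ``reset'' without losing the induction hypothesis's guarantees, and this is precisely where the improvement from $dnK$ to $d(n-1)K$ is won or lost.
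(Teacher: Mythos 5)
Your proposal follows the same general blueprint as the paper (attractor decompositions around the minimal priority, a good-for-energy-style pumping argument, separate treatment of the even and odd cases), but two of its load-bearing steps do not go through as stated. First, the induction scheme: you induct on $d$ alone, as in Lem.~6 of \cite{ChatterjeeD12}, but the subgames that arise do not always have fewer priorities. In the odd-minimal-priority case, after you carve out the region around the priority-$p$ states and extract the winning region $W'$ of the resulting subgame, the remaining game (the complement of the $\text{player}_0$-attractor of $W'$) can still contain priority-$p$ states and hence still have $d$ distinct priorities --- it only has strictly fewer states. Your claim that ``on the complement the least priority is $\ge p+1$'' is false in general, and the set of states from which $\text{player}_1$ can force infinitely many priority-$p$ visits ``while keeping the energy bounded'' is not an attractor computation; identifying it is essentially the problem being solved. \cite{ChatterjeeD12} patch this with an explicit recursion on the smaller game; the paper instead inducts on $n+d$, so that the induction hypothesis applies both to subgames with fewer priorities and to subgames with the same priorities but fewer states. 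Without one of these two devices your induction does not close. (The paper also has to treat $\text{player}_1$-deadlocks separately in the odd case, via the attractor of the deadlock set --- another subgame with possibly all $d$ priorities.)

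Second, your ``bounded-surplus normal form'' (step (i)) is both unproven and too strong. If $\text{player}_0$ really had a winning strategy keeping the energy within $(n-1)K$ of the running minimum, the whole lemma would follow with a bound of roughly $2(n-1)K$ independent of $d$; no ``standard exchange argument'' yields this, and the phase structure of the correct proof shows why: in the even case the strategy must deliberately let the energy climb through a band of width up to $d(n-1)K$. What is actually available (Lem.~4 of \cite{ChatterjeeD12}, the paper's Claim on good-for-energy strategies) is weaker: a strategy keeping the energy non-negative w.r.t.\ $d(n-1)K$ from initial credit $(n-1)K$ such that every consistent play either wins the parity objective or accumulates unbounded weight. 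The paper's even case uses exactly this to pump the energy up to the cap $d(n-1)K$ before spending at most $(k-1)K$ units to reach a priority-$0$ state, where $k$ is the size of the attractor of the priority-$0$ states; the inequality $d(n-1)K-(k-1)K\ge (n-1)K$ (for $d>1$, $k\le n$) is what re-establishes the invariant at each revisit. This is precisely the stitching you flag as ``the main obstacle'' without supplying it. Finally, note that lowering the initial credit to $\min\{c_0,(n-1)K\}$ needs no normal form at all: one plays the $+\infty$-winning strategy until the energy first reaches $(n-1)K$ and only then switches to the bounded strategy.
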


Consider an energy parity game $G$, as in Lem.~\ref{lem:lemma-6-revised}, and let $s$ be a state of $G$ that wins for $\text{player}_{0}$ w.r.t. $+\infty$ for an initial credit $c_0$.
While Lem.~6 of~\cite{ChatterjeeD12} shows that $\text{player}_{0}$ wins from $s$ w.r.t. $+\infty$ for the initial credit $(n-1)K$ with a strategy which has a memory of size $dnK$,
Lem.~\ref{lem:lemma-6-revised} shows that $\text{player}_0$ wins from $s$ w.r.t. the finite upper bound $d(n-1)K$
without the need to increase her initial credit, $c_{0}$. Both lemmas are proved by induction. However, in contrast to~\cite{ChatterjeeD12}, which proves Lem.~6 by induction
on $d$, we prove Lem.~\ref{lem:lemma-6-revised} by induction on ${n}+{d}$. This allows us to apply the induction hypothesis in more cases and, consequently, avoid the 
use of recursion, as opposed to~\cite{ChatterjeeD12}.
We provide the full proof for Lem.~\ref{lem:lemma-6-revised} in Appx.~\ref{app:lemma-6-revised-proof}.

Also, as a side note, Lem.~\ref{lem:lemma-6-revised} implies the following corollary which slightly improves the first result listed in~\cite{ChatterjeeD12}. Moreover, this corollary establishes a link between
the upper bound on the energy level accumulation and the upper bound on the strategy's memory size.

\begin{cor}
  Let $G$ be an energy parity game with $n$ states, $d$ different priorities 
  and maximal absolute value of the weights, $K$. If $\text{player}_{0}$ wins from a state $s$ w.r.t. $+\infty$ for an initial credit $c_{0}$ in $G$, then she has a strategy that wins for the initial credit $c_0$ and
 has a memory of size $d(n-1)K+1$.
 \label{cor:parityEnergyMemory}
\end{cor}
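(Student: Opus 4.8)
The plan is to derive Cor.~\ref{cor:parityEnergyMemory} directly from Lem.~\ref{lem:lemma-6-revised} together with a standard fact about bounded energy objectives: once an energy upper bound $c$ is fixed, the energy level takes values in the finite set $[0,c]$, so any winning play can be monitored by a finite automaton (a ``counter memory'') that tracks the current energy level. First I would invoke Lem.~\ref{lem:lemma-6-revised}: since $\text{player}_0$ wins from $s$ w.r.t. $+\infty$ for the initial credit $c_0$, she also wins from $s$ w.r.t. the finite upper bound $c := d(n-1)K$ for the initial credit $\min\{c_0,(n-1)K\}\leq c_0$.

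The main step is then to turn the bounded-energy winning strategy into a finite-memory strategy over the original (unbounded) game with the original initial credit $c_0$. I would argue that in the bounded energy parity game with upper bound $c$, $\text{player}_0$ has a \emph{positional} (memoryless) strategy on the product graph $G \times [0,c]$ whose vertices record the current truncated energy level: this is because, after encoding the energy-$c$ objective as a safety constraint on this product (as in Def.~\ref{def:naiveReduction}, or directly as in~\cite{ChatterjeeD12}), we obtain an ordinary parity game, and parity games admit memoryless winning strategies. The resulting strategy, read back on $G$, is a finite-memory strategy whose memory is exactly the set $[0,c] = \{0,1,\dots,d(n-1)K\}$, i.e.\ of size $d(n-1)K+1$. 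The key observation that lets us keep the \emph{original} initial credit $c_0$ rather than $\min\{c_0,(n-1)K\}$ is monotonicity of the energy objective in the initial credit (stated in Sect.~\ref{sec:combinedEngObj}): starting with a larger credit $c_0$, the truncated-to-$c$ energy level is pointwise at least as large as with credit $\min\{c_0,(n-1)K\}$ along every play, so the same product strategy (started from the state $(s, \min\{c_0, c\})$, or equivalently tracking the truncation of $c_0$) remains winning and never lets the true, untruncated energy level drop below $0$. Hence $\text{player}_0$ wins from $s$ w.r.t.\ $+\infty$ for the initial credit $c_0$ using a strategy with memory of size $d(n-1)K+1$.

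I would conclude by noting that this improves the $dnK$ memory bound of~\cite{ChatterjeeD12} and that the stated link between the sufficient energy bound and the strategy memory size is immediate from this construction: the memory is precisely the value domain $[0,c]$ of the energy counter. The main obstacle I anticipate is the bookkeeping in the second step — carefully justifying that a memoryless strategy on the safety-augmented product graph, when projected back, yields a correct finite-memory strategy for the \emph{unbounded} objective with the \emph{unreduced} credit $c_0$; this requires invoking both the correctness of the energy-to-safety reduction restricted to parity objectives and the monotonicity argument above, and making sure the memory update function (add the transition weight, then truncate at $c$) is well-defined and that truncation never causes a spurious loss compared to the untruncated level.
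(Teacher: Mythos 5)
Your proposal is correct and follows essentially the same route as the paper: apply Lem.~\ref{lem:lemma-6-revised} to obtain winning w.r.t. the finite bound $c=d(n-1)K$, pass to the product parity game whose states carry the truncated energy level in $[0,c]$ (with energy violation encoded as losing/deadlock states), extract a memoryless strategy there by memoryless determinacy of parity games, and lift it back to a strategy on $G$ with memory $[0,c]$ of size $d(n-1)K+1$, using monotonicity of the energy objective to recover the original credit $c_0$ and the unbounded objective. The bookkeeping concerns you raise (upward-closedness of the product winning region in the credit component, and truncation being a lower bound on the true energy level) are exactly the points the paper's own sketch also leaves implicit.
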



We obtain Cor.~\ref{cor:parityEnergyMemory} by applying Lem.~\ref{lem:lemma-6-revised} to $G$,
obtaining that $\text{player}_0$ wins from $s$ w.r.t. $d(n-1)K$,
and observing that if $\text{player}_0$ wins from $s$ w.r.t. $c\in \mathbb N$,
then she has a winning strategy from $s$ w.r.t. $+\infty$ with a memory of size $c+1$. The proof sketch for the latter claim is as follows. We construct from $G$ and $c$ a new parity game $G^{p}$. 
The states of $G^{p}$ have the form $(s,c'_{0})$ where $s$ is a state of $G$ and $c'_{0}\in[0,c] \cup \{+\infty\}$ is the accumulated energy level under $c$. States of the form $(s, +\infty)$ are deadlocks for $\text{player}_{0}$
and correspond to violation of the energy objective. 
The edges of $G^{p}$ are taken from $G$ and update the energy component accordingly.
Then, a memoryless strategy in $G^{p}$, which exists due to memoryless determinacy of parity games~\cite{EmersonJ91,Zielonka98},
can be lifted to a strategy in $G$ with a memory of size $(c+1)$ to keep track of the energy level under $c$.

The second step towards proving Thm.~\ref{Thm:a-sufficient-bound}, involves showing that if the system player can win from a state of a WGS w.r.t. $+\infty$, then it can also win from that state w.r.t. some finite upper bound.
This is formally stated by the next lemma.





\begin{lem}\label{lem:from-infinite-to-finite}
  Let $\WGS$ be a WGS and assume 
  that the system player has a winning strategy from a state $s\in 2^\mathcal{V}$ w.r.t. 
  $+\infty$ for an initial credit $c_0$. Then, for some finite upper bound $c \in \mathbb{N}$, the system 
  has a winning strategy from $s$ w.r.t. $c$ for an initial credit
  $\min\{c_{0},c\}$.  
\end{lem}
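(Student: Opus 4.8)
The plan is to reduce the WGS $G^w$ to an energy parity game $\hat G$, invoke the sufficient-bound result already available for energy parity games (Lem.~\ref{lem:lemma-6-revised}), and then transfer the resulting finite bound back to $G^w$. First I would fix a closed $\sysmu$ formula $\psi\in\sysFormulas$ matching $\varphi$; by~\cite{AlfaroHM01,BruseFL14,EmersonJ91} such a $\psi$ exists, and its evaluation on any GS can be unfolded into a parity game on an expanded state space whose size is polynomial in $N=|2^{\mathcal V}|$ and whose number of priorities $d'$ is bounded by (twice) the alternation depth of $\psi$. Applying this construction to the underlying GS $G$ of $G^w$ and then equipping the resulting parity game with the weights inherited from $w^s$ (weight $0$ on the auxiliary ``book-keeping'' moves introduced by the unfolding, and $w^s(s,p(t))$ on the moves that correspond to real transitions of $G$) yields an energy parity game $\hat G$ with $\hat n$ states, $d'$ priorities, and maximal absolute weight $\hat K = K$. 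The key correspondence to establish is: the system wins from $s$ in $G^w$ w.r.t.\ $+\infty$ for credit $c_0$ iff $\text{player}_0$ wins from the initial state $\hat s$ of $\hat G$ w.r.t.\ $+\infty$ for credit $c_0$, and moreover this correspondence is faithful w.r.t.\ \emph{finite} bounds as well, since the auxiliary moves have weight $0$ and hence do not affect energy levels, and truncation at a bound $c$ commutes with the unfolding.

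Granting this correspondence, the argument is short. By hypothesis the system wins from $s$ in $G^w$ w.r.t.\ $+\infty$ for $c_0$, so $\text{player}_0$ wins from $\hat s$ in $\hat G$ w.r.t.\ $+\infty$ for $c_0$. By Lem.~\ref{lem:lemma-6-revised}, $\text{player}_0$ then wins from $\hat s$ in $\hat G$ w.r.t.\ the finite bound $c:=d'(\hat n-1)K$ for the initial credit $\min\{c_0,(\hat n-1)K\}\le\min\{c_0,c\}$; by monotonicity of the energy objective in the initial credit (Sect.~\ref{sec:combinedEngObj}), $\text{player}_0$ also wins w.r.t.\ $c$ for $\min\{c_0,c\}$. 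Transferring back along the faithful-for-finite-bounds correspondence, the system wins from $s$ in $G^w$ w.r.t.\ the finite upper bound $c$ for the initial credit $\min\{c_0,c\}$, which is exactly the claim.

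The main obstacle is setting up the WGS-to-energy-parity-game reduction and proving that it is faithful to the energy objective under both $c=+\infty$ and finite $c$ simultaneously. The qualitative part (matching $\varphi$ via $\psi$, hence matching the parity condition) is standard, but I must be careful that the unfolding of $\psi$ into a parity game does not collapse or duplicate transitions in a way that changes accumulated weights; the clean fix is to route every genuine transition of $G$ through exactly one weighted edge of $\hat G$ and give all structural/choice edges weight $0$, so that for every play in $G^w$ the energy level of each prefix equals the energy level of the corresponding prefix in $\hat G$ (and conversely), for every $c\in\mathbb N\cup\{+\infty\}$. Once this prefix-wise equality of energy levels is in place, the equivalence of winning for the $\varphi$-energy objective and for the parity-energy objective follows, and the rest is a direct application of Lem.~\ref{lem:lemma-6-revised} plus credit monotonicity. (This reduction, together with the bookkeeping of $\hat n$ and $d'$ in terms of $N$, $m$, and $d$, is also precisely what is needed to derive the explicit bound of Thm.~\ref{Thm:a-sufficient-bound}, so it is worth carrying out in a reusable form; the extended treatment is deferred to Appx.~\ref{app:sufficient-bound-proof}.)
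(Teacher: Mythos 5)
Your overall strategy (reduce to an energy parity game, apply Lem.~\ref{lem:lemma-6-revised}, transfer back) is the same as the paper's, but the reduction you choose is different, and it is exactly where the proof breaks. The paper builds the energy parity game as a synchronous product of $G^w$ with a \emph{deterministic parity automaton} for $L(\varphi)$. Determinism gives a bijection between plays of $G^w$ and plays of the product with identical energy levels, and the parity condition of the product holds on a play iff its projection satisfies $\varphi$; hence the equivalence of winning holds play-by-play, for every bound in $\mathbb N\cup\{+\infty\}$ and every initial credit, in both directions. You instead unfold the $\sysmu$ formula $\psi$ over the \emph{unweighted} GS $G$ via the model-checking game and add weights. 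The ``key correspondence'' you assert is then not a consequence of anything you have: that $\psi$ \emph{matches} $\varphi$ is an extensional statement about winning regions ($W_{\sys}=\llbracket\psi\rrbracket^{G}$), not about individual plays. A play of $G\times\psi$ that wins the parity objective of the model-checking game need not project to a play of $G$ satisfying $\varphi$, so a player-$0$ strategy winning parity-plus-energy in your $\hat G$ does not obviously project to a system strategy winning the $\varphi$-energy objective in $G^w$; conversely, a $\varphi$-energy-winning strategy $g$ in $G^w$ must be \emph{combined} with a parity-winning strategy of the model-checking game, and the latter may require different $\circlediamond$-witnesses than $g$ provides. Prefix-wise equality of energy levels, which is all your ``clean fix'' delivers, does not address this strategic interaction between the two objectives.

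Note that the paper's own $\mu$-calculus-based product (Sect.~\ref{sec:sufficientbound:WGStoParityEnergyGames}, used to prove Thm.~\ref{Thm:a-sufficient-bound}) avoids this problem only because it unfolds $\psi$ over $G_c$ from Def.~\ref{def:naiveReduction}, where the energy counter is part of the state space and $\llbracket\psi\rrbracket^{G_c}$ already encodes the energy objective; the added weights there are ``merely artificial'' and never violated along consistent plays, so no combination of objectives is needed. But that construction presupposes a \emph{finite} bound $c$ to define $G_c$ at all, which is precisely what this lemma is supposed to produce -- using it here would be circular. The clean repair of your argument is to replace the model-checking-game unfolding with a deterministic parity automaton for $\varphi$, which is the paper's proof.
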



\begin{proof}
 To prove this claim, we use the notion of a \emph{deterministic parity automaton}~\cite{Buchi60,mostowski1985}.

\begin{defi}[Deterministic parity automaton]
A deterministic parity automaton is a tuple ${\mathcal{A}}={\langle Q,\Sigma,\delta,q_0,\prioA\rangle}$ where $Q$ is a finite set of states, $\Sigma$ is an alphabet, $\delta:Q\times\Sigma\rightarrow Q$ is the transition function, $q_0\in Q$ is the initial state, and $\prioA:Q\rightarrow \mathbb{N}$ is the priority function.  
\end{defi}

An $\omega$-word $\sigma_0\sigma_1\sigma_2\cdots\in \Sigma^\omega$ is accepted by a deterministic parity automaton $\mathcal{A}=\langle Q,\Sigma,\delta,q_0,\prioA\rangle$, if there is an infinite sequence of states $q'_0,q'_1,q'_2,\dots$ such that $q'_0=q_0$, $\delta(q'_j,\sigma_i)=q'_{i+1}$, and $\underset{i\rightarrow +\infty}{\lim}\min\{\prioA(q_l):l\geq i\}$ is even. In words, using its transition function, the automaton $\mathcal{A}$ reads an $\omega$-word, starting from $q_0$, and if the minimal priority traversed infinitely often is even, $\mathcal{A}$ accepts the word.
The set of all $\omega$-words accepted by the automaton is denoted by $L(\mathcal{A})$.
It is known that for every $\omega$-regular language $L$ there exists a deterministic parity automaton $\mathcal{A}$ with $L=L(\mathcal{A})$~\cite{2001automata,piterman2006,safra88}.

Let $\mathcal{A}=\langle Q,2^\mathcal{V},\delta,q_0,\prioA\rangle$ be a deterministic parity automaton with $L(\mathcal{A})=L(\varphi)$, where $L(\varphi)$ is the set of all $\omega$-words $\sigma \in (2^{\mathcal V})^{\omega}$ for which the $\omega$-regular winning condition of $G^{w}$, $\varphi$, holds. 

We define an energy parity game $G_\mathcal{A}=\langle (V,E),\mathit{prio},w \rangle$, as follows: 
  
  \begin{enumerate}
    \item $\text{player}_1$ states are all states of the form $(s,q)$ 
    where $s\in 2^\mathcal{V}$ and $q$ is a state of $\mathcal{A}$.
    
    \item $\text{player}_0$ states are all states of the form $(s,u,q)$ where 
    $s\in 2^{\mathcal V}$, $q$ is a state of $\mathcal{A}$, and $u\in 2^\mathcal{X}$.
    
    \item There is a transition from a state $(s,q)$ to a state $(s,u,q)$ if 
    $(s,p(u))\models \rho^e$. The weight of such a transition is $0$.
    
    \item There is a transition from a state $(s,u,q)$ to a state $(t,q')$ if 
    $u=t|_\mathcal{X}$, $(s,p(t))\models \rho^s$, and $q'=\delta(q,t)$. 
    The weight of such a transition is $w^s(s,p(t))$.
    
    
    \item $\prio(s,u,q)=\prio(s,q)=\prioA(q)$.
  \end{enumerate}
  
  \noindent It is not difficult to see that for every state $t\in 2^\mathcal{V}$, upper bound 
  $d\in\mathbb{N}\cup \{+\infty\}$, and initial credit $d_0\in \mathbb{N}$, $d_0\leq d$, the
  system wins in $G^w$ from $t$ w.r.t. $d$ for $d_0$, 
  if and only if $\text{player}_0$ 
  wins in $G_{\mathcal A}$ from $(t,q_t=\delta(q_0,t))$ w.r.t. $d$ for $d_0$. 
  
  Now, let $s \in 2^{\mathcal V}$ be a state in $G^{w}$ from which the system wins w.r.t. $+\infty$ for an initial credit
  $c_0\in \mathbb{N}$. Therefore, for $q_s=\delta(q_0,s)$, $\text{player}_0$ wins in $G_{\mathcal A}$ from $(s,q_s)$ w.r.t. 
  $+\infty$ for the initial credit $c_0$. By Lem.~\ref{lem:lemma-6-revised}, 
  for some finite upper bound $c\in\mathbb{N}$, $\text{player}_0$ wins from $(s,q_s)$ w.r.t. $c$ for the initial credit $\min\{c_{0},c\}$. Hence, the system wins in $G^w$ from $s$ w.r.t. $c$ for the initial credit $\min\{c_{0},c\}$, as required.
\end{proof}

\subsubsection{Reducing Weighted Game Structures to Energy Parity Games}\label{sec:sufficientbound:WGStoParityEnergyGames}

So far, we have shown in Lem.~\ref{lem:lemma-6-revised} that if $\text{player}_0$ can win an energy
parity game w.r.t. $+\infty$, then she can also win w.r.t. some finite upper bound
that depends on the size of the game. We have concluded in Lem.~\ref{lem:from-infinite-to-finite} that the same holds for any WGS,
but we have not yet achieved the desired upper bound, which is specified in
Thm.~\ref{Thm:a-sufficient-bound}.

In the following, we prove the sufficiency of
the upper bound $(d+1)((N^2+N)m-1)K$ for winning, in case winning is
possible. The idea is to reduce $\omega$-regular energy games (WGSs) to energy parity games without using the explicit construction from
the proof of Lem.~\ref{lem:from-infinite-to-finite}.
Instead, we provide a construction that uses the energy $\mu$-calculus formula that solves the game.
That is mostly useful in cases where the $\mu$-calculus formula is relatively small,
e.g., reachability, safety, B\"{u}chi, co-B\"{u}chi, GR(1)~\cite{BJP+12}, etc.
We present the construction guidelines.
For the full details, we refer the reader to Appx.~\ref{app:sufficient-bound-proof}.

Consider a WGS, $G^w$, where $K$ is the maximal absolute value of the weights in
$G^w$. Let $\psi\in\sysFormulas$ be a closed $\sysmu$ formula that
matches the winning condition of $G^w$, $\varphi$, and let $m$ be its length and
$d$ its alternation depth. For a natural number $c\geq K$, we construct a
parity game in several steps as elaborated below. We remark that the actual
construction (see Appx.~\ref{app:sufficient-bound-proof}) is, in some places, slightly different than the one described here. That is
because we choose to omit some technical details which we believe to only distract and conceal the essence of the construction.

\begin{description}
\item[{\bfseries Step 1}] Let $G_c$ be the graph defined in Def.~\ref{def:naiveReduction} (appears in Def.~\ref{def:naiveReduction} as $G^{*}$).
Hence, the states of $G_c$ are of the form: $(s,c_0)$ where $s$ 
is a state of $G^{w}$ and $c_0 \in [0,c]$. Recall that by Lem.~\ref{lem:sysEngMuCalc},
${c_0 \geq \llbracket \psi^\op{E}\rrbracket^{\GEc}(s)}\text{ iff }{((s,c_0)\in\llbracket \psi \rrbracket^{G_{c}})}$.

\item[{\bfseries Step 2}] We apply the seminal \emph{model checking game construction}~\cite{EmersonJ91} to obtain a parity game $G_c\times \psi$, which has at most $d+1$ different priorities. 
The states of $G_c\times \psi$ are of the form $((s,c_0),\xi)$ where $\xi$ is a
sub-formula of $\psi$. By~\cite{EmersonJ91}, $(s,c_0)\in \llbracket \psi \rrbracket^{G_c}$ iff $((s,c_0),\psi)\in W^{G_c\times\psi}_0$.

\item[{\bfseries Step 3}] The next step is to add a weight function $w$ to $G_{c}\times \psi$, namely to transform $G_c\times\psi$ into an energy parity game (as defined in Def.~\ref{def:energyParityGame}). Some of the transitions in $G_c\times \psi$ simulate transitions of $G_c$,
which correspond to transitions of $G^w$. The weight of such a transition $T = ((s_1,c_1),\xi_1),((s_2,c_2),\xi_2)$ is
inherited from the transition of $G^w$ that $T$ simulates, i.e., $T$ is assigned the weight $w^s(s_1,p(s_2))$. The weight of all other transitions is $0$. However, the construction of $G_c$ ensures that every transition $T = ((s_1,c_1),\xi_1),((s_2,c_2),\xi_2)$ satisfies that ${c_1 + w(T)} \geq {c_2}$
(cf. the definition of $\rho^{s*}$ in Def.~\ref{def:naiveReduction}). Hence, in any play that starts from $((s,c_0), \psi)$ with an initial credit $c_0$, the energy level always remains non-negative. Consequently, the additional energy objective is merely artificial and does not prevent $\text{player}_0$ from winning; the result is that a state $s$ wins for the system in $G^w$
w.r.t. $c$ for an initial credit $c_0$, iff $((s,c_0),\psi)\in W^{G_c\times
  \psi}_0(c)$.



\item[{\bfseries Step 4}] The final step is to eliminate the energy component from the
states of $G_{c}\times \psi$. States of the form $((s,c_0),\xi)$ are replaced with
a single state $(s,\xi)$. Thus, each state of the obtained energy parity game,
$\faktor{G_c\times \psi}{c}$, matches a set of states in $G_c\times \psi$. A path
in $\faktor{G_c \times \psi}{c}$ can be lifted to a path in $G_c\times \psi$.
Thus, we have that $(s,\psi)$ wins 
for $\text{player}_0$ in $\faktor{G_c\times \psi}{c}$ w.r.t. $c$ for an initial credit $c_0$ iff $\text{player}_0$ wins from $((s,c_0),\psi)$ in $G_c\times \psi$ w.r.t. $c$ for an initial credit $c_0$ iff the system player wins from $s$ in $G^w$ w.r.t. $c$ for $c_0$.
  
\end{description}


\noindent The key idea behind this construction is that the upper bound $c$ does \emph{not} play a
role in the resulting game, $\faktor{G_c\times\psi}{c}$. In fact, we get that
for any two finite upper bounds, $c,c'\geq K$, $\faktor{G_c\times\psi}{c}
=\faktor{G_{c'}\times\psi}{c'}$, so we may denote this graph by a single name,
say $\tilde{G}$. By Lem.~\ref{lem:lemma-6-revised}, if a state of $\tilde{G}$
wins for $\text{player}_0$ w.r.t. $+\infty$, it also wins w.r.t. $b=
(d+1)((N^2+N)m-1)K$, as $(N^2+N)m$ is the number of states of $\tilde{G}$. Therefore, if a state $s$ wins for the system in $G^w$ w.r.t. some
finite upper bound $c$, then $s$ also wins w.r.t. $b$. This consequence, together with Lem.~\ref{lem:from-infinite-to-finite}, completes the proof of
Thm.~\ref{Thm:a-sufficient-bound}.

\emph{Notice that this result establishes energy $\mu$-calculus algorithms for problems \ref{problem:decision} and
\ref{problem:optimal} when the bound is $+\infty$:}

\begin{itemize}
  \item Checking if $\llbracket\psi^\op{E}\rrbracket^{G^{w}(b)}(s)\neq +\infty$ solves the decision problem (\ref{problem:decision}).
\item Returning $\llbracket \psi^\op{E}\rrbracket^{G^{w}(b)}(s)$ solves the minimum credit problem (\ref{problem:optimal}).
\end{itemize}

\noindent The sufficient bound $b=(d+1)((N^2+N)m-1)K$, which we have just obtained, applies to WGSs with \emph{any} $\omega$-regular winning conditions.
Nevertheless, in some cases, this bound is not tight, as we demonstrate below.

Consider a $\sysmu$ formula $\psi_B$ that matches a B\"{u}chi winning condition $\varphi_B$ (cf. Eq.~\ref{eq:buchiExample:muCalcFormula}), and let $m_B$ be its length.
As the alternation depth of $\psi_B$ is 2, Thm.~\ref{Thm:a-sufficient-bound} implies the sufficiency of the bound $b_{B}=3((N^2+N)m_B-1)K$.
Interestingly, however, in this special case, we argue that the bound $b_{B}$ is not tight, and it can be replaced with a \emph{lower} one, specifically with
$b^{\mathit{low}}_{B} = 2({N^2+N-1})K$.

To obtain $b^{\mathit{low}}_{B}$, we reduce a WGS $G^{w}$ whose winning condition is $\varphi_B$ to an energy parity game $G^{ep}$ with two priorities
and at most $N^2 + N$ states; then, we invoke Lem.~\ref{lem:lemma-6-revised} on $G^{ep}$.

The crux of this reduction is that it constructs $G^{ep}$
without using $\psi_B$, as opposed to the above construction. 
The reduction sees the B\"{u}chi condition as a parity condition with two priorities (w.l.o.g. $0$ and $1$),
and constructs $G^{ep}$, which is simply an explicit representation of the symbolic WGS $G^w$, as follows:   

\begin{itemize}
     \item $\text{player}_1$ states are all states $s \in 2^{\mathcal V}$.
    \item $\text{player}_0$ states are all pairs $(s,u)$ where $s \in 2^{\mathcal V}$ is a $G^{w}$-state and $u \in 2^{\mathcal X}$ is an assignment to the input variables.
  \item There is a transition from $s$ to $(s,u)$ if $(s,p(u))\models \rho^e$; such a transition corresponds to an environment's step and its weight is 0.
  \item There is a transition from $(s,u)$ to $t \in 2^{\mathcal V}$ if $u=t|_\mathcal{X}$ and $(s,p(t))\models \rho^s$; such a transition corresponds
  to a system's step and its weight is $w^s(s,p(t))$.
  \item The priorities of all the states $s \in 2^{\mathcal V}$ in $G^{ep}$ remain the same as those in $G^{w}$, while every state $(s,u)$ in $G^{ep}$ is assigned the same priority as $s$.
\end{itemize}

\noindent Finally, we observe that, in fact, the construction of $G^{ep}$ does not only apply to B\"{u}chi winning conditions, but to
parity conditions. That is, this construction
reduces a parity WGS with $d$ priorities to an energy parity game $G^{ep}$ with $d$ priorities and at most $N^2+N$ states.
Therefore, we also conclude the sufficiency of the tighter bound $d(N^2+N-1)K$ for parity WGSs with $d$ different priorities.

\section{Conclusion}
\label{sec:conclusion}

We have introduced energy $\mu$-calculus, a multi-valued extension of the game $\mu$-calculus~\cite{EmersonJ91} over symbolic game structures~\cite{BJP+12} that serves as
a symbolic framework for solving $\omega$-regular energy games.
Existing, well-known game $\mu$-calculus formulas $\psi$ that solve $\omega$-regular games
can be seamlessly reused as energy $\mu$-calculus formulas $\psi^\op{E}$ to solve corresponding energy augmented games (see Thm.~\ref{thm:sysEngMuCalcCorrectness} and Thm.~\ref{thm:envEngMuCalcCorrectness}).
The semantics of $\psi^\op{E}$ immediately prescribes a symbolic algorithm to solve the underlying $\omega$-regular energy games (cf. Alg.~\ref{alg:buchiEnergy}).

The semantics of energy $\mu$-calculus is defined w.r.t. finite upper bounds. Nevertheless,
we have shown that energy $\mu$-calculus solves both the decision and the minimum credit problems (i.e., problems~\ref{problem:decision} and~\ref{problem:optimal} in Sect.~\ref{sec:solvingEnergyGames}), also with an unbounded energy level accumulation.
We have obtained this result by showing that every $\omega$-regular winning condition admits a sufficiently large upper bound
under which the bounded energy level accumulation coincides with the unbounded one. Moreover,
importantly, although it is finite, the sufficient bound still enables the system player to win without increasing the initial credit.

We have introduced a sufficient bound that depends on the size of the state space, the maximal absolute weight, and the length and the alternation depth of the game $\mu$-calculus formula that solves the $\omega$-regular game (see Thm.~\ref{Thm:a-sufficient-bound}).
To prove this bound, we have reduced $\omega$-regular energy games over symbolic weighted game structures, to energy parity games~\cite{ChatterjeeD12}.
This reduction, which applies a construction that uses the $\mu$-calculus formula that solves the game,
establishes a connection to the sufficient bound that we have obtained for energy parity games (see Lem.~\ref{lem:lemma-6-revised}).


\subsection*{Future Work} The game $\mu$-calculus has not only been used to compute the sets of winning
states, but to also synthesize winning strategies; see,
e.g.,~\cite{BJP+12,BruseFL14,2001automata,KonighoferHB13}. 
Thus, in addition to solving the decision and the minimum credit problems, we believe that
energy $\mu$-calculus can augment $\mu$-calculus-based strategy synthesis with
energy. That is, we conjecture that finite memory winning strategies may be extracted from the intermediate energy functions of the fixed-point iterations.




\section*{Acknowledgment}
 This project has received funding from the European Research Council (ERC)
under the European Union's Horizon 2020 research and innovation programme (grant
agreement No 638049, SYNTECH).

\bibliographystyle{alpha}
\bibliography{doc}

\newcommand{\etalchar}[1]{$^{#1}$}
\begin{thebibliography}{dAKN{\etalchar{+}}00}

\bibitem[AMN05]{AlurMN05}
Rajeev Alur, P.~Madhusudan, and Wonhong Nam.
\newblock Symbolic computational techniques for solving games.
\newblock {\em {STTT}}, 7(2):118--128, 2005.

\bibitem[BBFR13]{BohyBFR13}
Aaron Bohy, V{\'{e}}ronique Bruy{\`{e}}re, Emmanuel Filiot, and
  Jean{-}Fran{\c{c}}ois Raskin.
\newblock Synthesis from {LTL} specifications with mean-payoff objectives.
\newblock In Nir Piterman and Scott~A. Smolka, editors, {\em Tools and
  Algorithms for the Construction and Analysis of Systems - 19th International
  Conference, {TACAS} 2013, Held as Part of the European Joint Conferences on
  Theory and Practice of Software, {ETAPS} 2013, Rome, Italy, March 16-24,
  2013. Proceedings}, volume 7795 of {\em Lecture Notes in Computer Science},
  pages 169--184. Springer, 2013.

\bibitem[BC12]{BrimC12}
Lubos Brim and Jakub Chaloupka.
\newblock Using strategy improvement to stay alive.
\newblock {\em Int. J. Found. Comput. Sci.}, 23(3):585--608, 2012.

\bibitem[BCD{\etalchar{+}}11]{BrimCDGR11}
Lubos Brim, Jakub Chaloupka, Laurent Doyen, Raffaella Gentilini, and
  Jean{-}Fran{\c{c}}ois Raskin.
\newblock Faster algorithms for mean-payoff games.
\newblock {\em Formal Methods in System Design}, 38(2):97--118, 2011.

\bibitem[BCH{\etalchar{+}}97]{BaierCHKR97}
Christel Baier, Edmund~M. Clarke, Vasiliki Hartonas{-}Garmhausen, Marta~Z.
  Kwiatkowska, and Mark Ryan.
\newblock Symbolic model checking for probabilistic processes.
\newblock In Pierpaolo Degano, Roberto Gorrieri, and Alberto
  Marchetti{-}Spaccamela, editors, {\em Automata, Languages and Programming,
  24th International Colloquium, ICALP'97, Bologna, Italy, 7-11 July 1997,
  Proceedings}, volume 1256 of {\em Lecture Notes in Computer Science}, pages
  430--440. Springer, 1997.

\bibitem[BCJ{\etalchar{+}}97]{BrowneCJLM97}
Anca Browne, Edmund~M. Clarke, Somesh Jha, David~E. Long, and Wilfredo~R.
  Marrero.
\newblock An improved algorithm for the evaluation of fixpoint expressions.
\newblock {\em Theor. Comput. Sci.}, 178(1-2):237--255, 1997.

\bibitem[BCW{\etalchar{+}}10]{BerwangerCWDH10}
Dietmar Berwanger, Krishnendu Chatterjee, Martin~De Wulf, Laurent Doyen, and
  Thomas~A. Henzinger.
\newblock Strategy construction for parity games with imperfect information.
\newblock {\em Inf. Comput.}, 208(10):1206--1220, 2010.

\bibitem[BFG{\etalchar{+}}97]{BaharFGHMPS97}
R.~Iris Bahar, Erica~A. Frohm, Charles~M. Gaona, Gary~D. Hachtel, Enrico Macii,
  Abelardo Pardo, and Fabio Somenzi.
\newblock Algebraic decision diagrams and their applications.
\newblock {\em Formal Methods in System Design}, 10(2/3):171--206, 1997.

\bibitem[BFL{\etalchar{+}}08]{BouyerFLMS08}
Patricia Bouyer, Ulrich Fahrenberg, Kim~Guldstrand Larsen, Nicolas Markey, and
  Jir{\'{\i}} Srba.
\newblock Infinite runs in weighted timed automata with energy constraints.
\newblock In {\em Formal Modeling and Analysis of Timed Systems, 6th
  International Conference, {FORMATS} 2008, Saint Malo, France, September
  15-17, 2008. Proceedings}, pages 33--47, 2008.

\bibitem[BFL14]{BruseFL14}
Florian Bruse, Michael Falk, and Martin Lange.
\newblock The fixpoint-iteration algorithm for parity games.
\newblock In Adriano Peron and Carla Piazza, editors, {\em Proceedings Fifth
  International Symposium on Games, Automata, Logics and Formal Verification,
  GandALF 2014, Verona, Italy, September 10-12, 2014.}, volume 161 of {\em
  {EPTCS}}, pages 116--130, 2014.

\bibitem[BG04]{BrunsG04}
Glenn Bruns and Patrice Godefroid.
\newblock Model checking with multi-valued logics.
\newblock In Josep D{\'{\i}}az, Juhani Karhum{\"{a}}ki, Arto Lepist{\"{o}}, and
  Donald Sannella, editors, {\em Automata, Languages and Programming: 31st
  International Colloquium, {ICALP} 2004, Turku, Finland, July 12-16, 2004.
  Proceedings}, volume 3142 of {\em Lecture Notes in Computer Science}, pages
  281--293. Springer, 2004.

\bibitem[BJP{\etalchar{+}}12]{BJP+12}
Roderick Bloem, Barbara Jobstmann, Nir Piterman, Amir Pnueli, and Yaniv Sa'ar.
\newblock {Synthesis of Reactive(1) Designs}.
\newblock {\em J. Comput. Syst. Sci.}, 78(3):911--938, 2012.

\bibitem[BKV04]{BustanKV04}
Doron Bustan, Orna Kupferman, and Moshe~Y. Vardi.
\newblock A measured collapse of the modal {\(\mathrm{\mu}\)}-calculus
  alternation hierarchy.
\newblock In Volker Diekert and Michel Habib, editors, {\em {STACS} 2004, 21st
  Annual Symposium on Theoretical Aspects of Computer Science, Montpellier,
  France, March 25-27, 2004, Proceedings}, volume 2996 of {\em Lecture Notes in
  Computer Science}, pages 522--533. Springer, 2004.

\bibitem[Bry86]{Bryant86}
Randal~E. Bryant.
\newblock Graph-based algorithms for boolean function manipulation.
\newblock {\em {IEEE} Trans. Computers}, 35(8):677--691, 1986.

\bibitem[BS07]{Bradfield}
Julian Bradfield and Colin Stirling.
\newblock 12 modal mu-calculi.
\newblock In Johan Van~Benthem Patrick~Blackburn and Frank Wolter, editors,
  {\em Handbook of Modal Logic}, volume~3 of {\em Studies in Logic and
  Practical Reasoning}, pages 721 -- 756. Elsevier, 2007.

\bibitem[B{\"u}c60]{Buchi60}
J.~Richard B{\"u}chi.
\newblock Weak second-order arithmetic and finite automata.
\newblock {\em Mathematical Logic Quarterly}, 6(1-6):66--92, 1960.

\bibitem[BW18]{bradfieldmu}
Julian Bradfield and Igor Walukiewicz.
\newblock {\em The mu-calculus and Model Checking}, pages 871--919.
\newblock Springer International Publishing, 2018.

\bibitem[CD12]{ChatterjeeD12}
Krishnendu Chatterjee and Laurent Doyen.
\newblock Energy parity games.
\newblock {\em Theor. Comput. Sci.}, 458:49--60, 2012.

\bibitem[CdAHS03]{ChakrabartiAHS03}
Arindam Chakrabarti, Luca de~Alfaro, Thomas~A. Henzinger, and Mari{\"{e}}lle
  Stoelinga.
\newblock Resource interfaces.
\newblock In Rajeev Alur and Insup Lee, editors, {\em Embedded Software, Third
  International Conference, {EMSOFT} 2003, Philadelphia, PA, USA, October
  13-15, 2003, Proceedings}, volume 2855 of {\em Lecture Notes in Computer
  Science}, pages 117--133. Springer, 2003.

\bibitem[CDHL17]{ChatterjeeDHL17}
Krishnendu Chatterjee, Wolfgang Dvo{\v{r}}{\'{a}}k, Monika Henzinger, and
  Veronika Loitzenbauer.
\newblock Improved set-based symbolic algorithms for parity games.
\newblock In Valentin Goranko and Mads Dam, editors, {\em 26th {EACSL} Annual
  Conference on Computer Science Logic, {CSL} 2017, August 20-24, 2017,
  Stockholm, Sweden}, volume~82 of {\em LIPIcs}, pages 18:1--18:21. Schloss
  Dagstuhl - Leibniz-Zentrum fuer Informatik, 2017.

\bibitem[CDHS18]{ChatterjeeDHS18}
Krishnendu Chatterjee, Wolfgang Dvo{\v{r}}{\'{a}}k, Monika Henzinger, and
  Alexander Svozil.
\newblock Quasipolynomial set-based symbolic algorithms for parity games.
\newblock In Gilles Barthe, Geoff Sutcliffe, and Margus Veanes, editors, {\em
  LPAR-22. 22nd International Conference on Logic for Programming, Artificial
  Intelligence and Reasoning}, volume~57 of {\em EPiC Series in Computing},
  pages 233--253. EasyChair, 2018.

\bibitem[CHJ05]{ChatterjeeHJ05}
Krishnendu Chatterjee, Thomas~A. Henzinger, and Marcin Jurdzinski.
\newblock Mean-payoff parity games.
\newblock In {\em 20th {IEEE} Symposium on Logic in Computer Science {(LICS}
  2005), 26-29 June 2005, Chicago, IL, USA, Proceedings}, pages 178--187.
  {IEEE} Computer Society, 2005.

\bibitem[CRR14]{ChatterjeeRR14}
Krishnendu Chatterjee, Mickael Randour, and Jean{-}Fran{\c{c}}ois Raskin.
\newblock Strategy synthesis for multi-dimensional quantitative objectives.
\newblock {\em Acta Inf.}, 51(3-4):129--163, 2014.

\bibitem[dAFS09]{deAlfaro}
Luca de~Alfaro, Marco Faella, and Mari\"{e}lle Stoelinga.
\newblock Linear and branching system metrics.
\newblock {\em IEEE Trans. Softw. Eng.}, 35(2):258--273, March 2009.

\bibitem[dAHM01]{AlfaroHM01}
Luca de~Alfaro, Thomas~A. Henzinger, and Rupak Majumdar.
\newblock From verification to control: Dynamic programs for omega-regular
  objectives.
\newblock In {\em 16th Annual {IEEE} Symposium on Logic in Computer Science,
  Boston, Massachusetts, USA, June 16-19, 2001, Proceedings}, pages 279--290.
  {IEEE} Computer Society, 2001.

\bibitem[dAKN{\etalchar{+}}00]{AlfaroKNPS00}
Luca de~Alfaro, Marta~Z. Kwiatkowska, Gethin Norman, David Parker, and Roberto
  Segala.
\newblock Symbolic model checking of probabilistic processes using mtbdds and
  the kronecker representation.
\newblock In Susanne Graf and Michael~I. Schwartzbach, editors, {\em Tools and
  Algorithms for Construction and Analysis of Systems, 6th International
  Conference, {TACAS} 2000, Held as Part of the European Joint Conferences on
  the Theory and Practice of Software, {ETAPS} 2000, Berlin, Germany, March 25
  - April 2, 2000, Proceedings}, volume 1785 of {\em Lecture Notes in Computer
  Science}, pages 395--410. Springer, 2000.

\bibitem[dAM04]{AlfaroM04}
Luca de~Alfaro and Rupak Majumdar.
\newblock Quantitative solution of omega-regular games.
\newblock {\em J. Comput. Syst. Sci.}, 68(2):374--397, 2004.

\bibitem[DR10]{DoyenR10}
Laurent Doyen and Jean{-}Fran{\c{c}}ois Raskin.
\newblock Antichain algorithms for finite automata.
\newblock In Javier Esparza and Rupak Majumdar, editors, {\em Tools and
  Algorithms for the Construction and Analysis of Systems, 16th International
  Conference, {TACAS} 2010, Held as Part of the Joint European Conferences on
  Theory and Practice of Software, {ETAPS} 2010, Paphos, Cyprus, March 20-28,
  2010. Proceedings}, volume 6015 of {\em Lecture Notes in Computer Science},
  pages 2--22. Springer, 2010.

\bibitem[EJ91]{EmersonJ91}
E.~Allen Emerson and Charanjit~S. Jutla.
\newblock Tree automata, mu-calculus and determinacy (extended abstract).
\newblock In {\em 32nd Annual Symposium on Foundations of Computer Science, San
  Juan, Puerto Rico, 1-4 October 1991}, pages 368--377. {IEEE} Computer
  Society, 1991.

\bibitem[EJS01]{EmersonJS01}
E.~Allen Emerson, Charanjit~S. Jutla, and A.~Prasad Sistla.
\newblock On model checking for the {\(\mathrm{\mu}\)}-calculus and its
  fragments.
\newblock {\em Theor. Comput. Sci.}, 258(1-2):491--522, 2001.

\bibitem[EL86]{EmersonL86}
E.~Allen Emerson and Chin{-}Laung Lei.
\newblock Efficient model checking in fragments of the propositional
  mu-calculus (extended abstract).
\newblock In {\em Proceedings of the Symposium on Logic in Computer Science
  {(LICS} '86), Cambridge, Massachusetts, USA, June 16-18, 1986}, pages
  267--278. {IEEE} Computer Society, 1986.

\bibitem[EM79]{EM79}
A.~Ehrenfeucht and J.~Mycielski.
\newblock Positional strategies for mean payoff games.
\newblock {\em International Journal of Game Theory}, 8(2):109--113, 1979.

\bibitem[FGK10]{Fischer2010}
Diana Fischer, Erich Gr{\"a}del, and {\L}ukasz Kaiser.
\newblock Model checking games for the quantitative $\mu$-calculus.
\newblock {\em Theory of Computing Systems}, 47(3):696--719, 2010.

\bibitem[FJLS11]{FahrenbergJLS11}
Uli Fahrenberg, Line Juhl, Kim~G. Larsen, and Jir{\'{\i}} Srba.
\newblock Energy games in multiweighted automata.
\newblock In Antonio Cerone and Pekka Pihlajasaari, editors, {\em Theoretical
  Aspects of Computing - {ICTAC} 2011 - 8th International Colloquium,
  Johannesburg, South Africa, August 31 - September 2, 2011. Proceedings},
  volume 6916 of {\em Lecture Notes in Computer Science}, pages 95--115.
  Springer, 2011.

\bibitem[FJR11]{FiliotJR11}
Emmanuel Filiot, Naiyong Jin, and Jean{-}Fran{\c{c}}ois Raskin.
\newblock Antichains and compositional algorithms for {LTL} synthesis.
\newblock {\em Formal Methods in System Design}, 39(3):261--296, 2011.

\bibitem[FMY97]{FujitaMY97}
Masahiro Fujita, Patrick~C. McGeer, and Jerry~Chih{-}Yuan Yang.
\newblock Multi-terminal binary decision diagrams: An efficient data structure
  for matrix representation.
\newblock {\em Formal Methods in System Design}, 10(2/3):149--169, 1997.

\bibitem[GLLS05]{GrumbergLLS05}
Orna Grumberg, Martin Lange, Martin Leucker, and Sharon Shoham.
\newblock \emph{Don't Know} in the {\(\mathrm{\mu}\)}-calculus.
\newblock In Radhia Cousot, editor, {\em Verification, Model Checking, and
  Abstract Interpretation, 6th International Conference, {VMCAI} 2005, Paris,
  France, January 17-19, 2005, Proceedings}, volume 3385 of {\em Lecture Notes
  in Computer Science}, pages 233--249. Springer, 2005.

\bibitem[GLLS07]{GrumbergLLS07}
Orna Grumberg, Martin Lange, Martin Leucker, and Sharon Shoham.
\newblock When not losing is better than winning: Abstraction and refinement
  for the full mu-calculus.
\newblock {\em Inf. Comput.}, 205(8):1130--1148, 2007.

\bibitem[GTW02]{2001automata}
Erich Gr{\"{a}}del, Wolfgang Thomas, and Thomas Wilke, editors.
\newblock {\em Automata, Logics, and Infinite Games: {A} Guide to Current
  Research [outcome of a Dagstuhl seminar, February 2001]}, volume 2500 of {\em
  Lecture Notes in Computer Science}. Springer, 2002.

\bibitem[HKN{\etalchar{+}}03]{HermannsKNPS03}
Holger Hermanns, Marta~Z. Kwiatkowska, Gethin Norman, David Parker, and Markus
  Siegle.
\newblock On the use of mtbdds for performability analysis and verification of
  stochastic systems.
\newblock {\em J. Log. Algebr. Program.}, 56(1-2):23--67, 2003.

\bibitem[JBB{\etalchar{+}}17]{JacobsBBEHKPRRS17}
Swen Jacobs, Roderick Bloem, Romain Brenguier, R{\"{u}}diger Ehlers, Timotheus
  Hell, Robert K{\"{o}}nighofer, Guillermo~A. P{\'{e}}rez,
  Jean{-}Fran{\c{c}}ois Raskin, Leonid Ryzhyk, Ocan Sankur, Martina Seidl,
  Leander Tentrup, and Adam Walker.
\newblock The first reactive synthesis competition {(SYNTCOMP} 2014).
\newblock {\em {STTT}}, 19(3):367--390, 2017.

\bibitem[Jur00]{Jurdzinski00}
Marcin Jurdzinski.
\newblock Small progress measures for solving parity games.
\newblock In Horst Reichel and Sophie Tison, editors, {\em {STACS} 2000, 17th
  Annual Symposium on Theoretical Aspects of Computer Science, Lille, France,
  February 2000, Proceedings}, volume 1770 of {\em Lecture Notes in Computer
  Science}, pages 290--301. Springer, 2000.

\bibitem[KHB13]{KonighoferHB13}
Robert K{\"{o}}nighofer, Georg Hofferek, and Roderick Bloem.
\newblock Debugging formal specifications: a practical approach using
  model-based diagnosis and counterstrategies.
\newblock {\em {STTT}}, 15(5-6):563--583, 2013.

\bibitem[KNP04]{KwiatkowskaNP04}
Marta~Z. Kwiatkowska, Gethin Norman, and David Parker.
\newblock Probabilistic symbolic model checking with {PRISM:} a hybrid
  approach.
\newblock {\em {STTT}}, 6(2):128--142, 2004.

\bibitem[KNP11]{KwiatkowskaNP11}
Marta~Z. Kwiatkowska, Gethin Norman, and David Parker.
\newblock {PRISM} 4.0: Verification of probabilistic real-time systems.
\newblock In Ganesh Gopalakrishnan and Shaz Qadeer, editors, {\em Computer
  Aided Verification - 23rd International Conference, {CAV} 2011, Snowbird, UT,
  USA, July 14-20, 2011. Proceedings}, volume 6806 of {\em Lecture Notes in
  Computer Science}, pages 585--591. Springer, 2011.

\bibitem[Koz82]{Kozen}
Dexter Kozen.
\newblock Results on the propositional {$\mu$}-calculus.
\newblock In {\em Proceedings of the 9th Colloquium on Automata, Languages and
  Programming}, pages 348--359, London, UK, UK, 1982. Springer-Verlag.

\bibitem[Mar75]{martin1975borel}
Donald~A Martin.
\newblock Borel determinacy.
\newblock {\em Annals of Mathematics}, 102(2):363--371, 1975.

\bibitem[Mos85]{mostowski1985}
A.~W. Mostowski.
\newblock Regular expressions for infinite trees and a standard form of
  automata.
\newblock In Andrzej Skowron, editor, {\em Computation Theory}, pages 157--168.
  Springer Berlin Heidelberg, 1985.

\bibitem[MPR16]{MaozPR16}
Shahar Maoz, Or~Pistiner, and Jan~Oliver Ringert.
\newblock Symbolic {BDD} and {ADD} algorithms for energy games.
\newblock In Ruzica Piskac and Rayna Dimitrova, editors, {\em Proceedings Fifth
  Workshop on Synthesis, SYNT@CAV 2016, Toronto, Canada, July 17-18, 2016.},
  volume 229 of {\em {EPTCS}}, pages 35--54, 2016.

\bibitem[Niw86]{Niwinski86}
Damian Niwinski.
\newblock On fixed-point clones (extended abstract).
\newblock In Laurent Kott, editor, {\em Automata, Languages and Programming,
  13th International Colloquium, ICALP86, Rennes, France, July 15-19, 1986,
  Proceedings}, volume 226 of {\em Lecture Notes in Computer Science}, pages
  464--473. Springer, 1986.

\bibitem[Pit06]{piterman2006}
Nir Piterman.
\newblock From nondeterministic buchi and streett automata to deterministic
  parity automata.
\newblock In {\em Logic in Computer Science, 2006 21st Annual IEEE Symposium
  on}, pages 255--264. IEEE, 2006.

\bibitem[PPS06]{PitermanPS06}
Nir Piterman, Amir Pnueli, and Yaniv Sa'ar.
\newblock Synthesis of reactive(1) designs.
\newblock In {\em VMCAI}, volume 3855 of {\em LNCS}, pages 364--380. Springer,
  2006.

\bibitem[RCDH07]{RaskinCDH07}
Jean{-}Fran{\c{c}}ois Raskin, Krishnendu Chatterjee, Laurent Doyen, and
  Thomas~A. Henzinger.
\newblock Algorithms for omega-regular games with imperfect information.
\newblock {\em Logical Methods in Computer Science}, 3(3), 2007.

\bibitem[Saf88]{safra88}
Shmuel Safra.
\newblock On the complexity of omega-automata.
\newblock In {\em Foundations of Computer Science, 1988., 29th Annual Symposium
  on}, pages 319--327. IEEE, 1988.

\bibitem[Sch04]{Schneider2004}
Klaus Schneider.
\newblock {\em Verification of Reactive Systems - Formal Methods and
  Algorithms}.
\newblock Texts in Theoretical Computer Science. An {EATCS} Series. Springer,
  2004.

\bibitem[SMV18]{StasioMV18}
Antonio~Di Stasio, Aniello Murano, and Moshe~Y. Vardi.
\newblock Solving parity games: Explicit vs symbolic.
\newblock In Cezar C{\^{a}}mpeanu, editor, {\em Implementation and Application
  of Automata - 23rd International Conference, {CIAA} 2018, Charlottetown, PE,
  Canada, July 30 - August 2, 2018, Proceedings}, volume 10977 of {\em Lecture
  Notes in Computer Science}, pages 159--172. Springer, 2018.

\bibitem[SWW18]{SanchezWW18}
Lisette Sanchez, Wieger Wesselink, and Tim A.~C. Willemse.
\newblock A comparison of bdd-based parity game solvers.
\newblock In Andrea Orlandini and Martin Zimmermann, editors, {\em Proceedings
  Ninth International Symposium on Games, Automata, Logics, and Formal
  Verification, GandALF 2018, Saarbr{\"{u}}cken, Germany, 26-28th September
  2018.}, volume 277 of {\em {EPTCS}}, pages 103--117, 2018.

\bibitem[Tar55]{Tarski1955}
Alfred Tarski.
\newblock A lattice-theoretical fixpoint theorem and its applications.
\newblock {\em Pacific Journal of Mathematics}, 5(2):285--309, 1955.

\bibitem[Tho95]{Thomas95}
Wolfgang Thomas.
\newblock On the synthesis of strategies in infinite games.
\newblock In {\em {STACS}}, pages 1--13, 1995.

\bibitem[VCD{\etalchar{+}}15]{VelnerC0HRR15}
Yaron Velner, Krishnendu Chatterjee, Laurent Doyen, Thomas~A. Henzinger,
  Alexander~Moshe Rabinovich, and Jean{-}Fran{\c{c}}ois Raskin.
\newblock The complexity of multi-mean-payoff and multi-energy games.
\newblock {\em Inf. Comput.}, 241:177--196, 2015.

\bibitem[Wal96]{Walukiewicz1996}
Igor Walukiewicz.
\newblock {\em Monadic second order logic on tree-like structures}, pages
  399--413.
\newblock Springer Berlin Heidelberg, Berlin, Heidelberg, 1996.

\bibitem[WDHR06]{WulfDHR06}
Martin~De Wulf, Laurent Doyen, Thomas~A. Henzinger, and Jean{-}Fran{\c{c}}ois
  Raskin.
\newblock Antichains: {A} new algorithm for checking universality of finite
  automata.
\newblock In Thomas Ball and Robert~B. Jones, editors, {\em Computer Aided
  Verification, 18th International Conference, {CAV} 2006, Seattle, WA, USA,
  August 17-20, 2006, Proceedings}, volume 4144 of {\em Lecture Notes in
  Computer Science}, pages 17--30. Springer, 2006.

\bibitem[WDMR08]{WulfDMR08}
Martin~De Wulf, Laurent Doyen, Nicolas Maquet, and Jean{-}Fran{\c{c}}ois
  Raskin.
\newblock Antichains: Alternative algorithms for {LTL} satisfiability and
  model-checking.
\newblock In C.~R. Ramakrishnan and Jakob Rehof, editors, {\em Tools and
  Algorithms for the Construction and Analysis of Systems, 14th International
  Conference, {TACAS} 2008, Held as Part of the Joint European Conferences on
  Theory and Practice of Software, {ETAPS} 2008, Budapest, Hungary, March
  29-April 6, 2008. Proceedings}, volume 4963 of {\em Lecture Notes in Computer
  Science}, pages 63--77. Springer, 2008.

\bibitem[Zie98]{Zielonka98}
Wieslaw Zielonka.
\newblock Infinite games on finitely coloured graphs with applications to
  automata on infinite trees.
\newblock {\em Theor. Comput. Sci.}, 200(1-2):135--183, 1998.

\end{thebibliography}

\appendix

\section{\texorpdfstring{Energy $\mu$-Calculus Over Weighted Game Structures}{Energy mu-Calculus Over Weighted Game Structures}}\label{app:energyMuCalc}

\subsection{\texorpdfstring{$\envCapEnergy$ $\mu$-Calculus: Full Definition}{env-Energy mu-Calculus: Full Definition}}\label{app:energyMuCalc:extDefinitions}

In this appendix, we provide the full definition of the semantics of $\envEnergy$ $\mu$-calculus, as a supplementary to Sect.~\ref{sec:EngMuCalcEnvSemantics}.

\begin{defi}[Dual energy controllable predecessor operator]\label{def:dualECpre}
	For all WGSs $\langle G, w^s\rangle$, upper bounds $c\in \mathbb{N}$, energy functions $f\in \mathit{EF(c)}$,
	and states $s \in 2^{\mathcal{V}}$:
	 \begin{align*}
  &{\ECpreEnv(f)(s)} := {\min\limits_{s_{\mathcal{X}}\in{2^\mathcal{X}}}
    \lbrack\max\limits_{s_{\mathcal{Y}}\in{2^\mathcal{Y}}}\overline{\textsf{EC}}_c((s, p(s_{\mathcal{X}},s_{\mathcal{Y}})),
    f(s_{\mathcal{X}},s_{\mathcal{Y}}))\rbrack}\\&
  \text{where ${\overline{\textsf{EC}}_c} : {{{2^{\mathcal{V} \cup \mathcal{V'}}}\times{{\mathit{E(c)}}}}\rightarrow{\mathit{E(c)}}}$
  and for all $s \in 2^{\mathcal V}$, $s' \in 2^{\mathcal V'}$, and $e \in \mathit{E(c)}$,}\\&
  {\overline{\textsf{EC}}_c((s,s'),e)} =
  \begin{cases}
  +\infty,\ &\mbox{if $(s,s')\not\models{\rho^e}$}\\
  0,\ &\mbox{if $e=0$ or $(s,s')\models{\rho^e\wedge{\neg\rho^s}}$}\\
  0,\ &\mbox{if $e=+\infty$ and $w^{s}(s,s')+c < 0$}\\
  +\infty,\ &\mbox{if $e=+\infty$ and $w^{s}(s,s')\geq 0$}\\
  c+1+w^{s}(s,s'),\ &\mbox{if $e=+\infty$} \\
  0,\ &\mbox{if $e + w^{s}(s,s') \leq 0$}\\
  +\infty, &\mbox{if $e + w^{s}(s,s') > c$}\\
  e + w^{s}(s,s'),\ &\mbox{otherwise}
  \end{cases}&
   \end{align*}
\end{defi}

\begin{defi}[$\envEnergy$ $\mu$-calculus: semantics]\label{def:envEngMuCalcSemantics}
	The semantics $\llbracket \psi\rrbracket^{\GEc}_{\eValuation}$ of
	$\psi\in{\envEFormulas}$ w.r.t. a finite WGS
	$\WGS$, a finite upper bound $c \in \mathbb{N}$,
	and a valuation $\eValuation: \mathit{Var} \rightarrow \mathit{EF(c)}$ over $\mathit{EF(c)}$, is inductively defined for all states
	$s\in{2^{\mathcal{V}}}$, as follows:
	
	\begin{itemize}
		\setlength\itemsep{0.5em}
		\item For $v\in\mathcal{V}$, $\llbracket{v}\rrbracket^{\GEc}_{\eValuation}(s) =
		\begin{cases}
		0, & \text{ if } s\vDash{v} \\
		+\infty, & \text{ if } s\nvDash{v}
		\end{cases};
		\llbracket{\neg{v}}\rrbracket^{\GEc}_{\eValuation}(s) =
		\begin{cases}
		+\infty, & \text{ if } s\vDash{v} \\
		0, & \text{ if } s\nvDash{v}
		\end{cases}
		$
		\item For $X\in{\mathit{Var}}$, $\llbracket{X}\rrbracket^{\GEc}_{\eValuation}(s) = \eValuation(X)(s)$.
		\item $\llbracket{\phi_1\vee{\phi_2}}\rrbracket^{\GEc}_{\eValuation}(s) =
		\min(\llbracket{\phi_1}\rrbracket^{\GEc}_{\eValuation},
		\llbracket{\phi_2}\rrbracket^{\GEc}_{\eValuation})(s)$.
		\item $\llbracket{\phi_1\wedge{\phi_2}}\rrbracket^{\GEc}_{\eValuation}(s) =
		\max(\llbracket{\phi_1}\rrbracket^{\GEc}_{\eValuation},
		\llbracket{\phi_2}\rrbracket^{\GEc}_{\eValuation})(s)$.
		\item $\llbracket{\circlebox_{\op{E}}\phi}\rrbracket^{\GEc}_{\eValuation}(s) =
		\ECpreEnv(\llbracket\phi\rrbracket^{\GEc}_{\eValuation})(s)$.
		\item $\llbracket\twolinescurly{\mu}{\nu} X\phi\rrbracket^{\GEc}_{\eValuation}(s) =
		\twolinescurly{\mathit{lfp}}{\mathit{gfp}} (\lambda{f}.\llbracket\phi\rrbracket^{\GEc}_{\eValuation[X\mapsto{f}]})(s) =
		\twolinescurly{\min\limits_{i}}{\max\limits_{i}}\lbrack{h_i}\rbrack(s)$,\\
		where $\twolinescurly{h_0=f_{+\infty}}{h_0=f_{0}}$ and
		$h_{i+1}=
		\llbracket{\phi}\rrbracket^{\GEc}_{\eValuation[X\mapsto{h_i}]}$.
	\end{itemize}
\end{defi}

\noindent The next lemma proves the correctness of the operator defined in Def.~\ref{def:dualECpre}.

\begin{lem}[Correctness of Def.~\ref{def:dualECpre}]\label{lemma:dualECpre}
	The operator $\ECpreEnv:\mathit{EF(c)}\rightarrow{\mathit{EF(c)}}$ from Def.~\ref{def:dualECpre}
	is the dual of $\ECpre$ from Def.~\ref{def:ECpre}.
	That is, for all $f \in \mathit{EF(c)}$, ${\sim \ECpre(f)}={\ECpreEnv({\sim f})}$ where $\sim$ is the pointwise negation operation
	defined in Eq.~\ref{eq:negation}.
\end{lem}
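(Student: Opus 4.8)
The plan is to reduce the claim, via the De Morgan laws, to a purely local (transition-by-transition) identity between $\textsf{EC}_c$ and $\overline{\textsf{EC}}_c$, and then to verify that identity by a finite case analysis.

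First I would invoke Lem.~\ref{lem:deMorganAlgebra}: $\sim$ is an involution on $\mathit{E(c)}$ satisfying the De Morgan laws w.r.t.\ the lattice $\mathit{EFL(c)}$. Since the join and meet of $\mathit{EFL(c)}$ are $\min$ and $\max$, an easy induction over finite index sets gives $\sim \min_i a_i = \max_i \sim a_i$ and $\sim \max_i a_i = \min_i \sim a_i$ for all finite families $(a_i)$ in $\mathit{E(c)}$. Applying this twice to Def.~\ref{def:ECpre} — pushing $\sim$ first through the outer $\max$ over $s_{\mathcal X}$ and then through the inner $\min$ over $s_{\mathcal Y}$ — yields, for every state $s$,
\[
\sim \ECpre(f)(s) = \min_{s_{\mathcal X}\in 2^{\mathcal X}}\max_{s_{\mathcal Y}\in 2^{\mathcal Y}} \sim \textsf{EC}_c\bigl((s,p(s_{\mathcal X},s_{\mathcal Y})),\, f(s_{\mathcal X},s_{\mathcal Y})\bigr).
\]
On the other hand, by Def.~\ref{def:dualECpre} together with $(\sim f)(t) = \sim(f(t))$ pointwise,
\[
\ECpreEnv(\sim f)(s) = \min_{s_{\mathcal X}\in 2^{\mathcal X}}\max_{s_{\mathcal Y}\in 2^{\mathcal Y}} \overline{\textsf{EC}}_c\bigl((s,p(s_{\mathcal X},s_{\mathcal Y})),\, \sim(f(s_{\mathcal X},s_{\mathcal Y}))\bigr).
\]
Hence it suffices to establish the pointwise identity $\sim \textsf{EC}_c((s,s'),e) = \overline{\textsf{EC}}_c((s,s'),\sim e)$ for every transition $(s,s')\in 2^{\mathcal V\cup\mathcal V'}$ and every $e\in\mathit{E(c)}$.

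Then I would prove this identity by cases. The primary split is on the behaviour of $(s,s')$ w.r.t.\ $\rho^e$ and $\rho^s$: if $(s,s')\not\models\rho^e$ both sides are immediate ($\sim 0 = +\infty$), and if $(s,s')\models\rho^e\wedge\neg\rho^s$ likewise ($\sim(+\infty)=0$). In the remaining case $(s,s')\models\rho^e\wedge\rho^s$ I split on the value of $e$ — namely $e=+\infty$, $e=0$, or $0<e\le c$ — and, within each, on the sign and magnitude of $w:=w^s(s,s')$ relative to $c$ (whether $e-w>c$, whether $e-w\le 0$, whether $w\ge 0$, whether $w<-c$). In each leaf of the tree I compute $\textsf{EC}_c((s,s'),e)$ from Def.~\ref{def:ECpre}, apply $\sim$ using $\sim 0 = +\infty$, $\sim(+\infty)=0$, $\sim x = c+1-x$ otherwise, and check that the result coincides with the clause of $\overline{\textsf{EC}}_c$ selected by the hypothesis $\sim e$. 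The key matchings, with $e':=\sim e$, are: the ``$+\infty$ when $e-w>c$'' clause of $\textsf{EC}_c$ maps under $\sim$ to the ``$0$ when $e'+w\le 0$'' clause of $\overline{\textsf{EC}}_c$ (using $e' = c+1-e$ and integrality); the ``$\max[0,e-w]$'' clause with $e-w\le 0$ maps to the ``$+\infty$ when $e'+w>c$'' clause; and the ``$\max[0,e-w]$'' clause with $0<e-w\le c$ maps to the ``$e'+w$'' clause, since $\sim(e-w) = c+1-(e-w) = e'+w$. The auxiliary clauses for $e=+\infty$ in $\overline{\textsf{EC}}_c$ (the split on $w+c<0$, on $w\ge 0$, and the value $c+1+w$) are matched against the $e=0$ cases of $\textsf{EC}_c$ via $\sim 0 = +\infty$.

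I expect the only obstacle to be bookkeeping: $\overline{\textsf{EC}}_c$ has eight clauses against the four of $\textsf{EC}_c$, and the boundary conditions (strict versus non-strict inequalities, and the off-by-one shift $x\mapsto c+1-x$) must be tracked precisely so that the clause selected on the right is indeed the $\sim$-image of the value produced on the left. No conceptual difficulty is expected beyond this alignment; in fact $\overline{\textsf{EC}}_c$ in Def.~\ref{def:dualECpre} was designed precisely so that it holds.
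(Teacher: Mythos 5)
Your proposal is correct and follows essentially the same route as the paper's proof: both reduce the claim, via the De Morgan laws of Lem.~\ref{lem:deMorganAlgebra}, to the transition-local identity between $\textsf{EC}_c$ and $\overline{\textsf{EC}}_c$ (the paper states it as $\textsf{EC}_c((s,s'),e)=\sim\overline{\textsf{EC}}_c((s,s'),\sim e)$, the involutive twin of yours) and then verify it by the same case split on $\rho^e$, $\rho^s$, $e\in\{+\infty,0\}$ vs.\ $0<e\le c$, and the magnitude of $w$ relative to $c$. The only difference is the order of the two steps, which is immaterial.
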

\begin{proof}\hfill

	\begin{enumerate}
		\item\label{proof:dualECprePart1}
		First, we show that for all WGSs $G^{w} = \langle G , w^{s} \rangle$, upper bounds $c\in \mathbb{N}$, $s \in 2^{\mathcal{V}}$, $s_{\mathcal{X}}\in{2^\mathcal{X}}$,
		$s_{\mathcal{Y}}\in{2^\mathcal{Y}}$, and $f \in \mathit{EF(c)}$:
		\[{\textsf{EC}_c((s,(s_{\mathcal{X'}},s_{\mathcal{Y'}})),
			f(s_{\mathcal{X}},s_{\mathcal{Y}}))}=
		{\sim\overline{\textsf{EC}}_c((s,(s_{\mathcal{X'}},s_{\mathcal{Y'}})),
			\sim f(s_{\mathcal{X}},s_{\mathcal{Y}}))},\]
		where $s_{\mathcal{X'}}$ and $s_{\mathcal{Y'}}$ denote
		$p(s_{\mathcal{X}})$ and $p(s_{\mathcal{Y}})$, respectively. Also, let
		$e:=f(s_{\mathcal{X}},s_{\mathcal{Y}})
		\text{ and } w := w^{s}(s,(s_{\mathcal{X'}},s_{\mathcal{Y'}}))$.
		\begin{itemize}
			\item If $(s,(s_{\mathcal{X'}},s_{\mathcal{Y'}}))\not\models{\rho^e}$, then according to Def.~\ref{def:ECpre} and Def.~\ref{def:dualECpre}, we have that
			${\textsf{EC}_c((s,(s_{\mathcal{X'}},s_{\mathcal{Y'}})), e)} = {0}$ and
			${\sim{\overline{\textsf{EC}}_c((s,(s_{\mathcal{X'}},s_{\mathcal{Y'}})),\sim{e})}} = {\sim+\infty} = {0}$.
			\item Otherwise, if $e=+\infty$ (iff $\sim e=0$) or $(s,(s_{\mathcal{X'}},s_{\mathcal{Y'}}))\models{\rho^e\wedge{\neg\rho^s}}$,
			we have that ${\textsf{EC}_c((s,(s_{\mathcal{X'}},s_{\mathcal{Y'}})), e)} = {+\infty} = {\sim\overline{\textsf{EC}}_c((s,(s_{\mathcal{X'}},s_{\mathcal{Y'}})), \sim{e})} = {\sim{0}}$.
			\item If $(s,(s_{\mathcal{X'}},s_{\mathcal{Y'}}))\models{\rho^e\wedge{\rho^s}}$ and $e=0$ (iff
			$\sim e=+\infty$), we have one of the following cases: (1) if $0-w > c$ (iff $w + c < 0$), then ${\textsf{EC}_c((s,(s_{\mathcal{X'}},s_{\mathcal{Y'}})), 0)} = {+\infty}$ and
			${\sim\overline{\textsf{EC}}_c((s,(s_{\mathcal{X'}},s_{\mathcal{Y'}})), +\infty)} = {\sim{0}} = {+\infty}$;
			(2) if $0-w\leq0$ (iff $w\geq{0}$),
			then ${\textsf{EC}_c((s,(s_{\mathcal{X'}},s_{\mathcal{Y'}})), 0)} = {0} =
			{\sim\overline{\textsf{EC}}_c((s,(s_{\mathcal{X'}},s_{\mathcal{Y'}})),+\infty)} = {\sim{+\infty}}$; (3) if $0<0-w\leq{c}$, we have that
			$1\leq c+1+w\leq c$, and thus, by the definition of $\sim$,
			${\sim \overline{\textsf{EC}}_c((s,(s_{\mathcal{X'}},s_{\mathcal{Y'}})), +\infty)} = {\sim{(c+1+w)}} = {(c+1)-(c+1+w)} = {-w} = {\textsf{EC}_c((s,(s_{\mathcal{X'}},s_{\mathcal{Y'}})), 0)}$.
			\item In the remaining case we have that $(s,(s_{\mathcal{X'}},s_{\mathcal{Y'}}))\models{\rho^e\wedge{\rho^s}}$,
			$e\not=0$, and $e\not = +\infty$, thus $\sim{e}=c+1-e$; (1) if ${e-w} > {c}$, then
			${\sim{e}+w} \leq {0}$, thus
			${\sim\overline{\textsf{EC}}_c((s,(s_{\mathcal{X'}},s_{\mathcal{Y'}})),\sim{e})} = {\sim{0}} = {+\infty} =
			{\textsf{EC}_c((s,(s_{\mathcal{X'}},s_{\mathcal{Y'}})), e)}$; (2) if
			${e-w} \leq {0}$, then ${\sim{e}+w} > {c}$, thus
			${\sim\overline{\textsf{EC}}_c((s,(s_{\mathcal{X'}},s_{\mathcal{Y'}})),
				\sim{e})} = {\sim{+\infty}} = {0} = {\textsf{EC}_c((s,(s_{\mathcal{X'}},s_{\mathcal{Y'}})), e)}$;
			(3) if ${0} < {e-w} \leq {c}$, then $0 < {\sim{e} + w \leq{c}}$, and thus
			${\sim\overline{\textsf{EC}}_c((s,(s_{\mathcal{X'}},s_{\mathcal{Y'}})), \sim{e})} = {\sim{(\sim e + w)}} = \linebreak(c+1) - (\sim e + w) = {e-w} =
			\textsf{EC}_c((s,(s_{\mathcal{X'}},s_{\mathcal{Y'}})), e)$.
		\end{itemize}
		\item From~\ref{proof:dualECprePart1} together with Lem.~\ref{lem:deMorganAlgebra}, it follows that
		for all $f \in \mathit{EF(c)}$ and $s \in 2^{\mathcal{V}}$:\\ ${\sim{\ECpre(f)(s)}} =
		{\sim{\max\limits_{s_{\mathcal{X}}\in{2^\mathcal{X}}}
				\lbrack\min\limits_{s_{\mathcal{Y}}\in{2^\mathcal{Y}}}\textsf{EC}_c((s,p(s_{\mathcal{X}}, s_{\mathcal{Y}})),
				f(s_{\mathcal{X}},s_{\mathcal{Y}}))\rbrack}}
		\\= {\sim{\max\limits_{s_{\mathcal{X}}\in{2^\mathcal{X}}}
				\lbrack\min\limits_{s_{\mathcal{Y}}\in{2^\mathcal{Y}}}\sim\overline{\textsf{EC}}_c((s,p(s_{\mathcal{X}}, s_{\mathcal{Y}})),
				\sim f(s_{\mathcal{X}},s_{\mathcal{Y}}))\rbrack}}
		\\= {\sim{\max\limits_{s_{\mathcal{X}}\in{2^\mathcal{X}}}
				\lbrack\sim\max\limits_{s_{\mathcal{Y}}\in{2^\mathcal{Y}}}\overline{\textsf{EC}}_c((s,p(s_{\mathcal{X}}, s_{\mathcal{Y}})),
				\sim {f(s_{\mathcal{X}},s_{\mathcal{Y}}))}\rbrack}}
		\\= {\sim\sim\min\limits_{s_{\mathcal{X}}\in{2^\mathcal{X}}}
			\lbrack\max\limits_{s_{\mathcal{Y}}\in{2^\mathcal{Y}}}\overline{\textsf{EC}}_c((s,p(s_{\mathcal{X}}, s_{\mathcal{Y}})),
			\sim f(s_{\mathcal{X}},s_{\mathcal{Y}}))\rbrack}
		\\= {\min\limits_{s_{\mathcal{X}}\in{2^\mathcal{X}}}
			\lbrack\max\limits_{s_{\mathcal{Y}}\in{2^\mathcal{Y}}}\overline{\textsf{EC}}_c((s,p(s_{\mathcal{X}}, s_{\mathcal{Y}})),
			\sim f(s_{\mathcal{X}},s_{\mathcal{Y}}))\rbrack}
		={\ECpreEnv(\sim f)(s)}$.\qedhere
	\end{enumerate}
\end{proof}

\newpage
\subsection{Proofs}\label{app:energyMuCalc:proofs}

\subsubsection{Proofs of Sect.~\ref{sec:EngMuCalcSyntaxSemantics}}

\begin{proof}[Proof of Prop.~\ref{prop:ECpreMonotone}]\hfill

	Let $c \in \mathbb{N}$ be an upper bound, and let $f,g \in \mathit{EF(c)}$ such that ${f}\preceq{g}$.
	
	First, given $s \in 2^{\mathcal{V}}$, $s_{\mathcal{X}}\in{2^\mathcal{X}}$, and
	$s_{\mathcal{Y}}\in{2^\mathcal{Y}}$, we show that
	${\textsf{E}(f)} \preceq {\textsf{E}(g)}$ where for all $h \in \mathit{EF(c)}$,
	${\textsf{E}(h)} := {\textsf{EC}_c((s,p(s_{\mathcal{X}},s_{\mathcal{Y}})),
		h(s_{\mathcal{X}},s_{\mathcal{Y}}))}$.
	If $(s,p(s_{\mathcal{X}},s_{\mathcal{Y}})) \not \models \rho^{e}$, then $\textsf{E}(f) =
	\textsf{E}(g) = 0$. Otherwise, if
	$f(s_{\mathcal{X}},s_{\mathcal{Y}}) = +\infty$ or $(s,
	p(s_{\mathcal{X}},s_{\mathcal{Y}})) \models \rho^{e} \wedge \neg \rho^{s}$, it follows
	that $\textsf{E}(f) = +\infty \preceq \textsf{E}(g)$.
	Otherwise, we have that $(s,p(s_{\mathcal{X}},s_{\mathcal{Y}}))
	\models \rho^{e} \wedge \rho^{s}$ and $f(s_{\mathcal{X}},s_{\mathcal{Y}}) \not = +\infty$, thus
	$g(s_{\mathcal{X}},s_{\mathcal{Y}}) \not = +\infty$
	and $g(s_{\mathcal{X}},s_{\mathcal{Y}}) \leq
	f(s_{\mathcal{X}},s_{\mathcal{Y}})$. In this case, (1) if $f(s_{\mathcal{X}},s_{\mathcal{Y}}) -
	w^{s}(s,p(s_{\mathcal{X}},s_{\mathcal{Y}})) > c$, then $\textsf{E}(f) = +\infty
	\preceq \textsf{E}(g)$; and (2) if $g(s_{\mathcal{X}},s_{\mathcal{Y}}) -
	w^{s}(s,p(s_{\mathcal{X}},s_{\mathcal{Y}})) \leq f(s_{\mathcal{X}},s_{\mathcal{Y}}) -
	w^{s}(s,p(s_{\mathcal{X}},s_{\mathcal{Y}})) \leq c$, then $\textsf{E}(f) = \max[0, f(s_{\mathcal{X}},s_{\mathcal{Y}}) -
	w^{s}(s,p(s_{\mathcal{X}},s_{\mathcal{Y}}))] \preceq \max[0, g(s_{\mathcal{X}},s_{\mathcal{Y}}) - w^{s}(s,p(s_{\mathcal{X}},s_{\mathcal{Y}}))] = \textsf{E}(g)$.
	
	Second, we show that $\ECpre(f) \preceq \ECpre(g)$.
	Let $s\in 2^{\mathcal{V}}$ be a state, and let us show that $\mathit{ECpre_{\sys}}(f)(s) \preceq \mathit{ECpre_{\sys}}(g)(s)$.
	Note that by the above, we have that for every
	$s_{\mathcal{X}}\in{2^\mathcal{X}}$ and
	$s_{\mathcal{Y}}\in{2^\mathcal{Y}}$:
	${\textsf{EC}_c((s,p(s_{\mathcal{X}},s_{\mathcal{Y}})),
		f(s_{\mathcal{X}},s_{\mathcal{Y}}))} \preceq {\textsf{EC}_c((s,
		p(s_{\mathcal{X}},s_{\mathcal{Y}})),
		g(s_{\mathcal{X}},s_{\mathcal{Y}}))}$.
	Since $\min$ and $\max$ are monotone w.r.t. $\preceq$ in all of their arguments, it follows
	that ${\mathit{ECpre_{\sys}}(f)(s)} \preceq {\mathit{ECpre_{\sys}}(g)(s)}$.
\end{proof}

\begin{proof}[Proof of Lem.~\ref{lem:deMorganAlgebra}]\hfill

	It holds that ${\mathit{EFL(c)}} = {\langle \mathit{EF(c)}, \min, \max, f_{+\infty}, f_{0} \rangle}$ is a bounded distributive lattice. Thus, it remains to show that the unary operator $\sim$ satisfies the following axioms:
	(1) for all $x \in {\mathit{E(c)}}$: ${\sim\sim{x}} = {x}$ (\emph{involution}), and (2) for all $x,y\in{\mathit{E(c)}}$: ${\sim(\max(x,y))} =
	{\min(\sim{x},\sim{y})}$ (\emph{De Morgan's laws}).\footnote{Note
		that the axiom: $\forall x,y\in{\mathit{E(c)}}: {\sim(\min(x,y))} =
		{\max(\sim{x},\sim{y})}$ also holds, since (1) and (2) imply that ${\sim(\min(x,y))} = {\sim{\min(\sim\sim{x},
				\sim\sim{y}))}} =
		{\sim{(\sim(\max(\sim{x}, \sim{y})))}} = {\max(\sim{x},{\sim{y}})}$.}
	\begin{enumerate}
		\item[(1)] For $x=+\infty$, we have that
		${\sim\sim{x}} = {\sim{0}} = {+\infty} = {x}$, and similarly for ${x}={0}$.
		For $x \in{\mathit{E(c)}}{\setminus{\{+\infty,0\}}}$, it holds that ${\sim{\sim{x}}} = {c+1-\sim{x}} = {c+1-(c+1-x)} = {x}$.
		\item[(2)] Let $x,y\in{\mathit{E(c)}}$:
		\begin{itemize}
			\item if $x=+\infty$ or $y=+\infty$, then ${\sim(\max(x,y))} =
			{\sim{+\infty}} = {0} = {\min(\sim{x},\sim{y})}$;
			\item if $x=0$ and $y \not=+\infty$, then  
			${\sim(\max(x,y))} = {\sim{y}} = {\min(+\infty,\sim{y})} =
			{\min(\sim{x},\sim{y})}$ (and similarly if $x\not=+\infty$ and $y=0$);
			\item if $x,y\in{\mathit{E(c)}}{\setminus{\{+\infty,0\}}}$, then it holds that
			$\max(x,y)\in{\mathit{E(c)}}{\setminus{\{+\infty,0\}}}$, and thus ${\sim(\max(x,y))} = {c+1-\max(x,y)} = {c+1 + \min
				(-x, -y)} = \min(c+1-x, c+1-y) = \min(\sim{x}, \sim{y})$.\qedhere
		\end{itemize}
	\end{enumerate}
\end{proof}

\subsubsection{Proofs of Sect.~\ref{sec:EngMuCalcCorrect}}\label{app:energyMuCalc:proofs:correctness}


\begin{proof}[Proof of Thm.~\ref{thm:reductionCorrectness}]\hfill

	First, we prove for the system player. Assume that $g^*$ is a winning strategy for the system from $(s,c_0)$ in $G^*$. We define a strategy $g$ for the system in $G^w$, as follows.
	For $k \geq 0$, take a prefix $s_{0},\dots,s_k,s_\mathcal{X}\in (2^\mathcal V)^{+}2^\mathcal X$
	such that $s_{0}=s$ and $(s_{k}, p(s_{\mathcal{X}}))\models \rho^{e}$.
	If there are $c_1,\dots,c_k \in ([0,c])^{k}$ such that $(s,c_0),(s_1,c_1),\dots,(s_k,c_k) \in (2^{\mathcal{V}^{*}})^{+}$ is consistent with $g^{*}$, choose such a prefix,
	let $(s_\mathcal Y,c_{k+1})=g^*( (s,c_0),(s_1,c_1),\dots,(s_k,c_k),s_\mathcal X )$, and define $g(s,s_1,\dots,s_k,s_\mathcal{X})=s_\mathcal Y$.
	By the construction of $G^*$, it is not difficult to prove by induction on $k$ that

	\begin{itemize}
		\item $(s_{k},p(s_\mathcal X,s_\mathcal Y))\models \rho^s$.
		\item $s_{0},\ldots,s_{k},(s_\mathcal X,s_\mathcal Y)$ is the projection to $\mathcal V$ of the unique prefix of a play in $G^{*}$,
		$(s_{0},c_{0}),\ldots,\linebreak(s_{k}, c_{k}),(s_\mathcal X,s_\mathcal Y, c_{k+1})$, consistent with $g^{*}$.
		\item ${\sf EL}_c(G^w,c_0,(s_{0},\ldots,s_{k},(s_\mathcal X,s_\mathcal Y)))\geq c_{k+1}$.
	\end{itemize}
	Hence, since $g^*$ wins for the system from $(s,c_0)$, we get that $g$ wins for the system from $s$ w.r.t. $c$ for the initial credit $c_0$.
	
	%
	For the other direction, assume that $g$ is a winning strategy for the system from $s$ w.r.t. $c$ for an initial credit $c_0$, in $G^w$. We define a strategy $g^*$ for the system in $G^*$ from $(s,c_0)$, as follows:
	\begin{itemize}
		\item $g^*=\bigcup_{i=1}^{+\infty}g^*_i$ where $g_i^*:(2^{\mathcal{V}^*})^i 2^\mathcal X\rightarrow 2^{\mathcal{Y}^{*}}$.
		\item For $\sigma= s, s_\mathcal X\in 2^{\mathcal V} 2^\mathcal X$ such that $(s,p(s_\mathcal X))\models \rho^e$, let $g(\sigma)=s_\mathcal Y$. Then, we set $g^*_1((s,c_0),s_\mathcal X) =(s_\mathcal Y,\min\{c,c_0+w^s(s,p(s_\mathcal X,s_\mathcal Y))\})$.
		
		\item Assume that $g_i^*$ has been defined. Take $\sigma=(s,c_0),(s_1,c_1),\dots,(s_{i},c_{i}), s_\mathcal X\in (2^{\mathcal V^*})^{i+1} 2^\mathcal X$ such that $(s,c_0),(s_1,c_1),\dots,(s_{i},c_{i})$ is consistent with $g_i^*$ and $(s_{i},p(s_\mathcal X))\models \rho^e$. Let $g(s,s_1,\dots,s_{i},s_\mathcal X)=s_\mathcal Y$. Then, we set $g^*_{i+1}(\sigma) =(s_\mathcal Y,\min\{c,c_{i}+w^s(s_{i},p(s_\mathcal X,s_\mathcal Y))\})$.
	\end{itemize}
	By applying an induction on $i$, we get that if $(s,c_0),(s_1,c_1),\dots,(s_{i},c_{i})$ is consistent with $g_i^*$ and $s_\mathcal X \in 2^{\mathcal{X}}$, the following holds:
	\begin{enumerate}
		\item $(s,s_1,\dots,s_{i})$ is consistent with $g$.
		
		\item If ${(s_{i},p(s_\mathcal X))\models \rho^e}$, then for ${s_\mathcal Y=g(s,s_1,\dots s_{i},s_\mathcal X)}$,\\${g^*_{i+1}}({(s,c_0)},{(s_1,c_1)},\dots, {(s_{i},  c_{i})},s_\mathcal X) = (s_\mathcal Y,{\sf EL}_c(G^w,c_0,(s,s_1,\dots,s_{i},(s_\mathcal X,s_\mathcal Y)))$.
	\end{enumerate}
	Since $g$ is winning for the system, it follows that $g_{i+1}^*$ is well-defined (and hence $g^*$ is well-defined), and $g^*$ is winning for the system from $(s,c_0)$.

	
	We turn now to prove the claim for the environment. First, assume that $g$ is a winning strategy for the environment from $s$ in $G^w$ w.r.t. $c$ for an initial credit $c_0$. Then, the system cannot win from $(s,c_0)$ in $G^*$. Since $\omega$-regular games are determined, the environment has a winning strategy from $(s,c_0)$ in $G^*$.
	
	For the other direction, assume that $g^*$ is a winning strategy for the environment in $G^*$ from $(s,c_0)$, and we construct a winning strategy $g$ for the environment in $G^w$ from $s$ w.r.t. $c$ for an initial credit $c_0$. We define $g=\bigcup_{i=1}^{+\infty} g_i$ where $g_i:(2^\mathcal V)^i\into 2^\mathcal X$, such that for $i>1$, the following holds:

	
	
	\begin{quote}
		$g_i(s,s_1,\cdots,s_{i-1})$ is defined iff  $(s,c_0),(s_1,c_1),  \dots,(s_{i-1},c_{i-1})$ is consistent with $g^*$ where, for $l\in\{1,\dots,i-1\}$, $c_l={\sf EL}_c(G^w,c_0,(s,s_1, \dots,s_l))$. Furthermore, in this case, $g_i(s,s_1,\dots,s_{i-1})=g^*( (s,c_0),(s_1,c_1),\dots,(s_{i-1},c_{i-1}))$.
	\end{quote}

	\noindent The construction of $g$ is by induction on $i$ as follows:
	\begin{itemize}
		
		\item $g_1(s)=g^*(s,c_0)$.
		
		\item Assume that $g_i$ has been defined, and we aim to define $g_{i+1}$. Let $g_i(s,s_1,\dots,s_{i-1})=s_\mathcal X$. Take $s_\mathcal Y\in 2^\mathcal Y$ such that $(s_{i-1},p(s_\mathcal X,s_\mathcal Y))\models \rho^s$. 

		\begin{itemize}
			\item First, if ${\sf EL}_c(G^w,c_0,(s,s_1,\dots,s_{i-1},(s_\mathcal X,s_\mathcal Y)))<0$, the environment wins in this particular choice of the system and there is no need to define $g_{i+1}( s,s_1,\dots,s_{i-1},(s_\mathcal X,s_\mathcal Y) )$.
			
			\item Otherwise, for $l\in\{1,\dots,{i-1}\}$, let $c_l={\sf EL}_c(G^w,c_0,(s,s_1,\dots,s_{l}))$. Therefore, by the induction hypothesis, we have that $(s,c_0),(s_1,c_1),\dots,(s_{i-1},c_{i-1})$ is consistent with $g^*$ and $g^*( (s,c_0),(s_1,c_1),\dots,(s_{i-1},c_{i-1}) )=s_\mathcal X$. Let $s_i=(s_\mathcal X,s_\mathcal Y)$ and $c_i={\sf EL}_c(G^w, c_0, (s,s_1,\dots,s_i) )=\min\{c,c_{i-1}+w^s(s_{i-1},p(s_i))\}\geq 0$. By the construction of $G^*$ (Def.~\ref{def:naiveReduction}), $( (s_{i-1},c_{i-1}), p(s_i,c_i) )\models \rho^{s*}$; we define $g_{i+1}(s,s_1,\dots,s_i)=s_\mathcal X'$ where $s_\mathcal X'=g^*((s,c_0),(s_1,c_1),\dots,(s_i,c_i))$.
		\end{itemize}

	\end{itemize}
	
	\noindent Now, we prove that $g$ indeed wins for the environment. Consider a play $\sigma$ from $s$, consistent with $g$. If $\sigma$ ends in a deadlock for the system, or the energy level decreases in $\sigma$ to a negative value, the environment wins. If $\sigma$ is infinite and the energy level remains non-negative along $\sigma$, write $\sigma=s,s_1,s_2,\dots$ and observe that $\sigma^*=(s,c_0),(s_1,c_1),(s_2,c_2),\dots$ is consistent with $g^*$ where $c_l={\sf EL}_c(G^w,c_0,(s,s_1,\dots,s_l))$. Hence, by assumption, $\sigma^{*}$ does not satisfy $\varphi$. This implies that $\sigma$ does not satisfy $\varphi$, and consequently, wins for the environment.
	
	It is left to show that $\sigma$ does not end in a deadlock for the environment. Assume otherwise, and take a prefix, $\sigma=s,s_1,\dots,s_k$, that is consistent with $g$ and reaches such a deadlock. Hence, $(s,c_0),(s_1,c_1),\dots,(s_k,c_k)$ is consistent with $g^*$, where $c_0,\dots,c_k$ are defined as before. Since there is no $s_\mathcal X\in 2^\mathcal X$ such that $(s_k,p(s_\mathcal X))\models \rho^e$, $(s_k,c_k)$ is a deadlock for the environment in $G^*$. This, of course, contradicts the assumption that $g^*$ is a strategy that wins for the environment in $G^*$. 
\end{proof}


\begin{proof}[Proof of Lem.~\ref{lem:sysEngMuCalc}]\hfill

	The proof is by induction on the structure of $\psi \in \sysFormulas$.
	
	\begin{itemize}
		
		\item $\psi = v$ for $v \in\mathcal{V}$:
		$\mathit{val}
		\preceq\llbracket{v}\rrbracket^{\GEc}_{\eValuation}(s)$ iff
		$\llbracket{v}\rrbracket^{\GEc}_{\eValuation}(s) = 0$ iff $s \models v$ iff 
		$(s, \mathit{val}) \models v$ iff $(s,
		\mathit{val})\in\llbracket{v}\rrbracket^{G^{*}}_{\mathcal{E}}$.
		\item $\psi = \neg v$ for $v \in\mathcal{V}$: $\mathit{val}
		\preceq\llbracket{\neg v}\rrbracket^{\GEc}_{\eValuation}(s)$ iff
		$\llbracket{\neg v}\rrbracket^{\GEc}_{\eValuation}(s) = 0$ iff $(s,
		\mathit{val})\!\in\!\llbracket{
			\neg v}\rrbracket^{G^{*}}_{\mathcal{E}}$.
		\item $\psi =  X$: $\mathit{val}
		\preceq\llbracket{X}\rrbracket^{\GEc}_{\eValuation}(s)$ iff
		$\mathit{val} \preceq \eValuation(X)(s)$ iff$_{\text{(premise)}}$ $(s, \mathit{val})
		\in \mathcal{E}(X)$
		iff $(s,\mathit{val})\in\llbracket{X}\rrbracket^{G^{*}}_{\mathcal{E}}$.
		\item $\psi = \phi_{1} \vee \phi_{2}$: $\mathit{val}
		\preceq\llbracket{\phi_{1}^{\op{E}} \vee
			\phi_{2}^{\op{E}}}\rrbracket^{\GEc}_{\eValuation}(s)$ iff $\mathit{val}
		\preceq
		\min(\llbracket{\phi_{1}^{\op{E}}}\rrbracket^{\GEc}_{\eValuation}(s),\llbracket{\phi_{2}^{\op{E}}}\rrbracket^{\GEc}_{\eValuation}(s))$
		iff $\mathit{val}
		\preceq \llbracket{\phi_{1}^{\op{E}}}\rrbracket^{\GEc}_{\eValuation}(s)$ or $\mathit{val}
		\preceq \llbracket{\phi_{2}^{\op{E}}}\rrbracket^{\GEc}_{\eValuation}(s)$
		iff$_{\text{(i.h.)}}$ $(s, \mathit{val})\in\llbracket{
			\phi_{1}}\rrbracket^{G^{*}}_{\mathcal{E}}$ or $(s, \mathit{val})\in\llbracket{
			\phi_{2}}\rrbracket^{G^{*}}_{\mathcal{E}}$ iff $(s,
		\mathit{val})\in\llbracket{
			\phi_{1} \vee \phi_{2}}\rrbracket^{G^{*}}_{\mathcal{E}}$.
		\item $\psi = \phi_{1} \wedge \phi_{2}$: $\mathit{val}
		\preceq\llbracket{\phi_{1}^{\op{E}} \wedge
			\phi_{2}^{\op{E}}}\rrbracket^{\GEc}_{\eValuation}(s)$\\
		iff $\mathit{val}\preceq \max(\llbracket{\phi_{1}^{\op{E}}}\rrbracket^{\GEc}_{\eValuation}(s),
		\llbracket{\phi_{2}^{\op{E}}}\rrbracket^{\GEc}_{\eValuation}(s))$\\
		iff $\mathit{val}\preceq \llbracket{\phi_{1}^{\op{E}}}\rrbracket^{\GEc}_{\eValuation}(s)$ and $\mathit{val}
		\preceq \llbracket{\phi_{2}^{\op{E}}}\rrbracket^{\GEc}_{\eValuation}(s)$
		iff$_{\text{(i.h.)}}$ $(s, \mathit{val})\in\llbracket{ \phi_{1}}\rrbracket^{G^{*}}_{\mathcal{E}}$ and $(s,
		\mathit{val})\in\llbracket{
			\phi_{2}}\rrbracket^{G^{*}}_{\mathcal{E}}$ iff $(s,
		\mathit{val})\in\llbracket{
			\phi_{1} \wedge \phi_{2}}\rrbracket^{G^{*}}_{\mathcal{E}}$.
		\item $\psi = \circlediamond\phi$:
		\begin{equation*}
		\mathit{val}\preceq\llbracket{\circlediamond_{\op{E}}\phi^{\op{E}}}\rrbracket^{\GEc}_{\eValuation}(s)
		\end{equation*}
		iff
		\begin{equation*}
		\mathit{val}\preceq \max\limits_{s_{\mathcal{X}}\in{2^\mathcal{X}}}
		\lbrack\min\limits_{s_{\mathcal{Y}}\in{2^\mathcal{Y}}}\textsf{EC}_c((s,p(s_{\mathcal{X}},s_{\mathcal{Y}})),
		\llbracket\phi^{\op{E}}\rrbracket^{\GEc}_{\eValuation}(s_{\mathcal{X}},s_{\mathcal{Y}}))\rbrack
		\end{equation*}
		iff
		\begin{equation*}
		\forall s_{\mathcal{X}}\in 2^{\mathcal{X}} \exists s_{\mathcal{Y}}\in
		2^{\mathcal{Y}}: \mathit{val}
		\preceq \textsf{EC}_c((s,p(s_{\mathcal{X}},s_{\mathcal{Y}})),
		\llbracket\phi^{\op{E}}\rrbracket^{\GEc}_{\eValuation}(s_{\mathcal{X}},s_{\mathcal{Y}}))
		\end{equation*}
		iff$_{\text{Def.~\ref{def:ECpre}}}$
		\begin{equation*}
		\begin{split}
		\forall s_{\mathcal{X}}\in 2^{\mathcal{X}} \exists s_{\mathcal{Y}}\in
		2^{\mathcal{Y}}:(s,p(s_{\mathcal{X}})) \models \rho^{e} &\Rightarrow \big[
		(s,p(s_{\mathcal{X}},s_{\mathcal{Y}}))\models \rho^{s} \text{ and }\\
		&\llbracket\phi^{\op{E}}\rrbracket^{\GEc}_{\eValuation}(s_{\mathcal{X}},s_{\mathcal{Y}})
		\not = +\infty \text{ and}\\
		\mathit{val} \preceq \max \lbrack 0,&
		\llbracket\phi^{\op{E}}\rrbracket^{\GEc}_{\eValuation}(s_{\mathcal{X}},s_{\mathcal{Y}})
		- w^{s}(s, p(s_{\mathcal{X}},s_{\mathcal{Y}}))\rbrack\big ]
		\end{split}
		\end{equation*}
		iff
		\begin{equation*}
		\begin{split}
		\forall
		s_{\mathcal{X}}\in 2^{\mathcal{X}} \exists s_{\mathcal{Y}}\in 2^{\mathcal{Y}} :
		(s,p(s_{\mathcal{X}})) \models \rho^{e} \Rightarrow \big[ (s,p(s_{\mathcal{X}},s_{\mathcal{Y}}))\models \rho^{s} &\text{ and }\\
		\min[c, \mathit{val} + w^{s}(s, p(s_{\mathcal{X}},s_{\mathcal{Y}}))] \preceq
		\llbracket\phi^{\op{E}}\rrbracket^{\GEc}_{\eValuation}(s_{\mathcal{X}},s_{\mathcal{Y}})\big]
		\end{split}
		\end{equation*}
		iff$_{(\bigstar)}$
		\begin{equation*}
		\begin{split}
		\forall s_{\mathcal{X}}\in 2^{\mathcal{X}} \exists s_{\mathcal{Y^{*}}}\in 2^{\mathcal{Y^{*}}} :
		(s, \mathit{val},p(s_{\mathcal{X}})) \models \rho^{e} &\Rightarrow\\
		\big[(s,\mathit{val},p(s_{\mathcal{X}}, s_{\mathcal{Y^{*}}}))\models
		\rho^{s*} \text{ and } (s_{\mathcal{X}},s_{\mathcal{Y^{*}}}) \in
		\llbracket{\phi}\rrbracket^{G^{*}}_{\mathcal{E}}\big]
		\end{split}
		\end{equation*}
		iff
		\begin{equation*}
		(s,\mathit{val})\in\llbracket{\circlediamond\phi}\rrbracket^{G^{*}}_{\mathcal{E}}.
		\end{equation*}
		
		$(\bigstar)$:
		\begin{itemize}
			\item \emph{``if'':} Assume that for all $s_{\mathcal{X}}\in
			2^{\mathcal{X}}$ such that $(s, \mathit{val},p(s_{\mathcal{X}}))
			\models \rho^{e}$, there exists $s_{\mathcal{Y^{*}}}\in
			2^{\mathcal{Y^{*}}}$ such that $(s,\mathit{val},p(s_{\mathcal{X}},s_{\mathcal{Y^{*}}}))\models \rho^{s*} \text{ and } (s_{\mathcal{X}},s_{\mathcal{Y^{*}}})
			\in \llbracket{\phi}\rrbracket^{G^{*}}_{\mathcal{E}}$. Let $s_{\mathcal{X}}\in
			2^{\mathcal{X}}$ such that $(s,p(s_{\mathcal{X}})) \models \rho^{e}$. As
			$(s, \mathit{val},p(s_{\mathcal{X}})) \models \rho^{e}$,
			the premise implies that there exists $s_{\mathcal{Y^{*}}}\in
			2^{\mathcal{Y^{*}}}$ such that
			$(s,p(s_{\mathcal{X}},s_{\mathcal{Y}}))\models
			\rho^{s}$, $\mathit{val} + w^{s}(s,p(s_{\mathcal{X}},s_{\mathcal{Y}})) \geq \mathit{\mathit{val}'}$
			, and $(s_{\mathcal{X}},s_{\mathcal{Y^{*}}})\in
			\llbracket{\phi}\rrbracket^{G^{*}}_{\mathcal{E}}$, where
			$s_{\mathcal{Y}} = s_{\mathcal{Y^{*}}}|_{\mathcal{Y}}$, $\mathit{\mathit{val}'}\in [0,c]$, and $\mathit{\mathit{val}'} = s_{\mathcal{Y^{*}}}|_{yDom}$.
			Therefore, $\min[c, \mathit{val} + w^{s}(s,p(s_{\mathcal{X}},s_{\mathcal{Y}}))] \preceq \mathit{\mathit{val'}}$, and it follows from
			the induction hypothesis that $\mathit{val'} \preceq \llbracket\phi^{\op{E}}\rrbracket^{\GEc}_{\eValuation}(s_{\mathcal{X}},s_{\mathcal{Y}})$.
			This implies that
			$\min[c,\mathit{val} + w^{s}(s,p(s_{\mathcal{X}},s_{\mathcal{Y}}))]\!\preceq\!\llbracket\phi^{\op{E}}\rrbracket^{\GEc}_{\eValuation}(s_{\mathcal{X}},s_{\mathcal{Y}})$, as required.
			
			
			\item \emph{``only if'':} Assume that for all $s_{\mathcal{X}}\in
			2^{\mathcal{X}}$ such that $(s,p(s_{\mathcal{X}})) \models \rho^{e}$,
			there exists $s_{\mathcal{Y}}\in 2^{\mathcal{Y}}$ such that
			$(s,p(s_{\mathcal{X}},s_{\mathcal{Y}}))\!\models\!\rho^{s}$ and
			$\min[c,\mathit{val} + w^{s}(s,p(s_{\mathcal{X}},s_{\mathcal{Y}}))] \preceq
			\llbracket\phi^{\op{E}}\rrbracket^{\GEc}_{\eValuation}(s_{\mathcal{X}},s_{\mathcal{Y}})$.
			
			Let $s_{\mathcal{X}}\in 2^{\mathcal{X}}$ such that $(s,
			\mathit{val},p(s_{\mathcal{X}})) \models \rho^{e}$. As 
			it also holds that $(s,p(s_{\mathcal{X}})) \models \rho^{e}$,
			it follows from the premise and the induction hypothesis that there exists $s_{\mathcal{Y}}\in 2^{\mathcal{Y}}$ such
			that $(s,p(s_{\mathcal{X}},s_{\mathcal{Y}}))\models \rho^{s}$ and
			$(s_{\mathcal{X}},s_{\mathcal{Y^{*}}}) \in
			\llbracket{\phi}\rrbracket^{G^{*}}_{\mathcal{E}}$ where
			$s_{\mathcal{Y^{*}}} = (s_{\mathcal{Y}}, \min[c, \mathit{val} +
			w^{s}(s, p(s_{\mathcal{X}},s_{\mathcal{Y}}))])$.
			Since $\mathit{val} + w^{s}(s, p(s_{\mathcal{X}},s_{\mathcal{Y}})) \geq \min[c,\mathit{val} +
			w^{s}(s, p(s_{\mathcal{X}},s_{\mathcal{Y}}))]$, it also holds that $(s,\mathit{val},p(s_{\mathcal{X}},s_{\mathcal{Y^{*}}}))\models
			\rho^{s*}$, as required.
		\end{itemize}
		\item $\psi = \mu X \phi$ (We only show the proof for $\mu$ as the proof for $\nu$ is similar):
		
		Note that (1) $\mathit{val} \preceq \llbracket{\mu X \phi^{\op{E}}}\rrbracket^{\GEc}_{\eValuation}(s)$
		iff$_{(\text{Def.~\ref{def:sysEngMuCalcSemantics}})}$ $\mathit{val} \preceq \min\limits_{i}\lbrack{h_i}\rbrack(s)$ where
		$h_0=f_{+\infty}$ and $h_{i+1}=
		\llbracket{\phi^{\op{E}}}\rrbracket^{\GEc}_{\eValuation[X\mapsto{h_i}]}$; and
		(2) $(s,\mathit{val}) \in \llbracket{\mu
			X \phi}\rrbracket^{G^{*}}_{\mathcal{E}}$ iff$_{(\text{Def.~\ref{def:prop_mu_calculus_semantics}})}$
		$(s,\mathit{val})\in\bigcup\limits_{i}{S_i}$ where $S_0=\emptyset$ and $S_{i+1}= \llbracket{\phi}\rrbracket^{G^{*}}_{\mathcal{E}[X\mapsto{S_i}]}$.
		
		We show by induction that for all $i \in \mathbb{N}$, $\mathit{val} \preceq h_{i}(s)$ iff
		$(s,\mathit{val})\in S_{i}$:
		
		
		\begin{itemize}
			\item \emph{Basis:} It holds that $\mathit{val} \not\preceq +\infty = h_{0}(s)$ and
			$(s,\mathit{val})\not\in \emptyset = S_{0}$.
			\item \emph{Step:} 
			%
			It follows from the premise and the induction hypothesis (over $h_{i}$ and $S_{i}$) that for all $Y \in \mathit{Var}$, for
			all states $s'' \in 2^{\mathcal{V}}$, and for all $\mathit{val''} \in [0,c]$: $\mathit{val''} \preceq \eValuation[X\mapsto{h_i}](Y)(s'')$ iff
			$(s'',\mathit{val''})\in \mathcal{E}[X\mapsto{S_i}](Y)$. Thus, the structural induction hypothesis ensures that
			${{\mathit{val}} \preceq {h_{i+1}(s)}}=
			{\llbracket{\phi^{\op{E}}}\rrbracket^{\GEc}_{\eValuation[X\mapsto{h_i}]}(s)}$ iff
			${{(s,\mathit{val})}\in {S_{i+1}}} = 
			{\llbracket{\phi}\rrbracket^{G^{*}}_{\mathcal{E}[X\mapsto{S_i}]}}$.
		\end{itemize}
		This implies that $\mathit{val}\preceq\min\limits_{i}\lbrack{h_i}\rbrack(s)$ iff $(s,\mathit{val})\in\bigcup\limits_{i}{S_i}$, as required.\qedhere
	\end{itemize}
\end{proof}


\begin{proof}[Proof of Lem.~\ref{lem:envEngMuCalc}]\hfill

	This lemma can be proved by structural induction as the former lemma,
	but it is simpler (and shorter) to obtain this result from Lem.~\ref{lem:sysEngMuCalc},
	using the semantics of negation. As it is omitted from Def.~\ref{def:prop_mu_calculus_grammar} and Def.~\ref{def:prop_mu_calculus_semantics},
	we mention that for a $\mu$-calculus formula $\phi$, the semantics of its
	negation is defined by $\llbracket \neg \phi \rrbracket^G_\mathcal E:=2^\mathcal V\setminus \llbracket \phi \rrbracket^G_\mathcal E$.
	We further require that if $\phi = \mu X\phi_1(X)$ or
	$\phi = \nu X\phi_1(X)$, then $\phi_1$ is syntactically monotone in $X$, i.e.,
	all free occurrences of $X$ in $\phi_1$ fall under
	an even number of negations.
	Recall that we defined the semantics of negation of energy $\mu$-calculus formulas in Sect.~\ref{sec:EngMuCalcEnvSemantics}
	and required the same syntactic restrictions.
	Moreover, notice that the following well-known equations hold for all $\mu$-calculus formulas~\cite{bradfieldmu,Schneider2004}:
	\begin{align}
&{\llbracket \neg\neg\phi \rrbracket^G_\mathcal{E}}={\llbracket \phi \rrbracket^G_\mathcal{E}}.\\
&{\llbracket \neg(\phi \wedge (\text{resp. } \vee) \xi) \rrbracket^G_\mathcal{E}} = {\llbracket (\neg\phi \vee (\text{resp. } \wedge) \neg\xi) \rrbracket^G_\mathcal{E}}.\\
&{\llbracket \neg\circlediamond \text{(resp. $\circlebox$)}\phi \rrbracket^G_\mathcal{E}} = {\llbracket \circlebox \text{(resp. $\circlediamond$)}(\neg\phi) \rrbracket^G_\mathcal{E}}.\\
&{\llbracket \neg\mu\text{(resp. $\nu$)} X\phi(X) \rrbracket^G_\mathcal{E}}  = {\linebreak \llbracket \nu\text{($\text{resp. }\mu$)}X \neg\phi(\neg X)
  \rrbracket^G_\mathcal{E}}.&
\end{align}

	
	\noindent By Lem.~\ref{lem:deMorganAlgebra}, we have
	that the same equations analogously hold for energy $\mu$-calculus formulas (see Eq.~\ref{eq:negation1}-\ref{eq:negation4}). We will refer to
	these equations as the \emph{negation laws}.

	Let $\mathcal{G}_{\sys}$ and $\mathcal{G}_{\env}$ (resp. $\mathcal{G}^{\op{E}}_{\sys}$
	and $\mathcal{G}^{\op{E}}_{\env}$) respectively denote the grammars that generate
	the formulas $\sysFormulas$ and $\envFormulas$ as defined Sect.~\ref{sec:propMuCalculus}
	(resp. $\sysEFormulas$ and $\envEFormulas$ as defined in Sect.~\ref{sec:EngMuCalcSyntaxSemantics}).
	
	Let $\sysFormulas^{\neg}$ and ${\envFormulas}^{\neg}$ (resp. $\sysEFormulas^{\neg}$ and $\envEFormulas^{\neg}$)
	be the sets of all formulas generated
	by $\mathcal{G}_{\sys}$ and $\mathcal{G}_{\env}$ (resp. $\mathcal{G}^{\op{E}}_{\sys}$
	and $\mathcal{G}^{\op{E}}_{\env}$), respectively, together with the new rule of
	$\neg X$ for $X \in \mathit{Var}$, 
	and in which all sub-formulas of the form $\mu X \phi$ or $\nu X \phi$ satisfy that 
	all free occurrences of $X$ in $\phi$ are un-negated\footnote{This guarantees that the extremal fixed-points exist, and thus the semantics is well-defined~\cite[Chapter~3]{Schneider2004}.}.
	Note that $\sysFormulas^{\neg} \setminus \sysFormulas$,
	$\envFormulas^{\neg} \setminus \envFormulas$, $\sysEFormulas^{\neg} \setminus \sysEFormulas$,
	and $\envEFormulas^{\neg} \setminus \envEFormulas$, each consists of formulas in which negated free
	variables, $\neg X$, occur. 
	
	We will make use of the following claim which can be proved by a standard structural induction on $\phi$.
	\begin{clm}
		\label{lem:envToSysNegation}
		Let $\phi \in \envFormulas^{\neg}$ (resp. $\phi \in \envEFormulas^{\neg}$) and let $\eta$ 
		be final result of the application of the negation laws to $\neg \phi$.
		Then, $\eta \in \sysFormulas^{\neg}$ (resp. $\eta \in \sysEFormulas^{\neg}$) and for all 
		variables $X \in \mathit{Var}$: 
		\begin{enumerate}
			\item $X$ occurs free in $\phi$ iff $X$ occurs free in $\eta$.
			\item if $X$ occurs free in $\phi$, then
			\begin{itemize}
				\item if all free occurrences of $X$ in $\phi$ are un-negated, then all
				free occurrences of $X$ in $\eta$ are negated.
				\item if all free occurrences of $X$ in $\phi$ are negated, then all
				free occurrences of $X$ in $\eta$ are un-negated.
			\end{itemize}
		\end{enumerate}
	\end{clm}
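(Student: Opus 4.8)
The plan is to prove Claim~\ref{lem:envToSysNegation} by structural induction on $\phi$, taking as the induction hypothesis the full statement of the claim --- the membership assertion together with items (1) and (2) --- for every proper subformula. Throughout, write $N(\psi)$ for the final result of applying the negation laws to $\neg\psi$; read left to right, these laws drive every occurrence of $\neg$ strictly toward the leaves, so the process terminates and $N(\psi)$ is a well-defined formula in which $\neg$ is applied only to Boolean variables (yielding $\neg v$, already in the grammar) or to relational variables (yielding $\neg X$, the extra rule). I will carry out the argument for $\phi\in\envFormulas^{\neg}$; the case $\phi\in\envEFormulas^{\neg}$ is verbatim with $\circlediamond_{\op{E}}$, $\circlebox_{\op{E}}$, $\sysEFormulas^{\neg}$, $\envEFormulas^{\neg}$ in place of $\circlediamond$, $\circlebox$, $\sysFormulas^{\neg}$, $\envFormulas^{\neg}$. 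I also assume, by $\alpha$-renaming, that bound and free variable names are distinct, so there is no variable capture in any substitution.

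The base cases are immediate: for $\phi\in\{v,X\}$ nothing is rewritten and $N(\phi)=\neg\phi$, while for $\phi\in\{\neg v,\neg X\}$ the rule for double negation fires once and $N(\phi)$ is the atom or variable itself; in each case membership in $\sysFormulas^{\neg}$ is clear, and (1)--(2) hold by inspection since $\neg X$ carries $X$ negated and $X$ carries it un-negated. For $\phi=\phi_1\vee\phi_2$ (resp. $\phi_1\wedge\phi_2$) we have $N(\phi)=N(\phi_1)\wedge N(\phi_2)$ (resp. $N(\phi_1)\vee N(\phi_2)$); membership follows from the induction hypotheses since no new fixed-point binder is created, (1) follows from $\mathrm{FV}(\phi)=\mathrm{FV}(\phi_1)\cup\mathrm{FV}(\phi_2)$ and likewise for $N(\phi)$, and (2) follows because if every free occurrence of a variable in $\phi$ has one fixed polarity then the same is true in whichever of $\phi_1,\phi_2$ it occurs in, so the induction hypothesis flips it there and hence in $N(\phi)$. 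For the modal case, $\phi\in\envFormulas^{\neg}$ offers only the modality $\circlebox$, and $N(\circlebox\phi_1)=\circlediamond N(\phi_1)$; since $\circlediamond$ is permitted in $\sysFormulas^{\neg}$, the induction hypothesis for $\phi_1$ gives $N(\phi)\in\sysFormulas^{\neg}$, and (1)--(2) are inherited from $\phi_1$. In particular $N(\phi)$ never contains $\circlebox$, which is exactly what is needed for membership in $\sysFormulas^{\neg}$.

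The fixed-point case is where care is needed, and I expect it to be the main obstacle. Consider $\phi=\mu X\phi_1$; the $\nu$ case is symmetric. Because $\mu X\phi_1\in\envFormulas^{\neg}$, well-formedness forces every free occurrence of $X$ in $\phi_1$ to be un-negated, and $\phi_1\in\envFormulas^{\neg}$, so the induction hypothesis applies to $\phi_1$; set $\eta_1:=N(\phi_1)$. Applying the negation laws rewrites $\neg\phi$ first to $\nu X\,\neg\phi_1(\neg X)$, where $\phi_1(\neg X)$ is $\phi_1$ with every free occurrence of $X$ replaced by $\neg X$. The key observation is that pushing $\neg$ inward through $\neg\phi_1(\neg X)$ proceeds exactly as through $\neg\phi_1$ until it reaches a leaf: at a former free $X$-occurrence it now meets $\neg X$ and produces $X$ rather than $\neg X$, and at every other leaf it behaves identically. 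Hence the normal form of $\neg\phi_1(\neg X)$ is the formula $\widehat\eta_1$ obtained from $\eta_1$ by replacing every free occurrence of $\neg X$ by $X$, and $N(\phi)=\nu X\,\widehat\eta_1$. Now by the induction hypothesis for $\phi_1$ --- item (2), the un-negated case for the variable $X$ --- every free occurrence of $X$ in $\eta_1$ is negated, so every free occurrence of $X$ in $\widehat\eta_1$ is un-negated; therefore $\nu X\,\widehat\eta_1$ is a well-formed formula of $\sysFormulas^{\neg}$. Item (1) then follows from $\mathrm{FV}(\phi)=\mathrm{FV}(\phi_1)\setminus\{X\}=\mathrm{FV}(\eta_1)\setminus\{X\}=\mathrm{FV}(\widehat\eta_1)\setminus\{X\}=\mathrm{FV}(N(\phi))$, using the induction hypothesis and that replacing $\neg X$ by $X$ preserves the free-variable set. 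For item (2), a free variable $Y\neq X$ of $\phi$ is a free variable of $\phi_1$ of the same polarity, the replacement $\neg X\mapsto X$ does not touch $Y$-occurrences, and the induction hypothesis flips the polarity of $Y$ in $\eta_1$, hence in $\widehat\eta_1$ and in $N(\phi)$. This completes the induction and the proof of the claim.
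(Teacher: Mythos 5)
Your proof is correct and follows exactly the route the paper intends: the paper dispatches this claim with the single remark that it ``can be proved by a standard structural induction on $\phi$,'' and your argument is precisely that induction, carried out in full, with the fixed-point case (the commutation of negation-pushing with the substitution $X \mapsto \neg X$, and the resulting restoration of syntactic monotonicity under the dual binder) correctly identified and handled as the only non-trivial step. Nothing further is needed.
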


	For $\phi \in \sysFormulas^{\neg}$ (resp. $\phi \in \sysEFormulas^{\neg}$)
	we denote by
	$\phi_{+} \in \sysFormulas$ ($\phi_{+} \in \sysEFormulas$)
	the formula obtained from $\phi$ by replacing all occurrences of negated relational
	variables,
	$\neg X$, with their un-negated form, $X$. We will use the following claim, which states a relation between
	the semantics of $\phi \in \sysFormulas^{\neg}$ and $\phi_{+} \in \sysFormulas$. As in the case of the former claim, this claim can also be proved by standard structural induction thus we leave the details to the reader.
	
	\begin{clm}
		 \label{lem:negVarNegVal}
		Let $\eta \in \sysFormulas^{\neg}$ (resp. $\eta \in \sysEFormulas^{\neg}$)
		where each variable that occurs free in $\eta$ either only has negated
		occurrences or only has un-negated occurrences.
		Let $\mathcal{E} : \mathit{Var}\rightarrow (2^{\mathcal{V}} \rightarrow \{0,1\})$
		(resp. $\eValuation:\mathit{Var}\rightarrow \mathit{EF(c)}$)
		be a valuation, and let $\overline{\mathcal{E}}$ (resp. $\overline{\eValuation}$) denote the valuation such that for all free
		variables $X$ that occur negated in $\eta$, $\overline{\mathcal{E}}(X) = 2^{\mathcal{V}} \setminus \mathcal{E}(X)$
		($\overline{\eValuation}(X) = \sim \eValuation(X)$), and for all
		free variables $X$ that occur un-negated in $\eta$,
		$\overline{\mathcal{E}}(X) = \mathcal{E}(X)$ ($\overline{\eValuation}(X) = \eValuation(X)$).
		Then, $\llbracket \eta \rrbracket^{G}_\mathcal{E} = \llbracket \eta_{+} \rrbracket^{G}_{\overline{\mathcal{E}}}$
		(resp. $\llbracket \eta \rrbracket^{\GEc}_{\eValuation} = \llbracket \eta_{+} \rrbracket^{\GEc}_{\overline{\eValuation}}$).
	\end{clm}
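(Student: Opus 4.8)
The plan is to prove Claim~\ref{lem:negVarNegVal} by structural induction on $\eta$. I will treat the two variants uniformly: in the powerset setting read $\cup,\cap,\emptyset,2^{\mathcal V}$ and $2^{\mathcal V}\setminus(\cdot)$, and in the energy setting read the pointwise $\min,\max$, the constant functions $f_{+\infty},f_{0}$, and the pointwise negation $\sim$; the two arguments coincide clause by clause, invoking Lem.~\ref{lem:deMorganAlgebra} wherever a De Morgan identity for $\sim$ is needed. The one invariant that must be maintained along the induction is the standing hypothesis on $\eta$, namely that each free relational variable occurs either only negated or only un-negated. This is preserved: every subformula of $\eta$ still has the property, and a variable free in a subformula is free in $\eta$ carrying the same sign, so the valuation that the induction hypothesis demands for a subformula is exactly the restriction of $\overline{\mathcal E}$ (resp.\ $\overline{\eValuation}$) to that subformula's free variables.

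For the base cases, if $\eta=v$ or $\eta=\neg v$ with $v\in\mathcal V$ then $\eta_{+}=\eta$ contains no relational variable and both brackets reduce to the same set (resp.\ energy function) by definition. If $\eta=X$, then $X$ is un-negated in $\eta$, so $\overline{\mathcal E}(X)=\mathcal E(X)$ and $\llbracket X\rrbracket_{\mathcal E}=\mathcal E(X)=\overline{\mathcal E}(X)=\llbracket X\rrbracket_{\overline{\mathcal E}}$, with $\eta_{+}=X$. If $\eta=\neg X$, then $X$ is negated in $\eta$, $\eta_{+}=X$, and $\llbracket\neg X\rrbracket_{\mathcal E}=2^{\mathcal V}\setminus\mathcal E(X)=\overline{\mathcal E}(X)=\llbracket X\rrbracket_{\overline{\mathcal E}}$ (resp.\ $\sim\eValuation(X)=\overline{\eValuation}(X)$). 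The inductive cases for $\phi_1\vee\phi_2$, $\phi_1\wedge\phi_2$, and $\circlediamond\phi$ (resp.\ $\circlediamond_{\op{E}}\phi$) are then immediate: $(\phi_1\vee\phi_2)_{+}=(\phi_1)_{+}\vee(\phi_2)_{+}$ (and similarly for $\wedge$ and the modality), the induction hypothesis gives $\llbracket\phi_i\rrbracket_{\mathcal E}=\llbracket(\phi_i)_{+}\rrbracket_{\overline{\mathcal E}}$, and the relevant clauses of Def.~\ref{def:prop_mu_calculus_semantics} / Def.~\ref{def:sysEngMuCalcSemantics} only apply $\cup$/$\cap$ (resp.\ pointwise $\min$/$\max$), or $\mathit{Cpre}_{\sys}$ (resp.\ $\ECpre$), to equal arguments.

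The substantive case is $\eta=\mu X\phi$, with $\nu X\phi$ symmetric. By the definition of $\sysFormulas^{\neg}$ (resp.\ $\sysEFormulas^{\neg}$) all free occurrences of $X$ in $\phi$ are un-negated, and every other free variable of $\phi$ is free in $\eta$ with the same sign, so $\phi$ satisfies the hypothesis of the claim. Writing the approximants $S_0=\emptyset$, $S_{i+1}=\llbracket\phi\rrbracket_{\mathcal E[X\mapsto S_i]}$ and $T_0=\emptyset$, $T_{i+1}=\llbracket\phi_{+}\rrbracket_{\overline{\mathcal E}[X\mapsto T_i]}$ (resp.\ with $f_{+\infty}$ and the $\preceq$-increasing chain), I prove $S_i=T_i$ by a nested induction on $i$: the base is $\emptyset=\emptyset$; for the step, because $X$ occurs only un-negated in $\phi$, the valuation that the structural induction hypothesis requires for $\phi$ under $\mathcal E[X\mapsto S_i]$ is $\overline{\mathcal E}[X\mapsto S_i]$, which equals $\overline{\mathcal E}[X\mapsto T_i]$ by the inner hypothesis, so $S_{i+1}=\llbracket\phi\rrbracket_{\mathcal E[X\mapsto S_i]}=\llbracket\phi_{+}\rrbracket_{\overline{\mathcal E}[X\mapsto T_i]}=T_{i+1}$. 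Taking unions (resp.\ the limit of the stabilizing chain) yields $\llbracket\mu X\phi\rrbracket_{\mathcal E}=\llbracket\mu X\phi_{+}\rrbracket_{\overline{\mathcal E}}=\llbracket(\mu X\phi)_{+}\rrbracket_{\overline{\mathcal E}}$. I expect the only delicate point to be exactly this bookkeeping of signs — in particular, checking that passing under a fixed-point binder never converts a uniformly-signed free variable into one of mixed sign — which is precisely what the syntactic restriction built into $\sysFormulas^{\neg}$ and $\sysEFormulas^{\neg}$ guarantees.
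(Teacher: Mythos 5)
Your proof is correct and follows exactly the route the paper intends: the paper dismisses this claim with ``can be proved by standard structural induction, details left to the reader,'' and your argument is precisely that induction, carried out in full. The one genuinely delicate point --- that the syntactic restriction on $\sysFormulas^{\neg}$ and $\sysEFormulas^{\neg}$ forces bound variables to occur un-negated in the body of a fixed point, so the approximant valuations $\mathcal{E}[X\mapsto S_i]$ and $\overline{\mathcal{E}}[X\mapsto T_i]$ line up without any complementation on $X$ --- is identified and handled correctly.
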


	Now, let $\psi \in \envFormulas \subseteq \envFormulas^{\neg}$ that satisfies the premise.
	Let $\eta$ be the formula obtained by applying the negation laws to 
	$\neg \psi$, and consequently,
	we have that
	$\llbracket \psi^\op{E}\rrbracket^{G^w(c)}_\mathcal D =
	\llbracket \neg\neg\psi^\op{E}\rrbracket^{G^w(c)}_\mathcal D
	= \sim\llbracket \eta^\op{E}\rrbracket^{G^w(c)}_\mathcal D$
	and $\llbracket \psi \rrbracket^{G^*}_\mathcal E=2^{\mathcal{V}^{*}}\setminus \llbracket \eta \rrbracket^{G^*}_\mathcal E$.
	Since it follows from Claim~\ref{lem:envToSysNegation} that $\eta \in \sysFormulas^{\neg}$ and
	all free variables in $\eta$ occur negated (as all free variables in $\psi$ occur un-negated),
	by Claim~\ref{lem:negVarNegVal}, we have that
	$\llbracket \eta^\op{E}\rrbracket^{\GEc}_{\eValuation} =
	\llbracket \eta_{+}^\op{E}\rrbracket^{\GEc}_{\overline{\eValuation}}$ and 
	$\llbracket \eta \rrbracket^{G^*}_{\mathcal{E}} =
	\llbracket \eta_{+} \rrbracket^{G^*}_{\overline{\mathcal{E}}}$.
	Therefore, we obtain the following two equations:
	\begin{equation}
	\label{eq:phi-and-neg-phi}
	\llbracket \psi^\op E\rrbracket^{G^w(c)}_\mathcal D = \sim\llbracket \eta_{+}^\op{E}\rrbracket^{G^w(c)}_{\overline{\eValuation}}.
	\end{equation}
	\begin{equation}
	\label{eq:partition-of-G*}
	\text{For all states, $s \in 2^{\mathcal{V}}$, and $c_0\in[0,c]$: }
	(s,c_0)\in \llbracket \psi \rrbracket^{G^*}_\mathcal E \Leftrightarrow (s,c_0)\notin \llbracket \eta_{+} \rrbracket^{G^*}_{\overline{\mathcal{E}}}.
	\end{equation}
	
	\noindent It further follows that for all free variables $X$ that occur in $\eta_{+}$ (resp. $\eta^{\op{E}}_{+}$),
	$\overline{\mathcal{E}}(X) = 2^{\mathcal{V}^{*}} \setminus \mathcal{E}(X)$ (resp. $\overline{\eValuation}(X) = \sim \eValuation(X)$).
	We claim that the following equation holds for all free variables $X$ that occur in $\eta_{+}$, states $s \in 2^{\mathcal{V}}$, and
	$val \in [0,c]$:
	\begin{equation}
	\label{eq:premiseSysLemma}
	\mathit{val}\preceq\overline{\eValuation}(X)(s) \Leftrightarrow (s,val) \in \overline{\mathcal{E}}(X).
	\end{equation}
	
	The proof of Eq.~\ref{eq:premiseSysLemma} is as follows.
	Take a free variable $X$ that occurs in $\eta_{+}$, and a state $s\in 2^\mathcal V$.
	First, assume that $\eValuation(X)(s) = +\infty$. Thus, by the premise, for all $val \in [0,c]$:
	$(s,c-\mathit{val}) \notin {\mathcal{E}}(X)$. This implies that for all $\mathit{val} \in [0,c]$:
	$(s,\mathit{val}) \in \overline{\mathcal{E}}(X)$. By the assumption, $\overline{\eValuation}(X)(s)
	= 0$, and hence, $\forall \mathit{val}\in [0,c]:
	(\mathit{val}\preceq \overline{\eValuation}(X)(s) \wedge (s,\mathit{val}) \in \overline{\mathcal{E}}(X))$, which implies\
	Eq.~\ref{eq:premiseSysLemma}. Second, assume that $\eValuation(X)(s) = 0$. Then, analogously
	to the first case, we obtain that $\forall \mathit{val}\in [0,c]:
	(\mathit{val}\npreceq \overline{\eValuation}(X)(s) \wedge (s,\mathit{val}) \notin \overline{\mathcal{E}}(X)$
	Third, we consider the remaining case where $\eValuation(X)(s) = m \in [1,c]$. Thus, $\overline{\eValuation}(X)(s) = c+1-m$,
	and by the premise,
	for all $\mathit{val}\in [0,c]:$ $\mathit{val}\in [m,c] \Leftrightarrow (s,c-\mathit{val}) \in {\mathcal{E}}(X)$. This implies that
	for all $\mathit{val}\in [0,c]:$ $\mathit{val}\in [0,m-1] \Leftrightarrow (s,c-\mathit{val}) \in \overline{{\mathcal{E}}}(X)$, iff 
	for all $\mathit{val}\in [0,c]:$ $\mathit{val}\in [c+1-m,c] \Leftrightarrow (s,\mathit{val}) \in \overline{{\mathcal{E}}}(X)$, iff
	for all $\mathit{val}\in [0,c]:$ $\mathit{val} \preceq \overline{\eValuation}(X)(s) \Leftrightarrow (s,\mathit{val}) \in \overline{{\mathcal{E}}}(X)$, which concludes
	the proof of Eq.~\ref{eq:premiseSysLemma}.
	
	Note that $\eta^{\op{E}}_{+}\in \sysEFormulas$ and $\eta_{+}\in \sysFormulas$, and by Eq.~\ref{eq:premiseSysLemma},
	the valuations $\overline{\eValuation}$ and $\overline{\mathcal{E}}$
	satisfy the premise of Lem.~\ref{lem:sysEngMuCalc}.
	Take $s\in 2^\mathcal V$ and, first, assume that $\llbracket \eta_{+}^\op E\rrbracket^{G^w(c)}_{\overline{\eValuation}}(s)=+\infty$. Thus, by Eq.~\ref{eq:phi-and-neg-phi}, $\llbracket \psi^\op{E}\rrbracket^{G^w(c)}_\mathcal D(s)=0$, 
	and by Lem.~\ref{lem:sysEngMuCalc} and Eq.~\ref{eq:partition-of-G*}, $\forall c_0\in [0,c]( (s,c_0)\in \llbracket \psi \rrbracket^{G^*}_\mathcal E )$. Therefore, $\forall \mathit{val}\in [0,c] ( \mathit{val}\preceq \llbracket \psi^\op{E}\rrbracket^{G^w(c)}_\mathcal D(s) \wedge (s,c-\mathit{val})\in \llbracket \psi \rrbracket^{G^*}_\mathcal E  )$, which implies the claim.
	In the case where $\llbracket \eta_{+}^\op E\rrbracket^{G^w(c)}_{\overline{\eValuation}}(s)=0$, analogous arguments reveal that $\forall \mathit{val}\in [0,c] ( \mathit{val}\npreceq \llbracket \psi^\op{E}\rrbracket^{G^w(c)}_\mathcal D(s) \wedge (s,c-\mathit{val})\notin \llbracket \psi \rrbracket^{G^*}_\mathcal E  )$.
	It is left to deal with the case where $\llbracket \eta_{+}^\op E\rrbracket^{G^w(c)}_{\overline{\eValuation}}(s)=m\in [1,c]$.
	In this case, by Lem.~\ref{lem:sysEngMuCalc} and Eq.~\ref{eq:partition-of-G*},
	$\forall \mathit{val}\in [0,c] (val<m \Leftrightarrow (s,val)\in \llbracket \psi \rrbracket^{G^*}_\mathcal E )$. Therefore, for all $\mathit{val}\in[0,c]$,
	\[\mathit{val}\preceq \llbracket \psi^\op{E}\rrbracket^{\GEc}_{\eValuation} \Leftrightarrow_{\text{Eq.~\ref{eq:phi-and-neg-phi}}}
	\mathit{val}\preceq \sim\llbracket \eta_{+}^\op{E}\rrbracket^{G^w(c)}_{\overline{\eValuation}} \Leftrightarrow 
	\mathit{val}\preceq c+1-m\]
	\[\Leftrightarrow
	c-val<m \Leftrightarrow 
	(s,c-\mathit{val})\in \llbracket \psi \rrbracket^{G^*}_\mathcal E.
	\qedhere\]
\end{proof}


\section{Proof of Lem.~\ref{lem:lemma-6-revised}}\label{app:lemma-6-revised-proof}
In this appendix, we present a full proof for Lem.~\ref{lem:lemma-6-revised}:
\begin{quote} \textbf{Lemma.~\ref{lem:lemma-6-revised}.}
	Let $G$ be an energy parity game (as defined in Def.~\ref{def:energyParityGame}) with $n$ states, $d$ different priorities 
	and maximal absolute value of the weights, $K$. 
	If $\text{player}_0$ has a winning strategy from a state $s$ w.r.t. $+\infty$ 
	for an initial credit $c_0$, then she has a winning strategy w.r.t. $d(n-1)K$ 
	for an initial credit $\min\{c_0,(n-1)K\}$.
\end{quote}

\noindent We divide the proof into two parts:
	
	\begin{enumerate}[label=\textbf{Part \arabic*.},ref={Part \arabic*},leftmargin=\parindent,itemindent=*]
		\item\label{lem:lemma-6-revised:Part1} Inspired by the proof of Lem.~6 in~\cite{ChatterjeeD12}, we show that $\text{player}_0$ has a strategy in $G$ that wins from $W_0(+\infty)$ w.r.t. $d(n-1)K$ for $(n-1)K$.
		\item\label{lem:lemma-6-revised:Part2} We show that if $c_0<(n-1)K$, a small modification to the strategy constructed in~\ref{lem:lemma-6-revised:Part1} allows $\text{player}_0$ to win for the initial credit $c_{0}$.
	\end{enumerate}

	
\noindent We start by stating five claims which will be useful for proving~\ref{lem:lemma-6-revised:Part1}.
	


		\begin{clm} \label{clm:lemma-6-revised:Claim1} Energy parity games are determined. That is, for an energy parity game, ${G^{ep}}={\langle({V^{ep}}={V_0^{ep}\cup V_1^{ep}},E^{ep}),\prio^{ep},w^{ep}\rangle}$, and $c\in\mathbb{N}\cup\{+\infty\}$, it holds that ${W^{G^{ep}}_0(c)} \cup {W^{G^{ep}}_1(c)}=V^{ep}$.
		\end{clm}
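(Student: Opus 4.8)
The plan is to reduce the claimed determinacy to the off-the-shelf determinacy of parity games (on possibly infinite arenas). First, unwinding the definitions of $W_0(c)$ and $W_1(c)$, the statement $W_0^{G^{ep}}(c)\cup W_1^{G^{ep}}(c)=V^{ep}$ is equivalent to the following per-credit claim: for every state $s$ and every initial credit $c_0\in[0,c]\cap\mathbb N$, either $\text{player}_0$ wins from $s$ w.r.t.\ $c$ for $c_0$, or $\text{player}_1$ does. Indeed, if $s\notin W_0^{G^{ep}}(c)$ then $\text{player}_0$ wins from $s$ for no initial credit, so the per-credit claim forces $\text{player}_1$ to win from $s$ for every $c_0$, i.e.\ $s\in W_1^{G^{ep}}(c)$.

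So I would fix $s$ and $c_0$ and build from $G^{ep}$ a parity game $G^p$ of the kind sketched in Section~\ref{sec:sufficientbound}: the states of $G^p$ are the pairs $(v,e)$ with $v\in V^{ep}$ and $e\in([0,c]\cap\mathbb N)\cup\{+\infty\}$ the energy level under $c$ (with $+\infty$ marking a violation of the energy objective), owned by the player who owns $v$ in $G^{ep}$; every state $(v,+\infty)$ is a deadlock for $\text{player}_0$; an $E^{ep}$-edge $(v,v')$ induces the edge $(v,e)\to(v',\min[c,\,e+w^{ep}(v,v')])$ when $e\ne+\infty$ and $e+w^{ep}(v,v')\ge 0$ and the edge $(v,e)\to(v',+\infty)$ otherwise, while a deadlock of $G^{ep}$ stays a deadlock with the same owner; the priority of $(v,e)$ is $\prio^{ep}(v)$. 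A play of $G^p$ either reaches some $(v,+\infty)$, which corresponds to a violation of the energy objective and is a deadlock loss for $\text{player}_0$, or stays among the states $(v,e)$ with $e\ne+\infty$ and projects onto a play of $G^{ep}$ with the same priority sequence and the same terminal deadlock; with this correspondence, strategies transfer in both directions, so for $\alpha\in\{0,1\}$, $\text{player}_\alpha$ wins from $(s,c_0)$ in $G^p$ if and only if $\text{player}_\alpha$ wins from $s$ w.r.t.\ $c$ for $c_0$ in $G^{ep}$.

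It then suffices to note that $G^p$ is determined. If $c\in\mathbb N$, $G^p$ is a finite parity game, hence (memoryless) determined~\cite{EmersonJ91,Zielonka98}. If $c=+\infty$, $G^p$ has a countably infinite state space but still only finitely many priorities, so Zielonka's determinacy theorem for games on infinite graphs with finitely many colours~\cite{Zielonka98} still applies (alternatively, for a fixed $(s,c_0)$ the $\text{player}_0$-objective is Borel, so Martin's Borel determinacy theorem suffices). Hence from $(s,c_0)$ one of the two players has a winning strategy in $G^p$, and by the correspondence the same player wins from $s$ w.r.t.\ $c$ for $c_0$ in $G^{ep}$, which establishes the per-credit claim and therefore the claim.

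The only real work I anticipate is the verification of the play/strategy correspondence between $G^{ep}$ and $G^p$: one must check that vertex ownership — hence the classification of a terminal deadlock as a win for $\text{player}_0$ versus $\text{player}_1$ — is preserved, that the truncated update in $G^p$ reproduces $\textsf{EL}_c$, and that a strategy ``from a state'' unfolds to a legitimate strategy ``from a state'' across the reduction. Designating the $(v,+\infty)$ states as deadlocks for $\text{player}_0$ is exactly what encodes ``an energy violation is a loss for $\text{player}_0$''. None of this is deep, but it is the part that has to be spelled out carefully; the determinacy input itself is entirely standard.
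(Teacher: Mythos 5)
Your proposal is correct and follows essentially the same route as the paper: the appendix proves this claim by constructing exactly the parity game $G^p$ you describe (states $(s,c'_0)$ with the truncated energy update, $(s,+\infty)$ as deadlocks for $\text{player}_0$, priorities inherited from $G^{ep}$), invoking determinacy of parity games (citing Martin's Borel determinacy and Zielonka for the $c=+\infty$ case), and transferring winning strategies back per initial credit. The per-credit unwinding of $W_0(c)\cup W_1(c)=V^{ep}$ that you make explicit is also the (implicit) final step of the paper's argument.
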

	
	
	In sketch, Claim~\ref{clm:lemma-6-revised:Claim1} is argued as follows. Given an energy parity game, \[{G^{ep}}={\langle({V^{ep}}={V_0^{ep}\cup V_1^{ep}},E^{ep}),\prio^{ep},w^{ep}\rangle},\] and an upper bound $c\in\mathbb{N}\cup\{+\infty\}$, we construct a parity game, \[{G^{p}}={\langle({V^{p}}={V^{p}_0\cup V^{p}_1},E^{p}),\prio^{p}\rangle},\] in a way similar to Def.~\ref{def:naiveReduction}. However, the construction takes into account that, in contrast to WGSs, the players do not necessarily take steps in an alternating manner and each state is controlled solely by one of the players.
	The states of $G^{p}$ have the form $(s,c'_{0})$ where $s$ is a state of $G^{ep}$ and $c'_{0}\in[0,c] \cup \{+\infty\}$ is the accumulated energy level under $c$. States of the form $(s, +\infty)$ are deadlocks for $\text{player}_{0}$
	and correspond to violation of the energy objective. 
	The edges of $G^{p}$ are taken from the original game $G^{ep}$, and update the energy component accordingly.
	Formally,
	
	\begin{itemize}
		\item $V^{p}_{0} = {A_{0}} \cup {\{(s, +\infty) \mid s \in V^{ep}\}}$ and $V^{p}_{1} = A_{1}$ where $A_{i} = \{(s, c'_{0}) \mid s \in V^{ep}_{i} \text{ and } {c'_{0}} \in {{[0,c]} \cap {\mathbb{N}}} \}$.
		\item For ${(s_{1},c_{1})} \in {V^{p}}$ and ${(s_{2},c_{2})} \in {V^{p}}$, $\big( (s_{1},c_{1}), (s_{2},c_{2}) \big) \in E^{p}$ if ${(s_{1},s_{2})} \in {E^{ep}}$, ${c_{1}} \not = {+\infty}$, and either ($\min\{c, c_{1} + w^{ep}(s_{1},s_{2})\} = c_{2} \geq 0$) or ($c_{1} + w^{ep}(s_{1},s_{2}) < 0$ and $c_{2} = +\infty$).
		\item For ${(s,c'_{0})} \in {V^{p}}$, $\prio^{p}(s,c'_{0}) = \prio^{ep}(s)$.
	\end{itemize}

	\noindent Note that $G^{p}$ has the same number of different priorities as $G^{ep}$ has, and in case $c=+\infty$, it has an infinite state space.
	It is not difficult to see that the winning region of $\text{player}_{i}$ in $G^{p}$ indicates which states win for her in $G^{ep}$
	and annotates these states with the winning initial credits. That is, for $s \in V^{ep}$, $c'_{0} \in {{[0,c]} \cap {\mathbb{N}}}$, and $i \in \{0,1\}$:
	${(s,c'_{0})} \in {W^{G^{p}}_{i}}$ iff $s$ wins in $G^{ep}$ for $\text{player}_{i}$ w.r.t. $c$ for the initial credit $c'_{0}$.
	By determinacy of parity games~\cite{martin1975borel,Zielonka98},
	it holds that ${W^{G^{p}}_{0}} \cup {W^{G^{p}}_{1}} = V^{p}$. This implies that ${W^{G^{ep}}_0(c)} \cup {W^{G^{ep}}_1(c)}=V^{ep}$, as required.

	The second claim  is an immediate consequence of Lem.~4 in~\cite{ChatterjeeD12}.
	
	\begin{clm} \label{clm:lemma-6-revised:Claim2} There is a strategy $\gfeStrategy$ for $\text{player}_0$ from $W_0(+\infty)$ such that
		every play $\sigma$, consistent with $\gfeStrategy$, 
		wins the energy objective w.r.t. $d(n-1)K$ for the initial credit $(n-1)K$, and either of the following occurs:
		\begin{itemize}
			\item $\sigma$ also wins the parity objective in $G$. Hence, $\sigma$ wins for $\text{player}_0$.
			\item The sum of the edges' weights traversed along $\sigma$ is unbounded above.
		\end{itemize}
		
		\noindent The strategy $\gfeStrategy$ is called \emph{good-for-energy}. 
	\end{clm}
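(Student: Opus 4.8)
The plan is to obtain the claim directly from the existence of a memoryless \emph{good-for-energy} strategy, which is exactly what Lem.~4 in~\cite{ChatterjeeD12} supplies, and then to check that the (finite, truncated) energy semantics used here does not spoil the bounds $d(n-1)K$ and $(n-1)K$. Recall that a memoryless strategy $g$ for $\text{player}_0$, defined on $W_0(+\infty)$ and keeping every play within $W_0(+\infty)$, is \emph{good-for-energy} if, in the subgraph $G_g$ it induces, every simple cycle has non-negative weight-sum and every simple cycle of weight-sum $0$ has an even minimal priority. Since every state of $W_0(+\infty)$ is winning for $\text{player}_0$ for the unbounded energy parity objective, Lem.~4 of~\cite{ChatterjeeD12} yields such a strategy; I would take this $\gfeStrategy$ as given.

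First I would fix a play $\sigma$ from $W_0(+\infty)$ consistent with $\gfeStrategy$ and prove the dichotomy. As $\gfeStrategy$ confines $\sigma$ to $W_0(+\infty)$ (a $\text{player}_1$-state in the winning region has all successors in it; a $\text{player}_0$-state is moved within it), $\sigma$ is either infinite or ends in a deadlock for $\text{player}_1$; in the latter case the parity objective holds and we are in the first alternative. For infinite $\sigma$, write $S_k$ for the sum of the weights of its first $k$ edges. If $(S_k)_{k}$ is unbounded above, the second alternative holds. Otherwise $(S_k)_{k}$ is bounded above, and the standard cycle-decomposition argument applies: every simple cycle closed along $\sigma$ lies in $G_g$ and so has non-negative weight, hence boundedness forces all but finitely many closed simple cycles to be zero-sum, and from some point on every closed simple cycle has even minimal priority. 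Picking $v^{\ast} \in \mathit{inf}(\sigma)$ of minimal priority $p$, the segment of $\sigma$ between two sufficiently late consecutive occurrences of $v^{\ast}$ is a closed walk on which $p$ is the minimal priority and which splits into zero-sum simple cycles; the simple cycle through $v^{\ast}$ then has minimal priority $p$ and even minimal priority, so $p$ is even and $\sigma$ satisfies the parity objective.

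Then I would verify the energy objective w.r.t. $c := d(n-1)K$ for initial credit $(n-1)K$. Fix a prefix $\sigma[0\ldots k]$ and let $j \leq k$ be the last index at which $\textsf{EL}_{c}$ is truncated to $c$, or $j = 0$ if there is no such index; telescoping gives $\textsf{EL}_{c}(G,(n-1)K,\sigma[0\ldots k]) = e_j + (S_k - S_j)$, where $e_j$ equals $c$ or $(n-1)K$, so $e_j \geq (n-1)K$. The sub-walk $\sigma[j\ldots k]$ lies in $G_g$, which has at most $n$ vertices, so it decomposes into simple cycles of non-negative weight together with a simple path of at most $n-1$ edges; hence $S_k - S_j \geq -(n-1)K$ and $\textsf{EL}_{c}(G,(n-1)K,\sigma[0\ldots k]) \geq (n-1)K - (n-1)K = 0$, while $c = d(n-1)K \geq (n-1)K$ since $d \geq 1$. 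So every prefix of $\sigma$ keeps a non-negative energy level, i.e., $\sigma$ wins the energy objective w.r.t. $d(n-1)K$ for $(n-1)K$.

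The step I expect to require the most care is the interaction of the truncation in $\textsf{EL}_{c}$ with the cycle-decomposition lower bound: the estimate must be restarted at the last truncation point rather than at the start of the play, and one must confirm the resulting bound is never negative. (The finite bound $d(n-1)K$ is comfortably large for this claim alone --- any bound $\geq (n-1)K$ would already do --- and its precise value is fixed by consistency with the remainder of the proof of Lem.~\ref{lem:lemma-6-revised}.) A secondary, purely bookkeeping issue is to confirm that the good-for-energy strategy of~\cite{ChatterjeeD12}, stated there for the plain (untruncated, explicit-graph) setting, transfers to the notation and energy semantics adopted here.
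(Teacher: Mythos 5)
Your proposal is correct and takes essentially the same route as the paper, which justifies this claim only by declaring it ``an immediate consequence of Lem.~4 in~\cite{ChatterjeeD12}'' without supplying details. Your elaboration --- invoking the memoryless good-for-energy strategy, the stack/cycle decomposition for the parity-vs-unbounded-sum dichotomy, and the restart-at-the-last-truncation-point bookkeeping showing the level never drops below $(n-1)K-(n-1)K=0$ --- correctly fills in exactly what the paper leaves implicit.
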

	
	%
	%
	%
	
	The third claim is the following simple observation:
	
	\begin{clm} \label{clm:lemma-6-revised:Claim3} Let $c\in {\mathbb{N}}$ be a finite upper bound, let $c_{0} \in {[0,c]}$ be an initial credit, and let $\sigma$ be a play in $G$. Take $l,m\in\mathbb{N}$ such that ${c_{0} + m} \leq {c + l}$. Then, for every prefix of $\sigma$, $\sigma[0\ldots k]$,
		we have that ${\textsf{EL}_{c+l}(G,{c_0+m}, {\sigma[0\ldots k]})} \geq {{\textsf{EL}_{c}(G,c_0, {\sigma[0\ldots k]})} + {\min\{l,m\}}}$.
	\end{clm}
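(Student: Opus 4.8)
The plan is to prove the inequality by induction on the length $k$ of the prefix $\sigma[0\ldots k]$. Abbreviate $R_k := \textsf{EL}_{c+l}(G, c_0+m, \sigma[0\ldots k])$ and $r_k := \textsf{EL}_{c}(G, c_0, \sigma[0\ldots k])$, and recall from Sect.~\ref{sec:combinedEngObj} that these are the last terms of the recurrences $R_0 = c_0 + m$, $R_i = \min[c+l,\, R_{i-1} + w(s_{i-1}, s_i)]$ and $r_0 = c_0$, $r_i = \min[c,\, r_{i-1} + w(s_{i-1}, s_i)]$, where $w$ is the weight function of $G$; the hypothesis $c_0 + m \le c + l$ is precisely what makes $c_0 + m$ a legitimate initial credit w.r.t. the bound $c+l$, so that $R_k$ is well-defined. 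I will show $R_k \ge r_k + \min\{l,m\}$ for every $k$ for which $\sigma[0\ldots k]$ is defined.

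The base case $k=0$ is immediate: $R_0 - r_0 = m \ge \min\{l,m\}$. For the inductive step, assume $R_k \ge r_k + \min\{l,m\}$ and put $a := w(s_k, s_{k+1})$. If $R_k + a \le c+l$, then $R_{k+1} = R_k + a \ge r_k + a + \min\{l,m\} \ge \min[c,\, r_k + a] + \min\{l,m\} = r_{k+1} + \min\{l,m\}$, using the induction hypothesis and the trivial inequality $x \ge \min[c,x]$. Otherwise $R_k + a > c+l$, so $R_{k+1} = c+l$; since $r_{k+1} = \min[c,\, r_k+a] \le c$ and $\min\{l,m\} \le l$, we get $r_{k+1} + \min\{l,m\} \le c + l = R_{k+1}$. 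In either case the claimed inequality holds, completing the induction and hence the proof.

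The argument is elementary throughout and I do not anticipate any genuine obstacle. The only point that warrants a moment's attention is the case split on whether the truncation at $c+l$ is active in the ``larger'' game: when it is, one must observe that the ``smaller'' game's value $r_{k+1}$ is then already bounded by $c$, so that the additive slack $\min\{l,m\}$ cannot exceed the remaining headroom $l$ and is therefore never lost. The rest is bookkeeping — keeping the two recurrences and their respective upper bounds cleanly separated, and invoking the hypothesis $c_0 + m \le c+l$ only where it is actually needed, namely for well-definedness of $R_k$.
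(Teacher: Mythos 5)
Your proof is correct and is precisely the ``standard induction on $k$'' that the paper invokes without spelling out (it explicitly leaves the details to the reader); your case split on whether the truncation at $c+l$ fires, and the observation that in the truncating case $r_{k+1}\leq c$ so the slack $\min\{l,m\}\leq l$ fits in the remaining headroom, is exactly the bookkeeping the paper's omitted argument requires. No gaps.
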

	
	Claim~\ref{clm:lemma-6-revised:Claim3} can be proved by standard induction on $k$, hence we leave the details to the reader.
	The fourth claim is an immediate corollary of Claim~\ref{clm:lemma-6-revised:Claim3}:

	

	
	\begin{clm} 
		\label{lem:lemma-6-revised:Claim4} Assume that $g$ is a strategy for $\text{player}_0$ in $G$ that wins from a set of states, $A \subseteq V$, w.r.t. $c\in \mathbb{N}$ for an initial credit $c_0 \in [0,c]$. Take $l,m\in\mathbb{N}$ such that ${c_{0} + m} \leq {c + l}$,
		and let $\sigma$ be a play from $A$, consistent with $g$. Then, for the initial credit $c_{0}+m$, the energy level under the upper bound $c+l$ never drops below $\min\{l,m\}$ along $\sigma$.
	\end{clm}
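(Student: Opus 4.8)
The plan is to derive Claim~\ref{lem:lemma-6-revised:Claim4} directly from Claim~\ref{clm:lemma-6-revised:Claim3} together with the definition of what it means for a strategy to win the energy objective. First I would observe that, since $g$ wins for $\text{player}_0$ from $A$ w.r.t.\ $c \in \mathbb{N}$ for the initial credit $c_0$, every play $\sigma$ from $A$ that is consistent with $g$ wins for $\text{player}_0$ w.r.t.\ $c$ for $c_0$; in particular, part~(1) of the winning condition for energy parity games gives that for every finite prefix $\sigma[0\ldots k]$ we have $\textsf{EL}_{c}(G,c_0,\sigma[0\ldots k])\geq 0$. Note this holds whether $\sigma$ is finite or infinite, since the energy objective constrains all finite prefixes of $\sigma$ regardless.

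Next I would apply Claim~\ref{clm:lemma-6-revised:Claim3} to $\sigma$ with the same parameters $l,m$, whose hypothesis $c_0+m\leq c+l$ is exactly the standing assumption of Claim~\ref{lem:lemma-6-revised:Claim4}. For each prefix $\sigma[0\ldots k]$ this yields
\[
\textsf{EL}_{c+l}(G,c_0+m,\sigma[0\ldots k]) \;\geq\; \textsf{EL}_{c}(G,c_0,\sigma[0\ldots k]) + \min\{l,m\} \;\geq\; \min\{l,m\},
\]
where the second inequality uses the non-negativity established in the first step. Since this bound holds uniformly over all prefixes of $\sigma$, the energy level under the upper bound $c+l$ for the initial credit $c_0+m$ never drops below $\min\{l,m\}$ along $\sigma$, which is precisely the assertion of the claim.

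I do not expect any genuine obstacle here; the statement is an immediate corollary. The only point requiring a little care is the passage from ``$g$ is a winning strategy from $A$'' to ``every prefix of every $g$-consistent play from $A$ has energy level under $c$ at least $0$'' --- i.e.\ unpacking the definition of the energy objective from Sect.~\ref{sec:combinedEngObj} and noting that it is a prefix-closed (safety-style) condition, so the bound of Claim~\ref{clm:lemma-6-revised:Claim3} is available at every finite step rather than only in the limit.
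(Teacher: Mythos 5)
Your proposal is correct and matches the paper's intent exactly: the paper states that this claim ``is an immediate corollary of Claim~\ref{clm:lemma-6-revised:Claim3},'' and your argument --- unpacking the winning condition to get $\textsf{EL}_{c}(G,c_0,\sigma[0\ldots k])\geq 0$ for every prefix and then applying Claim~\ref{clm:lemma-6-revised:Claim3} --- is precisely that derivation.
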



	The proof of~\ref{lem:lemma-6-revised:Part1} relies on the well-known notions of \emph{attractors}, \emph{traps}, and \emph{subgames}. Below,
	we repeat their definitions from~\cite[Chapter~6]{2001automata}.
	
	
	\begin{defi}[Attractors]\label{def:lem:lemma-6-revised:attractors}
		The \emph{$\text{player}_{i}$-attractor} $\mathit{Attr}_{i}(X) \subseteq V$ of a set of states $X \subseteq V$, is the set of all states from which $\text{player}_{i}$ has a strategy to force $\text{player}_{1-i}$
		to reach either a state in $X$ or a deadlock for $\text{player}_{1-i}$ in a finite number of steps.
		Formally, ${\mathit{Attr}_{i}(X)} = \bigcup_{j=0}^{+\infty}{A_{j}}$ where $A_{0} = X$ and $A_{j+1} = A_{j} \cup \{s \in V_{i} \mid \exists t\in V ({{(s,t)}\in {E}} \wedge {t\in{A_{j}}})\} \cup \{s \in V_{1-i} \mid \forall t \in V ({{(s,t)}\in {E}} \Rightarrow {t\in{A_{j}}})\}$.
	\end{defi}
	
	\begin{defi}[Traps]\label{def:lem:lemma-6-revised:traps}
		A \emph{$\text{player}_{i}$-trap} is a set of states, $U\subseteq V$, in which $\text{player}_{1-i}$ can trap $\text{player}_{i}$
		in the sense that all successors of $V_{i}$-states in $U$ belong to $U$ and every $V_{1-i}$-state in $U$ has a successor in $U$.
	\end{defi}

	\begin{defi}[Subgames]\label{def:lem:lemma-6-revised:subgames}
		Let $U \subseteq V$. The \emph{subgame} of $G$ induced by $U$ is ${G[U]} = \langle({V[U]}\linebreak={{{(V_0 \cap {U})}\cup {(V_1 \cap {U})}} }, {{E[U]} = E\cap (U \times U))},{\prio|_{U}},{w|_{E[U]}}\rangle$
		where $\prio|_{U}$ and $w|_{E[U]}$ are the restrictions of $\prio$ and $w$ to $U$ and $E[U]$, respectively. 
	\end{defi}
	
	Note that the complement of the $\text{player}_{i}$-attractor of a set $X \subseteq V$, $U = {V} \setminus {\mathit{Attr}_{i}(X)}$,
	is a $\text{player}_{i}$-trap. Also, since $\mathit{Attr}_{i}(W_{i}(+\infty)) = W_{i}(+\infty)$~\cite{ChatterjeeD12},
	it follows from Claim~\ref{clm:lemma-6-revised:Claim1} that $W_{i}(+\infty)$ is a $\text{player}_{1-i}$-trap. Lastly, we consider the following claim:
	
	
	\begin{clm}
		\label{lem:lemma-6-revised:Claim5} Let $G[{W_0(+\infty)}\setminus {\mathit{Attr}_{0}(X)}]$ be a subgame of $G$ where $\mathit{Attr}_{0}(X) \subseteq {W_0(+\infty)}$ is the $\text{player}_0$-attractor in $G[{W_0(+\infty)}]$ of some set, $X \subseteq {W_0(+\infty)}$.
		Then, $\text{player}_0$ wins in $G[{W_0(+\infty)}\setminus {\mathit{Attr}_{0}(X)}]$ from all the states of this subgame w.r.t. $+\infty$.
	\end{clm}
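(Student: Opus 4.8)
The plan is to derive the claim from two facts: that $\text{player}_0$ already wins from every state of the larger subgame $G[W]$, where $W := W_0(+\infty)$, and that deleting a $\text{player}_0$-attractor cannot cost $\text{player}_0$ any of her moves from the states that remain. Write $A := \mathit{Attr}_{0}(X)$ (the attractor computed in $G[W]$) and $U := W \setminus A$.

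First I would record that $\text{player}_0$ wins from every state of $G[W]$ w.r.t. $+\infty$. Given $s \in W$, pick a strategy $g$ that wins from $s$ in $G$ for some initial credit $c_0$. No play from $s$ consistent with $g$ can ever leave $W$: if such a play reached a state $t \notin W$ with accumulated energy $e$, then $e \ge 0$ (as $g$ is winning), and by determinacy of energy parity games (Claim~\ref{clm:lemma-6-revised:Claim1}) $t \in W_1(+\infty)$, so $\text{player}_1$ has a strategy winning from $t$ for the initial credit $e$; letting $\text{player}_1$ switch to it upon reaching $t$ produces a play still consistent with $g$ that loses for $\text{player}_0$ — a contradiction. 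Hence $g$ never leaves $W$, and since $W$ is a $\text{player}_1$-trap, $G[W]$ is a legitimate subgame and the restriction of $g$ to it still wins from $s$ for the credit $c_0$.

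Next, the complement of a $\text{player}_0$-attractor is a $\text{player}_0$-trap, so $U = V[W] \setminus A$ is a $\text{player}_0$-trap in $G[W]$: every $G[W]$-successor of a $\text{player}_0$-state of $U$ lies in $U$, and every $\text{player}_1$-state of $U$ has a $G[W]$-successor in $U$. In particular, no $\text{player}_0$-state of $U$ loses moves in passing from $G[W]$ to $G[U]$ (combined with the $\text{player}_1$-trap property of $W$, this also shows $G[U]$ has no deadlock for $\text{player}_0$, hence is a legitimate subgame), whereas every move available to $\text{player}_1$ in $G[U]$ stays in $U$. Now fix $s \in U$ and take the strategy $g$ for $\text{player}_0$ in $G[W]$ obtained in the first step. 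From every $\text{player}_0$-state of $U$ the move prescribed by $g$ already lies in $U$, so every play from $s$ in $G[U]$ consistent with $g$ is also a play from $s$ in $G[W]$ consistent with $g$ — except that it may terminate at a state that is a deadlock for $\text{player}_1$ in $G[U]$ but not in $G[W]$, which is nonetheless a win for $\text{player}_0$. Infinite such plays keep the energy and parity objectives they had in $G[W]$ (the weights are inherited unchanged), so $g$ wins from $s$ in $G[U]$ w.r.t. $+\infty$ for the credit $c_0$; as $s \in U$ was arbitrary, $\text{player}_0$ wins from every state of $G[U]$.

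I expect the last paragraph to be the delicate step: one has to check that the restriction of $g$ to $G[U]$ is a bona fide strategy (total along every play it produces, never stuck at a $\text{player}_0$-deadlock) and that the finite plays arising only in $G[U]$ — which end because $\text{player}_1$ ran out of legal moves — are wins for $\text{player}_0$. This is precisely where the $\text{player}_0$-trap structure of $U$ is used, namely that $\text{player}_0$ keeps all her moves while $\text{player}_1$ may lose some, and it is the only place where determinacy (through the first paragraph) enters.
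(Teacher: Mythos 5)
Your proposal is correct and follows essentially the same route as the paper: the paper's (one-sentence) proof likewise observes that $W_0(+\infty)\setminus\mathit{Attr}_0(X)$ is a $\text{player}_0$-trap, so that playing a winning strategy $g$ inside the subgame is the same as playing $g$ in the larger game against a $\text{player}_1$ who chooses to stay in the trap. You merely fill in details the paper leaves implicit — that a winning strategy never exits $W_0(+\infty)$ (via determinacy), and that newly created $\text{player}_1$-deadlocks in the subgame are still wins for $\text{player}_0$ — both of which are correct and consistent with the paper's argument.
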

	
	Indeed, ${W_0(+\infty)}\setminus {\mathit{Attr}_{0}(X)}$ is a $\text{player}_0$-trap in $G[{W_0(+\infty)}]$ thus playing in $G[{W_0(+\infty)}\setminus {\mathit{Attr}_{0}(X)}]$ according to a strategy $g$ that wins for $\text{player}_0$ in $G$ from $W_0(+\infty)$, is the same as playing in $G$ from ${W_0(+\infty)}\setminus {\mathit{Attr}_{0}(X)}$ according to $g$ while $\text{player}_1$ always chooses to stay in ${W_0(+\infty)}\setminus {\mathit{Attr}_{0}(X)}$.
	
	
	Relying on what we have established thus far, we can finally prove~\ref{lem:lemma-6-revised:Part1} and~\ref{lem:lemma-6-revised:Part2}, which clearly imply the correctness of  Lem.~\ref{lem:lemma-6-revised}.
	
	\begin{proof}[Proof of Lem.~\ref{lem:lemma-6-revised}]\hfill
	
	\noindent\textbf{Proof of~\ref{lem:lemma-6-revised:Part1}.} The proof is by induction on $n+d$. Note that if $d=1$, then for all $n >0$, a good-for-energy strategy (which exists by Claim~\ref{clm:lemma-6-revised:Claim2}) wins from $W_0(+\infty)$ w.r.t. $(n-1)K$ for $(n-1)K$, as required.
	This proves the base case where ${n}={d}={1}$. 
	Also, this allows us to assume that $d>1$ from now on.

	For the induction step, we distinguish between two cases. The case where the minimal priority is even (say, $0$) and the case where it is odd (say, $1$). We assume w.l.o.g. that $V=W_0(+\infty)$. This assumption can be made since, otherwise, $|W_0(+\infty)|<n$, and we can simply apply the induction hypothesis over the subgame induced by $W_0(+\infty)$ and obtain a strategy as required in $G$.\footnote{Notice that a strategy that wins for $\text{player}_{0}$ from ${V[W_0(+\infty)]} = {W_0(+\infty)}$ in the subgame $G[W_0(+\infty)]$ does the same in $G$ because $W_0(+\infty)$ is a $\text{player}_{1}$-trap.}
	
	
	\noindent{\bfseries Case 1: The minimal priority is 0.} Let $\gfeStrategy$ be a good-for-energy strategy in $G$, which exists due to Claim~\ref{clm:lemma-6-revised:Claim2}. Let $\Omega_0\subseteq W_0(+\infty)=V$ be the $\text{player}_0$-attractor of all $0$-priority states in $G$.
	Write $|\Omega_0|=k$ and note that $\Omega_0\neq \emptyset$ because $\Omega_0$ includes all $0$-priority states in $V$. Consider the subgame $G' = G[{W_0(+\infty)}\setminus {\Omega_{0}}]$, which has at most $d-1$ different priorities.
	By Claim~\ref{lem:lemma-6-revised:Claim5}, all the states of $G'$ win for $\text{player}_0$ in this subgame w.r.t. $+\infty$.
	Therefore, by the induction hypothesis, $\text{player}_0$ has a strategy $g'$ in $G'$ that wins from ${W_0(+\infty)}\setminus {\Omega_{0}}$ w.r.t. $c'=(d-1)(n-k-1)K$ for the initial credit $c_0'=(n-k-1)K$.
	We argue that the next strategy satisfies the requirements.
	
	\begin{strategy}\label{def:lem:lemma-6-revised:minPrioZeroStrategy} We define a strategy for $\text{player}_{0}$ in $G$ from $W_0(+\infty)$, as follows:
		
		\begin{enumerate}[label=\textbf{Phase~\ref{def:lem:lemma-6-revised:minPrioZeroStrategy}.\arabic*.},ref={Phase~\ref{def:lem:lemma-6-revised:minPrioZeroStrategy}.\arabic*},itemindent=*]
			\item \label{lem:lemma-6-revised:Stage1} Play according to $\gfeStrategy$ until the sum of the edges' weights traversed is at least $d(n-1)K-(n-1)K$. If you reached a state in $\Omega_0$, go to~\ref{lem:lemma-6-revised:Stage2}. Otherwise, go to~\ref{lem:lemma-6-revised:Stage3}.
			
			\item \label{lem:lemma-6-revised:Stage2} Play a strategy to reach a $0$-priority state, and go to~\ref{lem:lemma-6-revised:Stage1}.
			
			\item \label{lem:lemma-6-revised:Stage3} Play according to $g'$ as long as the play stays in $W_0(+\infty)\setminus\Omega_0$. If the play reaches $\Omega_0$, go to~\ref{lem:lemma-6-revised:Stage2}.
		\end{enumerate}
	\end{strategy}
	
	Note that for a play consistent with Strategy~\ref{def:lem:lemma-6-revised:minPrioZeroStrategy}, either of the following holds: (a) the play reaches a deadlock for $\text{player}_{1}$; (b) eventually, the play stays in~\ref{lem:lemma-6-revised:Stage1} forever; (c) eventually, the play stays in~\ref{lem:lemma-6-revised:Stage3} forever; (d) the play reaches~\ref{lem:lemma-6-revised:Stage2} infinitely many times. In all of these cases, $\text{player}_0$ wins the parity objective, either by definition (case a), by the choice of the strategies (cases b, c), or by visiting a $0$-priority state infinitely often (case d). Thus, it is left to show that $\text{player}_{0}$ also wins the energy objective.
	
	Consider a play $\sigma$ consistent with Strategy~\ref{def:lem:lemma-6-revised:minPrioZeroStrategy} and played w.r.t. the upper bound $c=d(n-1)K$ and for the initial credit $(n-1)K$. We shall prove that the energy level always remains non-negative along $\sigma$.
	Let $\sigma[k_0=0],\sigma[k_1],\sigma[k_2],\dots$ be the states along $\sigma$ in which Strategy~\ref{def:lem:lemma-6-revised:minPrioZeroStrategy} turns to~\ref{lem:lemma-6-revised:Stage1}. We prove the next properties by induction on $j$:
	\begin{enumerate} 
		\item \label{lem:lemma-6-revised:minPrioZeroStrategy:property1} The energy level never decreases below $0$ in the interval $\sigma[0\dots k_j]$.
		
		\item \label{lem:lemma-6-revised:minPrioZeroStrategy:property2} $\textsf{EL}_{c}(G, (n-1)K, \sigma[0\dots k_j])\geq (n-1)K$.
		
	\end{enumerate} 
	
	\noindent Note that both properties~\ref{lem:lemma-6-revised:minPrioZeroStrategy:property1} and~\ref{lem:lemma-6-revised:minPrioZeroStrategy:property2} trivially hold for ${k_{0}}$. For the induction step, consider the interval $\sigma[k_j\dots k_{j+1}]$. From $\sigma[k_j]$, $\text{player}_0$ plays according to $\gfeStrategy$. By the induction hypothesis over $j$, $\text{player}_0$ reaches the state $\sigma[k_j]$ while having at least $(n-1)K$ energy units, and, consequently, by Claim~\ref{clm:lemma-6-revised:Claim2}, the energy level remains non-negative until $\text{player}_0$ proceeds with $d(n-1)K$ energy units to either~\ref{lem:lemma-6-revised:Stage2} or~\ref{lem:lemma-6-revised:Stage3}. If $\text{player}_0$ goes to~\ref{lem:lemma-6-revised:Stage2}, as $|\Omega_0|=k$, she spends at most $(k-1)K$ energy units to reach a $0$-priority state and returns to~\ref{lem:lemma-6-revised:Stage1} at step $k_{j+1}$ with at least $d(n-1)K-(k-1)K$ energy units.
	Since $d>1$ and $n \geq k$, it follows that both properties~\ref{lem:lemma-6-revised:minPrioZeroStrategy:property1} and~\ref{lem:lemma-6-revised:minPrioZeroStrategy:property2} hold in this scenario.

	Otherwise, we have that $\text{player}_0$ goes to~\ref{lem:lemma-6-revised:Stage3}. Hence, ${W_0(+\infty)}\setminus {\Omega_{0}} \not = {\emptyset}$, ${|W_0(+\infty)|} = n > {{|\Omega_0|} = {k}}$, and $\text{player}_0$ plays according to the strategy $g'$ with an upper bound and an initial credit both equal to $d(n-1)K$.
	However, the induction hypothesis ensures that $g'$ wins for $\text{player}_0$ in the subgame $G'$ w.r.t. $c'$ for $c'_{0}$.
	Thus, it follows from Claim~\ref{lem:lemma-6-revised:Claim4} that the energy level never drops below $\min\{d(n-1)K-c',d(n-1)K-c_0'\}=(n-1)K+(d-1)kK$ as long as the play stays in ${{W_0(+\infty)}\setminus {\Omega_{0}}}$.
	Note that the play can leave ${{W_0(+\infty)}\setminus {\Omega_{0}}}$ only by traversing through an edge $e'$ that is chosen by $\text{player}_{1}$. 
	When that occurs, $\text{player}_0$ loses at most $K$ energy units (as $w(e') \geq -K$) and switches to~\ref{lem:lemma-6-revised:Stage2}. As in the previous scenario,
	$\text{player}_0$ spends in~\ref{lem:lemma-6-revised:Stage2} at most $(k-1)K$ energy units to reach a $0$-priority state and returns to~\ref{lem:lemma-6-revised:Stage1} at step $k_{j+1}$.
	Therefore, we have that $\textsf{EL}_c(G,(n-1)K,\sigma[0\dots k_{j+1}])\geq(n-1)K+(d-1)kK-K-(k-1)K\geq (n-1)K$. This implies that properties~\ref{lem:lemma-6-revised:minPrioZeroStrategy:property1} and~\ref{lem:lemma-6-revised:minPrioZeroStrategy:property2} hold in this scenario as well.
	
	Consequently, if Strategy~\ref{def:lem:lemma-6-revised:minPrioZeroStrategy} turns to~\ref{lem:lemma-6-revised:Stage1} infinitely many times, $\sigma$ wins the energy objective. Otherwise, there is some step $k_l$, such that:
	\begin{itemize}
		\item The strategy turns to~\ref{lem:lemma-6-revised:Stage1} for the last time in $\sigma[k_l]$.
		\item The energy level never drops below $0$ in $\sigma[0\dots k_l]$.
		\item The energy level of $\sigma[0\dots k_l]$ is at least $(n-1)K$. 
	\end{itemize}
	

	\noindent From $\sigma[k_l]$, $\text{player}_0$ plays according to $\gfeStrategy$ in $W_0(+\infty)$. By Claim~\ref{clm:lemma-6-revised:Claim2}, as long as $\text{player}_0$ plays according to $\gfeStrategy$, the energy level remains non-negative.
	Thus, if that lasts forever, the energy objective is achieved. Otherwise, there is some step $l' > k_{l}$ in which the strategy turns to~\ref{lem:lemma-6-revised:Stage3} with the initial credit ${\textsf{EL}_{c}(G, (n-1)K, \sigma[0\dots l'])} =  {d(n-1)K}$,
	and stays in this phase forever. Hence, from $\sigma[l'] \in {{W_0(+\infty)}\setminus {\Omega_0}}$, the play remains in $W_0(+\infty)\setminus \Omega_0$ and played according to $g'$.
	Consequently, it follows from induction hypothesis on $g'$ that the energy objective is achieved in this case as well.
	


	\noindent{\bfseries Case 2: The minimal priority is 1.} Let $D_{1} \subseteq V_{1}$ be the set of all states in $G$ which are deadlocks for $\text{player}_{1}$.
	
	First, consider the case where $D_{1} \not = \emptyset$. Let $\Omega^{D_{1}}_{0}$ be the $\text{player}_0$-attractor of $D_{1}$ in $G$.
	Note that $D_{1} \subseteq \Omega^{D_{1}}_{0} \not = \emptyset$, write $|\Omega^{D_{1}}_{0}| = k_{D_{1}}$, and
	consider the subgame $G'' = G[W_0(+\infty)\setminus \Omega^{D_{1}}_{0}]$ induced by ${V \setminus \Omega^{D_{1}}_{0}} = {W_0(+\infty)\setminus \Omega^{D_{1}}_{0}}$.
	By Claim~\ref{lem:lemma-6-revised:Claim5}, $\text{player}_0$ wins in $G''$ from all the states of this subgame w.r.t. $+\infty$.
	Hence, the induction hypothesis yields a strategy $g''$ for $\text{player}_0$ in $G''$ that wins from $W_0(+\infty)\setminus \Omega^{D_{1}}_{0}$
	w.r.t. $c''= d(n-k_{D_{1}}-1)K$ for the initial credit $c_0''=(n-k_{D_{1}}-1)K$. We claim that the next strategy satisfies the requirements.

	
	\begin{strategy}\label{def:lem:lemma-6-revised:minPrioOneFirstCaseStrategy}
		We define a strategy for $\text{player}_0$ in $G$ from $W_0(+\infty)$, as follows:
		
		\begin{enumerate}[label=\textbf{Phase~\ref{def:lem:lemma-6-revised:minPrioOneFirstCaseStrategy}.\arabic*.},ref={Phase~\ref{def:lem:lemma-6-revised:minPrioOneFirstCaseStrategy}.\arabic*},itemindent=*]
			\item \label{lem:lemma-6-revised:Case2:Stage-a} If the play is in $\Omega^{D_{1}}_{0}$, go to~\ref{lem:lemma-6-revised:Case2:Stage-b}. Otherwise, go to~\ref{lem:lemma-6-revised:Case2:Stage-c}.
			
			\item \label{lem:lemma-6-revised:Case2:Stage-b} Play a strategy to reach a deadlock for $\text{player}_{1}$ (i.e., a state in $D_{1}$).
			
			\item \label{lem:lemma-6-revised:Case2:Stage-c} Play according to $g''$ as long as the play stays in $W_0(+\infty)\setminus \Omega^{D_{1}}_{0}$. If the play reaches $\Omega^{D_{1}}_{0}$, go to~\ref{lem:lemma-6-revised:Case2:Stage-b}.
		\end{enumerate}
	\end{strategy}
	
	Consider a play $\sigma$ consistent with Strategy~\ref{def:lem:lemma-6-revised:minPrioOneFirstCaseStrategy} and played w.r.t. $c=d(n-1)K$ and for the initial credit $(n-1)K$.
	Then, $\sigma$ either (1) stays in~\ref{lem:lemma-6-revised:Case2:Stage-c} forever, or (2) eventually reaches~\ref{lem:lemma-6-revised:Case2:Stage-b} and subsequently ends in a deadlock for $\text{player}_{1}$.
	In case (1), $\sigma$ is infinite and consistent with $g''$, and as a result, it wins for $\text{player}_0$ in $G$. In case (2), $\sigma$ is finite and wins the parity objective by definition.
	Moreover, we argue that $\sigma$ wins the energy objective in case (2) as well. If $\sigma$ starts from $\Omega^{D_{1}}_{0}$, then, as $|\Omega^{D_{1}}_{0}|= k_{D_{1}}$, $\text{player}_0$ spends at most $(k_{D_{1}}-1)K$ energy units in~\ref{lem:lemma-6-revised:Case2:Stage-b}
	to enforce reaching a deadlock state for $\text{player}_1$. Thus, since $n \geq k_{D_{1}}$, $\sigma$ wins the energy objective.
	Otherwise, $\sigma$ starts from $W_0(+\infty)\setminus \Omega^{D_{1}}_{0}$.
	As long as $\sigma$ stays in $W_0(+\infty)\setminus \Omega^{D_{1}}_{0}$, $\text{player}_0$ plays according to $g''$ (\ref{lem:lemma-6-revised:Case2:Stage-c}) with the upper bound $c$ and the initial credit $(n-1)K$.
	However, recall that the induction hypothesis ensures that $g''$ wins w.r.t. $c''$ for $c_0''$. Thus, it follows from Claim~\ref{lem:lemma-6-revised:Claim4} that
	the energy level during~\ref{lem:lemma-6-revised:Case2:Stage-c} never drops below $\min\{c-c'',(n-1)K-c_0''\}= \min\{dk_{D_{1}}K,k_{D_{1}}K\} = k_{D_{1}}K$. 
	As $\sigma$ leaves $W_0(+\infty)\setminus \Omega^{D_{1}}_{0}$ by traversing through an edge that costs at most $K$ energy units,
	$\text{player}_0$ reaches $\Omega^{D_{1}}_{0}$ with an initial credit at least ${({k_{D_{1}}}-{1})}K$.
	This initial credit is sufficient for winning in~\ref{lem:lemma-6-revised:Case2:Stage-b}.

	Second, consider the remaining case where $D_{1} = \emptyset$.
	Let ${\Omega_1}$ be the $\text{player}_1$-attractor of all $1$-priority states in $G$, let $G'=G[{{W_0(+\infty)}\setminus {\Omega_1}}]$ be the subgame induced by $W_0(+\infty)\setminus \Omega_1$, and let $W'$ be the winning region of $\text{player}_{0}$ w.r.t. $+\infty$ in $G'$, i.e., $W' = W_0^{G'}(+\infty)$.
	We claim that $W' \neq \emptyset$.
	Suppose, towards contradiction, that this claim is false. Then, it follows from Claim~\ref{clm:lemma-6-revised:Claim1} that $\text{player}_1$ has a strategy $h'$ in $G'$ that wins for him from all the states of this subgame w.r.t. $+\infty$.
	
	
	\begin{strategy}\label{def:lem:lemma-6-revised:Case2:CounterStrategy}
	Consider the following strategy for $\text{player}_1$ in $G$ from $W_0(+\infty)$:
	
	\begin{enumerate}[label=\textbf{Phase~\ref{def:lem:lemma-6-revised:Case2:CounterStrategy}.\arabic*.},ref={Phase~\ref{def:lem:lemma-6-revised:Case2:CounterStrategy}.\arabic*},itemindent=*]
		\item \label{lem:lemma-6-revised:Case2:CounterStrategy:Stage-0} If the play is in $\Omega_1$, go to~\ref{lem:lemma-6-revised:Case2:CounterStrategy:Stage-1}. Otherwise, go to~\ref{lem:lemma-6-revised:Case2:CounterStrategy:Stage-2}.
		\item \label{lem:lemma-6-revised:Case2:CounterStrategy:Stage-1} If the current state is a $1$-priority state, choose any successor; otherwise, play a strategy to reach a $1$-priority state. Go to~\ref{lem:lemma-6-revised:Case2:CounterStrategy:Stage-0}.
		\item \label{lem:lemma-6-revised:Case2:CounterStrategy:Stage-2} As long as the play stays in ${{W_0(+\infty)}\setminus {\Omega_1}}$, play according to $h'$. If the play reaches $\Omega_1$, go to~\ref{lem:lemma-6-revised:Case2:CounterStrategy:Stage-1}.
	\end{enumerate}
	
	\end{strategy}
	
	Note that Strategy~\ref{def:lem:lemma-6-revised:Case2:CounterStrategy} is well-defined. That is, since $D_{1} = \emptyset$, there are no deadlock states in $V = W_0(+\infty)$, and consequently, there always exists a successor state that $\text{player}_{1}$ can choose in~\ref{lem:lemma-6-revised:Case2:CounterStrategy:Stage-1}.
	Every play $\sigma$ consistent with this strategy, either visits~\ref{lem:lemma-6-revised:Case2:CounterStrategy:Stage-1} infinitely often or eventually stays in~\ref{lem:lemma-6-revised:Case2:CounterStrategy:Stage-2}.
	In the former case, $\sigma$ visits $1$-priority states infinitely often and thus violates the parity objective, while in the latter case, $\sigma$ wins for $\text{player}_{1}$ due to the strategy $h'$. This contradicts that $\text{player}_{0}$ wins in $G$ from $W_0(+\infty)$.
	
	Let $|W'|=k$. Notice that the subgame $G[W']=G'[W']$ has at most $d-1$ different priorities, $\Omega_{1} \neq \emptyset$, and $k < n$.
	It is not difficult to see that $\text{player}_0$ wins in $G[W']$ from all the states of this subgame w.r.t. $+\infty$.
	As a result, the induction hypothesis yields a strategy $g^{W'}$ in $G[W']$ that wins for $\text{player}_0$ from $W'$ w.r.t. $(d-1)(k-1)K$ for the initial credit $(k-1)K$.
	Moreover, the facts that $W'$ is a $\text{player}_{1}$-trap in $G'$ and $W_0(+\infty)\setminus \Omega_1$ is a $\text{player}_{1}$-trap in $G$,
	imply that $W'$ is also a $\text{player}_{1}$-trap in $G$. Therefore, $g^{W'}$ is also a strategy that wins for $\text{player}_0$ from $W'$ in $G$.
	
	Let $\Omega_{0}^{W'} = \mathit{Attr}_{0}(W')$ be the $\text{player}_0$-attractor of $W'$ in $G$, and let ${|\Omega_{0}^{W'}|} = {k + m} $.
	Consider the subgame $H = G[{W_0(+\infty)} \setminus {\Omega_{0}^{W'}}]$. 
	By Claim~\ref{lem:lemma-6-revised:Claim5}, $\text{player}_0$ wins in $H$ from all the states of this subgame w.r.t. $+\infty$.
	Thus, the induction hypothesis yields a strategy $h$ in $H$ that wins for $\text{player}_0$ from ${W_0(+\infty)} \setminus {\Omega_{0}^{W'}}$
	w.r.t. $c^{H}=d(n-k-m-1)K$ for $c^{H}_{0} = (n-k-m-1)K$. We claim that the next strategy satisfies the requirements.

	\begin{strategy}\label{def:lem:lemma-6-revised:minPrioOneSecondCaseStrategy}
		We define a strategy for $\text{player}_0$ in $G$ from $W_0(+\infty)$, as follows:
		
		\begin{enumerate}[label=\textbf{Phase~\ref{def:lem:lemma-6-revised:minPrioOneSecondCaseStrategy}.\arabic*.},ref={Phase~\ref{def:lem:lemma-6-revised:minPrioOneSecondCaseStrategy}.\arabic*},itemindent=*]
			\item \label{lem:lemma-6-revised:Case2:Stage-A} As long as the play stays in ${W_0(+\infty)} \setminus {\Omega_{0}^{W'}}$, play according to $h$. If the play reaches $\Omega_{0}^{W'}$, go to~\ref{lem:lemma-6-revised:Case2:Stage-B}.
			
			\item \label{lem:lemma-6-revised:Case2:Stage-B} Play a strategy to reach $W'$, and then play according to $g^{W'}$.
			
		\end{enumerate}
	\end{strategy}

	%
	%
	%
	
	
	Let us show that Strategy~\ref{def:lem:lemma-6-revised:minPrioOneSecondCaseStrategy} wins w.r.t. $c=d(n-1)K$ for the initial credit $(n-1)K$.
	Consider a play $\sigma$ consistent with Strategy~\ref{def:lem:lemma-6-revised:minPrioOneSecondCaseStrategy}.
	If $\sigma$ stays in~\ref{lem:lemma-6-revised:Case2:Stage-A} forever, then it is consistent with $h$, and as a result, it wins by the induction hypothesis on $h$.
	Otherwise, $\sigma$ eventually reaches~\ref{lem:lemma-6-revised:Case2:Stage-B}.
	The induction hypothesis on $h$ and Claim~\ref{lem:lemma-6-revised:Claim4} imply that
	the energy level never drops below $\min\{ c-c^{H}, (n-1)K-c^{H}_{0} \}=(k+m)K$ as long as the play stays
	in ${W_0(+\infty)} \setminus {\Omega_{0}^{W'}}$ (i.e.,~\ref{lem:lemma-6-revised:Case2:Stage-A}).
	Hence, when Strategy~\ref{def:lem:lemma-6-revised:minPrioOneSecondCaseStrategy} turns to~\ref{lem:lemma-6-revised:Case2:Stage-B}, the energy level is at least $(k+m-1)K$.
	This holds as $\sigma$ leaves ${W_0(+\infty)} \setminus {\Omega_{0}^{W'}}$ by traversing
	through an edge that costs at most $K$ energy units. In~\ref{lem:lemma-6-revised:Case2:Stage-B}, $\text{player}_{0}$ spends at most $mK$ energy units 
	to reach $W'$ and subsequently starts playing according to $g^{W'}$ with an initial credit at least ${{(k+m-1)K} - {mK}} = {(k-1)K}$.   
	Therefore, the induction hypothesis on $g^{W'}$ guarantees that $\sigma$ wins as required.


	
	\noindent\textbf{Proof of~\ref{lem:lemma-6-revised:Part2}.} Assume that $\text{player}_0$ wins from a state $s \in V$ w.r.t. $+\infty$ for an initial credit of $c_0$. We show that $\text{player}_0$ also wins from $s$ w.r.t. $c = d(n-1)K$ for the initial credit $\min\{c_0,(n-1)K\}$.
	If $(n-1)K\leq c_0$, the claim follows from~\ref{lem:lemma-6-revised:Part1}. Otherwise, $c_0<(n-1)K$, and we set the following strategy that wins from $s$ w.r.t. $c$ for $c_{0}$.
	Let $g_{+\infty}$ be a strategy for $\text{player}_0$ that wins from $s$ w.r.t. $+\infty$ for the initial 
	credit $c_0$. Initially, $\text{player}_0$ plays according to $g_{+\infty}$ and keeps playing according to this strategy as long as $\textsf{EL}_{+\infty}(G,c_0,\sigma[0\dots t])<(n-1)K$, where $\sigma[0 \dots t]$ is the sequence of states traversed so far. If, at some point, $\textsf{EL}_{+\infty}(G,c_0,\sigma[0\dots t])\geq (n-1)K$, $\text{player}_0$ switches to a strategy that satisfies the requirements of~\ref{lem:lemma-6-revised:Part1}.
	Clearly, all the states traversed while playing according to $g_{+\infty}$ belong to $W_0(+\infty)$, and thus, if necessary, $\text{player}_0$ can always switch to a strategy that exists due to~\ref{lem:lemma-6-revised:Part1}.
	It is not difficult to see that the strategy we have described wins for $\text{player}_0$ as required. 
\end{proof}

\section{Extended Version of Sect.~\ref{sec:sufficientbound:WGStoParityEnergyGames}}\label{app:sufficient-bound-proof}

In this appendix, we present the full proof for the main result of Sect.~\ref{sec:sufficientbound}:

\begin{quote}
\textbf{Theorem.~\ref{Thm:a-sufficient-bound}.}
		Let $\WGS$ be a WGS, $N=|2^{\mathcal{V}}|$, and let $K$ 
		be the maximal transition weight in $G^w$, in absolute value. Take $\psi\in \sysFormulas$,
		a closed $\sysmu$ formula that matches $\varphi$, and let $m$ be its length
		and $d$ its alternation depth. Then, if the system wins from a state $s$ w.r.t. 
		$+\infty$ for an initial credit $c_0$, then it also wins from $s$  
		w.r.t. $(d+1)((N^2+N)m-1)K$ 
		for an initial credit 
		$\min\{c_0,((N^2+N)m-1)K\}$.
\end{quote}

\noindent We fix a WGS $\WGS$ and a closed $\sysmu$ formula $\psi\in \sysFormulas$ that matches the winning condition of $G^{w}$, $\varphi$, as in Thm.~\ref{Thm:a-sufficient-bound}. Recall that the reduction to energy parity games, outlined in Sect.~\ref{sec:sufficientbound:WGStoParityEnergyGames}, involved the construction of several game graphs.
Throughout this appendix, whenever we define a game graph $H$ (i.e., component (1) in Def.~\ref{def:energyParityGame}), the terms $V(H)$ and $E(H)$ denote the set of states and edges of $H$, respectively.

Take a finite upper bound $c \in \mathbb{N}$ and construct the (symbolic) GS $G^{*} = \langle \mathcal{V}^{*}, \mathcal{X},
\mathcal{Y}^{*}, \rho^{e},\rho^{s*}, \linebreak\varphi \rangle$ from $G^{w}$ and $c$, as defined in Def.~\ref{def:naiveReduction}.
We transform $G^{*}$ into an (explicit, bipartite) game graph $G_c$ by adding intermediate states to distinguish between steps performed by the environment and the system players.

\begin{defi}[The game graph $G_c$]\label{def:concreteGstar}
	Let $G_c=(V=V_0 \cup V_1,E)$ where
	
	\begin{itemize}
		\item $V_0 = 2^\mathcal{V}\times2^\mathcal{X}\times\{0,\dots,c\}$ and $V_1 = 2^\mathcal{V}\times\{0,\dots,c\}$.
		
		\item For $(s,c_1)\in V_1$ and $(s,u,c_1)\in V_0$, $((s,c_1),(s,u,c_1))\in E$ if $(s,p(u))\models \rho^e$. 
		
		\item For $(s,u,c_1)\in V_0$ and $(t,c_2)\in V_1$, $((s,u,c_1),(t,c_2))\in E$ if $u=t|_\mathcal{X}$, $(s,p(t))\models \rho^s$, and $\min\{c,c_1 + w^s(s,p(t))\} \geq c_2$.
	\end{itemize}  
	
\end{defi}

The game graph $G_c$, defined in Def.~\ref{def:concreteGstar}, simulates the GS $G^*$. That is, edges of the form $((s,c_1),(s,u,c_1))$, chosen by $\text{player}_{1}$,
correspond to environment's transitions, $((s,c_1),p(u))\models \rho^e$, while edges of the form $((s,u,c_1),(t,c_2))$, chosen by $\text{player}_{0}$,
correspond to system's transitions, $((s, c_1), p(t, c_2))\models \rho^{s*}$ where $u=t|_\mathcal{X}$.\footnote{Note that, throughout this appendix, we use the state $(s,c_{0}) \in 2^\mathcal V\times\{0,\dots,c\}$ in $G_{c}$ interchangeably with the corresponding state in $G^{*}$, $(s,c_0)\in 2^{\mathcal{V}^{*}}$.}
Thus, each play in $G^{*}$ corresponds to a play in $G_c$ during which the players take steps in an alternating manner, and vice versa.

In order to interpret the closed $\sysmu$ formula $\psi$ over the graph $G_c$, rather than $G^{*}$,
we \emph{split} the controllable predecessor operator $\circlediamond$ 
by replacing every sub-formula of the form $\circlediamond \beta$ in $\psi$ with
 $\square \bigdiamond \beta$.
The symbols $\square$ and 
$\bigdiamond$ 
denote the \emph{classical $\mu$-calculus predecessor operators}~\cite{2001automata,Kozen}. Their semantics
is defined w.r.t. a graph $H=(V,E)$ and a valuation ${\mathcal{E}} : {{\mathit{Var}}\rightarrow {(V \rightarrow \{0,1\})}}$, as follows:
\begin{itemize}
	
	\item $\llbracket \square\beta \rrbracket^H_\mathcal E=\{v\in V \mid \forall u\in V((v,u)\in E \Rightarrow u\in \llbracket \beta \rrbracket^H_\mathcal E) \}$.
	
	\item $\llbracket \bigdiamond\beta \rrbracket^H_\mathcal E=\{v\in V \mid \exists u\in V((v,u)\in E \wedge u\in \llbracket \beta \rrbracket^H_\mathcal E) \}$.
	
\end{itemize}  
For simplicity, although we obtain a formula with a different syntax, we use $\psi$ to denote the translated formula as well. 
We argue that the graph $G_c$, like the GS $G^*$, simulates the WGS $G^w$. The next lemma formally captures this claim.

\begin{lem}
	\label{lem:G^w-equiv-Gc}
	The system wins in $G^w$ from $s \in 2^{\mathcal V}$ w.r.t. $c$ for an initial credit $c_0 \leq c$, if and only if $(s,c_0)\in \llbracket \psi \rrbracket^{G_c}$. 
\end{lem}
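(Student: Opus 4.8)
\textbf{Proof plan for Lem.~\ref{lem:G^w-equiv-Gc}.}
The plan is to chain together two facts that are either already established or immediate from the constructions. First, Thm.~\ref{thm:reductionCorrectness} tells us that the system wins in $G^w$ from $s$ w.r.t.\ $c$ for an initial credit $c_0 \leq c$ if and only if the system wins from $(s,c_0) \in 2^{\mathcal{V}^*}$ in the GS $G^*$ built by Def.~\ref{def:naiveReduction}. Second, the de~Alfaro--Henzinger--Majumdar translation~\cite{AlfaroHM01} guarantees that, since $\psi \in \sysFormulas$ matches $\varphi$, for \emph{any} GS $H$ with winning condition $\varphi$ we have $W_{\sys}^{H} = \llbracket \psi \rrbracket^{H}$; applying this with $H = G^*$ gives that the system wins from $(s,c_0)$ in $G^*$ iff $(s,c_0) \in \llbracket \psi \rrbracket^{G^*}$. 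So it remains to connect $\llbracket \psi \rrbracket^{G^*}$ with $\llbracket \psi \rrbracket^{G_c}$ under the splitting of $\circlediamond$ into $\square\bigdiamond$.

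Concretely, I would first make explicit the bijective correspondence between plays of $G^*$ and \emph{alternating} plays of $G_c$: a transition $(s,c_1)\models \rho^e$ followed by an output choice $(t,c_2)$ satisfying $\rho^{s*}$ in $G^*$ decomposes as the pair of $G_c$-edges $((s,c_1),(s,u,c_1))$ with $u = t|_{\mathcal{X}}$ and $((s,u,c_1),(t,c_2))$, exactly as spelled out after Def.~\ref{def:concreteGstar}. This correspondence is weight-compatible and preserves the winning condition $\varphi$ (which only refers to the $\mathcal V$-projection), so $W_{\sys}^{G^*}$ (lifted through the bijection) equals $W_0^{G_c}$ restricted to $V_1$-states. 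Then I would prove, by a routine structural induction on the closed $\sysmu$ formula, that for every sub-formula $\beta$, every valuation pair $(\mathcal E_{G^*}, \mathcal E_{G_c})$ that agrees under the state correspondence, and every $V_1$-state $(s,c_0)$ of $G_c$: $(s,c_0)\in \llbracket \beta \rrbracket^{G^*}_{\mathcal E_{G^*}}$ iff $(s,c_0)\in \llbracket \beta \rrbracket^{G_c}_{\mathcal E_{G_c}}$. The only non-trivial inductive case is the modality: $\llbracket \circlediamond\beta \rrbracket^{G^*} = \llbracket \square\bigdiamond\beta \rrbracket^{G_c}$ on $V_1$-states, which unfolds to ``for all $u$ with $((s,c_0),(s,u,c_0))\in E$ there exists $(t,c_2)$ with $((s,u,c_0),(t,c_2))\in E$ and $(t,c_2)\in\llbracket\beta\rrbracket^{G_c}$'' --- precisely the $\rho^e$/$\rho^{s*}$ condition defining $\llbracket \circlediamond\beta\rrbracket^{G^*}$; the fixed-point cases go through because both semantics compute the same iteration on lattices linked by the correspondence. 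Since $\square\bigdiamond\beta$ is only ever evaluated at $V_1$-states reachable inside $\psi$ and the outermost formula is closed, the induction closes.

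The main obstacle, such as it is, is bookkeeping rather than conceptual: one must be careful that after splitting $\circlediamond \mapsto \square\bigdiamond$, the relational variables $X$ in $\psi$ are still evaluated at $V_1$-states of $G_c$ (the intermediate $V_0$-states never appear as free-variable positions), so the ``agrees under the correspondence'' hypothesis on valuations is maintained through every $\mu/\nu$-unfolding. Once that invariant is pinned down, combining the structural-induction equivalence $\llbracket\psi\rrbracket^{G^*} = \llbracket\psi\rrbracket^{G_c}$ (on $V_1$-states) with Thm.~\ref{thm:reductionCorrectness} and the matching property of $\psi$ yields the claimed biconditional directly.
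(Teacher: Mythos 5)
Your proposal is correct and follows essentially the same route as the paper's proof: reduce via Thm.~\ref{thm:reductionCorrectness} and the matching property of $\psi$ to the claim $(s,c_0)\in\llbracket\psi\rrbracket^{G^*}\Leftrightarrow(s,c_0)\in\llbracket\psi\rrbracket^{G_c}$, then establish that claim by structural induction over sub-formulas with a pair of valuations agreeing under the state correspondence (the paper's notion of $\mathcal E'$ \emph{extending} $\mathcal E$), where the only substantive case is unfolding $\circlediamond\beta$ into $\square\bigdiamond\beta$ against the $\rho^e$/$\rho^{s*}$ edge structure of $G_c$ and the fixed points are handled by a nested induction on approximants. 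The preliminary play-level bijection you mention is not needed (the paper argues purely at the level of formula semantics), but it is harmless.
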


\begin{proof}
	Relying on Thm.~\ref{thm:reductionCorrectness}, it is sufficient to show that ${(s,c_0)\in \llbracket \psi \rrbracket ^{G^*}} \Leftrightarrow {(s,c_0)\in \llbracket \psi \rrbracket ^{G_c}}$.
	This claim is implied by the following generalized statement.
	We say that a valuation $\mathcal{E}'$ over $G_c$ \emph{extends} a valuation $\mathcal{E}$ over $G^*$ if for every relational variable, $X \in \mathit{Var}$, and every state, $(s,c_0)\in 2^\mathcal V\times\{0,\dots,c\}$, it holds that $(s,c_0)\in \mathcal{E}(X) \Leftrightarrow (s,c_{0})\in\mathcal{E}'(X)$. 
	
	\begin{clm}
	For $\beta\in \sysFormulas$, $(s,c_{0})\in 2^\mathcal V\times\{0,\dots,c\}$, a valuation $\mathcal{E}$ over $G^*$, and a valuation $\mathcal{E}'$ over $G_c$ that extends $\mathcal{E}$, $(s,c_{0})\in \llbracket \beta \rrbracket ^{G^*}_\mathcal{E} \Leftrightarrow  (s,c_{0})\in \llbracket \beta \rrbracket^{G_c}_{\mathcal{E}'}$. 
	\end{clm} 
	Before proving the above claim, we clarify that $\beta$ is used here to denote two similar yet slightly different formulae. That is, in $\llbracket \beta \rrbracket^{G_c}_{\mathcal{E}'}$, the operators $\square\bigdiamond$ replace every occurrence of $\circlediamond$ in $\llbracket \beta \rrbracket ^{G^*}_\mathcal{E}$. The statement is proved by structural induction on $\beta$, where the cases $\beta=v,\neg v, X,\beta_1\wedge \beta_2, \beta_1\vee \beta_2$ are simple. Thus, we focus on the cases where $\beta=\circlediamond\beta'$ and $\beta=\eta X(\beta')$ for $\eta\in\{\mu,\nu\}$.
	\begin{description}
		\item[$\beta=\circlediamond\beta'$]  
		We have that
		\begin{align*}
		&(s,c_{0})\in \llbracket \circlediamond \beta'\rrbracket^{G^*}_{\mathcal{E}}
    \Leftrightarrow\text{(by Def.~\ref{def:naiveReduction})}\\
    &{\forall t_\mathcal X\in 2^\mathcal X} \big[{ (s,p(t_\mathcal X))\models \rho^e} \Rightarrow 
    {\exists t_\mathcal Y\in 2^\mathcal Y \exists c_{1}\leq c :
      [ ((s,p(t=t_\mathcal X\cup t_\mathcal Y))\models \rho^s)} \wedge \\
    &{(c_{0}+w^s(s,p(t)) \geq c_{1})} \wedge {(t,c_{1})\in \llbracket \beta'\rrbracket ^{G^*}_{\mathcal E}  ]}\big] \Leftrightarrow\text{(by the induction hypothesis)}\\
    &{\forall t_\mathcal X\in 2^\mathcal X} \big[{ (s,p(t_\mathcal X))\models \rho^e} \Rightarrow 
    {\exists t_\mathcal Y\in 2^\mathcal Y \exists c_{1}\leq c } : 
    [ ((s,p(t=t_\mathcal X\cup t_\mathcal Y))\models \rho^s) \wedge\\
    &{(c_{0}+w^s(s,p(t)) \geq c_{1})} \wedge {(t,c_{1})\in \llbracket \beta'\rrbracket ^{G_c}_{\mathcal E'} ]}\big] \Leftrightarrow\text{(by Def.~\ref{def:concreteGstar})}\\
    &{\forall t_\mathcal X\in 2^\mathcal X} \big[{ (s,p(t_\mathcal X))\models \rho^e} \Rightarrow
    {\exists t_\mathcal Y\in 2^\mathcal Y \exists c_{1}\leq c } : [ {((s, t_\mathcal X, c_{0} ),(t=t_\mathcal X\cup t_\mathcal Y,c_{1}))} \in {E(G_c)} \wedge\\
    &{(t,c_{1})\in \llbracket \beta'\rrbracket ^{G_c}_{\mathcal E'}}  ]\big] \Leftrightarrow\text{(by Def.~\ref{def:concreteGstar})}\\
    &{\forall t_\mathcal X \in 2^\mathcal X} \big[ ((s,c_{0}),(s,t_\mathcal X,c_{0}))\in E(G_c)\Rightarrow (s,t_\mathcal X,c_{0})\in\llbracket \bigdiamond \beta'\rrbracket^{G_c}_{\mathcal E'}\big] \Leftrightarrow
    (s,c_{0})\in \llbracket \square\bigdiamond \beta'\rrbracket^{G_c}_{\mathcal{E}'}.&
		\end{align*}

		\item[$\beta=\eta X(\beta')$] We prove only for the case where $\eta=\mu$ as the case of the greatest fixed point (i.e., $\eta=\nu$) is dealt similarly. Write ${\llbracket \mu X(\beta')\rrbracket^{G^*}_{\mathcal E}}={\bigcup_{i=0}^{+\infty} S_i }$ and ${\llbracket \mu X(\beta')\rrbracket^{G_c}_{\mathcal E'}}={\bigcup_{i=0}^{+\infty} S'_i}$, as in Def.~\ref{def:prop_mu_calculus_semantics}. We show by induction that for every $i$, $(s,c_{0})\in S_i\Leftrightarrow (s,c_{0})\in S_i'$. This holds trivially for $i=0$ as $S_0=S_0'=\emptyset$. 
		
		For the induction step, consider the sets $S_{i+1}=\llbracket \beta'\rrbracket^{G^*}_{\mathcal E[X \mapsto S_i]}$ and $S'_{i+1}=\llbracket \beta'\rrbracket^{G_c}_{\mathcal E'[X \mapsto S_i']}$. By  applying the induction hypothesis over $S_i$, since $\mathcal E'$ extends $\mathcal E$, we conclude that $\mathcal E'[X \mapsto S'_i]$ extends $\mathcal E[X \mapsto S_i]$. Therefore, the structural induction hypothesis ensures that $(s, c_{0})\in S_{i+1} \Leftrightarrow (s,c_{0})\in S_{i+1}'$. Consequently, $(s,c_{0})\in \bigcup_{i=0}^{+\infty} S_i \Leftrightarrow(s,c_{0})\in \bigcup_{i=0}^{+\infty} S'_i$, as required.\qedhere 
	\end{description}
\end{proof}

\noindent Equipped with the (explicit) game graph $G_c$, which we defined in Def.~\ref{def:concreteGstar} and showed in Lem.~\ref{lem:G^w-equiv-Gc} to simulate the WGS $G^w$, we can now invoke the seminal reduction from model-checking of
$\mu$-calculus formulae to parity games~\cite{EmersonJ91}. Below, Def.~\ref{def:modelCheckingGame} applies this reduction to $G_c$ and the formula $\psi$.
We also refer the reader to~\cite[Chapter~10]{2001automata} on which the reduction we present is based\footnote{Note that we consider $\min$-even parity winning conditions while~\cite[Chapter~10]{2001automata} considers $\max$-even ones. Accordingly, the priority function defined in Def.~\ref{def:modelCheckingGame} is obtained from that in~\cite[Chapter~10]{2001automata} by inverting the order on the priorities.}.
Def.~\ref{def:modelCheckingGame} assumes that all relational variables in $\psi$ are quantified by fixed-point operators exactly once. If that is not the case, the variables can be renamed to fulfil this requirement, without affecting  the formula's semantics.

\begin{defi}[The parity game $G_c\times \psi$]\label{def:modelCheckingGame}
	Let $G_c\times \psi= \langle (V=V_0\cup V_1,E),\prio \rangle$ be the parity game (cf. Def.~\ref{def:energyParityGame}) defined by: 
	\begin{itemize}
		
		\item $V=\{(S,\psi') \mid S\in V(G_c)$, $\psi' \text{ is a sub-formula of }\psi\}$.
		
		\item  $V_0\subseteq V$ consists of all states 
		\begin{itemize}
			\item $(S,v)$ where $S \not\in \llbracket v \rrbracket ^{G_c}$.
			\item $(S,\neg v)$ where $S \in \llbracket v \rrbracket ^{G_c}$. 
			\item $(S, X)$ where $X\in \mathit{Var}$ is a relational variable.
			\item $(S, \eta X(\psi_1))$ for $\eta\in \{\mu,\nu\}$.
			\item $(S, \psi_1\vee \psi_2)$.
			\item $(S, \bigdiamond \psi_1)$.  
		\end{itemize}
		
		\item $V_1=V\setminus V_0$.
		
		\item A pair, $P\in V\times V$, belongs to $E$ if either of the following holds:
		\begin{itemize}
			
			
			
			\item $P=((S,\psi_1\wedge \psi_2),(S,\psi') )$ where $\psi'\in\{\psi_1,\psi_2\}$.
			
			\item $P=((S,\psi_1\vee \psi_2),(S,\psi') )$ where $\psi'\in\{\psi_1,\psi_2\}$.
			
			\item $P=((S,\eta X(\psi')),(S,\psi'))$ for $\eta\in \{\mu,\nu\}$.
			
			\item $P=((S,X),(S,\eta X(\psi')))$ where $\eta X(\psi')$ is the (unique) sub-formula of $\psi$ that binds $X$.
			
			\item $P=( (S,\bigdiamond \psi'), (T,\psi') )$ where $(S,T)\in E({G_c})$.
			
			\item $P=( (S,\square \psi'), (T,\psi') )$ where $(S,T)\in E({G_c})$.
			
		\end{itemize}
		\item For defining the priorities, take some even number $M\geq ad(\psi)$, where $ad(\psi)$ denotes the alternation depth of $\psi$.
		\begin{itemize}
			\item For $Q=(S,\psi'=\nu X(\xi))$, $\prio(Q)=M - 2 \lceil (ad(\psi')-1) / 2 \rceil$.
			
			\item For $Q=(S,\psi'=\mu X(\xi))$, $\prio(Q)=M-2\lfloor (ad(\psi')-1) / 2 \rfloor-1$.
			
			\item Otherwise, $\prio(Q)=M$.
			
		\end{itemize} 
		
	\end{itemize}
	
\end{defi}

\noindent By~\cite{EmersonJ91}, a strategy that wins for $\text{player}_0$ in the parity game $G_c\times \psi$, defined in Def.~\ref{def:modelCheckingGame}, corresponds to the value of the formula $\psi$ w.r.t. $G_c$:

\begin{cor}
	\label{cor:Gc-equiv-(Gc-times-psi)}
	For $S\in V(G_c)$, $S\in \llbracket \psi \rrbracket^{G_c}\Leftrightarrow$ $\text{player}_0$ has a winning strategy from $(S,\psi)$ in the parity game $G_c\times \psi$.
\end{cor}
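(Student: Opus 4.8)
The plan is to recognize Cor.~\ref{cor:Gc-equiv-(Gc-times-psi)} as an instance of the classical correspondence between $\mu$-calculus model checking and parity games~\cite{EmersonJ91}, specialized to the directed graph $G_c$ (with $\bigdiamond$ and $\square$ interpreted over $E(G_c)$, as in the definition of the classical predecessor operators recalled above) and to the formula $\psi$ obtained after replacing every $\circlediamond\beta$ by $\square\bigdiamond\beta$. Concretely, I would first check that Def.~\ref{def:modelCheckingGame} is a faithful copy of the textbook model-checking game, e.g.~the one in~\cite[Chapter~10]{2001automata}: the owner of a vertex $(S,\xi)$ is $\text{player}_0$ exactly when $\xi$ is a disjunction, a $\bigdiamond$-formula, a relational variable, a fixed-point formula, or a literal false at $S$, and $\text{player}_1$ otherwise; the edges unfold $\xi$ one level, passing to immediate subformulas, to $E(G_c)$-successors for the modalities, and back to the binding fixed-point formula for relational variables; a vertex $(S,v)$ with $S\models v$ (resp.\ $(S,\neg v)$ with $S\not\models v$) is a deadlock for $\text{player}_1$, so $\text{player}_0$ wins there, and dually; and the priority function assigns each fixed-point vertex a priority whose parity matches $\mu$ versus $\nu$ and whose magnitude reflects the alternation depth, with the $\min$-even convention obtained from the $\max$-even convention of~\cite{2001automata} by inverting the order on priorities (as the footnote to Def.~\ref{def:modelCheckingGame} notes).

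Given this identification, the corollary is exactly the soundness and completeness of the model-checking game, and in the write-up I would cite~\cite{EmersonJ91} for it; for self-containedness the standard argument runs as follows. For the direction $S\in\llbracket\psi\rrbracket^{G_c}\Rightarrow\text{player}_0$ wins from $(S,\psi)$: one annotates every vertex $(T,\mu X(\xi))$ with the rank of $T$ in the least-fixed-point approximation $\bigcup_i S_i$ of Def.~\ref{def:prop_mu_calculus_semantics}, and has $\text{player}_0$ always move so that this rank strictly decreases on the unfolding edge $(S,X)\to(S,\mu X(\xi))$ and never increases otherwise; a structural induction on $\psi$ (using the valuation-extension bookkeeping exactly as in the proof of Lem.~\ref{lem:G^w-equiv-Gc}) shows that such a strategy exists, and well-foundedness of the ranks forces every consistent infinite play to have even minimal priority, hence win for $\text{player}_0$. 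For the converse I would invoke determinacy of parity games~\cite{EmersonJ91,Zielonka98}: if $\text{player}_0$ has no winning strategy from $(S,\psi)$, then $\text{player}_1$ does, and the symmetric annotation argument on the greatest-fixed-point subformulas (equivalently, the model-checking game of $\neg\psi$ combined with the negation laws) yields $S\notin\llbracket\psi\rrbracket^{G_c}$. A minor point dispatched along the way is the standing assumption in Def.~\ref{def:modelCheckingGame} that each relational variable is bound exactly once, so that the edge $(S,X)\to(S,\eta X(\psi'))$ is well-defined; bound-variable renaming secures this without changing $\llbracket\psi\rrbracket^{G_c}$.

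The main obstacle is the fixed-point bookkeeping across nested alternating fixed points: defining the rank/signature annotation precisely and proving simultaneously that (i) it is respected by some $\text{player}_0$ strategy and (ii) it forces the minimal infinitely-recurring priority to be even. This is precisely the content of the correspondence of~\cite{EmersonJ91}, and since Def.~\ref{def:modelCheckingGame} is a verbatim instantiation of the construction in~\cite{EmersonJ91,2001automata}, in the actual proof I would invoke that result directly rather than reprove it, spelling out only the two deviations from the textbook setting, namely the priority order inversion and the $\circlediamond\mapsto\square\bigdiamond$ translation of $\psi$ together with Lem.~\ref{lem:G^w-equiv-Gc}.
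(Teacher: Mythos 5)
Your proposal matches the paper's treatment: the paper derives Cor.~\ref{cor:Gc-equiv-(Gc-times-psi)} directly from the classical model-checking-game correspondence of~\cite{EmersonJ91} (cf.~\cite[Chapter~10]{2001automata}), with Def.~\ref{def:modelCheckingGame} being exactly that construction adapted to the $\min$-even priority convention and applied to $G_c$ with the $\circlediamond\mapsto\square\bigdiamond$ translation. The rank/signature sketch you add is more detail than the paper supplies, but the core move---invoking the known result rather than reproving it---is identical.
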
 

The next step in our reduction is to add weights to the edges of $G_c\times \psi$, namely to transform $G_c\times \psi$ into an energy parity game (as defined in Def.~\ref{def:energyParityGame}). We are interested in edges whose source states are of the form $(S,\square\psi')$ or $(S,\bigdiamond\psi')$ because their states' first components change according to transitions in the WGS $G^{w}$ (see Def.~\ref{def:modelCheckingGame} and Def.~\ref{def:concreteGstar}).
As we formally define below in Def.~\ref{def:addingWgithFunction}, such edges inherit their weights from the corresponding transitions in $G^w$.
\begin{defi}[The weight function $w$]\label{def:addingWgithFunction}
	We define the weight function, ${w}:{{E(G_c\times \psi)}\into {\mathbb{Z}}}$, as follows. For edges of the form $e=((s,u,c_0),\square\psi'),((t,c_1),\psi')$ or $e=((s,u,c_0),\bigdiamond \psi'),\linebreak((t,c_1),\psi')$, $w(e)=w^s(s,p(t))$. For every other edge, ${e} \in {E(G_c\times \psi)}$, $w(e)=0$.   
\end{defi}

\begin{lem}
	\label{lem:(Gc-times-psi)-equiv-(Gc-times-psi,w)}
	Let $S=(s,c_0)$ or $S=(s,u,c_0)$ be a state of $G_c$. Then, $\text{player}_0$ has a winning strategy from $(S,\psi')$ in the parity game $G_c\times\psi$ iff $\text{player}_0$ has a winning strategy from $(S,\psi')$ in the energy parity game $\langle G_c\times\psi,w \rangle$ w.r.t. $c$ for the initial credit $c_0$. 
\end{lem}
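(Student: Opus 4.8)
The plan is to prove Lem.~\ref{lem:(Gc-times-psi)-equiv-(Gc-times-psi,w)} by showing that the energy component added to the states of $G_c\times\psi$ behaves as a faithful, never-negative accountant of the energy level. Concretely, I would first establish the key invariant: \emph{every} edge $e = (((s_1,\ldots,c_1),\xi_1),((s_2,\ldots,c_2),\xi_2))$ of $\langle G_c\times\psi, w\rangle$ satisfies $c_1 + w(e) \geq c_2$. This splits into two cases. For edges of the form described in Def.~\ref{def:addingWgithFunction} (source state $(S,\square\psi')$ or $(S,\bigdiamond\psi')$), by Def.~\ref{def:modelCheckingGame} such an edge goes to $(T,\psi')$ with $(S,T)\in E(G_c)$; unpacking Def.~\ref{def:concreteGstar}, $S = (s,u,c_1)$ and $T = (t,c_2)$ with $\min\{c, c_1 + w^s(s,p(t))\} \geq c_2$, hence $c_1 + w(e) = c_1 + w^s(s,p(t)) \geq c_2$. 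For all other edges, $w(e) = 0$ and the energy component is unchanged by the $G_c$-transitions involved (the first component $S$ of the $G_c$-state does not change, only the sub-formula does, or $S$ changes via a $\square/\bigdiamond$ edge with weight $0$ only when the $G_c$-state's credit component is preserved), so $c_1 = c_2$ and the inequality is trivial.

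Given this invariant, the proof of the two directions is then straightforward. For the ``only if'' direction, suppose $g_0$ is a winning strategy for $\text{player}_0$ from $(S,\psi')$ in the parity game $G_c\times\psi$; I claim the very same $g_0$ wins in the energy parity game $\langle G_c\times\psi, w\rangle$ from $(S,\psi')$ w.r.t.\ $c$ for the initial credit $c_0$. Any play $\sigma^* = ((s,\ldots,c_0),\psi'),((s_1,\ldots,c_1),\psi_1),\ldots$ consistent with $g_0$ already wins the parity objective (including reaching a $\text{player}_1$-deadlock if finite) since $g_0$ wins the underlying parity game. For the energy objective, a trivial induction on the prefix length using the invariant shows that the energy level $\textsf{EL}_c(G_c\times\psi, c_0, \sigma^*[0\ldots k])$ coincides with the credit component $c_k$ of the $k$-th state of $\sigma^*$ — here I use that the credit components lie in $[0,c]$ so the truncation to $c$ in $\textsf{EL}_c$ matches the $\min\{c,\cdot\}$ clamping built into $G_c$ — and in particular it is always $\geq 0$. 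Hence $\sigma^*$ wins the energy objective too, so $g_0$ is winning in the energy parity game.

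For the ``if'' direction the argument is even simpler: a strategy that wins in the energy parity game $\langle G_c\times\psi,w\rangle$ from $(S,\psi')$ w.r.t.\ $c$ for $c_0$ in particular produces only plays satisfying the parity objective, so it is a winning strategy in the parity game $G_c\times\psi$ from $(S,\psi')$. (One should note the starting credit: when $S=(s,c_0)$ or $S=(s,u,c_0)$, the relevant initial credit is exactly the credit component $c_0$ of $S$, which is how the lemma is phrased, so there is no mismatch.) Combining both directions gives the equivalence.

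The main obstacle is bookkeeping rather than conceptual: one must carefully verify that the invariant $c_1 + w(e)\geq c_2$ holds for \emph{all} edge types in Def.~\ref{def:modelCheckingGame} (the $\wedge,\vee,\mu,\nu,X$ edges all keep the $G_c$-state fixed, so are immediate; the $\square,\bigdiamond$ edges are the only substantive ones), and that the truncation semantics of $\textsf{EL}_c$ agrees with the clamping in Def.~\ref{def:concreteGstar} — in particular that once the credit component equals $c$ it stays consistent with any further accumulation being re-clamped to $c$. A second small point to get right is that a play in $G_c\times\psi$ that is finite must end in a $\text{player}_1$-deadlock (ensured by $g_0$ winning the parity game, whose winning condition already includes this), so the finite-play case of the energy parity winning condition is automatically handled. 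None of this requires new ideas, but it does require matching the definitions precisely, so I would state the invariant as a displayed claim and prove it first.
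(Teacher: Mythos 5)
Your proposal follows essentially the same route as the paper: the ``if'' direction is immediate because the energy parity objective subsumes the parity objective, and the ``only if'' direction reuses the same strategy and shows by induction along any consistent play that the accumulated energy dominates the credit component stored in the state, hence stays non-negative. One inaccuracy to fix: your induction claims that $\textsf{EL}_c$ \emph{coincides} with the credit component $c_k$, but this is false in general, because the edge relation of $G_c$ (Def.~\ref{def:concreteGstar}) only requires $\min\{c,c_1+w^s(s,p(t))\}\geq c_2$, so a play may move to a successor whose credit component strictly undershoots the clamped accumulated energy; in that case $\textsf{EL}_c$ exceeds $c_k$. The induction should therefore establish only the inequality $\textsf{EL}_c(\langle G_c\times\psi,w\rangle,c_0,\sigma^*[0\ldots k])\geq c_k$ (which is exactly what the paper proves), and since $c_k\in[0,c]$ this still yields non-negativity, so the argument is repaired without any new idea.
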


\begin{proof}
	
	The ``if" direction is trivial thus we focus on the ``only if" direction. We claim that if $g$ is a strategy that wins for $\text{player}_0$ from $(S,\psi')$ in $G_c\times \psi$, then it also wins for $\text{player}_0$ from $(S,\psi')$ in $\langle G_c\times\psi,w \rangle$ w.r.t. $c$ for the initial credit $c_0$. 
	
	Consider a play, $(S_0=S,\psi_0=\psi'),(S_1,\psi_1),(S_2,\psi_2),\dots$, consistent with $g$. Since $g$ wins from $(s,\psi')$, the play satisfies the parity objective and it is left to show that the energy objective is achieved as well. Each $S_i$ is of the form $S_i=(s_i,c_i)$ or $S_i=(s_i,u_i,c_i)$. To show that the energy level is always non-negative, we prove by induction on $i$ that ${\sf EL}_c(\langle G_c\times\psi,w \rangle,c_0,(S_0,\psi_0),\dots, (S_i,\psi_i))\geq c_i$. Note that the statement holds for $i=0$ and, for the induction step, assume that it holds for some $i\geq 0$. By the induction hypothesis, $c'' \geq c_{i}$ where $c'' := {\sf EL}_c(\langle G_c\times\psi,w \rangle,c_0,(S_0,\psi_0),\dots, (S_i,\psi_i))$. The interesting case is when $S_i=(s_i,u_i,c_i)$, $S_{i+1}=(s_{i+1},c_{i+1})$ and $\psi_i\in\{ \square \xi, \bigdiamond \xi \}$ for some $\xi$, since in all other cases, $w((S_i,\psi_i),(S_{i+1},\psi_{i+1}))=0$ and $c_{i+1}=c_i$.
	In this case, ${\sf EL}_c(\langle G_c\times\psi,w \rangle,c_0,(S_0,\psi_0),\dots, (S_{i+1},\psi_{i+1}))=\min\{c,c''+w^s(s_i,p(s_{i+1}))\}$.
	Since $(S_i,S_{i+1})$ is an edge of $G_c$, $\min\{c,c_i+w^s(s_i,p( s_{i+1} ))\} \geq c_{i+1}$.
	Therefore, as $c''\geq c_i$, $\min\{c,c''+w^s(s_i, p( s_{i+1} ) )\} \geq \min\{c,c_i+w^s(s_i,p(s_{i+1}) )\} \geq c_{i+1}$, as required. 
\end{proof}

Our next goal is to eliminate the energy component from the states of $G_c\times \psi$, so that the number of states will be independent of the choice of the upper bound.
Formally, for a state $S$ of $G_c$, let $\faktor{S}{c}$ denote its \emph{reduced version}, defined as follows:

\begin{align}
&{\faktor{S}{c}} :=
\begin{cases}
(s),~ & \text{if } S=(s,c_{0})\\
(s,u),~& \text{if } S=(s,u,c_{0})\\
\end{cases}&
\end{align}

Accordingly, we construct the \emph{reduced parity game} $\faktor{G_c\times \psi}{c}$ and its \emph{reduced weight function} $\faktor{w}{c}$. 

\begin{defi}[The reduced game $\langle \faktor{G_c\times \psi}{c},\faktor{w}{c} \rangle$]\label{def:reducedEnergyParityGame}
	Let $\langle \faktor{G_c\times \psi}{c},\faktor{w}{c} \rangle = \langle ((\faktor{V}{c}= \faktor{V_0}{c}\cup \faktor{V_1}{c},\faktor{E}{c}),\faktor{\prio}{c}), \faktor{w}{c} \rangle$ be the energy parity game defined by:
	\begin{itemize}
		\item $\faktor{V_0}{c}=\{(\faktor{S}{c},\psi') \mid (S,\psi')\in V_0(G_c\times \psi)\}$.
		
		\item $\faktor{V_1}{c}=\{(\faktor{S}{c},\psi') \mid (S,\psi')\in V_1(G_c\times \psi)\}$.
		
		\item $\faktor{E}{c}=\{((\faktor{S_1}{c},\psi_1),(\faktor{S_2}{c},\psi_2)) \mid ((S_1,\psi_1),(S_2,\psi_2))\in E(G_c\times\psi)\}$.
		
		\item $\faktor{\prio}{c}(\faktor{S}{c},\psi')=\prio(S,\psi')$. Note that $\faktor{\prio}{c}$ is well-defined as the priority $\prio(S,\psi')$ is solely determined by $\psi'$.
		
		\item If $e=((S_1,\psi_1),(S_2,\psi_2))\in E(G_c\times \psi)$, then $\faktor{w}{c}((\faktor{S_1}{c},\psi_1),(\faktor{S_2}{c},\psi_2))=w(e)$. Note that $\faktor{w}{c}$ is well-defined as the weight of the edge $e$ is independent of the energy components of $S_1$ and $S_2$.
	\end{itemize}
\end{defi}

\begin{lem}
	\label{lem:(Gc-times-psi,w)-equiv-(G_c-times-psi)/c}
	Let $c_{0} \leq c$. Then, $\text{player}_0$ has a wining strategy in $\langle G_c\times \psi,w \rangle$ from $((s_0,c_0),\psi)$ w.r.t. $c$ for the initial credit $c_0$ iff $\text{player}_0$ has a winning strategy in $\langle \faktor{G_c\times \psi}{c},\faktor{w}{c} \rangle$ from $((s_0),\psi)$ w.r.t. $c$ for the initial credit $c_0$.
\end{lem}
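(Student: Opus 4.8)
The plan is to prove Lem.~\ref{lem:(Gc-times-psi,w)-equiv-(G_c-times-psi)/c} by exhibiting strategy translations in both directions, exploiting the fact that the reduced game $\faktor{G_c\times\psi}{c}$ is a quotient of $G_c\times\psi$ under the map that forgets the energy component, and that this quotient map is "faithful" in the sense that edges, priorities, and weights are all preserved. The key structural observation to establish first is that the energy component of a state $(S,\psi')$ in $G_c\times\psi$ is completely determined by the history of the play: along any play starting from $((s_0,c_0),\psi)$ with initial credit $c_0$, the energy component $c_i$ occurring in the $i$-th state $(S_i,\psi_i)$ equals exactly ${\sf EL}_c(\langle G_c\times\psi,w\rangle, c_0, (S_0,\psi_0),\dots,(S_i,\psi_i))$; this is essentially the induction already carried out in the proof of Lem.~\ref{lem:(Gc-times-psi)-equiv-(Gc-times-psi,w)}, combined with the fact that in $G_c$ (Def.~\ref{def:concreteGstar}) the energy component of $(t,c_2)$ is \emph{forced} to equal $\min\{c, c_1+w^s(s,p(t))\}$ along a legal play that does not prematurely stop. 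Actually, because $G_c$ allows $c_2$ to be \emph{any} value $\leq \min\{c,c_1+w^s\}$, I should be slightly careful: I would restrict attention to the "tight" successor where $c_2 = \min\{c, c_1 + w^s(s,p(t))\}$, which is always available, and argue this is without loss of generality for $\text{player}_0$ by monotonicity (Prop.~\ref{prop:ECpreMonotone}-style reasoning, or directly: larger energy component is never worse for $\text{player}_0$).

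For the ``if'' direction (a winning strategy in the reduced game lifts to one in $\langle G_c\times\psi,w\rangle$), I would take a strategy $\bar g$ winning for $\text{player}_0$ from $((s_0),\psi)$ in $\faktor{G_c\times\psi}{c}$ w.r.t.\ $c$ for $c_0$, and define $g$ on a history $((S_0,\psi_0),\dots,(S_k,\psi_k))$ in $G_c\times\psi$ by applying $\bar g$ to the projected history $((\faktor{S_0}{c},\psi_0),\dots,(\faktor{S_k}{c},\psi_k))$; when $\bar g$ returns a reduced target state $(\faktor{T}{c},\psi')$, I lift it to the unique legal successor $(T,\psi')$ of $(S_k,\psi_k)$ whose reduced version is $(\faktor{T}{c},\psi')$ and whose energy component is the tight one. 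Any $g$-consistent play in $G_c\times\psi$ projects to a $\bar g$-consistent play in the reduced game; since the projection preserves priorities (Def.~\ref{def:reducedEnergyParityGame}, $\faktor{\prio}{c}(\faktor{S}{c},\psi')=\prio(S,\psi')$) and weights ($\faktor{w}{c}$ inherits $w$), and since the energy level along the play in $G_c\times\psi$ agrees with that along the projected play (by the determinacy-of-energy-component observation above, using initial credit $c_0$ on both sides), the play wins the parity objective and keeps the energy level non-negative in $G_c\times\psi$; hence $g$ wins.

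For the ``only if'' direction I would do the symmetric construction, which requires a little more care because a reduced history can in principle lift to several $G_c\times\psi$-histories that differ only in their energy components. However, by the observation that along any \emph{energy-legal} play from $((s_0,c_0),\psi)$ the energy components are uniquely determined by $c_0$ and the sequence of first/second state-components, the lift is in fact unique once we commit to the tight successors; so given a winning strategy $g$ in $\langle G_c\times\psi,w\rangle$ from $((s_0,c_0),\psi)$ w.r.t.\ $c$ for $c_0$, I define $\bar g$ on a reduced history by recovering the unique $G_c\times\psi$-history with initial energy $c_0$ and tight energy updates, applying $g$, and projecting the answer. Again the projection is a bijection on the relevant plays and preserves priorities and weights, so $\bar g$ wins in the reduced game w.r.t.\ $c$ for $c_0$.

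The main obstacle I expect is handling the non-determinism in the energy component of successors in $G_c$ (Def.~\ref{def:concreteGstar} permits $c_2 \le \min\{c,c_1+w^s\}$ rather than equality), and correspondingly the fact that a single reduced state $(\faktor{S}{c},\psi')$ collapses a whole column $\{(S',\psi') : \faktor{S'}{c} = \faktor{S}{c}\}$ of states with different energy values. The clean way around this is the monotonicity principle: from $\text{player}_0$'s perspective, being at $(s,c_1,\psi')$ with a larger $c_1$ is at least as good as with a smaller one (a winning strategy from the smaller-credit state also wins from the larger-credit state, by padding arguments analogous to Claim~\ref{clm:lemma-6-revised:Claim3}). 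Invoking this lets me always pick the tight/maximal successor when lifting, restoring a clean bijection between the relevant $g$-consistent plays on the two sides and making the priority/weight/energy bookkeeping go through. I would state this monotonicity explicitly as a small sub-claim before the two directions, and I would also note the degenerate edge cases (plays that hit a $\text{player}_1$-deadlock, i.e.\ a state $(S,\square\psi')$ with no $G_c$-successor, or that stop because no energy-legal successor exists) are handled identically under the projection since deadlocks and the set of legal moves are preserved.
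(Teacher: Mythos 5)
Your proposal is correct in outline and, for the ``if'' direction, coincides with the paper's proof: both lift the reduced-game strategy by always selecting the maximal (``tight'') energy component $\min\{c,c_{i-1}+w^s(s_{i-1},p(s_i))\}$ for the successor, and both must verify inductively that this tight successor is actually legal, which follows because the reduced-game strategy keeps the energy level non-negative. Where you genuinely diverge is the ``only if'' direction. The paper does \emph{not} normalize $g$ to tight successors; it shows that every reachable reduced history has a \emph{unique} lift consistent with $g$ itself, whatever energy components $g$ happens to choose. Uniqueness splits on who moves: at $\text{player}_1$ states the energy component cannot change (this is where the prefix-structure result, Lem.~\ref{lem:(Gc-times-psi,w)-equiv-(G_c-times-psi)/c:prefixStructure}, is needed to rule out $\bigdiamond$-moves at $\text{player}_1$ states), and at $\text{player}_0$ states the component is fixed by $g$. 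You instead recover the \emph{tight} lift and apply $g$ to it, which is only sound once $g$ has been replaced by a tight-successor strategy; your monotonicity sub-claim is exactly the load-bearing extra lemma that legitimizes this. The claim is true, but it is not really ``analogous to Claim~\ref{clm:lemma-6-revised:Claim3}'' (which compares energy levels under shifted bounds and credits); what you actually need is a simulation argument on $G_c\times\psi$: for $a\leq b$, the state $((T,b),\xi)$ simulates $((T,a),\xi)$ because $\text{player}_1$'s moves, priorities, weights, and deadlocks are independent of the energy component, while $\text{player}_0$'s $\bigdiamond$-moves available from the larger component form a superset of those from the smaller one. So your route costs one extra (true, provable) lemma that the paper avoids, while the paper's route pays for it with the slightly more delicate uniqueness induction.

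One precision caveat for the write-up: your opening assertion that along \emph{any} play from $((s_0,c_0),\psi)$ the energy component of the $i$-th state equals ${\sf EL}_c(\langle G_c\times\psi,w\rangle,c_0,\ldots)$ is false, precisely because of the slack permitted by Def.~\ref{def:concreteGstar}; the paper's Lem.~\ref{lem:(Gc-times-psi)-equiv-(Gc-times-psi,w)} only establishes ${\sf EL}_c(\cdots)\geq c_i$. You catch this yourself, but the final proof must use the equality only for tight plays (the ``if'' direction and the normalized ``only if'' direction), never for arbitrary $g$-consistent plays.
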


As a step towards proving Lem.~\ref{lem:(Gc-times-psi,w)-equiv-(G_c-times-psi)/c}, we first prove Lem.~\ref{lem:(Gc-times-psi,w)-equiv-(G_c-times-psi)/c:prefixStructure}.


\begin{lem}\label{lem:(Gc-times-psi,w)-equiv-(G_c-times-psi)/c:prefixStructure}
	For $i \geq 1$, let $({{(S_0,\psi_0)} = {((s_{0},c_0)}},\psi),\dots,(S_{i-1},\psi_{i-1}))$ be a path in $\langle G_c\times \psi,w \rangle$. Then,
	$S_{i-1}$ is of the form $S_{i-1}=(s_{i-1},u_{i-1},c_{i-1})$ iff $\psi_{i-1}=\bigdiamond\phi$.
\end{lem}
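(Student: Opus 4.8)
\textbf{Proof proposal for Lem.~\ref{lem:(Gc-times-psi,w)-equiv-(G_c-times-psi)/c:prefixStructure}.}

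The plan is to prove the claim by induction on $i$, exploiting the bipartite structure of $G_c$ (states in $V_1(G_c)$ have the form $(s,c)$ and states in $V_0(G_c)$ have the form $(s,u,c)$) together with the way Def.~\ref{def:modelCheckingGame} builds edges in $G_c\times\psi$. The key observation is that the first component $S_j$ of a vertex $(S_j,\psi_j)$ in $G_c\times\psi$ changes along an edge only when the edge leaves a vertex of the form $(S,\square\psi')$ or $(S,\bigdiamond\psi')$; every other edge type listed in Def.~\ref{def:modelCheckingGame} keeps the first component $S$ fixed and only rewrites the formula component. So I would track, along the path, which ``side'' of $G_c$ the current state $S_j$ lives on, and show it is completely determined by the outermost operator of $\psi_j$ together with the $\square$/$\bigdiamond$-alternation pattern imposed by the translation $\circlediamond\beta \mapsto \square\bigdiamond\beta$.

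First I would set up the base case $i=1$: here $S_0=(s_0,c_0)\in V_1(G_c)$ and $\psi_0=\psi$, which is a closed $\sysmu$ formula whose top-level structure, after the translation, is never $\bigdiamond\phi$ (the translation only introduces $\bigdiamond$ immediately under a $\square$), so the biconditional holds vacuously on both sides — $S_0$ is not of the $V_0$-form and $\psi_0$ is not of the form $\bigdiamond\phi$. For the inductive step, suppose the statement holds for paths of length $i$, and consider a path of length $i+1$ with last edge $((S_{i-1},\psi_{i-1}),(S_i,\psi_i))\in E(G_c\times\psi)$. I would do a case analysis on which clause of the edge definition in Def.~\ref{def:modelCheckingGame} produces this edge. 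In all clauses except the $\bigdiamond$- and $\square$-clauses, $S_i=S_{i-1}$ and $\psi_i$ is a proper subformula of $\psi_{i-1}$ (or the fixed-point formula binding a variable); using the structural form of the translated $\psi$ and the induction hypothesis on $S_{i-1}$ versus $\psi_{i-1}$, one reads off that $S_i$ is a $V_0$-state iff $\psi_i = \bigdiamond\phi$ — the crucial point being that in the translated formula a subformula of the shape $\bigdiamond\phi$ can only occur directly beneath a $\square$, and conversely. In the $\square$-clause, $(S_{i-1},S_i)\in E(G_c)$ with $S_{i-1}$ a $V_1$-state $(s,c)$ (since by IH $\psi_{i-1}=\square\phi$ forces $S_{i-1}$ to the $V_1$-side) and $\psi_i=\phi=\bigdiamond\phi'$; by Def.~\ref{def:concreteGstar}, an edge of $G_c$ out of a $V_1$-state lands in a $V_0$-state, so $S_i=(s,u,c)$, matching $\psi_i=\bigdiamond\phi'$. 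In the $\bigdiamond$-clause, symmetrically $S_{i-1}$ is a $V_0$-state and $\psi_{i-1}=\bigdiamond\phi$, an edge of $G_c$ out of a $V_0$-state lands in a $V_1$-state, and $\psi_i=\phi$ is not of the form $\bigdiamond(\cdot)$ (again by the shape of the translated formula), so both sides of the biconditional are false.

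The main obstacle I anticipate is making precise and justifying the structural invariant about the translated formula $\psi$: namely that every occurrence of a $\bigdiamond$-subformula sits immediately under a $\square$, every $\square$-subformula has a $\bigdiamond$-subformula immediately beneath it, and no other nesting of $\square$ or $\bigdiamond$ occurs. This follows directly from the fact that $\psi\in\sysFormulas$ (so the only modal operator in the original formula is $\circlediamond$) and from the specific rewrite $\circlediamond\beta\mapsto\square\bigdiamond\beta$ applied uniformly; I would state this as a short preliminary observation (a sublemma or an inline remark) proved by an easy induction on the original $\sysmu$ formula, and then the case analysis above goes through mechanically. The bipartiteness of $G_c$ — $V_1$-edges go to $V_0$ and $V_0$-edges go to $V_1$, immediate from Def.~\ref{def:concreteGstar} — supplies the other half of the argument and needs no real work.
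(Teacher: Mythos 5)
Your proof is correct and follows essentially the same route as the paper's: induction on $i$ with a case analysis on which clause of Def.~\ref{def:modelCheckingGame} generates the last edge, combined with the bipartiteness of $G_c$ and the fact that in the translated formula every $\bigdiamond$ sits immediately under a $\square$ and vice versa (a consequence of $\psi\in\sysFormulas$ and the rewrite $\circlediamond\beta\mapsto\square\bigdiamond\beta$). The only differences are presentational: you organize the inductive step as a direct case split on the edge type and promote the structural invariant of the translated formula to an explicit sublemma, whereas the paper argues the backward direction by eliminating the other edge clauses via contradiction and leaves that invariant implicit.
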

\begin{proof}[Proof of Lem.~\ref{lem:(Gc-times-psi,w)-equiv-(G_c-times-psi)/c:prefixStructure}]\hfill

	We prove the statement by induction on $i$. For $i=1$, it holds that
	$S_0=(s_{0},c_0)$, and, since $\psi_{0}=\psi \in \sysFormulas$, also $\psi_{0} \not =\bigdiamond\phi$. For the induction step, we show
	that the statement holds for $i > 1$ while assuming that it holds for every $1 \leq i_{0} < i$.
	If $\psi_{i-1}=\bigdiamond\phi$, as $\psi\in \sysFormulas$, by the construction, 
	$\psi_{i-2}=\square\bigdiamond\phi$. By the induction hypothesis, $S_{i-2}=(s_{i-2},c_{i-2})$ and hence, ${S_{i-1}}={(s_{i-1},u_{i-1},c_{i-1})}$. For the other direction, assume that ${S_{i-1}}={(s_{i-1},u_{i-1},c_{i-1})}$. Assume that the edge
	$\big((S_{i-2},\psi_{i-2}),(S_{i-1},\psi_{i-1})\big)$ conforms to either of the first four cases in Def.~\ref{def:modelCheckingGame}, and note that in all of these cases, $S_{i-2}=S_{i-1}$. Hence, $\psi_{i-2}$ has either of the following forms: $\phi_1\wedge \phi_2, \phi_1\vee \phi_2,\eta X(\phi),X$, where $\eta\in \{\mu,\nu\}$. 
	Then, by the induction hypothesis, $S_{i-2}=(s_{i-2},c_{i-2})=S_{i-1}$, in contradiction to the assumption. Now, assume that the edge $\big((S_{i-2},\psi_{i-2}),(S_{i-1},\psi_{i-1})\big)$ conforms to the fifth case in Def.~\ref{def:modelCheckingGame}. Then, $\psi_{i-2}=\bigdiamond \psi_{i-1}$ and, by the induction hypothesis, $S_{i-2}=(s_{i-2},u_{i-2},c_{i-2})$. By the construction, $S_{i-1}=(s_{i-1},c_{i-1})$, in contradiction to the assumption. Therefore, the edge $\big((S_{i-2},\psi_{i-2}),(S_{i-1},\psi_{i-1})\big)$ conforms to the last remaining case in Def.~\ref{def:modelCheckingGame} and $\psi_{i-2}=\square \psi_{i-1}$.
	Since $\psi\in \sysFormulas$, it follows that $\psi_{i-1}=\bigdiamond\phi$, as required.
\end{proof}

We now turn to prove Lem.~\ref{lem:(Gc-times-psi,w)-equiv-(G_c-times-psi)/c}.

\begin{proof}[Proof of Lem.~\ref{lem:(Gc-times-psi,w)-equiv-(G_c-times-psi)/c}]\hfill \begin{description}
		\item[``only if''] First, we prove the ``only if" statement. Assume that $g$ is a strategy that wins for $\text{player}_0$ in $\langle G_c\times \psi,w \rangle$ from $((s_0,c_0),\psi)$ w.r.t. $c$ for the initial credit $c_0$.
		We define a strategy for $\text{player}_0$ in $\langle \faktor{G_c\times \psi}{c},\faktor{w}{c} \rangle$, $\faktor{g}{c}$, as follows.
		For $i > 0$, take a sequence of states in $\faktor{G_c\times \psi}{c}$, $((T_0,\psi_0)=((s_{0}), \psi),\dots,(T_{i-1},\psi_{i-1}))$, that ends in a $\text{player}_0$ state,
		and each $T_j$ is either $T_j=(s_j)$ or $T_j=(s_j,u_j)$.
		If there are values $c_{1}, \ldots , c_{i-1} \in ([0,c])^{i-1}$ such that 
		the sequence, $((S_0,\psi_0)=((s_{0}, c_{0}), \psi),\dots,(S_{i-1},\psi_{i-1}))$ where for each $0<j\leq i-1$, $S_{j} = (T_{j}, c_{j})$,
		is a prefix of a play in $G_c\times \psi$, consistent with $g$, write 
		$g((S_0,\psi_0),\dots,(S_{i-1},\psi_{i-1}))=(S_{i},\psi_{i})$ and define $\faktor{g}{c}((T_0,\psi_0),\dots,(T_{i-1},\psi_{i-1}))=(\faktor{S_{i}}{c},\psi_{i})$.
		
		Let $((T_0,\psi_0)= ((s_{0}),\psi),\dots,(T_{i-1},\psi_{i-1}))$ be a prefix in $\langle \faktor{G_c\times \psi}{c},\faktor{w}{c} \rangle$, consistent with $\faktor{g}{c}$.
		To prove that $\faktor{g}{c}$ is indeed a well-defined, winning strategy, we argue by induction on $i\geq 1$:
		
		\begin{clm}
		 There are unique $c_{1}, \ldots , c_{i-1} \in ([0,c])^{i-1}$ such that $(((T_{0},c_{0}), \psi_{0})=((s_{0}, c_{0}), \psi),((T_{1}, c_{1}), \psi_{1}),\ldots,((T_{i-1}, c_{i-1}),\psi_{i-1}))$ is a prefix of a play in $\langle G_c\times \psi,w \rangle$, consistent with $g$.
		\end{clm}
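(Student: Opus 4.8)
The plan is to prove the claim by induction on $i \geq 1$, following the recursive definition of $\faktor{g}{c}$, and simultaneously verifying that $\faktor{g}{c}$ is well-defined and produces legal moves. The base case $i = 1$ is immediate: the prefix is just $((s_0), \psi)$, and the unique corresponding prefix in $\langle G_c\times\psi,w\rangle$ consistent with $g$ is $((s_0,c_0),\psi)$ (there is no $c_1,\dots,c_{i-1}$ to choose), which is trivially a length-one play prefix.

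For the induction step, assume the claim holds for $i$, and consider a prefix $((T_0,\psi_0)=((s_0),\psi),\dots,(T_i,\psi_i))$ in $\langle \faktor{G_c\times\psi}{c},\faktor{w}{c}\rangle$ consistent with $\faktor{g}{c}$. By the induction hypothesis there are unique $c_1,\dots,c_{i-1}$ such that $((( T_0, c_0),\psi_0),\dots,((T_{i-1},c_{i-1}),\psi_{i-1}))$ is a prefix of a play in $\langle G_c\times\psi,w\rangle$ consistent with $g$; I must produce a unique $c_i$ that extends this. I split on whether $(T_{i-1},\psi_{i-1})$ is a $\text{player}_0$-state or a $\text{player}_1$-state. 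If it belongs to $\text{player}_0$, then $\faktor{g}{c}$ prescribed the move to $(T_i,\psi_i)$, which by construction means $g$ prescribes $g((S_0,\psi_0),\dots,(S_{i-1},\psi_{i-1})) = (S_i,\psi_i)$ for some state $S_i$ of $G_c\times\psi$ with $\faktor{S_i}{c} = (T_i)$ or $(T_i,u)$; taking $c_i$ to be the energy component of $S_i$ gives the desired extension, and uniqueness follows because $g$ is a function. If $(T_{i-1},\psi_{i-1})$ belongs to $\text{player}_1$, I must instead show that for the one $\text{player}_1$-edge actually taken in the reduced game, there is exactly one lift with legal energy component. Here I use Lemma~\ref{lem:(Gc-times-psi,w)-equiv-(G_c-times-psi)/c:prefixStructure}: the form of $S_{i-1}$ (whether it carries a $u$-component) is forced by $\psi_{i-1}$, hence the form of $T_{i-1}$ is, and the edges of $G_c\times\psi$ leaving $(S_{i-1},\psi_{i-1})$ are in bijection (via forgetting/restoring the energy component according to Def.~\ref{def:concreteGstar} and Def.~\ref{def:modelCheckingGame}) with the edges of $\faktor{G_c\times\psi}{c}$ leaving $(T_{i-1},\psi_{i-1})$ — the only subtlety being edges of the form $((s,u,c_{i-1}),\bigdiamond\phi),((t,c_i),\phi)$, where the constraint $\min\{c,c_{i-1}+w^s(s,p(t))\}\geq c_i$ together with the energy level bookkeeping pins down $c_i$ uniquely. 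In all edge cases one checks $c_i$ lies in $[0,c]$, which is guaranteed since the reduced play is consistent with $\faktor{g}{c}$ and $g$ is energy-winning along the lifted prefix.

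Having established the claim, the ``only if'' direction of Lemma~\ref{lem:(Gc-times-psi,w)-equiv-(G_c-times-psi)/c} follows: any infinite play $\pi$ in $\langle \faktor{G_c\times\psi}{c},\faktor{w}{c}\rangle$ from $((s_0),\psi)$ consistent with $\faktor{g}{c}$ lifts, via the claim applied to all its finite prefixes and the uniqueness statement, to a unique infinite play $\pi^*$ in $\langle G_c\times\psi,w\rangle$ from $((s_0,c_0),\psi)$ consistent with $g$; since the priorities and weights of corresponding edges coincide (Def.~\ref{def:reducedEnergyParityGame}) and the energy levels of corresponding prefixes coincide (by the $c_i$'s being exactly the accumulated energy levels, as in the proof of Lemma~\ref{lem:(Gc-times-psi,w)-equiv-(G_c-times-psi)/c:prefixStructure}), $\pi^*$ satisfies the parity and energy objectives because $g$ wins w.r.t. $c$ for $c_0$, and therefore so does $\pi$. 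One must also note that $\faktor{g}{c}$ never gets stuck: whenever the reduced play reaches a $\text{player}_0$-state not yet a deadlock, the lifted play reaches a non-deadlock $\text{player}_0$-state (the lift of a deadlock is a deadlock by the edge bijection), so $g$ is defined there, hence $\faktor{g}{c}$ is too.

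The main obstacle I expect is the bookkeeping in the $\text{player}_1$ case: making precise the bijection between outgoing edges in the reduced and unreduced games and, in particular, verifying that the energy component $c_i$ is \emph{uniquely} determined — this is where Lemma~\ref{lem:(Gc-times-psi,w)-equiv-(G_c-times-psi)/c:prefixStructure} is essential, since without knowing the shape of $S_{i-1}$ from $\psi_{i-1}$ one could not rule out spurious lifts with different energy components. The ``if'' direction is the trivial one (project a winning strategy in the unreduced energy parity game down to the reduced game, noting energy levels and priorities are preserved), so I would dispatch it in a sentence.
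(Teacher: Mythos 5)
Your proof follows essentially the same route as the paper's: induction on the prefix length $i$, with the player$_1$ case resolved via Lem.~\ref{lem:(Gc-times-psi,w)-equiv-(G_c-times-psi)/c:prefixStructure} (the shape of $S_{i-1}$ is forced by $\psi_{i-1}$, and along player$_1$ edges the energy component is simply carried over, so the lift is unique) and the player$_0$ case resolved by the functionality of the given winning strategy $g$. One remark is misplaced, though not fatally: states of the form $(S,\bigdiamond\phi)$ are player$_0$ states, so edges $((s,u,c_{i-1}),\bigdiamond\phi)\to((t,c_i),\phi)$ never occur in your player$_1$ branch, and the inequality $\min\{c,c_{i-1}+w^s(s,p(t))\}\geq c_i$ alone would \emph{not} pin down $c_i$ (it admits every smaller value); uniqueness for those moves comes from $g$ being a function, which your player$_0$ branch already supplies.
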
 
		
		Note that the above claim is immediate for $i=1$. We prove that it holds for ${i} > {1}$ while assuming that it holds for every $1 \leq j < i$.
		By the induction hypothesis, there are unique $c_{1}, \ldots , c_{i-2} \in ([0,c])^{i-2}$
		such that $(((T_{0},c_{0}), \psi_{0})=((s_{0}, c_{0}), \psi),((T_{1}, c_{1}), \psi_{1}),\ldots,\linebreak ((T_{i-2}, c_{i-2}),\psi_{i-2}))$
		is a prefix consistent with $g$.
		(1) Consider the case where $(T_{i-2},\psi_{i-2})$ is a $\text{player}_{1}$ state. Since $((T_{i-2}, c_{i-2}),\psi_{i-2}))$ is also a $\text{player}_{1}$ state,
		by Def.~\ref{def:modelCheckingGame}, we have that either $\psi_{i-2}=\square \psi_{i-1}$ or ${\psi_{i-2}} = {\xi_{1} \wedge \xi_{2}}$.
		Therefore, as $\psi_{i-2} \not = \bigdiamond \psi_{i-1}$, by Lem.~\ref{lem:(Gc-times-psi,w)-equiv-(G_c-times-psi)/c:prefixStructure}, $(T_{i-2}, c_{i-2}) = (s_{i-2}, c_{i-2})$. Hence,
		by Def.~\ref{def:reducedEnergyParityGame} and Def.~\ref{def:modelCheckingGame}, 
		if $\psi_{i-2}=\square \psi_{i-1}$, then $T_{i-1}= (s_{i-2}, u_{i-1})$, and otherwise, $T_{i-1}=(s_{i-2})=T_{i-2}$.
		Moreover, in either case, $(S_{i-1}, \psi_{i-1}) = ((T_{i-1},c_{i-2}), \psi_{i-1})$ is the only successor of $((T_{i-2},c_{i-2}), \psi_{i-2})$ in $G_c\times \psi$ such that $(\faktor{S_{i-1}}{c}, \psi_{i-1}) = (T_{i-1},\psi_{i-1})$.
		Consequently, $c_{1},\ldots, c_{i-2}, c_{i-2}$ are unique values such that $(((s_{0}, c_{0}), \psi),((T_{1}, c_{1}), \psi_{1}),\ldots,((T_{i-2}, c_{i-2}),\psi_{i-2}),((T_{i-1}, c_{i-2}),\psi_{i-1}))$
		is a prefix of a play consistent with $g$.
		(2) Consider the case where $(T_{i-2},\psi_{i-2})$ is a $\text{player}_{0}$ state. Since
		$(((s_{0}, c_{0}), \psi),((T_{1}, c_{1}), \psi_{1}),\ldots,((T_{i-2}, c_{i-2}),\psi_{i-2}))$ is a prefix of a play consistent with $g$,
		there exists ${((T_{i-1}, c_{i-1}), \psi_{i-1})} \in {V(G_c\times \psi)}$ such that 
		$g(((s_{0}, c_{0}), \psi),((T_{1}, c_{1}), \psi_{1}),\ldots,\linebreak ((T_{i-2}, c_{i-2}),\psi_{i-2})) = {((T_{i-1}, c_{i-1}), \psi_{i-1})}$.
		As $c_{i-1}$ is uniquely determined by $g$, it follows that $c_{1},\ldots, c_{i-2}, c_{i-1}$ are unique values
		such that $(((s_{0}, c_{0}), \psi),((T_{1}, c_{1}), \psi_{1}),\ldots,\linebreak((T_{i-2}, c_{i-2}),\psi_{i-2}),((T_{i-1}, c_{i-1}),\psi_{i-1}))$ is consistent with $g$, as required.
		
		Let ${\faktor{\sigma}{c}} = {{{((T_0,\psi_0)}= {((s_{0}),\psi)}},(T_{1},\psi_{1}),\dots)}$ be a play in $\langle \faktor{G_c\times \psi}{c},\faktor{w}{c} \rangle$ consistent with $\faktor{g}{c}$.
		A corollary of the above claim is that $\faktor{\sigma}{c}$ corresponds to a unique play, ${\sigma} = {({{((T_{0},c_{0}), \psi_{0})}={((s_{0}, c_{0}), \psi)}},((T_{1}, c_{1}), \psi_{1}),\ldots)}$,
		in $\langle G_c\times \psi,w \rangle$, consistent with $g$. Since $g$ wins for $\text{player}_{0}$, the play, $\sigma$, wins for $\text{player}_{0}$. 
		By Def.~\ref{def:reducedEnergyParityGame}, for each $j\geq 0$, $\faktor{\prio}{c}(T_{j},\psi_{j})=\prio((T_{j}, c_{j}), \psi_{j})$
		and $\faktor{w}{c}((T_{j},\psi_{j}),(T_{j+1},\psi_{j+1}))=w(((T_{j},c_{j}),\psi_{j}),((T_{j+1},c_{j+1}),\psi_{j+1}))$.
		Therefore, the priorities and weights traversed along $\faktor{\sigma}{c}$ are the same as those in $\sigma$, hence
		$\faktor{\sigma}{c}$ wins for $\text{player}_{0}$ in $\langle \faktor{G_c\times \psi}{c},\faktor{w}{c} \rangle$.
		This implies that $\faktor{g}{c}$ is a strategy that wins for $\text{player}_{0}$ in $\langle \faktor{G_c\times \psi}{c},\faktor{w}{c} \rangle$ from $((s_{0}), \psi)$, as required.

		\item[``if''] Second, we prove the ``if" statement. Assume that $\faktor{g}{c}$ is a strategy that wins for $\text{player}_0$ in $\langle \faktor{G_c\times \psi}{c},\faktor{w}{c} \rangle$ from $((s_{0}),\psi)$ w.r.t. $c$ for an initial credit $c_0$.
		We construct a strategy, $g$, that wins for $\text{player}_0$ in $\langle G_c\times \psi,w\rangle$ from $((s_0,c_0),\psi)$ w.r.t. $c$ for the initial credit $c_0$. We construct $g$ by $g=\bigcup_{i=1}^\infty g_i$ where each $g_i$ is a  partial function $g_i:(V(G_c\times\psi))^i\into V(G_c\times \psi)$.  We construct the functions $g_1,g_2,\dots$ by induction. Take $i> 0$ and assume that the functions $g_1,\dots,g_{i-1}$ have been defined. Take a sequence, $((S_0,\psi_0)= ((s_{0},c_0),\psi),\dots,(S_{i-1},\psi_{i-1}))$, consistent with $\bigcup_{k=1}^{i-1} g_k$, where each $S_j$ is either $S_j=(s_j,c_j)$ or $S_j=(s_j,u_j,c_j)$. 
		If $(S_{i-1},\psi_{i-1})$ is a $\text{player}_0$ state, $g_i((S_0,\psi_0),\dots,(S_{i-1},\psi_{i-1}))$ is defined as follows. Write $T_j=\faktor{S_j}{c}$ for each $j \leq i-1$ and  $\faktor{g}{c}((T_0,\psi_0),\dots,(T_{i-1},\psi_{i-1}))=(T_i,\psi_i)$. Take the largest $c_i$ such that $\big((S_{i-1},\psi_{i-1}),((T_i,c_i),\psi_i)\big)\in E(G_c\times\psi)$, and define $g_i((S_0,\psi_0),\dots,(S_{i-1},\psi_{i-1}))=(S_{i},\psi_i)$ where ${S_{i}} = {(T_i,c_i)}$.

		To prove that $g$ is indeed a winning strategy, we argue by induction on the length of the prefix, $((S_0,\psi_0)= ((s,c_0),\psi),\dots,(S_{i-1},\psi_{i-1}))$:
		
		\begin{enumerate}
			\item \label{lem:(Gc-times-psi,w)-equiv-(G_c-times-psi)/c:IT1} $((T_0,\psi_0),\dots,(T_{i-1},\psi_{i-1}))$ is consistent with $\faktor{g}{c}$.
			
			
			\item \label{lem:(Gc-times-psi,w)-equiv-(G_c-times-psi)/c:IT3}
			${\sf EL}_c(\langle G_c\times\psi,w \rangle,c_0,(S_0,\psi_0),\dots, (S_{i-1},\psi_{i-1})) =
			{\sf EL}_c(\langle \faktor{G_c\times\psi}{c},\faktor{w}{c} \rangle,c_0,(T_0,\psi_0),\dots,\linebreak (T_{i-1},\psi_{i-1})) = c_{i-1}$.
			
			\item \label{lem:(Gc-times-psi,w)-equiv-(G_c-times-psi)/c:IT4} If $(S_{i-1},\psi_{i-1})$ is a $\text{player}_0$ state, then ${g_i((S_0,\psi_0),\dots,(S_{i-1},\psi_{i-1}))}$ is well-defined. That is, there exists $T_i$ such that
			$\faktor{g}{c}((T_0,\psi_0),\dots,(T_{i-1},\psi_{i-1})) = {(T_i,\psi_i)}$, and there exists 
			$c_i \in [0,c]$ such that ${\big((S_{i-1},\psi_{i-1}),((T_i,c_i),\psi_i)\big)}\in {E(G_c\times\psi)}$.
			
		\end{enumerate}
		
		\noindent We leave it to the reader to verify that the above properties \ref{lem:(Gc-times-psi,w)-equiv-(G_c-times-psi)/c:IT1}~-~\ref{lem:(Gc-times-psi,w)-equiv-(G_c-times-psi)/c:IT4} hold for ${i} = {1}$.
		Properties \ref{lem:(Gc-times-psi,w)-equiv-(G_c-times-psi)/c:IT1}~-~\ref{lem:(Gc-times-psi,w)-equiv-(G_c-times-psi)/c:IT3} are immediate, and property~\ref{lem:(Gc-times-psi,w)-equiv-(G_c-times-psi)/c:IT4} holds by arguments similar to those in the general case where $i > 1$, yet simpler.
		We turn to prove that these properties hold for $i>1$ while assuming that they hold for every $1\leq i_0<i$.
		
		We first prove property~\ref{lem:(Gc-times-psi,w)-equiv-(G_c-times-psi)/c:IT1}.
		By the induction hypothesis, $((T_0,\psi_0),\dots,(T_{i-2},\psi_{i-2}))$ is consistent with $\faktor{g}{c}$.
		If $(S_{i-2},\psi_{i-2})$ is a $\text{player}_1$ state, then the same applies to
		$(T_{i-2},\psi_{i-2})$, and thus $((T_0,\psi_0),\dots,(T_{i-1},\psi_{i-1}))$
		is consistent with $\faktor{g}{c}$. Otherwise, we have the case where both $(S_{i-2},\psi_{i-2})$ and $(T_{i-2},\psi_{i-2})$ are $\text{player}_0$ states.
		By property~\ref{lem:(Gc-times-psi,w)-equiv-(G_c-times-psi)/c:IT4},
		$\faktor{g}{c}((T_0,\psi_0),\dots,(T_{i-2},\psi_{i-2}))=(T_{i-1},\psi_{i-1})$. Therefore, it follows from the induction hypothesis that
		$((T_0,\psi_0),\dots,(T_{i-1},\psi_{i-1}))$ is consistent with $\faktor{g}{c}$ in this case as well.

		We now turn to prove property~\ref{lem:(Gc-times-psi,w)-equiv-(G_c-times-psi)/c:IT3}. By the induction hypothesis, it holds that ${\sf EL}_c(\langle G_c\times\psi,\linebreak w \rangle,c_0,(S_0,\psi_0),\dots, (S_{i-2},\psi_{i-2}))= 
		{\sf EL}_c(\langle \faktor{G_c\times\psi}{c},\faktor{w}{c} \rangle, c_0,(T_0,\psi_0),\dots,(T_{i-2},\psi_{i-2})) = {c_{i-2}}$.
		%
		First, consider the case where ${\psi_{i-2}} \not = {\bigdiamond \psi_{i-1}}$.
		By Lem.~\ref{lem:(Gc-times-psi,w)-equiv-(G_c-times-psi)/c:prefixStructure},
		$S_{i-2} = (s_{i-2}, c_{i-2})$, and by Def.~\ref{def:addingWgithFunction} and Def.~\ref{def:reducedEnergyParityGame}, $w((S_{i-2},\psi_{i-2}),(S_{i-1},\psi_{i-1})) = \faktor{w}{c}((T_{i-2},\psi_{i-2}),(T_{i-1},\psi_{i-1})) = 0$.
		Also, by Def.~\ref{def:modelCheckingGame}, $c_{i-2} = c_{i-1}$. Thus, the property holds in this case.
		Second, consider the remaining case where $\psi_{i-2} = \bigdiamond \psi_{i-1}$. In this case, $(S_{i-2}, \psi_{i-2})$ is a $\text{player}_0$ state.
		By Lem.~\ref{lem:(Gc-times-psi,w)-equiv-(G_c-times-psi)/c:prefixStructure} and Def.~\ref{def:modelCheckingGame}, ${S_{i-2}}={(s_{i-2},u_{i-2},c_{i-2})}$ and ${S_{i-1}}={(s_{i-1},c_{i-1})}$. Therefore, by Def.~\ref{def:addingWgithFunction} and Def.~\ref{def:reducedEnergyParityGame}, we have that $w((S_{i-2},\psi_{i-2}),(S_{i-1},\psi_{i-1})) = \faktor{w}{c}((T_{i-2},\psi_{i-2}),\linebreak(T_{i-1},\psi_{i-1})) = w^s(s_{i-2},p(s_{i-1}))$.
		 This, together with the induction hypothesis, implies:
		 \begin{align*}
		 &{\sf EL}_c(\langle G_c\times\psi,w \rangle,c_0,(S_0,\psi_0),\dots, (S_{i-1},\psi_{i-1})) =\\
		 &{\sf EL}_c(\langle \faktor{G_c\times\psi}{c},\faktor{w}{c} \rangle, c_0,(T_0,\psi_0),\dots, (T_{i-1},\psi_{i-1})) =\\
		 &\min\{c,c_{i-2}+w^s(s_{i-2}, p(s_{i-1}))\}.&
		 \end{align*}
		But, since $\faktor{g}{c}$ wins for $\text{player}_0$ and, 
		by property~\ref{lem:(Gc-times-psi,w)-equiv-(G_c-times-psi)/c:IT1},
		$(T_0,\psi_0),\dots,(T_{i-1},\psi_{i-1})$ is consistent with $\faktor{g}{c}$,
		we have that ${\hat{c}} = {{\min\{c,c_{i-2}+w^s(s_{i-2},p(s_{i-1}))\}} \geq {0}}$.
		Also, by the construction, $\hat{c}$ is the maximal value such that 
		$\big(((s_{i-2},u_{i-2},c_{i-2}),\psi_{i-2}), ((s_{i-1},\hat{c}),\psi_{i-1})\big) \in {E(G_c\times \psi)}$. Thus, by the definition of $g_{i-1}$, we obtain that $c_{i-1} = \hat{c}$, which concludes the 
		proof for property~\ref{lem:(Gc-times-psi,w)-equiv-(G_c-times-psi)/c:IT3}.
		
		To prove property~\ref{lem:(Gc-times-psi,w)-equiv-(G_c-times-psi)/c:IT4}, first note that the existence of $(T_i,\psi_i)$ such that $\faktor{g}{c} ((T_0,\psi_0),\dots,\linebreak(T_{i-1},\psi_{i-1}))=(T_i,\psi_i)$ follows immediately from property~\ref{lem:(Gc-times-psi,w)-equiv-(G_c-times-psi)/c:IT1}, as $((T_0,\psi_0),\dots,(T_{i-1},\psi_{i-1}))$ is consistent with $\faktor{g}{c}$.
		Thus, it remains to prove that for some $c_i$, $\big((S_{i-1},\psi_{i-1}),((T_i,c_i),\psi_i)\big)\linebreak\in E(G_c\times\psi)$.
		First, consider the case where $\psi_{i-1} \not = \bigdiamond \psi_{i}$.
		By Lem.~\ref{lem:(Gc-times-psi,w)-equiv-(G_c-times-psi)/c:prefixStructure},
		$S_{i-1} = (s_{i-1}, c_{i-1})$. Thus, by Def.~\ref{def:modelCheckingGame} and Def.~\ref{def:reducedEnergyParityGame}, since $(S_{i-1},\psi_{i-1})$ is a $\text{player}_{0}$ state,
		it holds that $T_{i-1} = (s_{i-1}) = T_{i}$, and for $c_{i} := c_{i-1}$, $\big(((s_{i-1}, c_{i-1}),\psi_{i-1}),((s_{i-1},c_{i}),\psi_i)\big)\in E(G_c\times\psi)$.
		Second, consider the remaining case where $\psi_{i-1} = \bigdiamond \psi_{i}$.
		By Lem.~\ref{lem:(Gc-times-psi,w)-equiv-(G_c-times-psi)/c:prefixStructure},
		${S_{i-1}}={(s_{i-1},u_{i-1},c_{i-1})}$, hence
		${T_{i-1}}={(s_{i-1},u_{i-1})}$. Since $\big((T_{i-1},\bigdiamond \psi_{i}),(T_{i},\psi_{i})\big)\in E(\faktor{G_c\times \psi}{c})$,
		it follows from Def.~\ref{def:reducedEnergyParityGame} that $T_{i} = (s_{i})$.
		By Def.~\ref{def:modelCheckingGame} and Def.~\ref{def:concreteGstar}, it holds that
		for every $c' \in [0,c]$, $\big(((s_{i-1}, u_{i-1}, c_{i-1}),\bigdiamond \psi_{i}),((s_{i},c'),\psi_{i})\big)\in E(G_c\times\psi)$
		iff $\min\{c,c_{i-1}+w^{s}(s_{i-1},p(s_{i}))\} \geq c'$.
		Thus, it is sufficient to show that ${\min\{c,c_{i-1}+w^{s}(s_{i-1},p(s_{i}))\}} \geq {0}$.
		By property~\ref{lem:(Gc-times-psi,w)-equiv-(G_c-times-psi)/c:IT3}, we have that ${c_{i-1}} = {{\sf EL}_c(\langle \faktor{G_c\times\psi}{c},\faktor{w}{c} \rangle,c_0,(T_0,\psi_0),\dots, (T_{i-1},\bigdiamond \psi_{i}))}$.
		Moreover, by Def.~\ref{def:reducedEnergyParityGame}, $\faktor{w}{c}((T_{i-1},\bigdiamond \psi_{i}),(T_{i},\psi_{i})) = {\faktor{w}{c}(((s_{i-1},u_{i-1}),\bigdiamond \psi_{i}), ((s_{i}),\psi_{i}))} = {w^{s}(s_{i-1}, p(s_{i}))}$.
		Thus,  ${\min\{c,c_{i-1}+w^{s}(s_{i-1},p(s_{i}))\}} = {\sf EL}_c(\langle \faktor{G_c\times\psi}{c},\faktor{w}{c} \rangle, c_0,(T_0,\psi_0),\dots, (T_{i},\psi_{i}))$.
		Since $\faktor{g}{c}$ wins for $\text{player}_{0}$ and the prefix $((T_0,\psi_0),\dots,(T_{i},\psi_{i}))$ is consistent with $\faktor{g}{c}$,
		it follows that $\min\{c,c_{i-1}+w^{s}(s_{i-1},p(s_{i}))\} \geq 0$, 
		as required. \qedhere
	\end{description}
\end{proof}

\noindent We can now conclude that the graph $\faktor{G_c\times \psi}{c}$ simulates the WGS $G^w$. By Lem.~\ref{lem:G^w-equiv-Gc}, Cor.~\ref{cor:Gc-equiv-(Gc-times-psi)}, Lem.~\ref{lem:(Gc-times-psi)-equiv-(Gc-times-psi,w)} and Lem.~\ref{lem:(Gc-times-psi,w)-equiv-(G_c-times-psi)/c}, the following holds:

\begin{cor}
	\label{cor:G^w-equiv-(Gc-times-psi)/c}  
	Take a finite upper bound, $c\in \mathbb{N}$, and $c_0\leq c$. Then, the system has a winning strategy in the WGS $G^w$ from $s$ w.r.t. c for the initial credit $c_0$, iff $\text{player}_0$ has a winning strategy in the energy parity game $\langle \faktor{G_c\times \psi}{c},\faktor{w}{c} \rangle$ from $((s),\psi)$ w.r.t. $c$ for the initial credit $c_0$. \qed
\end{cor}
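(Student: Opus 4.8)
The plan is to obtain the corollary purely by \emph{chaining} the four preceding results, each of which is a biconditional, so that no new construction or induction is required. I would first fix the state $s \in 2^{\mathcal{V}}$, the finite upper bound $c \in \mathbb{N}$, and the initial credit $c_0 \leq c$, and then thread $s$ through the sequence of game graphs $G_c$, $G_c \times \psi$, $\langle G_c \times \psi, w \rangle$, and $\faktor{G_c \times \psi}{c}$, keeping track of its energy annotation at each stage.

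First I would invoke Lem.~\ref{lem:G^w-equiv-Gc}, which asserts that the system wins in $G^w$ from $s$ w.r.t. $c$ for the initial credit $c_0$ if and only if $(s, c_0) \in \llbracket \psi \rrbracket^{G_c}$. Next, applying Cor.~\ref{cor:Gc-equiv-(Gc-times-psi)} with the state $S = (s, c_0) \in V(G_c)$, the latter membership is equivalent to $\text{player}_0$ having a winning strategy from $((s, c_0), \psi)$ in the parity game $G_c \times \psi$. I would then use Lem.~\ref{lem:(Gc-times-psi)-equiv-(Gc-times-psi,w)}, instantiated with $S = (s, c_0)$ and $\psi' = \psi$, to pass from the parity game $G_c \times \psi$ to the energy parity game $\langle G_c \times \psi, w \rangle$: winning from $((s, c_0), \psi)$ in the parity game is equivalent to winning from $((s, c_0), \psi)$ in $\langle G_c \times \psi, w \rangle$ w.r.t. $c$ for the initial credit $c_0$. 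Finally, Lem.~\ref{lem:(Gc-times-psi,w)-equiv-(G_c-times-psi)/c} strips away the energy component of the states, giving that the last condition is equivalent to $\text{player}_0$ winning from $((s), \psi)$ in $\langle \faktor{G_c \times \psi}{c}, \faktor{w}{c} \rangle$ w.r.t. $c$ for $c_0$. Composing these four biconditionals yields exactly the asserted equivalence.

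The only points requiring care — and the closest thing to an obstacle — are bookkeeping ones. I must check that the initial-credit hypothesis $c_0 \leq c$ holds at each invocation (it does by assumption, and is preserved since the energy component of any $G_c$-state never exceeds $c$ by construction in Def.~\ref{def:concreteGstar}), and that the energy annotation $c_0$ carried by the $G_c$-state $(s, c_0)$ matches the initial credit used in the energy-parity formulations of Lem.~\ref{lem:(Gc-times-psi)-equiv-(Gc-times-psi,w)} and Lem.~\ref{lem:(Gc-times-psi,w)-equiv-(G_c-times-psi)/c}. Since each cited lemma is stated so that the second coordinate of the relevant $G_c$-state coincides with the initial credit, this alignment is automatic. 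Hence the corollary follows immediately, with the $\qed$ reflecting that the entire content resides in the four earlier results.
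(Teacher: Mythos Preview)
Your proposal is correct and matches the paper's approach exactly: the paper simply states that the corollary holds ``By Lem.~\ref{lem:G^w-equiv-Gc}, Cor.~\ref{cor:Gc-equiv-(Gc-times-psi)}, Lem.~\ref{lem:(Gc-times-psi)-equiv-(Gc-times-psi,w)} and Lem.~\ref{lem:(Gc-times-psi,w)-equiv-(G_c-times-psi)/c}'' and appends a \qed, which is precisely the chain of biconditionals you spell out.
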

The crux of the reduction, which allows us to conclude the sufficiency of the upper bound specified in Thm.~\ref{Thm:a-sufficient-bound}, is that the energy parity game $\langle \faktor{G_c\times \psi}{c},\faktor{w}{c} \rangle$, defined in Def.~\ref{def:reducedEnergyParityGame},
is \emph{not} dependent on the choice of the bound $c$, provided that $c$ is sufficiently large. That is formally stated by the next lemma.

\begin{lem}
	\label{lemma:{Gc-times-psi}/c-independent-of-c}
	Let $K$ be the maximal transition weight in the WGS, $G^{w}$, in absolute value.
	Then, for $c,c'\geq K$, $c,c'\in \mathbb N$, ${\langle \faktor{G_c\times \psi}{c},\faktor{w}{c} \rangle} = {\langle \faktor{G_{c'}\times \psi}{c'},\faktor{w}{c'} \rangle}$. 
\end{lem}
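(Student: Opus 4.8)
The plan is to unfold the definitions of $\langle \faktor{G_c\times \psi}{c},\faktor{w}{c} \rangle$ and $\langle \faktor{G_{c'}\times \psi}{c'},\faktor{w}{c'} \rangle$ (Def.~\ref{def:reducedEnergyParityGame}, which builds on Def.~\ref{def:modelCheckingGame}, Def.~\ref{def:addingWgithFunction}, Def.~\ref{def:concreteGstar}, and Def.~\ref{def:naiveReduction}) and check, component by component, that every ingredient of the reduced game is insensitive to the value of the bound once it is at least $K$. The four components to match are: the state set $\faktor{V}{c}$ together with its partition $\faktor{V_0}{c}\cup\faktor{V_1}{c}$; the edge set $\faktor{E}{c}$; the priority function $\faktor{\prio}{c}$; and the weight function $\faktor{w}{c}$.

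First I would handle the states. A state of $G_c\times\psi$ has the form $(S,\psi')$ where $\psi'$ is a subformula of $\psi$ and $S$ is a state of $G_c$, i.e.\ $S=(s,c_0)$ or $S=(s,u,c_0)$ with $c_0\in\{0,\dots,c\}$. Its reduced version simply forgets the energy component, so $\faktor{S}{c}\in\{(s),(s,u)\}$, and hence $\faktor{V}{c}=\{((s),\psi') : s\in 2^\mathcal V\}\cup\{((s,u),\psi') : s\in 2^\mathcal V, u\in 2^\mathcal X\}$, which is manifestly independent of $c$. The partition into $\text{player}_0$ and $\text{player}_1$ states in Def.~\ref{def:modelCheckingGame} is determined purely by $\psi'$ (whether it is a disjunction, a diamond, a fixpoint, a relational variable, or a literal, together with whether $S\in\llbracket v\rrbracket^{G_c}$ in the literal case) — and for literals, membership of $(s,c_0)$ in $\llbracket v\rrbracket^{G_c}$ depends only on $s\models v$, not on $c_0$. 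So $\faktor{V_0}{c}$ and $\faktor{V_1}{c}$ are the same set regardless of $c$. Likewise the priority $\faktor{\prio}{c}(\faktor{S}{c},\psi')=\prio(S,\psi')$, and by Def.~\ref{def:modelCheckingGame} the priority of a state is a function of $\psi'$ alone (via its alternation depth and the fixed even number $M\geq ad(\psi)$), hence independent of $c$; the paper already notes $\faktor{\prio}{c}$ is well-defined, so this is immediate. The weight function by Def.~\ref{def:addingWgithFunction} assigns $w^s(s,p(t))$ to edges whose source formula is $\square\psi'$ or $\bigdiamond\psi'$ and the two first-components are $s$ and $t$, and $0$ to all other edges; since $w^s$ is a datum of $G^w$ and not of $c$, $\faktor{w}{c}$ also only depends on the edge's endpoints' non-energy components. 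So the whole argument reduces to showing $\faktor{E}{c}=\faktor{E}{c'}$.

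The main obstacle — the only genuinely nontrivial step — is the edge set, specifically the edges that come from system transitions. Most edge types in Def.~\ref{def:modelCheckingGame} keep $S$ fixed ($\wedge$, $\vee$, fixpoint unfolding, variable binding) or range over edges $(S,T)\in E(G_c)$ that are environment edges (source formula $\square\psi'$) with $0$ weight and no energy change; these clearly contribute the same reduced edges for all $c$. The delicate case is a reduced edge $((s,u),\bigdiamond\psi'),((t),\psi')$: it lies in $\faktor{E}{c}$ iff for \emph{some} $c_0,c_1\in\{0,\dots,c\}$ the edge $((s,u,c_0),\bigdiamond\psi'),((t,c_1),\psi')$ is in $E(G_c\times\psi)$, which by Def.~\ref{def:modelCheckingGame} and Def.~\ref{def:concreteGstar} holds iff $u=t|_\mathcal X$, $(s,p(t))\models\rho^s$, and $\min\{c,c_0+w^s(s,p(t))\}\geq c_1$. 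The point to make is that the existential quantification over $c_0,c_1$ washes out this inequality: the conditions $u=t|_\mathcal X$ and $(s,p(t))\models\rho^s$ do not mention $c$ at all, and whenever they hold one can always pick $c_0$ and $c_1$ witnessing the inequality — e.g.\ take $c_0=c$ and $c_1=0$, so the constraint becomes $\min\{c,c+w^s(s,p(t))\}\geq 0$, which holds because $w^s(s,p(t))\geq -K$ and $c\geq K$ force $c+w^s(s,p(t))\geq 0$, hence $\min\{c,c+w^s(s,p(t))\}\geq 0$. (Here is the single place the hypothesis $c\geq K$ is used.) Therefore the existence of such a reduced edge is equivalent to the purely $c$-free condition ``$u=t|_\mathcal X$ and $(s,p(t))\models\rho^s$''. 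Symmetrically, any reduced edge from a $\bigdiamond$ state whose endpoints satisfy that $c$-free condition is realized in $E(G_{c'}\times\psi)$ for every $c'\geq K$. Combining this with the trivial cases gives $\faktor{E}{c}=\faktor{E}{c'}$, and together with the already-established agreement of states, partition, priorities, and weights, we conclude ${\langle \faktor{G_c\times \psi}{c},\faktor{w}{c} \rangle} = {\langle \faktor{G_{c'}\times \psi}{c'},\faktor{w}{c'} \rangle}$, completing the proof.
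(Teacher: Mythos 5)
Your proof is correct and follows essentially the same route as the paper's: a componentwise comparison in which the only substantive step is showing that the existential quantification over energy annotations makes the system-transition edges insensitive to the bound, witnessed by a choice of energies that uses $c\geq K$ (you take $c_0=c$, $c_1=0$; the paper takes $c_0=K$, $c_1=0$ --- immaterial). One small imprecision worth noting: since $G_c\times\psi$ contains \emph{all} pairs $(S,\psi')$, there are also edges out of states of the form $((s,u,c_0),\square\psi')$, which range over system edges of $G_c$ rather than environment edges as your classification suggests; your witness argument for the $\bigdiamond$ case applies to them verbatim (the paper treats $\square$ and $\bigdiamond$ uniformly in its case split), so nothing breaks.
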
   
\begin{proof}
	Clearly, it holds that $V(\faktor{G_c\times \psi}{c})=V(\faktor{G_{c'}\times \psi}{c'})$. We show that $E(\faktor{G_c\times \psi}{c})=E(\faktor{G_{c'}\times \psi}{c'})$, and that each edge is assigned the same weight in both games. By symmetry, it suffices to show that for each $e\in E(\faktor{G_c\times \psi}{c})$, $e$ also belongs to $E(\faktor{G_{c'}\times \psi}{c'})$ and has the same weight. Take such an edge, $e$.
	By Def.~\ref{def:reducedEnergyParityGame} and Def.~\ref{def:modelCheckingGame}, $e$ has either of the following forms:
	\begin{enumerate}
		\item\label{lemma:{Gc-times-psi}/c-independent-of-c:case1} $e=(((s,u),\xi),((t),\xi'))$ where $\xi \in \{\bigdiamond \xi', \square \xi' \}$: In this case,
		for some $a,b\in[0,c]$, $(((s,u,a),\xi),((t,b),\xi'))\in E(G_c\times \psi)$. Thus, by Def.~\ref{def:concreteGstar}, $(s,p(t)) \models \rho^{s}$
		and $t|_{\mathcal{X}} = u$. 
		Also, by the definition of $K$, $\min\{c', K + w^s(s,p(t))\} \geq 0$. 
		Consequently, since $c'\geq K$, $(((s,u,K),\xi),((t,0),\xi'))\in E(G_{c'}\times \psi)$, hence, $e\in E(\faktor{G_{c'}\times \psi}{c'})$.
		Furthermore, the weight of $e$ is $w^s(s,p(t))$ in both games.
		\item\label{lemma:{Gc-times-psi}/c-independent-of-c:case2} $e=(((s),\xi),((s,u),\xi'))$ where $\xi \in \{\bigdiamond \xi', \square \xi' \}$: In this case,
		for some $a\in[0,c]$, $(((s,a),\xi),((s,u,a),\xi'))\in E(G_c\times \psi)$. Thus, by Def.~\ref{def:concreteGstar}, $(s,p(u))\models \rho^{e}$. Consequently,
		it holds that $(((s,0),\xi),((s,u,0),\xi'))\in E(G_{c'}\times \psi)$, hence $e\in E(\faktor{G_{c'}\times \psi}{c'})$.
		\item\label{lemma:{Gc-times-psi}/c-independent-of-c:case3} $e=((T,\xi),(T,\xi'))$ where $\xi \in \{\phi_1\wedge \phi_2, \phi_1\vee \phi_2,\mu X(\phi),\nu X(\phi),X\}$: In this case, for some $a\in [0,c]$, $(((T,a),\xi), ((T,a),\xi'))\in E(G_c\times \psi)$. Therefore,
		$(((T,0),\xi), ((T,0),\xi'))\in E(G_{c'}\times \psi)$, hence $e\in E(\faktor{G_{c'}\times \psi}{c'})$.
	\end{enumerate}
	
	\noindent Moreover, in cases~\ref{lemma:{Gc-times-psi}/c-independent-of-c:case2} and \ref{lemma:{Gc-times-psi}/c-independent-of-c:case3}, the weight of $e$ is $0$ in both games.
\end{proof}

We can now prove Thm.~\ref{Thm:a-sufficient-bound}.

\begin{proof}[Proof of Thm.~\ref{Thm:a-sufficient-bound}]

	Assume that $s$ wins in $G^w$ for the system player w.r.t. $+\infty$ for an initial credit $c_0$. By Lem.~\ref{lem:from-infinite-to-finite}, for some natural number $c\geq K$ and $c_0'=\min\{c_0,c\}$, $s$ wins for the system player w.r.t. $c$ for an initial credit $c_0'$. By Cor.~\ref{cor:G^w-equiv-(Gc-times-psi)/c}, $((s),\psi)$ wins for $\text{player}_0$ in $\langle \faktor{G_c\times \psi}{c},\faktor{w}{c} \rangle$ w.r.t. $c$ for an initial credit $c_0'$. Now, by Def.~\ref{def:reducedEnergyParityGame}, $\faktor{G_c\times \psi}{c}$ has at most $(N^2+N)m$ states and $d+1$ different priorities, and $K$ is the maximal weight of its edges, in absolute value. By Lem.~\ref{lem:lemma-6-revised}, $((s),\psi)$ wins for $\text{player}_0$ w.r.t. $b=(d+1)((N^2+N)m-1)K$ for an initial credit $c_0''=\min\{c_0',((N^2+N)m-1)K\}$. As $b\geq K$, by Lem.~\ref{lemma:{Gc-times-psi}/c-independent-of-c},
	${\langle \faktor{G_b\times \psi}{b},\faktor{w}{b} \rangle} = {\langle \faktor{G_{c}\times \psi}{c},\faktor{w}{c} \rangle}$.
	Therefore, by Cor.~\ref{cor:G^w-equiv-(Gc-times-psi)/c}, $s$ wins for the system player in $G^w$ w.r.t. $b$ for an initial credit $c_0''\leq \min\{c_0,((N^2+N)m-1)K\}$, as required. 
\end{proof}

%
\end{document}